\documentclass[runningheads]{CSML}
\pdfoutput=1

% LMCS Layouting Macros
\usepackage{lastpage}

\lmcsheading{}{1--\pageref{LastPage}}{}{}%
{Oct.~21, 2016}{Jun.~20, 2017}{}
%{llncs}
%\usepackage{llncsdoc}
%
%\usepackage{calligra}
%\usepackage[T1]{fontenc}
%\usepackage{enumitem}% http://ctan.org/pkg/enumitem
%

\usepackage{amsthm}
\usepackage{xcolor}
\usepackage{amsmath,amssymb}
\usepackage{mathpartir}
\usepackage{stmaryrd}
\usepackage{etoolbox}
\usepackage{empheq,soul}
\usepackage{upgreek}
\usepackage{hyperref}
\hypersetup{hidelinks}
\usepackage{multicol}
\usepackage{hyperref}
\hypersetup{hidelinks}
\usepackage[capitalise]{cleveref}

\usepackage{mathtools}
%\mathtoolsset{showonlyrefs}
\numberwithin{equation}{section}

% package for doing proof derivations
\setstcolor{red}

\usepackage{bussproofs}

\usepackage[nomargin,inline,multiuser,author=,draft]{fixme} %% remove the 'draft' option to hide all comments/notes/fixmes
\fxsetup{theme=color,mode=multiuser}

% \usepackage{tikz}
% \usetikzlibrary{arrows,shadows,matrix,shapes,positioning,chains,calc}
% \usepackage[underline=false,rounded corners=false]{pgf-umlsd}

\usepackage[T1]{fontenc}
\usepackage[utf8]{inputenc}

\usepackage{hyperref}
\hypersetup{hidelinks}

%\FXRegisterAuthor{m}{mariangiola}{{\color{red} {\underline{Mariangiola}}}}
%\FXRegisterAuthor{l}{luca}{{\color{mygreen} {\underline{Luca}}}}
%\FXRegisterAuthor{p}{paula}{{\color{blue} {\underline{Paula}} }}
%
%%\FXRegisterAuthor{m}{mariangiola}{{\color{red} \underline{{Mariangiola says}}}}
%\FXRegisterAuthor{l}{luca}{{\color{mygreen}\underline{{Luca says} }}}
%\FXRegisterAuthor{p}{paula}{{\color{blue} \underline{{Paula says} }}}
\FXRegisterAuthor{eM}{em}{{\color{orange} \underline{eM says} }}
%\newcommand{\markeM}[1]{{\color{orange} #1}}
%

% Itemize Change bullets because we get confused with the type bullet

%\renewcommand{\labelitemi}{$-$}
%\renewcommand{\labelitemii}{$\ast$}
%\renewcommand{\labelitemiii}{$\diamond$}
%\renewcommand{\labelitemiv}{$\cdot$}

\newcommand{\sid}{\textsf{SID}}

\def \bchpaula {\begin{color}{red}} 
\def \echpaula {\end{color}}

\def \bchm {\begin{color}{blue}} 
\def \echm {\end{color}}

\def \bcom {\begin{color}{blue}Mariangiola comments: } 
\def \ecom {\end{color}}

\def \bcompaula {\begin{color}{red} Paula comments: } 
\def \ecompaula {\end{color}}
%%
%%%
%%%
%\def \bdelpaula {\begin{color}{gray}  DELETE:   }
%\def \edelpaula {\end{color}}

%%%%%%%%%%%%%%%%%%%%
%%% CONDITIONALS %%%
%%%%%%%%%%%%%%%%%%%%

\newif\ifproofs
\proofstrue

\newif\ifcomments
\commentstrue
\commentsfalse

\newif\iflong
\longtrue
%\longfalse

%%%%%%%%%%%%%
%%% NOTES %%%
%%%%%%%%%%%%%

%\newcommand{\marginnote}[2]{%
% \ifcomments%
% {\makebox[0pt]{\color{magenta}$^\bigstar$}}%
% \marginpar{\parbox{2cm}{\flushleft \tiny \textbf{#1}: #2}}%
% \fi%
%}
%\newcommand{\LucaP}[1]{\marginnote{Luca}{\color{magenta}#1}}
%\newcommand{\LucaN}[1]{\marginnote{Luca}{\color{magenta}#1}}
%\newcommand{\PaulaN}[1]{\marginnote{Paula}{\color{violet}#1}}
%\newcommand{\delete}[1]{{\color{gray} DELETE: #1}}
%\newcommand{\Reviewer}[2]{\marginnote{Reviewer #1}{\color{blue}#2}}
%\newcommand{\eMn}[1]{\marginnote{eM}{\color{orange}#1}}
%
%\ifcomments

%\def \brevision {\begin{color}{magenta} }
%\def \erevision {\end{color}}
%\else
%\newcommand{\REVISION}[1]{#1}
%\def \brevision {}
%\def \erevision {}
%\fi

%%%%%%%%%%%%%%
%%% COLORS %%%
%%%%%%%%%%%%%%

\definecolor{mymagenta}{rgb}{0.5,0,0.5}
\definecolor{myred}{rgb}{0.5,0,0}
\definecolor{mygreen}{rgb}{0,0.4,0}
\definecolor{myblue}{rgb}{0,0,0.6}

%%%%%%%%%%%%%%%%%%%
%%% TYPESETTING %%%
%%%%%%%%%%%%%%%%%%%

%\newcommand{\eg}{\emph{e.g.}\xspace}
%\newcommand{\ie}{\emph{i.e.}\xspace}
\newcommand{\cf}{\emph{cf.}~}

\newcommand{\set}[1]{\{#1\}}
\newcommand{\natset}{\mathbb{N}}

\newcommand{\mkkeyword}[1]{\mathtt{\color{myblue}#1}}
\newcommand{\mkconstant}[1]{\mathtt{\color{mymagenta}#1}}

\newenvironment{lines}[1][t]{
  \begin{array}[#1]{@{}l@{}}
}{
  \end{array}
}

\newcommand{\ttunderscore}{\texttt{\char`_}}

\newcommand{\ttcomma}{\texttt{\upshape,}}

\newcommand{\ttpipe}{\texttt{\upshape|}}
\newcommand{\ttqmark}{\texttt{\upshape?}}
\newcommand{\ttemark}{\texttt{\upshape!}}

%%%%%%%%%%%%%%%%%%%%%%%
%%% INFERENCE RULES %%%
%%%%%%%%%%%%%%%%%%%%%%%

\newcommand{\defrule}[1]{%
  \hypertarget{rule:{#1}}{%
    \text{\scriptsize[\textsc{#1}]}%
  }%
}

\newcommand{\refrule}[1]{%
  \hyperlink{rule:{#1}}{%
    \text{\upshape\scriptsize[\textsc{#1}]}%
  }%
}

% Names for the typing rules

\newcommand{\introarrow}{$\arrow$I}
\newcommand{\introlarrow}{$\larrow$I}
\newcommand{\introbullet}{$\bullet$I}
\newcommand{\introprod}{$\times$I}
\newcommand{\fixrule}{fix}
\newcommand{\bindrule}{bind}
\newcommand{\futurerule}{future}

\newcommand{\elimarrow}{$\arrow$E}
\newcommand{\elimlarrow}{$\larrow$E}
\newcommand{\elimprod}{$\times$E}
\newcommand{\axiom}{axiom}
\newcommand{\const}{const}

\newcommand{\newrule}{new}

%%%%%%%%%%%%%
%%% NAMES %%%
%%%%%%%%%%%%%

\newcommand{\Constant}{\mkconstant{k}}

\newcommand{\var}{\varX}
\newcommand{\varX}{{x}}
\newcommand{\varY}{{y}}
\newcommand{\varZ}{{z}}

\newcommand{\varW}{{w}}

\newcommand{\Polarity}{\PolarityP}
\newcommand{\PolarityP}{p}
\newcommand{\PolarityQ}{q}

\newcommand{\Channel}{\ChannelA}
\newcommand{\ChannelA}{{a}}
\newcommand{\ChannelB}{{b}}
\newcommand{\ChannelC}{{c}}

\newcommand{\Name}{\NameU}
\newcommand{\NameU}{u}

\newcommand{\Thing}{X}
\newcommand{\ThingY}{Y}
\newcommand{\BindableName}{\Thing}
\newcommand{\bindname}{\BindableName}

\newcommand{\BindableNameY}{Y}

\newcommand{\Expression}{\ExpressionE}
\newcommand{\ExpressionE}{e}
\newcommand{\ExpressionF}{f}
\newcommand{\ExpressionG}{g}

\newcommand{\Process}{\ProcessP}
\newcommand{\ProcessP}{P}
\newcommand{\ProcessQ}{Q}
\newcommand{\ProcessR}{R}

\newcommand{\Type}{\TypeT}
\newcommand{\TypeT}{t}
\newcommand{\TypeS}{s}

\newcommand{\SessionType}{\SessionTypeT}
\newcommand{\SessionTypeT}{T}
\newcommand{\SessionTypeS}{S}

\newcommand{\Context}{\mathcal{E}}
\newcommand{\PContext}{\mathcal{C}}
\newcommand{\Hole}{[~]}

%%%%%%%%%%%%%%%%%
%%% CONSTANTS %%%
%%%%%%%%%%%%%%%%%

\newcommand{\unit}{\mkconstant{unit}}

\newcommand{\pair}{\mkconstant{pair}}

\newcommand{\send}{\mkconstant{send}}
\newcommand{\receive}{\mkconstant{recv}}
\newcommand{\fix}{\mkconstant{fix}}

\newcommand{\open}{\mkconstant{open}}

\newcommand{\return}{\mkconstant{return}}
\newcommand{\bind}{\mkconstant{bind}}
\newcommand{\future}{\mkconstant{future}}

\newcommand{\emptyprocess}{\mkconstant{0}}

\newcommand{\Unit}{\mkbasictype{Unit}}

%%%%%%%%%%%%%%%%
%%% KEYWORDS %%%
%%%%%%%%%%%%%%%%

\newcommand{\splitkw}{\mkkeyword{split}}

\newcommand{\inkw}{\mkkeyword{in}}
\newcommand{\askw}{\mkkeyword{as}}

%%%%%%%%%%%%%%%%%%%
%%% EXPRESSIONS %%%
%%%%%%%%%%%%%%%%%%%

\newcommand{\Fun}[1]{\lambda#1.}

\newcommand{\Split}[4]{\splitkw~#1~\askw~#2,#3~\inkw~#4}

\newcommand{\Pair}[2]{\langle #1\ttcomma#2\rangle}
\newcommand{\Bind}[2]{#1 ~\texttt{>}\!\texttt{>=}~ #2}
\newcommand{\Return}[1]{\return~#1}
\newcommand{\Open}[1]{\open~#1}
\newcommand{\Fix}[2]{\fix~\Fun#1#2}
\newcommand{\Receive}[1]{\receive~#1}
\newcommand{\Send}[2]{\send~#1~#2}
\newcommand{\Future}[1]{\future~#1}

\newcommand{\streamf}{\mkconstant{from}}

\newcommand{\upkf}{\mkconstant{stream}}
\newcommand{\bupkf}{\mkconstant{stream_{\mkconstant{0}}}}
\newcommand{\dspf}{\mkconstant{display}}
\newcommand{\display}{\mkconstant{display}}
\newcommand{\bdisplay}{\display_{\mkconstant{0}}}
\newcommand{\mapf}{\mkconstant{incStream}}
\newcommand{\incf}{\mkconstant{inc}}

\newcommand{\skipeven}{\mkconstant{skip}}

%%%%%%%%%%%%%%%%%
%%% PROCESSES %%%
%%%%%%%%%%%%%%%%%

%\newcommand{\idle}{\thread\Unit}

\newcommand{\thread}[2]{#1 \Leftarrow #2}
\newcommand{\server}[2]{\mkkeyword{server}~#1~#2}

\newcommand{\parop}{\mathbin\ttpipe}
\newcommand{\new}[1]{(\nu#1)}
%\newcommand{\Heap}[3]{\mkkeyword{let}~#1=#2~\inkw~#3}
%\newcommand{\Heap}[2]{#1 = #2}

%%%%%%%%%%%%%%%%%%%%%%%
%%% TYPED PROCESSES %%%
%%%%%%%%%%%%%%%%%%%%%%%

%\newcommand{\TypedProcess}[2]{#1\fatsemi#2}

%%%%%%%%%%%%%
%%% TYPES %%%
%%%%%%%%%%%%%

\newcommand{\mkbasictype}[1]{\mkkeyword{#1}}
\newcommand{\tbasic}{\mkbasictype{B}}

\newcommand{\arrow}{\to}
\newcommand{\larrow}{\multimap}

\newcommand{\tio}[1]{\mkbasictype{IO}~#1}

\newcommand{\tbullet}[1][]{%
  \bullet%
  \ifblank{#1}{}{^{#1}}%
}

\newcommand{\tshared}[1]{\langle#1\rangle}

\newcommand{\End}{\mkkeyword{end}}
\newcommand{\tin}[2]{{\ttqmark}#1.#2}
\newcommand{\tout}[2]{{\ttemark}#1.#2}

%%%%%%%%%%%%%
%%% RANKS %%%
%%%%%%%%%%%%%

%\newcommand{\Top}{\top}
%\newcommand{\Bottom}{\bot}

%%%%%%%%%%%%%%
%%% LABELS %%%
%%%%%%%%%%%%%%

%\newcommand{\Label}{\pi}

%%%%%%%%%%%%%%%%%
%%% FUNCTIONS %%%
%%%%%%%%%%%%%%%%%

\newcommand{\replace}[2]{ [\,#2\, / \, #1\,]}

\newcommand{\repReceive}{\mathsf{r}}
\newcommand{\repSend}{\mathsf{s}}

\newcommand{\repsend}{[\mathsf{s}~\Channel^{\Polarity}~\Expression~\var]}
\newcommand{\repreceive} {[\mathsf{r}~\Channel^{\Polarity}~\Expression~\var]}
\newcommand{\repreceivec} {[\mathsf{r}~\Channel^{\co\Polarity}~\Expression~\varY]}

\newcommand{\subst}[2]{\{#1/#2\}}
\newcommand{\sust}[2]{\subst{#2}{#1}}

\newcommand{\dom}{\mathsf{dom}}

\newcommand{\fn}{\mathsf{fn}}
\newcommand{\fv}{\mathsf{fv}}

\newcommand{\typeof}{\mathsf{types}}

\newcommand{\co}[1]{\overline{#1}}

%%%%%%%%%%%%%%%%%
%%% RELATIONS %%%
%%%%%%%%%%%%%%%%%

\newcommand{\eqdef}{= }% {\stackrel{\text{\tiny def}}{=}}

\newcommand{\wte}[3]{#1 \vdash #2 : #3}

\newcommand{\wtp}[3]{#1 \vdash #2 ~\triangleright~ #3}

\newcommand{\red}{\longrightarrow}
%{ \red }

%\newcommand{\lred}[1]{\xlongrightarrow{#1}}
%\newcommand{\nlred}[1]{\longarrownot\lred{#1}}
%\newcommand{\nred}{\longarrownot\red}
%\newcommand{\wred}{\Longrightarrow}
%
%\newcommand{\lmap}[1]{\stackrel{#1}{\longmapsto}}
%\newcommand{\nlmap}{\longarrownot\longmapsto}

%%%%%%%%%%%%%%%%%%%%
%%% ENVIRONMENTS %%%
%%%%%%%%%%%%%%%%%%%%

\newenvironment{framedmath}{%
  \begin{math}%
    \displaystyle%
}{%
  \end{math}%
}

%%%%%%M&P

\newcommand{\linpred}{\mathsf{lin}}
\newcommand{\linq}[1]{\linpred(#1)}
\newcommand{\unq}[1]{\mathsf{un}(#1)}

\newcommand{\sharedq}[1]{\mathsf{shared}(#1)}

\newcommand{\TypeContext}{\Upgamma}

\newcommand{\ProvideContext}{\Updelta}
\newcommand{\ptc}[2]{#1+#2}

%%%%%%%%%%%%%%%%%%%%
% Progress
%%%%%%%%%%%%%%%%%%%

\newcommand{\subprocess}{\subset}
\newcommand{\subprocesseq}{\subseteq}
\newcommand{\setSubProc}{{\mathcal S}}

\newcommand{\afterred}{\ \pmb{\succ} \ }
\newcommand{\prered}{\ \pmb{\prec} \ }

%\newcommand{\numbio}[1]{\#_{IO} (#1)}

%%%%%%%%%%%%%%%%%%%%
% Normalization
%%%%%%%%%%%%%%%%%%%%

\newcommand\ti[1]{{\llbracket{#1}\rrbracket}}
\newcommand\indti[2]{{\llbracket{#1}\rrbracket}_{#2}}
\newcommand{\typerank}{rank}

\newcommand{\ind}{i}
\newcommand{\indj}{j}
\newcommand{\funsubst}{\delta}
\newcommand{\EE}{\mathfrak{E}} 

\newcommand{\setWN}{ \mathfrak{ N} }
\newcommand{\setNVAR}{\HH_{v} }%{\HH_{x} }
\newcommand{\setNIO}{\HH_{IO}}
\newcommand{\HH}{\setWN}
\newcommand{\WNVAR}{\setNVAR }
\newcommand{\HHPC}{\setNIO}

\newcommand{\modelsi}{\models_{\ind}}
\newcommand{\modelsj}{\models_{\indj}}

%\newcommand{\IOnfprime}{F}
%\newcommand{\HHPCprime}{{\sf FN}}

%%%%%%%%%%%%%%%%%%%%%%%%%%%%%%%%%%%%%%%%%%%%%%%%%%%%%%%%%%%%%%%%
% Macros for some specific terms and types used in the examples
%%%%%%%%%%%%%%%%%%%%%%%%%%%%%%%%%%%%%%%%%%%%%%%%%%%%%%%%%%%%%%%%

\newcommand{\omegafuture}{\omegaterm_{\future}}

\newcommand{\infinite}{\mkbasictype{{\bullet^\infty}}}

\newcommand{\ListN}{\mkbasictype{S}_{\nat}}
\newcommand{\ListNtwo}{\mkbasictype{S2}_{\nat}}

\newcommand{\badListN}{\mkbasictype{S}'_{\nat}}
\newcommand{\OutN}{\mkbasictype{Out}_{\nat}}
\newcommand{\InN}{\mkbasictype{In}_{\nat}}
\newcommand{\badOutN}{\mkbasictype{Out}'_{\nat}}
\newcommand{\badInN}{\mkbasictype{In}'_{\nat}}
\newcommand{\nat}{\mkbasictype{Nat}}

\newcommand{\omegaterm}{\mkconstant{\Upomega}}

\newcommand{\loopone}{\mkconstant{loop1}}
\newcommand{\looptwo}{\mkconstant{loop2}}
\newcommand{\sessiontypeloop}[1]{{\sf RS}_#1}
\newcommand{\sessiontypelooptwo}[1]{{\sf SR}_#1}

%\newcommand{\isdef}{=}

%% Syntax
%
%\newcommand{\abs}[3]{\mbox{$(\Fun{#1}  #3 )$}}

%\newcommand{\prodt}[3]{ \mbox{$({\Pi} #1{:}#2 {.} #3)$}}
%\newcommand{\modterm}[1]{\mbox{$\circ #1$}}
%\newcommand{\modtermp}[1]{\mbox{($\circ #1$})}
%\newcommand{\modtypep}[1]{\mbox{($\bullet #1$})}
%\newcommand{\await}[1]{\mbox{$({\sf await} \ #1)$}}
%\newcommand{\cofix}[3]{\fix (\Fun{#1}  #3 )}
%%\newcommand{\fix}[3]{({\sf cofix} \ #1{:}#2. #3)}

% \makeatletter
% \newtheorem*{rep@theorem}{\rep@title}
% \newcommand{\newreptheorem}[2]{%
% \newenvironment{rep#1}[1]{%
%  \def\rep@title{#2 \ref{##1}}%
%  \begin{rep@theorem}}%
%  {\end{rep@theorem}}}
% \makeatother

%\theoremstyle{definition}
 \newtheorem{definition}{Definition}[section]
 \newtheorem{example}[definition]{Example}

\theoremstyle{plain}
 
 \newtheorem{lemma}[definition]{Lemma}
 \newtheorem{theorem}[definition]{Theorem}
% \newreptheorem{theorem}{Theorem}
 \newtheorem{corollary}[definition]{Corollary}

% \theoremstyle{remark}

%%% Local Variables: 
%%% mode: latex
%%% TeX-master: "LazySessions"
%%% End: 

%% Relation between contexts needed for Inversion Lemma

\newcommand{\CAA}{\mathcal{A}}
\newcommand{\CBB}{\mathcal{B}}
\newcommand{\CA}{}%{\mathcal{A}}
\newcommand{\CB}{}%{\mathcal{B}}
\newcommand{\CC}{}%{\mathcal{D}}

\newcommand{\Pol}{\mathcal{N}}

\newcommand{\indep}{ \  \#  \ }
\newcommand{\ri}[2]{{#2} of {#1}}

%\newcommand{\breathe}{\smallskip}

% SN of red minus

\newcommand{\redminus}{\red^{-}}
\newcommand{\numbred}{{\sf ns}}
\newcommand{\numCOM}{{\sf ct}}
\newcommand{\numCOMi}[2]{{\sf ct} ({#1}, #2) }

\newcommand{\delay}{{\sf delay}}
\newcommand{\threads}{{\sf threads}}
\newcommand{\servers}{{\sf servers}}
\newcommand{\boundnames}{{\sf bounds}}
\newcommand{\weight}{{\sf wt}}
\newcommand{\comma}{ }%{,  }

\newcommand\mycase[1]{ \vspace{0.2cm} \noindent {\em #1}. \vspace{0.2cm}}
 \newcommand\mycaseA[1]{ \vspace{0.3cm} \noindent {\em #1}. \vspace{0.005cm}}
\begin{document}

\title[On Sessions and Infinite Data]{On Sessions and Infinite Data}
\thanks{ All authors have been supported by the ICT COST European
  project \emph{Behavioural Types for Reliable Large-Scale Software
    Systems} (BETTY, COST Action IC1201).  }
  
\author[Severi et al.]{Paula Severi\rsuper a}
\address{\lsuper{a,c} Department of Computer Science, University of Leicester, UK}
\thanks{
  Paula Severi was supported by a Daphne Jackson fellowship
  sponsored by EPSRC and her department.}

\author[]{Luca Padovani\rsuper b}
\address{\lsuper{b,d} Dipartimento di Informatica,
  Universit\`a di Torino, Italy}

\author[]{Emilio Tuosto\rsuper c}
 
\author[]{Mariangiola Dezani-Ciancaglini\rsuper d}
\thanks{
  Mariangiola Dezani was partly supported by EU projects %
  H2020-644235 \emph{Rephrase} and H2020-644298 \emph{HyVar}, %
  ICT COST Actions IC1402 ARVI, IC1405 Reversible Computation, CA1523 EUTYPES and Ateneo/CSP project \emph{RunVar}. 
}

\keywords{Session types, the $\pi$-calculus, Infinite data, Type safety}

\subjclass{{F.1.2 }{[{\bf Computation by Abstract Devices}]: }{Modes of Computation}{---\em Parallelism and concurrency},
{F.3.3 }{[{\bf Logics and Meanings of Programs}]: }{Studies of Program Constructs}{---\em Type structure},
{H.3.5 }{[{\bf Information Storage and Retrieval}]: }{Online Information Services}{---\em Web-based services},
{H.5.3 }{[{\bf Information Interfaces and Presentation}]: }{Group and Organization Interfaces}{---\em Theory and models, Web-based interaction.}}

\begin{abstract}
%    We investigate some subtle issues that arise when programming
%    distributed computations over infinite data structures.
%
%    To do this, we formalise a calculus that combines a call-by-name
%    functional core with session-based communication primitives and
%    that allows session operations to be performed ``on demand''.
%
  We define a novel calculus that combines a call-by-name functional
  core with session-based communication primitives.
    We develop a typing discipline that guarantees both normalisation
    of expressions and progress of processes and that uncovers an
    unexpected interplay between evaluation and
    communication.
\end{abstract}
\maketitle

\section{Introduction}
\label{section:Introduction}

Infinite computations have long lost their negative connotation.
Two paradigmatic contexts in which they appear naturally are reactive
systems~\cite{mp12,Aceto:2007:RSM:1324845} and lazy functional
programming.
The former contemplates the use of infinite computations in order to
capture \emph{non-transformational} computations, that is computations
that cannot be expressed in terms of transformations from inputs to
outputs; rather, computations of reactive systems are naturally
modelled in terms of ongoing interactions with the environment.
Lazy functional programming is acknowledged as a paradigm that fosters
software modularity~\cite{Hughes89} and enables programmers to specify
computations over possibly infinite data structures in elegant and
concise ways.
% Besides featuring very elegant and concise programming styles, lazy
% functional languages offer the possibility of using infinite data
% structures for representing and analysing \lq\lq big data\rq\rq\ in a
% more accurate and faithful way.
Nowadays, the synergy between these two contexts has a wide range of
potential applications, including stream-processing networks,
real-time sensor monitoring, and internet-based media services.

Nonetheless, not all diverging programs -- those engaged in an
infinite sequence of possibly intertwined computations and
communications -- are necessarily useful. There exist degenerate forms
of divergence where programs do not produce results, in terms of
observable data or performed communications.
We investigate this issue by proposing a calculus for expressing
computations over possibly infinite data types and involving message
passing. The calculus -- called \sid, after Sessions with Infinite
Data -- combines a call-by-name functional core (inspired by Haskell)
with multi-threading and session-based communication primitives.

In the remainder of this section we provide an informal introduction
to \sid{} and its key features by means of a few examples.  The formal
definition of the calculus, of the type system, and its properties are
given in the rest of the paper.
A simple instance of computation producing an infinite data structure
is given by
\begin{equation*}%\label{streamify:eq}
  \streamf\ x \ = \Pair x {\streamf\ (x + 1)}
\end{equation*}
%\mmic{that is a function $\streamf$ of type $\TypeS \arrow \stream \TypeS$
%that given an expression $\Expression$ (of type $\TypeS$) produces an
%infinite stream of $\Expression$'s, namely a structure of type
%$\stream \TypeS$ defined as the maximal fix-point of the type equation
%$\stream \TypeS = \TypeS \times \stream \TypeS$.}
%{$\stream \TypeS$ is a pre-type! A $\bullet$ is needed, but I think here it would only complicate the matter}
where the function $\streamf$ applied to a number $n$ produces the
stream (infinite list) \[\Pair{n}{\Pair{n+1}{\Pair{n+2}\cdots}}\] of
integers starting from $n$. We can think of this list as abstracting
the frames of a video stream or the samples taken from a sensor.

The key issue we want to address is how infinite data can be exchanged
between communicating threads. The most straightforward way of doing
this in \sid{} is to take advantage of lazy evaluation.
For instance, the \sid{} process
\begin{equation*}%\label{thread:eq}
  \thread \varX {\Bind {\big( \send\ c^{+}\ (\streamf\ 0)\big)} \ExpressionF}
  \quad \parop \quad
  \thread \varY {\Bind{ \receive\ c^{-}} \ExpressionG}
\end{equation*}
represents two threads $\varX$ and $\varY$ running in parallel and
connected by a session $\ChannelC$, of which thread $\varX$ owns one
endpoint $\ChannelC^+$ and thread $\varY$ the corresponding peer
$\ChannelC^-$.
Thread $\varX$ sends a stream of natural numbers on $\ChannelC^+$ and
continues as $\ExpressionF~\ChannelC^+$, where $\ExpressionF$ is left
unspecified.  Thread $\varY$ receives the stream from $\ChannelC^-$
and continues as $(\ExpressionG~\Pair{\streamf\ 0}{\ChannelC^-})$.
The \emph{bind} operator $\Bind\ttunderscore\ttunderscore$ models
sequential composition and has the 
same semantics as in
Haskell, i.e.
it passes the result of performing the left %first 
action to
the (parametrised) right %second 
action.  
% In particular, 
%it applies the rhs to the result of the action on its lhs. 
The result of sending a message on the endpoint
$\Channel^+$ is the endpoint itself, while the result of receiving a
message from the endpoint $\Channel^-$ is a pair consisting of the
message and the endpoint.
In this example, the \emph{whole stream} is sent \emph{at once} in a
single interaction between $\varX$ and $\varY$.  This behaviour is
made possible by the fact that \sid\ evaluates expressions
\emph{lazily}: the message $(\streamf\ 0)$ is not evaluated until it
is used by the receiver.

% in~\cref{thread:eq} sequentially composes impure
% expressions~\cite{JonesW93}, as in functional languages such as
% Haskell. As shown in \cref{section:typing}, this requires
% expressions $\ExpressionF$ and $\ExpressionG$ in \cref{thread:eq} to
% be functions mapping their argument into impure computations.

In principle, exchanging \lq\lq infinite\rq\rq\ messages such as
$(\streamf\ 0)$ between different threads is no big deal. In the real
world, though, this interaction poses non-trivial challenges: the
message consists in fact of a mixture of data (the parts of the
messages that have already been evaluated, like the constant $0$) and
code (which lazily computes the remaining parts when necessary, like
$\streamf$).
This observation suggests an alternative, more viable modelling of this
interaction whereby the sender unpacks the stream element-wise, sends
each element of the stream as a separate message, and the receiver
gradually reconstructs the stream as each element arrives at
destination.
This modelling is intuitively simpler to realise (especially in a
distributed setting) because the messages exchanged at each
communication are basic values rather than a mixture of data and
code.
In \sid{} we can model this as a process
%\Reviewer2{unpack is a bad choice as it suggests existential types. stream maybe?}
% For a computation that visits a stream element-wise consider the
% thread $\var$ below.  It \lq\lq unpacks\rq\rq\ a stream of zeros
% sending its elements on a channel $\ChannelC$ in parallel with thread
% $\varY$ that just \lq\lq disposes of\rq\rq\ the elements received on
% $\ChannelC$:
\[
  \thread {\mathit{prod}} \bupkf\ \ChannelC^+\ (\streamf\ 0)
  \quad \parop \quad
  \thread {\mathit{cons}} {\bdisplay\ \ChannelC^-}
\]
where the functions $\bupkf$ and $\bdisplay$ are defined as:
%\mmic{\begin{eqnarray*}
%  \upkf & = & \fix\ \Fun f {
%    \Fun \ChannelC {
%      \Fun{{\varX s}}{
%        \Split{\varX s} \varY {\varY s}{
%          \Bind {\Send \ChannelC \varY}
%          {\Fun \ChannelC
%            \Future{(f \ \ChannelC \ \varY s )}
%          }
%        }
%      }
%    }
%  }
%  \\
%  \display & = & \Fun \ChannelC \fix\ \Fun \ExpressionF {
%    \Bind{\Receive \ChannelC}{
%      \Fun \varZ {\ExpressionF\ (\Return \varZ)}
%      }
%    }
%\end{eqnarray*}
%%
%We now comment on $\upkf$ and $\display$ and the new constructs therein.
%%
%The construct $\splitkw$ accesses the elements of pairs (if $\Expression$
%evaluates to a pair $\Pair{\Expression_1}{\Expression_2}$ then
%$\Split \Expression {\var_1} {\var_2} \ExpressionG$ evaluates to
%expression $\ExpressionG$ where $\Expression_1$ and $\Expression_2$
%are respectively replaced for $\var_1$ and $\var_2$). So, $\upkf$ repeatedly splits the stream $\varX s$ in its head $\varY$
%and tail $\varY s$, sends $\varY$ on $\ChannelC$ and recurs on the
%$\varY s$.
%}{}
\begin{equation}
\label{equation:upkf}
\begin{array}{lll}
 \bupkf~\varY \ \Pair\varX{\varX s} & = &
    \Bind{\Send\varY\varX}{
      \Fun{\varY'}{\bupkf~\varY'~\varX s}
    }
%  \upkf~\varY & = &
%  \Fun{\Pair\varX{\varX s}}{
%    \Bind{\Send\varY\varX}{
%      \Fun{\varY'}{\upkf~\varY'~\varX s}
%    }
%  }
  \\
    \bdisplay~\varY & = &
    \Bind{\Receive\varY}{
      \Fun{\Pair\varZ{\varY'}}{
        \Bind{\bdisplay~\varY'}{
          \Fun{\varZ s}{\ExpressionG~\Pair{\varZ}{\varZ s}}
        }
      }
    }
\end{array}
\end{equation}

The syntax $\Fun{\Pair\ttunderscore\ttunderscore}\Expression$ is just
syntactic sugar for a function that performs pattern matching on the
argument, which must be a pair, in order to access its components. In
$\bupkf$, pattern matching is used for accessing and sending each
element of the stream separately. In $\bdisplay$, the pair
$\Pair\varZ{\varY'}$ contains the received head $\varZ$ of the stream
along with the continuation $\varY'$ of the session endpoint from
which the element has been received. The recursive call $\bdisplay~\varY'$
retrieves the tail of the stream $\varZ s$, which is then combined
with the head $\varZ$ and passed as an argument to $\ExpressionG$.

The code of $\bdisplay$ looks reasonable at first, but conceals a
subtle and catastrophic pitfall: the recursive call $\bdisplay~\varY'$
is in charge of receiving the \emph{whole} tail $\varZ s$, which is an
infinite stream itself, and therefore it involves an infinite number
of synchronisations with the producing thread! This means that
$\bdisplay$ will hopelessly diverge striving to receive the whole
stream before releasing control to $\ExpressionG$.
This is a known problem which has led to the development of
primitives (such as \texttt{unsafeInterleaveIO} in
Haskell or \texttt{delayIO} in~\cite{JonesW93}) that allow the
execution of I/O actions to interleave with their continuation. In
this paper, we call such primitive $\future$, since its semantics is
also akin to that of \emph{future
  variables}~\cite{SabelS11}. Intuitively, an expression 
$\Bind{\future~\ExpressionE}{\Fun\var{\ExpressionF~\var}}$ 
allows 
to evaluate 
$\ExpressionF~\var$  
even if $\ExpressionE$,
 which typically
involves I/O, has not been completely performed. The variable $\var$
acts as a placeholder for the result of $\ExpressionE$; if
$\ExpressionF$ needs to inspect the structure of $\var$, its
evaluation is suspended until $\ExpressionE$ produces enough data.
%\bdelpaula For similar reasons $\bupkf$ cannot be typed. \edelpaula
%
Using $\future$ we can amend the definitions of $\bupkf$ and $\bdisplay$  thus
\begin{equation}
  \label{eq:dspgood}
  \begin{array}{lll}
   \upkf~\varY \ \Pair\varX{\varX s} & = &
    \Bind{\Send\varY\varX}{
      \Fun{\varY'}{\future~(\upkf~\varY'~\varX s)}
    }\\
  \display~\varY &=&
  \Bind{\Receive\varY}{
    \Fun{\Pair{\varZ}{\varY'}}{
      \Bind{\future~(\display~\varY')}{
        \Fun{\varZ s}{\ExpressionG~\Pair{\varZ}{\varZ s}}
      }
    }
  }
  \end{array}
\end{equation}
where $\display$
allows $\ExpressionG$ to start processing the
stream as its elements come through the connection with the producer
thread.
The type system that we develop in this paper allows us to reason on
sessions involving the exchange of infinite data and when such
exchanges can be done ``productively''. In particular, our type system
flags $\bupkf$ and $\bdisplay$ in~\eqref{equation:upkf} as ill-typed, while it
accepts $\upkf$ and $\display$ in~\eqref{eq:dspgood} as well-typed. To do so, the
type system uses a modal operator $\bullet$ 
  which guarantees that the number of communications
is finite if the number of generated threads is finite. 
% related to the normalisability of expressions. 
As hinted by the
examples~\eqref{equation:upkf} and~\eqref{eq:dspgood}, this operator
plays a major role in the type of $\future$.

We remark that \sid \ does not force exchanged messages to be basic, nor does it prevent exchanging infinite streams in one shot. The purpose of SID is to enable the modelling of systems where communications and infinite data structures are intertwined and to study a typing discipline that guarantees the preservation of productivity in this setting.

\subsection*{Contributions and Outline.}
The \sid\ calculus, defined in \cref{section:language}, combines in an
original way standard constructs from the $\lambda$-calculus and
process algebras  with 
session types  
in the spirit of~\cite{DBLP:conf/esop/HondaVK98,GayV10}.
%
%Our type system is given in~\cref{section:typing}.
The type system, given in~\cref{section:typing}, has the
  novelty of using the modal operator $\tbullet$ to control the
  recursion of programs that perform communications.  To the best of
  our knowledge, the interplay between $\tbullet$ and the type of
  $\future$ is investigated here for the first time.  
The properties of our framework, presented in~\cref{sec:propexp} and~\cref{sec:propproc},
include subject reduction~(\cref{theorem:subjectreduction} and~\cref{theorem:subjectreductionprocesses}),
normalisation of expressions~(\cref{theorem:weakheadnormalization}),
progress and confluence of
processes~(Theorems~\ref{theorem:reducestoreturnI},~\ref{thm:confluence}). Sections~\ref{section:relatedwork} and~\ref{section:conclusions} discuss related and future work, respectively. 
 Appendixes contain the proofs of three  
 theorems. 

\subsection*{Publication History.}
This paper is a thoroughly revised and extended version
of~\cite{SPTD16} and its companion technical
report~\cite{SeveriPadovaniTuostoDezani16TR}.  There are three
substantial improvements compared to previous versions of the paper.
First, we give a much simplified definition of well-polarisation
(\cref{definition:wpp}) resulting in simpler and cleaner proofs.
Second, we have strengthened the progress theorem
(\cref{theorem:reducestoreturnI}) and as a consequence part of its
proof is new.
Finally, the strong normalisation of the reduction without rules
\refrule{r-open} and \refrule{r-future} (\cref{theorem:snrm}) appears
here for the first time.

%%% Local Variables:
%%% mode: latex
%%% TeX-master: "LazySessions"
%%% End:

\section{The  \sid\ Calculus}
\label{section:language}

We use an infinite set of \emph{channels} $\ChannelA$, $\ChannelB$,
$\ChannelC$ and a disjoint, infinite set of \emph{variables} $\varX$,
$\varY$.
We distinguish between two kinds of channels: \emph{shared channels}
are public service identifiers that can only be used to initiate
sessions; \emph{session channels} represent private sessions on which
the actual communications take place.
We distinguish the two \emph{endpoints} of a session channel
$\ChannelC$ by means of a \emph{polarity}
$\Polarity \in \set{ {+}, {-} }$ and write them as $\ChannelC^+$ and
$\ChannelC^-$.  We write $\co\Polarity$ for the dual polarity of
$\Polarity$, where $\co+ = {-}$ and $\co- = {+}$, and we say that
$\ChannelC^\Polarity$ is the \emph{peer endpoint} of
$\ChannelC^{\co\Polarity}$.
A \emph{bindable name} $\bindname$ is either a channel or a variable
and a \emph{name} $\Name$ is either a bindable name or an
endpoint.

\begin{table}[b]
\caption{Syntax of expressions and processes.}
\label{table:syntax}
\framebox[\textwidth]{
\begin{framedmath}
\begin{array}{l}
\begin{array}{@{}c@{\quad}c@{}}
  \begin{array}[t]{r@{~~}c@{~~}l@{\quad}l}
    \Expression & ::= & & \textbf{Expression} \\
    &   & \Constant & \text{(constant)} \\
    & | & \Name        & \text{(name)} \\
    & | & \Fun\var\Expression & \text{(abstraction)} \\
    & | & \Expression\Expression & \text{(application)} \\
    & | & \Split\Expression\varX\varY\Expression & \text{(pair splitting)} \\
    \\
    \Constant & ::= & \rlap{$\unit \mid \pair \mid \open \mid \send \mid \receive \mid \future \mid \return \mid \bind$}
    \\ \\
    \Thing & ::= & 
     \Channel  
     \mid  \var  &   %\textbf{Bindable Name} 
  \end{array}
  &
  \begin{array}[t]{r@{~~}c@{~~}l@{\quad}l}
    \Process & ::= & & \textbf{Process} \\
    &   & \emptyprocess & \text{(idle process)} \\
    & | & \thread\var\Expression & \text{(thread)} \\
    & | & \server\Channel\Expression & \text{(server)} \\
    & | & \Process\parop\Process & \text{(parallel)} \\
    & | & \new\Thing\Process & \text{(restriction)} \\
  \end{array}
\end{array} 
\end{array}
\end{framedmath}
}
\end{table}

The syntax of \emph{expressions} and \emph{processes} is given in
\cref{table:syntax}.
In addition to the usual constructs of the $\lambda$-calculus,
expressions include constants, ranged over by $\Constant$, and pair
splitting. Constants are the unitary value $\unit$, the pair
constructor $\pair$, the primitives for session initiation and
communication $\open$, $\send$, and
$\receive$~\cite{DBLP:conf/esop/HondaVK98,GayV10}, the monadic
operations $\return$ and $\bind$~\cite{JonesW93}, and a primitive
$\future$ to defer computations~\cite{JonesGF96,JonesTutorial2001}.
%
%
%Curry's fixed point combinator need not be primitive since
%  it is typeable.
We do not need a primitive constant for the fixed point operator
because it can be expressed and typed inside the language.
For simplicity, we do not include primitives for branching
and selection typically found in session calculi.
They are straightforward to add and do not invalidate any of the
results.
Expressions are subject to the usual conventions of the
$\lambda$-calculus. In particular, we assume that the bodies of
abstractions extend as much as possible to the right, that
applications associate to the left, and we use parentheses to
disambiguate the notation when necessary.
Following established notation, we write
$\Pair\ExpressionE\ExpressionF$ in place of
$\pair~\ExpressionE~\ExpressionF$, and
$\Fun {\Pair{\varX_1}{\varX_2}} \Expression$ in place of
$\Fun {\varX}\Split{\varX}{\varX_1}{\varX_2}{\Expression}$, and
$\Bind\ExpressionE\ExpressionF$ in place of
$\bind~\ExpressionE~\ExpressionF$.
As usual, we assume that the infix operator $\Bind$ is
right-associative.

A process can be either the idle process $\emptyprocess$ that performs
no actions, a thread $\thread\var\Expression$ with name $\var$ and
body $\Expression$ that evaluates the body and binds the result to
variable $\var$, a $\server\Channel\Expression$ that waits for session
initiations on the shared channel $\Channel$ and spawns a new thread
computing $\Expression$ at each connection, the parallel composition
of processes, and the restriction of a bindable name.
%
% \textcolor{orange}{
% %
%   In \sid, processes do not directly cater for computations and
%   communications, which are in fact specified through expressions.
%   %
%   In other words, processes just yield the machinery to parallelise
%   the evaluation of expressions.
%   %
%   Note \sid\ does not feature recursion explicitly since a fixpoint
%   operator as in the $\lambda$-calculus can be derived (\cf
%   \cref{sec:typing}).
% }
%
In processes, restrictions bind tighter than parallel composition and
we may abbreviate
$\new{\Thing_1}\cdots\new{\Thing_n}\Process$ with
$\new{\Thing_1\cdots\Thing_n}\Process$.

%We have that $\Split\ExpressionF\varX\varY\Expression$ binds both
%$\varX$ and $\varY$ in $\Expression$ and $\new\Channel\Process$ binds
%$\Channel^+$ and $\Channel^-$ within $\Process$ in addition to
%$\Channel$.  
We have that $\Split\ExpressionE\varX\varY\ExpressionF$ binds both
$\varX$ and $\varY$ in $\ExpressionF$ and $\new\Channel\Process$ binds
any occurrence of the endpoints 
$\Channel^+$ and $\Channel^-$ or of the shared channel $\Channel$
 within $\Process$.  
The definitions of \emph{free} and \emph{bound} names
follow as expected. We identify expressions and processes up to
renaming of bound names.

\begin{table}[t]
\caption{Reduction semantics of expressions and processes.}
\label{table:semantics}
\framebox[\textwidth]{
\begin{framedmath}
  \begin{array}{@{}c@{}}
    \multicolumn{1}{@{}l@{}}{\textbf{Reduction of expressions}}
    \\\\
    \inferrule[\defrule{r-beta}]{}{
      (\Fun\var\ExpressionE)~\ExpressionF
      \red
      \ExpressionE \subst\ExpressionF\var
    }
    % \qquad
    % \inferrule[\defrule{r-fix}]{}{
    %   \fix~\Expression
    %   \red
    %   \Expression~(\fix~\Expression)
    % }
    \qquad
    \inferrule[\defrule{r-bind}]{}{
      \Bind{\Return\ExpressionE}\ExpressionF
      \red
      \ExpressionF\ExpressionE
    } 
    \\\\
    \inferrule[\defrule{r-split}]{}{
      \Split{\Pair{\Expression_1}{\Expression_2}}\varX\varY\ExpressionF
      \red
      \ExpressionF\subst{\Expression_1,\Expression_2}{\varX,\varY}
    }
    \qquad
    \inferrule[\defrule{r-ctxt}]{
      \ExpressionE \red \ExpressionF
    }{
      \Context[\ExpressionE]
      \red
      \Context[\ExpressionF]
    }
    \\\\
    \multicolumn{1}{@{}l@{}}{\textbf{Reduction of processes}}
    \\\\
    \inferrule[\defrule{r-open}]{}{
      \server\Channel\Expression \parop \thread\varX{\PContext[\open~\Channel]}
      \red
      \server\Channel\Expression
      \parop \new{\ChannelC\varY}(
      \thread\varX{\PContext[\Return{\ChannelC^+}]}
      \parop
      \thread\varY{\Expression~\ChannelC^-}
      )
    }
    \\\\
    \inferrule[\defrule{r-comm}]{}{
      \thread\varX{\PContext[\send~\Channel^\Polarity~\Expression]}
      \parop
      \thread\varY{\PContext'[\receive~\Channel^{\co\Polarity}]}
      \red
      \thread\varX{\PContext[\Return{\Channel^\Polarity}]}
      \parop
      \thread\varY{\PContext'[\Return{\Pair\Expression{\Channel^{\co\Polarity}}}]}
    }
    \\\\
    \inferrule[\defrule{r-future}]{}{
      \thread\varX{\PContext[\future~\Expression]}
      \red
      \new\varY(\thread\varX{\PContext[\Return\varY]} \parop \thread\varY\Expression)
    }
    \\\\
    \inferrule[\defrule{r-return}]{}{
      \new\var(\thread\var{\Return\Expression} \parop \Process)
      \red
      \Process\subst\Expression\var
    }
    \\\\
    \inferrule[\defrule{r-thread}]{
      \ExpressionE \red \ExpressionF
    }{
      \thread\var\ExpressionE
      \red
      \thread\var\ExpressionF
    }
    \qquad
    \inferrule[\defrule{r-new}]{
      \ProcessP \red \ProcessQ
    }{
      \new\Thing\ProcessP
      \red
      \new\Thing\ProcessQ
    }
    \qquad
    \inferrule[\defrule{r-par}]{
      \ProcessP \red \ProcessQ
    }{
      \ProcessP \parop \ProcessR
      \red
      \ProcessQ \parop \ProcessR
    }
    \qquad
    \inferrule[\defrule{r-cong}]{
      \ProcessP \equiv \ProcessP' \red \ProcessQ' \equiv \ProcessQ
    }{
      \ProcessP
      \red
      \ProcessQ
    }
  \end{array}
\end{framedmath}
}
\end{table}

The operational semantics of expressions is defined in the upper half
of \cref{table:semantics}.
Expressions reduce according to a standard \emph{call-by-name}
semantics, for which we define the \emph{evaluation contexts for
  expressions} below:
\[
\Context ::=
  \Hole
  \mid \Context\Expression
  \mid \Split\Context\varX\varY\Expression
  \mid \open~\Context
  \mid \send~\Context
  \mid \receive~\Context
  \mid \bind~\Context
\]
Note that evaluation contexts do not allow to reduce 
 pair components
or an expression $\Expression$ in 
 $\Fun{\var}{\Expression}$, 
$\bind\ \ExpressionF\ \Expression$,
$\return\ \Expression$, $\future\ \Expression$  and  $\send\ a^p\
\Expression$. 
We say that $\ExpressionE$ is in \emph{normal form} if there is no
$\ExpressionF$ such that $\ExpressionE \red \ExpressionF$.
%
% \textcolor{orange}{
% %
% Observe that the semantics for expressions %in \cref{table:semantics}
% yields the following reductions $\Context[\Bind \Expression
% \ExpressionF] \red^* \Context[\Bind {\return\ \Expression'}
% \ExpressionF] \red \Context[\ExpressionF\ \Expression']$, provided
% that $\Expression \red^* \return\ \Expression'$.
% %
% This permits to \lq\lq suspend\rq\rq\ at $\Expression'$ the evaluation
% of $\Expression$ and start evaluating the continuation of the bind
% $\ExpressionF$.
% }

The operational semantics of processes is given by a structural
congruence relation $\equiv$ (which we leave undetailed since it is
essentially the same as that of the $\pi$-calculus \cite{DBLP:books/daglib/0004377}) and a reduction
relation, defined in the bottom half of \cref{table:semantics}.
The \emph{evaluation contexts for processes} are defined
as\label{ec}
\[
  \PContext ::= \Hole \mid \Bind\PContext\Expression
\]
and force the left-to-right execution of monadic actions, as usual.

Rules~\refrule{r-open} and~\refrule{r-comm} model session
initiation and communication,  respectively. According to \refrule{r-open}, a client
thread opens a connection with a server $\ChannelA$. In the reduct, a
fresh session channel $\ChannelC$ is created, the $\open$ in the client 
%reduces to the  return of  $\ChannelC^+$ 
is replaced by the endpoint $\ChannelC^+$ 
wrapped in the constructor   $\return$.  
 Moreover,  a copy
of the server is spawned into a new thread that has a fresh name
$\varY$ and 
% a body obtained from   that of the server applied to $\ChannelC^-$.
a  body which is the application of 
the expression $\Expression$ (provided by the server) to $\ChannelC^-$. 
 This follows a continuation-passing style since $\Expression$
is a function expecting the end-point of a channel.  
So client and server can communicate using the private channel $\ChannelC$. 
According to \refrule{r-comm},
two threads communicate if one is ready to send some message
$\Expression$ on a session endpoint $\Channel^\Polarity$ and the other
is waiting for a message from the peer endpoint
$\Channel^{\co\Polarity}$. As in~\cite{GayV10}, 
%the
%communication primitives return the session endpoint being used, 
%with
%the difference that in our case the results 
% are monadic actions. 
%In
%particular, the result for the sender is the same session endpoint and
%the result for the receiver is a pair consisting of the received
%message and the session endpoint.
the result for the sender is the same session endpoint and
the result for the receiver is a pair consisting of the received
message and the session endpoint.
The difference is that in our case the results 
have to be wrapped in the constructor $\return$ for monadic actions.

Rules~\refrule{r-future} and~\refrule{r-return} deal with futures.
The former spawns an I/O action $\Expression$ in a separate thread
$\varY$, so that the spawner is able to reduce (using
\refrule{r-bind}) even if $\Expression$ has not been executed yet.
The name $\varY$ of the spawned thread 
% can be 
 is  used as a placeholder
for the value yielded by $\Expression$.
Rule~\refrule{r-return} deals with a future variable $\var$ that has
been evaluated to $\return~\Expression$. In this case, $\var$ 
can be 
replaced by $\Expression$ everywhere within its scope.
% \bchm The
%implementation of this rule is delicate being $\Process$ a parallel of
%running threads.  Since the replaced value is immutable, the
%implementation of the substitution can be done with a non atomic
%mechanism, like a broadcast or a multicast,
%\echm
%\bchpaula Emilio says: shall we remove the rest? Paula says:
%or do we remove the bit on
%broadcast and multicast? \echpaula
%\bchm and lazily only when the
%value of $\var$ is needed. \echm
Note that the rule replaces in a single step the variable $\varX$ in an arbitrary parallel composition of threads running on possibly different hosts. In this respect, the practical realisation of this rule may appear critical, if at all possible. In fact, since the replaced value is immutable, the reduction rule can be implemented without synchronising all the threads that are affected by the replacement, for example by means of a broadcast or multicast communication.

Rule~\refrule{r-thread} lifts reduction of expressions to reduction of
threads. The remaining rules close reduction under restrictions,
parallel compositions, and structural congruence, as expected.  Hereafter, we write $\red^*$ for the reflexive, transitive closure
of $\red$. 

As an example, let 
\[
  \ProcessQ ~ = ~ \new {\mathit{prod}\,\mathit{cons}\,\Channel \,\ChannelC}(\Process\parop\server\Channel~\dspf)
\]
where
\[
  \Process ~ = ~\thread {\mathit{prod}} \upkf\ \ChannelC^+\ (\streamf\ 0)
  \parop 
  \thread {\mathit{cons}} {\display\ \ChannelC^-}
\]
is the process discussed in the introduction. 
It is easy to verify
that
\[
  \Process_0 ~= ~
  \new {\,\mathit{prod}\,  \Channel}
  (
  \thread {\mathit{prod}} {\Bind{\open~\Channel}
    { \Fun \varY {\upkf\ \varY\ (\streamf\ 0)}}}  
  \parop     
  \server\Channel~\dspf  )
  \]
  reduces to process $\ProcessQ$. 

%
%We say that
%$\ExpressionE$ is \emph{irreducible} (or in \emph{normal form}) if
%there is no $\ExpressionF$ such that $\ExpressionE \red
%\ExpressionF$. Similarly, we say that $\ProcessP$ is
%\emph{irreducible} if there is no $\ProcessQ$ such that
%$\ProcessP \red \ProcessQ$.

%%% Local Variables:
%%% mode: latex
%%% TeX-master: "LazySessions"
%%% End:

\section{Typing \sid}
\label{section:typing}

We now develop a typing discipline for \sid.
The challenge comes from the fact that the calculus allows a mixture
of pure computations (handling data) and impure computations (doing
I/O).
In particular, \sid\ programs can manipulate potentially infinite data
while performing I/O operations that produce/consume pieces of such
data as shown by the examples of \cref{section:Introduction}.
Some ingredients of the type system are easily identified from the
syntax of the calculus. We have a core type language with unit,
products, and arrows. As in~\cite{GayV10}, we distinguish between
\emph{unlimited} and \emph{linear} arrows for there sometimes is the
need to specify that certain functions must be applied exactly
once. As in Haskell~\cite{JonesW93,JonesTutorial2001}, we use the
$\mkbasictype{IO}$ type constructor to denote monadic I/O actions. For
shared and session channels we respectively introduce channel types
and session types~\cite{DBLP:conf/esop/HondaVK98}. Finally,
following~\cite{Nakano00:lics}, we introduce the \emph{delay} type
constructor $\tbullet$, so that an expression of type $\tbullet\Type$
denotes a value of type $\Type$ that is available ``at the next moment
in time''.
This constructor is key to control recursion and attain normalisation
of expressions. Moreover, the type constructors $\tbullet$ and
$\mkbasictype{IO}$ interact in non-trivial ways as shown later by the
type of $\future$.

\subsection{Types}\label{subsec:types}

\begin{table}[h]
\caption{Syntax of Pre-types and Pre-session types.}
\label{table:typessyntax}
\framebox[\textwidth]{
\begin{framedmath}
\begin{array}{@{}c@{\qquad}c@{}}
  \begin{array}[t]{r@{~~}c@{~~}l@{\quad}l}
    \Type & ::=^{coind} & & \textbf{Pre-type} \\
    &   & \tbasic & \text{(basic type)} \\
    & | & \SessionType & \text{(session type)} \\
    & | & \tshared\SessionType & \text{(shared channel type)} \\
    & | & \Type \times \Type & \text{(product)} \\
    & | & \Type \arrow \Type & \text{(arrow)} \\
    & | & \Type \larrow \Type & \text{(linear arrow)} \\
    & | & \tio\Type & \text{(input/output)} \\
    & | & \tbullet\Type & \text{(delay)} \\
  \end{array}
  &
  \begin{array}[t]{r@{~~}c@{~~}l@{\quad}l}
    \SessionType & ::=^{coind} & & \textbf{Pre-session type} \\
    &   & \End & \text{(end)} \\
    & | & \tin\Type\SessionType & \text{(input)} \\
    & | & \tout\Type\SessionType & \text{(output)} \\
    & | & \tbullet\SessionType & \text{(delay)}
  \end{array}
\end{array}
\end{framedmath}
}
\end{table}

The syntax of \emph{pre-types} and \emph{pre-session types} is
given by the grammar in \cref{table:typessyntax}, whose productions
are meant to be interpreted coinductively. A pre-(session) type is
a possibly infinite tree, where each internal node is labelled by a
type constructor and has as many children as the arity of the
constructor.
The leaves of the tree (if any) are labelled by either basic types or
$\End$.
We use a coinductive syntax to describe the type of infinite data structures 
(such as streams) and arbitrarily long protocols, e.g. the one between 
$\mathit{prod}$ and $\mathit{cons}$ in Section~\ref{section:Introduction}.

We distinguish between unlimited pre-types (those denoting expressions
that can be used any number of times) from linear pre-types (those
denoting expressions that must be used exactly once).
Let $\linpred$ be the 
smallest predicate defined by
\[
\linq{\tin\Type\SessionType}
\qquad
\linq{\tout\Type\SessionType}
\qquad
\linq{\TypeT \larrow \TypeS}
\qquad
\linq{\tio\Type}
\qquad
\inferrule{
  \linq\TypeT
}{
  \linq{\TypeT \times \TypeS}
}
\qquad
\inferrule{
  \linq\TypeS
}{
  \linq{\TypeT \times \TypeS}
}
\qquad
\inferrule{
  \linq\Type
}{
  \linq{\tbullet\Type}
}
\]
We say that $\Type$ is \emph{linear} if $\linq\Type$ holds and that
$\Type$ is \emph{unlimited}, written $\unq\Type$, otherwise. Note that
all I/O actions are linear, since they may involve communications on
session channels which are linear resources.

\begin{definition}[Types]\label{def:types}
  A pre-(session) type $\Type$ is a \emph{(session) type} if:
  \begin{enumerate}
  \item\label{def:types4} For each sub-term $\Type_1 \arrow \Type_2$
    of $\Type$ such that $\unq{\Type_2}$ we have $\unq{\Type_1}$.
  \item\label{def:types5} For each sub-term $\Type_1 \larrow \Type_2$
    of $\Type$ we have $\linq{\Type_2}$.
  \item\label{def:types2} The tree representation of $\Type$ is
    regular, namely it has finitely many distinct sub-trees.
  \item\label{def:types3} Every infinite path in the tree
    representation of $\Type$ has infinitely many $\bullet$'s.
%  \item \label{def:types1} $\Type$ is not $\infinite$.
  \end{enumerate}
\end{definition}

All conditions except possibly~\ref{def:types3} %and~\ref{def:types1}
are natural.
Condition~\ref{def:types4} essentially says that unlimited functions
are \emph{pure}, namely they do not contain and they cannot
erase communications.
%have side effects. 
Indeed, an
unlimited function (one that does not contain linear names) %in its closure 
that accepts a linear argument should return 
a linear
%such argument in the
result.
%{\color{blue} should contain such argument in the result.}
%
Condition~\ref{def:types5} states that a linear function (one that may contain linear names) %in its closure 
always yields a linear
result. This is necessary to keep track of the presence of linear
names in the function, %'s closure 
even when the function is applied and
its linear arrow type eliminated.
For example, consider $\varZ$ of type $\nat \larrow \nat$ and both $\varY$ and $\varW$ of type $\nat$, then without Condition~\ref{def:types5}
 we could type $(\lambda \var.\varY)(\varZ\ \varW)$ with $\nat$.
 This would be incorrect,
  because it discharges the expression 
 %which would be wrong because it discharges the computation 
 $(\varZ\ \varW)$ involving
the linear name $\varZ$.
Condition~\ref{def:types2} implies that we only consider types
admitting a finite representation, for example using the well-known
``$\mu$ notation'' for expressing recursive types (for the relation
between regular trees and recursive types we refer to~\cite[Chapter
20]{Pierce02}).
%the relation
%between regular trees and recursive types can be found
%in~\cite[Ch. 20]{Pierce02}). 
We define infinite types as trees
satisfying a given recursive equation, for which the existence and
uniqueness of a solution follow from known
results~\cite{Courcelle83}. For example, there 
are unique
pre-types $\badListN$, $\ListN$, and $\infinite$ that respectively
satisfy the equations $\badListN=\nat\times\badListN$,
$\ListN=\nat\times \bullet \ListN$, and $\infinite=\bullet\infinite$.
%
% \Cref{figure:infinitetypes} depicts (part of) the tree representation
%of these pre-types.
%\begin{figure}
%  \include{FigurePreTypes}
%  \caption{Tree representation of some pre-types}
%  \label{figure:infinitetypes}
%\end{figure}
% $\badListN=\nat\times\badListN$,
%$\ListN=\nat\times \bullet \ListN$, and $\infinite=\bullet\infinite$.
%$\ListN$ is a type because  the only infinite path in its tree
%representation (the right spine of $\ListN$ in
%\cref{figure:infinitetypes}) has infinitely many $\bullet$'s. 
% Instead $\badListN$ is not a type because  its tree
%representation has an infinite path without any $\bullet$'s. 
\emph{En passant}, note that linearity is decidable on types due to
Condition~\ref{def:types2}.
The fact that $\linpred$ has been defined above as the
  \emph{smallest} predicate that satisfies certain axioms and rules is
  crucial.
  In particular, $\infinite$ is not linear.

Condition~\ref{def:types3} intuitively means that not all parts of an
infinite data structure can be available at once: those whose type is
prefixed by a $\bullet$ are ``delayed'' in the sense that recursive
calls on them must be deeper.
For example, $\ListN$ is a type %-- the only infinite path in its tree
%representation (the right spine of $\ListN$ in
%\cref{figure:infinitetypes}) has infinitely many $\bullet$'s -- 
that denotes streams of natural numbers where each subsequent element
of the stream is delayed by one $\bullet$ compared to its
predecessor. Instead $\badListN$ is not a type: %-- its tree
%representation has an infinite path without any $\bullet$'s -- 
it %which
would denote an infinite stream of natural numbers, whose elements are
all available right away.
Similarly, $\OutN$ and $\InN$ defined by
$\OutN = \tout{\nat} \bullet \OutN$ and
$\InN = \tin{\nat} \bullet \InN$ are session types, while $\badOutN$
and $\badInN$ defined by $\badOutN = \tout{\nat} \badOutN$ and
$\badInN = \tin{\nat} \badInN$ are not.
The type $\infinite$ is somehow degenerate in that it contains no
actual data constructors. Unsurprisingly, we will see that
non-normalising terms such as
$\omegaterm = (\lambda \var. \var\ \var)(\lambda \var. \var\ \var)$ can
only be typed with $\infinite$.  Without Condition~\ref{def:types3},
$\omegaterm$ could be given any type.

%We adopt the usual conventions of parentheses and  arrow types associate to the right.
We adopt the usual conventions of parentheses.
Arrow types associate to the right.
We assume the following precedence among type constructors: $\bullet$, $\mkbasictype{IO}$, 
$\times$, followed by $\arrow$ and $\larrow$ with the same (and lowest) precedence.  
We also need a notion of duality to relate the session types
associated with peer endpoints. Our definition extends the one
of~\cite{DBLP:conf/esop/HondaVK98} in the obvious way to delayed
types. More precisely, the \emph{dual} of a session type
$\SessionType$ is the session type $\co\SessionType$ coinductively
defined by the equations:
\[
  \co\End = \End
  \qquad
  \co{\tin\Type\SessionType} = \tout\Type\co\SessionType
  \qquad
  \co{\tout\Type\SessionType} = \tin\Type\co\SessionType
  \qquad
  \co{\tbullet\SessionType} = \tbullet\co\SessionType 
\]
  Sometimes we will write $\tbullet[n]\Type$ in place of
  $\smash{\underbrace{\tbullet\cdots\tbullet}_{n\text{-times}}\Type}$.
%, where there are $n$ $\tbullet$'s

\subsection{Typing Rules for Expressions}
\label{sec:typingexpressions}
%We show the typing of expressions and processes.
%
First we assign types to constants:
%, where $\TypeT$, $\TypeS$, and $\SessionType$ stand for any type and session type.
%
{\small\[
\begin{array}[t]{l@{\!\,\,\,}l@{\!\,\,\,}l}
  \unit & : & \Unit \\
  \return & : & \Type\arrow\tio\Type \\
  \open & : & \tshared\SessionType\arrow\tio\SessionType \\
\end{array}
\quad
\begin{array}[t]{l@{\!\,\,\,}l@{\!\,\,\,}l}
  \send & : & \tout\Type\SessionType\arrow\Type\larrow\tio\SessionType \\
  \receive & : & \tin\Type\SessionType \arrow \tio(\Type\times\SessionType) \\
  \future & : & \tbullet^n(\tio\Type)\arrow\tio{\tbullet^n\Type}
\end{array}
\quad
\begin{array}[t]{l@{\!\,\,\,}l@{\!\,\,\,}l}
  \pair & : & \Type\arrow\TypeS\larrow\Type\times\TypeS \hfill \text{if $\linq\Type$} \\
  \pair & : & \Type\arrow\TypeS\arrow\Type\times\TypeS \hfill \text{if $\unq\Type$} \\
  \bind & : & \tio\Type\arrow(\Type\larrow\tio\TypeS)\larrow\tio\TypeS
\end{array}
\]}

\noindent
Each constant $\Constant \neq \unit$ is polymorphic and we use
$\typeof(\Constant)$ to denote the set of types assigned to
$\Constant$, e.g. $\typeof(\return) = \bigcup_t\{ \Type \arrow \tio
\Type \}$.

The types of $\unit$ and $\return$ are as expected.
The type schema of $\bind$ is similar to the type it has in Haskell,
except for the two linear arrows.
The leftmost linear arrow allows linear functions as the second
argument of $\bind$. The rightmost linear arrow is needed to satisfy
Condition~\ref{def:types4} of \cref{def:types}, being $\tio\Type$
linear.
The type of $\pair$ is also familiar, except that the second arrow is
linear or unlimited depending on the first element of the pair. If the
first element of the pair is a linear expression, then it can (and
actually must) be used for creating exactly one pair.
The types of $\send$ and $\receive$ are almost the same as
in~\cite{GayV10}, except that these primitives return I/O actions
instead of performing them as side effects.
The type of $\open$ is standard and obviously justified by its
operational semantics.
The most interesting type is that of $\future$, which commutes delays
and the $\mkbasictype{IO}$ type constructor. Intuitively, $\future$
applied to a delayed I/O action returns an immediate I/O that yields a
delayed expression.
This fits with the semantics of $\future$, since its argument is
evaluated in a separate thread and the one invoking $\future$ can
proceed immediately with a placeholder for the delayed expression.
If the body of the new thread reduces to $\return~\Expression$, then
$\Expression$ substitutes the placeholder.
% \footnote{Funnily enough,
%   $\future$ plays the same role as $\mkconstant{delayIO}$
%   in~\cite{JonesW93}, where it is presented as a primitive that delays
%   an I/O action, whence the name.}
%
% The application of $\send$ and $\receive$ to polarised channels must
% be typed by linear types, since polarised channels cannot be erased or
% duplicated. The shapes of these types as well as the shape of the
% types of $\open$ are justified by looking at their reduction rules.
% The types of $\return$ and $\bind$ are as expected. The more
% interesting types are those of $\future$: they commute delays with
% input/output. If $n=0$, we get for $\future$ the type assumed
% in~\cite{SabelS11}. We can informally say that in our calculus
% $\future$ applied to a delayed input/output expression returns an
% input/output delayed expression. This fits with the semantics of
% $\future$, since the argument of $\future$ is evaluated in a fresh
% thread. If the body of this new thread reduces to
% $\return~\Expression$, then $\Expression$ becomes argument of
% $\return$ in the original thread.

The typing judgements for expressions have the shape
$\wte\TypeContext\Expression\Type$, where \emph{typing environments}
(for used resources) $\TypeContext$ are mappings from variables to
types, from shared channels to shared channel types, and from
endpoints to session types:
\[
  \TypeContext \quad ::= \quad
  \emptyset \quad\mid \quad
  \TypeContext,x:\Type \quad\mid\quad
  \TypeContext,\Channel:\tshared\SessionType \quad\mid\quad
  \TypeContext, \Channel^{\Polarity}:\SessionType
\]
The domain of $\TypeContext$, written $\dom(\TypeContext)$, is defined
as expected.  A typing environment $\TypeContext$ is {\em linear},
notation $\linq \TypeContext$, if there is
$\Name : \Type \in \TypeContext$ such that $\linq \Type$; otherwise
$\TypeContext$ is {\em unlimited}, notation $\unq \TypeContext$.  As
in~\cite{GayV10}, we use a (partial) combination operator $+$ for
environments that prevents names with linear types from being used
more than once.  Formally the environment
$\TypeContext + \TypeContext'$ is defined inductively on
$\TypeContext'$ by
\[
  \begin{array}{@{}rcl@{}}
    \TypeContext + \emptyset & = & \TypeContext
    \\
    \ptc{\TypeContext}{(\TypeContext', \Name:\Type)} & = &
    \ptc{ (\ptc{\TypeContext}{\TypeContext'})}{\Name:\Type}
  \end{array}
\quad\text{where}\quad
\ptc{\TypeContext}{\Name:\Type}=\begin{cases}
\TypeContext,\Name:\Type      & \text{if }\Name\not\in\dom(\TypeContext), \\
\TypeContext      & \text{if }\Name:\Type\in\TypeContext \text{ and }\unq\Type, \\
 \text{undefined}     & \text{otherwise}.
\end{cases}
\]

\begin{table}[t]
  \caption{\label{table:eTypingrules}Typing rules for expressions.}
  \framebox[\textwidth]{
    \begin{framedmath}
      \begin{array}{@{}c@{}}
        \inferrule[\defrule\introbullet]{
        \wte{\TypeContext}{\Expression}\Type
        }{
        \wte{\TypeContext}{\Expression}{\tbullet\Type}
        }
        % ~~\unq \Type
        \qquad
        \inferrule[\defrule\const]{
        \strut
        }{
        \wte\TypeContext\Constant\Type
        }
        ~~
        \begin{lines}[c]
          \unq{\TypeContext},
          \Type \in \typeof(\Constant)  
        \end{lines}
        \qquad
        \inferrule[\defrule\axiom]{
        \strut
        }{
        \wte{\TypeContext, \Name: \Type}{\Name}{\Type}
        }
        ~~
        \unq\TypeContext
        \\\\
        \inferrule[\defrule\introarrow]{
        \wte{\TypeContext,\varX: \tbullet[n]\Type }{\Expression}{\tbullet[n]\TypeS}
        }{
        \wte{\TypeContext}{\Fun \varX \Expression}{\tbullet[n](\Type \arrow \TypeS)}
        }
        ~~\unq{\TypeContext}
        \quad
        \inferrule[\defrule\elimarrow]{
          \wte{\TypeContext_1}{\Expression_1}{\tbullet[n](\Type \arrow \TypeS)}
          \\
          \wte{\TypeContext_2}{\Expression_2}{\tbullet[n]\Type}
        }{
        \wte{\TypeContext_1 + \TypeContext_2}{\Expression_1\Expression_2}{\tbullet[n]\TypeS}
        }
        \quad
        \inferrule[\defrule\introlarrow]{
        \wte{\TypeContext,\varX: \tbullet[n]\Type }{\Expression}{\tbullet[n]\TypeS}
        }{
        \wte{\TypeContext}{\Fun \varX \Expression}{\tbullet[n](\Type \larrow \TypeS)}
        }
        \\\\
        \inferrule[\defrule\elimlarrow]{
          \wte{\TypeContext_1}{\Expression_1}{\tbullet[n](\TypeT \larrow \TypeS)}
          \\
          \wte{\TypeContext_2}{\Expression_2}{\tbullet[n]\Type}
        }{
          \wte{\TypeContext_1 + \TypeContext_2}{\Expression_1\Expression_2}{\tbullet[n]\TypeS}
        }
        \quad
        \inferrule[\defrule\elimprod]{
          \wte{\TypeContext_1}{\ExpressionE}{\tbullet[n](\Type_1 \times \Type_2)}
          \\
          \wte{\TypeContext_2, \varX : \tbullet[n]\Type_1, \varY : \tbullet[n]\Type_2}{\ExpressionF}{\tbullet[n]\TypeS}
        }{
          \wte{\TypeContext_1+\TypeContext_2}{\Split\ExpressionE\varX\varY\ExpressionF}{\tbullet[n]\TypeS}
        }
      \end{array}
    \end{framedmath}
  }
\end{table}

The typing axioms and rules for expressions are given
in~\cref{table:eTypingrules}. 
The side condition $\unq\TypeContext$ in $\refrule\const$,
$\refrule\axiom$, and $\refrule\introarrow$ is standard~\cite{GayV10}.
The typing rules  differ from the ones in~\cite{GayV10} on
 two crucial details. 
First of all, each rule allows for an arbitrary delay in front of the
types of the entities involved. Intuitively, the number of
$\tbullet$'s represents the delay at which a value becomes
available. So for example, rule~\refrule\introarrow{} says that a
function which accepts an argument $\var$ of type $\TypeT$ delayed by
$n$ and produces a result of type $\TypeS$ delayed by the same $n$ has
type $\tbullet[n](\TypeT \arrow \TypeS)$, that is a function delayed
by $n$ that maps elements of $\TypeT$ into elements of $\TypeS$.
The second difference with respect to the type system in~\cite{GayV10}
is the presence of rule~\refrule\introbullet, which allows to further
delay a value of type $\Type$.
Crucially, it is not possible to \emph{anticipate} a delayed value: if
it is known that a value will only be available with delay $n$, then
it will also be available with any delay $m \geq n$, but not earlier.

\subsection{Examples of Type Derivations for  Expressions} 
Using rule $\refrule\introbullet$ and the recursive type 
$s= \tbullet s \rightarrow t$, we can derive that the fixed point
combinator\label{fix} % operator
\[
\fix = \lambda \varY. (\lambda \var. \varY\ (\var\ \var) ) (\lambda
\var. \varY\ (\var\ \var) )
\]
has type $(\tbullet \Type \rightarrow \Type)\rightarrow \Type$
by assigning 
the type $s \rightarrow t$
to the first occurrence of $\lambda \var. \varY\ (\var\ \var) $
and the type  $\tbullet s \rightarrow t$
to the second one~\cite{Nakano00:lics}.

%The difference is that $\fix$ is typeable in our system only  if $\unq \Type$
%due to the condition of the rule \refrule\introbullet. 
%  This condition is necessary, since otherwise we
%  could  derive that 
%  $(\fix~\return)$ 
%  has type $\Type$ where $\Type = \bullet (\tio \Type)$ 
%  and that  $\fix~\future$
%  has type $(\tio \infinite)$
%  where $\infinite = \tbullet \infinite$.
%
% There is no rule of $\bullet$ elimination, since we cannot anticipate
% something that we have not got yet.
%
% Notably in all rules but $\refrule\introbullet$ all involved types
% have the same number of $\bullet$'s. This is not a constraint in case
% of unlimited types, thanks to rule $\refrule\introbullet$, but it
% avoids postponing the use of expressions typed by linear types.

%In the examples, we will use the rules \refrule{\introprod},
%\refrule{\fixrule}, \refrule{\bindrule}, \refrule{\futurerule},
%\refrule{$\times \arrow$ I} and \refrule\returnrule
%

% \medskip

It is possible to derive the following types for the functions in
\Cref{section:Introduction}:
\[
  \begin{array}{l@{\quad}l@{\quad}l}
    \streamf :   \nat \arrow \ListN 
    &
      \upkf  :   \OutN \arrow \ListN \arrow \tio \infinite 
    &
      \dspf  :  \InN  \arrow \tio \ListN
  \end{array}
\]
where, in the derivation for $\dspf$, we assume type
$\ListN \arrow \tio \ListN$ for $\ExpressionG$. We show the most
interesting parts of this derivation. %For this 
We use the following
rules, which are easily  obtained  %derived 
from those
in~\cref{table:eTypingrules} and the types of the constants.
\[
\begin{array}{lll}
%  \inferrule[\defrule\introprod]{
%  \wte{\TypeContext_1}{\Expression_1}{\tbullet[n]\TypeT}
%  \\
%  \wte{\TypeContext_2}{\Expression_2}{\tbullet[n]\TypeS}
%  }{
%  \wte{\TypeContext_1 + \TypeContext_2}{\Pair{\Expression_1}{\Expression_2}}{\tbullet[n] (\Type \times \TypeS)}
%  }
%&\qquad\qquad&
  \inferrule[\defrule\fixrule]{
  \wte{\TypeContext, x: \bullet \Type }{\Expression}{\TypeT}
  }{
  \wte{\TypeContext}{\fix~\Fun{x} \Expression}{\Type}
  }~~\unq \TypeContext
  &\qquad\qquad&
  \inferrule[\defrule\bindrule]{
  \wte{\TypeContext_1}{\Expression_1}{\tbullet[n](\tio\TypeT)}
  \\
  \wte{\TypeContext_2}{\Expression_2}{\tbullet[n](\Type \larrow \tio \TypeS)}
  }{
  \wte{\TypeContext_1 + \TypeContext_2}{\Bind{\Expression_1}{\Expression_2}}{\tbullet[n] \tio \TypeS}
  }
\\\\
\inferrule[\defrule\futurerule]{
  \wte{\TypeContext }{\Expression}{\tbullet[n+m] \tio \TypeT}
  }{
  \wte{\TypeContext}{\future~\Expression}{\tbullet[n] \tio \tbullet[m] \Type}
  }
 &&
  \inferrule[\defrule{$\times \arrow$ I}]{
  \wte{\TypeContext, \varX_1: \tbullet[n]\Type_1, \varX_2 :\tbullet[n]\Type_2 }{\Expression}{\tbullet[n]\TypeS}
  }{
  \wte{\TypeContext}
      {\Fun{\Pair{\varX_1}{\varX_2}} \Expression}
      {\tbullet[n](\Type_1 \times \Type_2 \arrow \TypeS)} 
  } \quad \unq \TypeContext   
\end{array}
\]
In order to  type 
%of 
$\display$ we desugar its recursive definition
  %in Section~\ref{section:Introduction} 
as $\display = \fix~(\Fun \var \Fun \varY \Expression)$, where
\[
  \begin{array}{lcl}
    \Expression & = & \Bind{\Expression_1}{\Expression_2}
  \end{array}
  \qquad
  \begin{array}{lcl}
    \Expression_1 & = & \Receive \varY
    \\
    \Expression_2  & = & \Fun {\Pair{\varZ}{\varY'}}
                         \Bind{\Expression_3}{\Expression_4}
  \end{array}
  \qquad
  \begin{array}{lcl}
    \Expression_3 & = & \future{\big(\var \ \varY' \big)}
    \\
    \Expression_4 & = & \Fun {\varZ s}
                        {\ExpressionG {\Pair{\varZ}{\varZ s}}}
  \end{array}
\]

%Now 
We  derive 
\begin{prooftree}
   %\alwaysNoLine
   \AxiomC{$\vdots$}
   \UnaryInfC{$\TypeContext_2
    \vdash  \Expression_1 :\tio (\nat \times \bullet \InN) $}
       \AxiomC{$\nabla$}
     \UnaryInfC{$\TypeContext, \TypeContext_1,\TypeContext_3, \TypeContext_4
    \vdash  \Bind{\Expression_3}{\Expression_4}: \tio \ListN$}
   \LeftLabel{\refrule{$\times \arrow$ I}}
   \alwaysSingleLine
    \UnaryInfC{$\TypeContext,\TypeContext_1
    \vdash  \Expression_2 : (\nat \times \bullet \InN) \arrow \tio \ListN$}
   \LeftLabel{\refrule{\bindrule}} 
   \BinaryInfC{$\TypeContext, \TypeContext_1,\varY: \InN
    \vdash   \Expression : \tio \ListN$}
   \LeftLabel{\refrule{\introarrow}} 
   \UnaryInfC{$\TypeContext,\TypeContext_1
                     \vdash \Fun{\varY} \Expression :  \InN \arrow \tio  \ListN$}
\LeftLabel{\refrule{\fixrule}} 
\UnaryInfC{$\TypeContext\vdash \display : \InN \arrow \tio \ListN$}
\end{prooftree}
\vspace{4pt}
where 
$\TypeContext = \ExpressionG: \ListN \arrow \tio \ListN$,
$\TypeContext_1 = \var: \bullet (\InN \arrow \tio \ListN)$,
$\TypeContext_2 = \varY: \InN$,
$\TypeContext_3= \varY' : \bullet \InN$ and
$\TypeContext_4 = \varZ :\nat$.   
The derivation $\nabla$ is as follows.
\begin{prooftree}
 \AxiomC{$  \wte{\TypeContext_1}{\var}{\tbullet (\InN \arrow \tio \ListN)}
       $}
 \AxiomC{$\wte{\TypeContext_3}{\varY'}{\tbullet \InN }$}
   \LeftLabel{\refrule{\elimarrow}} 
   \BinaryInfC{$\wte{\TypeContext_1, \TypeContext_3}
                    {\var~\varY'}{\bullet \tio \ListN}$}
\LeftLabel{\refrule{\futurerule}} 
       \UnaryInfC{$\TypeContext_1, \TypeContext_3
    \vdash  \Expression_3: \tio \bullet \ListN$}
    \AxiomC{$\vdots$}
     \UnaryInfC{$ \TypeContext,\TypeContext_4
    \vdash  \Expression_4: \bullet \ListN \arrow \tio \ListN$}   
     \LeftLabel{\refrule{\bindrule}}
     \BinaryInfC{$\TypeContext, \TypeContext_1,\TypeContext_3, \TypeContext_4
    \vdash  \Bind{\Expression_3}{\Expression_4}: \tio \ListN$}
\end{prooftree} 
\vspace{4pt}
Note that the types of the premises of
\refrule{\elimarrow} in the above derivation have a $\bullet$
constructor in front.
Moreover, $\future$ has a type that pushes the $\bullet$ inside the
$\mkbasictype{IO}$; this is crucial for typing $\Expression_4$ with
$ (\bullet \ListN \arrow \tio \ListN)$.
We can assign the type $ \bullet \ListN \arrow \tio \ListN $ to
$\Expression_4$ by guarding the argument $\varZ$ of type
$\bullet \ListN$ under the constructor $\pair{}{}$.
Without $\future$,  
 the expression $\Bind{\var~\varY'}{\Expression_4}$ has
type $\bullet \tio \ListN$ and for this reason
$\bdisplay$ cannot be typed.

%the expression $\Bind{\Expression_3}{\Expression_4}$ would have
%type $\bullet (\tio \ListN)$ and 
%$\display$ would be
%untypeable.

Controlling guardedness of recursion is subtle as it could require
types with several bullets.
  For example, let $\Expression= \Split{ys}{y}{zs}{(s~zs)}$ and
  consider the function
  \[
  \skipeven = 
  \fix~\Fun s \Fun { \Pair{x} {ys}}
  \Pair{x}{\Expression}
  \]
  that deletes the elements at even positions of a stream.
  Function
  $\skipeven$ has type $\ListN \arrow \ListNtwo$, where $\ListNtwo =
  \nat \times \bullet \bullet \ListNtwo$. 
   We  derive
\begin{prooftree}
%\AxiomC{$\vdash \fix : (\bullet \Type \arrow \Type) \arrow \Type$}
%   \alwaysNoLine
%  \AxiomC{$xs: \ListN \vdash xs: \nat \times \bullet \ListN$} 
      \AxiomC{$\TypeContext \vdash x: \nat$}   
%         \alwaysNoLine
          \AxiomC{$\nabla$}
          \UnaryInfC{$\TypeContext \vdash  \Expression :\bullet \bullet \ListNtwo $}
          \alwaysSingleLine
      \LeftLabel{\refrule{\introprod}} 
   \BinaryInfC{$\TypeContext
   \vdash  \Pair{x} {\Expression}:\ListNtwo$}
   \alwaysSingleLine
   \LeftLabel{\refrule{\introarrow}} 
   \UnaryInfC{$s:\bullet ( \ListN \arrow \ListNtwo) \vdash 
    \Fun { \Pair{x} {ys}} \Pair{x}{\Expression} :\ListN \arrow \ListNtwo$}
  \LeftLabel{\refrule{\fixrule}} 
\UnaryInfC{$\vdash \skipeven : \ListN \arrow \ListNtwo$}
\end{prooftree}
\vspace{4pt}
where $\TypeContext = s:\bullet ( \ListN \arrow \ListNtwo), x: \nat,
ys: \bullet \ListN $, rule $\refrule\introprod$ is
\[\inferrule[\defrule\introprod]{
  \wte{\TypeContext_1}{\Expression_1}{\tbullet[n]\TypeT}
  \\
  \wte{\TypeContext_2}{\Expression_2}{\tbullet[n]\TypeS}
  }{
  \wte{\TypeContext_1 + \TypeContext_2}{\Pair{\Expression_1}{\Expression_2}}{\tbullet[n] (\Type \times \TypeS)}
  }\]  and the type derivation $\nabla$ is
\begin{prooftree}
\AxiomC{$\TypeContext \vdash ys:  \bullet (\nat \times \bullet  \ListN)$}
   \AxiomC{$\TypeContext' \vdash s:  \bullet (\ListN \arrow \ListNtwo)$}
   \LeftLabel{\refrule{\introbullet}} 
   \UnaryInfC{$\TypeContext' \vdash s: \bullet \bullet(\ListN \arrow \ListNtwo) $}
   \AxiomC{$\TypeContext' \vdash zs: \bullet\bullet \ListN$}
   \LeftLabel{\refrule{\elimarrow}} 
   \BinaryInfC{$ \TypeContext' \vdash s~zs: \bullet\bullet\ListNtwo$}
\LeftLabel{\refrule{\elimprod}} 
\BinaryInfC{$\TypeContext \vdash  \Expression :\bullet \bullet \ListNtwo $}
\end{prooftree}
\vspace{4pt}
where $\TypeContext'=s:\bullet (\ListN \arrow \ListNtwo), y: \bullet \nat, zs: \bullet\bullet \ListN $.
 Note that in the above derivation, the first premise of   
 $\refrule{\elimarrow}$
has two $\bullet$'s in front of the arrow type. The same derivation can be done in the system of~\cite{Nakano00:lics}. Instead~\cite{AM13} uses clock variables and~\cite{CBGB15} uses one constant to type this example as a particular case of lifting guarded recursive data to coinductive data.

\subsection{Typing Rules for Processes}
\label{sec:typingprocesses}
The typing judgements for processes have the shape
$\wtp\TypeContext\Process\ProvideContext$, where $\TypeContext$ is a
typing environment as before, while $\ProvideContext$ is a
\emph{resource environment}, keeping track of the resources defined in
$\Process$.
In particular, $\ProvideContext$ maps the names of threads and servers
in $\Process$ to their types and it is defined by
\[
\ProvideContext \quad ::= \quad \emptyset \quad\mid\quad
\ProvideContext,x:\Type \quad\mid\quad
\ProvideContext,\Channel:\tshared\SessionType
\]
%
% Hereafter, when writing $\ProvideContext_1, \ProvideContext_2$
% we implicitly assume that
% $\dom(\ProvideContext_1) \cap \dom(\ProvideContext_2) = \emptyset$.
%
\Cref{table:pTypingrules} gives the typing rules for processes.
A thread is well-typed if so is its body, which must be an I/O action.
The type of a thread is that of the result of its body, where the
delay moves from the I/O action to the result.
%
% The type associated with a thread is that of the result produced by
% its body with the same delay.
%
The side condition makes sure that the thread is unable to use the
very value that it is supposed to produce.
%and the resulting definition
%environment associates the name of the thread with the type of the value
%produced by its body.
%
%
The resulting environment for defined resources associates  the
name of the thread with the type of the action of its body.
%it
%  We can type a thread (rule $\refrule{thread}$) if its name
% does not occur in the use environment and the body has an output type
% (possibly delayed). This agrees with the reduction rule of $\return$,
% where the result of the evaluation of a thread is just the $\return$
% of an expression, then typed by an input/output type (possibly
% delayed).
%
%
A server is well-typed if so is its body $\Expression$, which must be
a function from the dual of $\SessionType$ to an I/O action.
This agrees with the reduction rule of the server, where the
application of $\Expression$ to an endpoint becomes the body of a new
thread each time the server is invoked.
It is natural to forbid occurrences of free variables and linear %shared
channels in server bodies. This is assured by the condition
$\sharedq \TypeContext$, which requires $\TypeContext$ to contain only
shared channels. Clearly $\sharedq \TypeContext$ implies
$\unq{\TypeContext}$, and then we can type the body $\Expression$ with
a non linear arrow.
The type of the new thread (which will be $\Type$
if $\Expression$ has type $\co\SessionType\arrow\tio\Type$) must be
unlimited, since a server can be invoked an arbitrary number of times. The environment $\TypeContext + \Channel : \tshared\SessionType$ in
the conclusion of the rule makes sure that the type of the server as
seen by its clients is consistent with its definition.

\begin{table}
\caption{\label{table:pTypingrules}Typing rules for processes.}
\framebox[\textwidth]{
\begin{framedmath}
\begin{array}{@{}c@{}}
  \inferrule[\defrule{thread}]{
    \wte{\TypeContext}{\Expression}{\tbullet[n](\tio\Type)}
  }{
    \wtp{\TypeContext}{\thread\var\Expression}{\varX:\tbullet[n]\Type}
  }~\varX \not\in \dom(\TypeContext)
  \qquad
   \inferrule[\defrule{server}]{
    \wte{\TypeContext}{\Expression}{\co\SessionType\arrow\tio\Type}
  }{
    \wtp{\TypeContext+\Channel:\tshared\SessionType}{\server\Channel\Expression}{\Channel:\tshared\SessionType}
  }
  ~~\begin{lines}[c]
    %\unq\TypeContext \\
     \sharedq \TypeContext \\
    \unq\Type
  \end{lines}
  \\\\
  \inferrule[\defrule{par}]{
    \wtp{\TypeContext_1}{\Process_1}{\ProvideContext_1}
    \qquad
    \wtp{\TypeContext_2}{\Process_2}{\ProvideContext_2}
    %
%        \\
%    \ctc{\TypeContext_1}{\TypeContext_2}
  }{
    \wtp{\TypeContext_1 + \TypeContext_2}{\Process_1 \parop \Process_2}{\ProvideContext_1, \ProvideContext_2}
  }
  \qquad    
  \inferrule[\defrule{session}]{
    \wtp{\TypeContext, \Channel^\Polarity : \SessionType, \Channel^{\co\Polarity} : \co\SessionType}\Process\ProvideContext
  }{
    \wtp\TypeContext{\new\Channel\Process}\ProvideContext
  }
  \qquad 
  \inferrule[\defrule{\newrule}]{
    \wtp{\TypeContext, \Thing : \Type}\Process{ \ProvideContext, \Thing : \Type }
  }{
    \wtp\TypeContext{\new\Thing\Process}\ProvideContext
  }
\end{array}
\end{framedmath}
}
%\\\\
%\paic{}{We need to add the condition in \refrule{thread}. Otherwise we have a problem with 
%$\nu x. (\thread x (\return x) \parop \thread y (\return~x))$}{}
\end{table}

%Without the condition $\unq \Type$, we could type  process 
%that have deadlock such as 
%$\nu \varX. (\server~\Channel~\Expression \parop 
%\thread \varX(\Bind{\open~\Channel}{\receive}))$
%where 
% $\Expression = \lambda x. \return~(\send~\varX~4)$.
%The second side condition requires 
% that the type of 
%the new thread (which will be $\Type$ if $\Expression$ has
%type $\co\SessionType\arrow\tio\Type$) are unlimited, since a server
%can be invoked an arbitrary number of times. 
%The condition  $\unq\Type$ in the typing rule \refrule{server}.
%is necessary to make sure that services use all the linear values they own. 
%We require that both the use environment and
%the type of the new thread (which will be $\Type$ if $\Expression$ has
%type $\co\SessionType\arrow\tio\Type$) are unlimited, since a server
%can be invoked an arbitrary number of times.  
%

The remaining rules are conventional.
In a parallel composition we require that the sets of entities
(threads and servers) defined by $\Process_1$ and $\Process_2$ are
disjoint.
%, namely that there is at most one thread and at most one server with a given name.
%
This is enforced by the fact that the respective resource environments
$\ProvideContext_1$ and $\ProvideContext_2$ are combined using the
operator $\_,\_$ which (as usual) implicitly requires that
$\dom(\ProvideContext_1) \cap \dom(\ProvideContext_2) = \emptyset$.
The restriction of a session channel $\Channel$ introduces
associations for both its endpoints $\Channel^+$ and $\Channel^-$ in
the typing environment with dual session types, as usual.
Finally, the restriction of a bindable name $\Thing$ 
%\eMnote{other
%  than a session name} 
  introduces associations in both the typing and
the resource environment with the same type $\Type$. This makes sure
that in $\Process$ there is exactly one definition for $\Thing$, which
can be either a variable which names a thread or a shared channel
which names a server,
%$\varX$ from a thread $\thread{\varX}{\Expression}$ or a channel $\Channel$
%from $(\server~\Channel~\Expression)$, 
and that every usage of $\Thing$
is consistent with its definition.

\subsection{Example of Type Derivation for  Processes} 
Let $\incf : \nat \arrow \nat$ be the increment function on natural numbers,
and consider
\begin{equation}
\label{equation:incList}
\begin{array}{llll }
    \mapf\ \varX & = &  & {
                \Bind {\Receive  \varX  } {}}\\
                & & & {{
                \Fun {\Pair{\varY}{\varX'}}}
                     \Bind{\future{\big(\mapf\  \varX' \big)}}{}}\\
                & & & {
                     { \Fun \varZ 
                           {\Return{\Pair{{\incf\ \varY}}{\varZ }}}}}
  \end{array}
\end{equation}
which receives natural numbers in a channel $\varX$, increments them by one
and  returns  
them in   
a stream.
Note that the function $\mapf$ in (\ref{equation:incList}) is the
function $\display$ in (\ref{eq:dspgood}) once $\ExpressionG$ is
instantiated with
$\Fun {\Pair{\var_1}{\var_2}}{\Return{\Pair{{\incf\ {\var_1}}}{\var_2
    }}}$.
Then, the process 
\begin{equation}
\label{equation:lastexampleintro}
  \thread \var \upkf\ \ChannelC^+\ (\streamf\ 0)
  \ \ \parop \ \
  \thread \varY \Bind {(\mapf\ \ \ChannelC^-)}
   {\Send \ChannelB^+ }
  \ \ \parop \ \
  \thread \varZ \receive\ \ChannelB^-
\end{equation}
sends on channel $\ChannelB$ 
% a stream of  ones.
the whole sequence of integers starting from $1$.
We show  part of a type derivation for the thread named $\varY$
in (\ref{equation:lastexampleintro}).
 \[
\inferrule*[right = \refrule{thread}]{
\inferrule*[right= \refrule{bind}]
{
\wte{\ChannelC^-:\InN}
      {\mapf\ \ \ChannelC^-}{\tio \ListN}
      %\qquad
 \\
 \wte{\ChannelB^+: \tout{\ListN}{\End}}
     {\Send \ChannelB^+ }{\tio \ListN \arrow \tio \End}     
}
{ 
\wte{\ChannelC^-:\InN, \ChannelB^+: \tout{\ListN}{\End}}
      {\Bind {(\mapf\ \ \ChannelC^-)}
   {\Send \ChannelB^+ }}{\tio \End}
   }
 }
{ \wtp{\ChannelC^-:\InN, \ChannelB^+: \tout{\ListN}{\End}}
      {\thread \varY \Bind {(\mapf\ \ \ChannelC^-)}
   {\Send \ChannelB^+ }}{\varY : \End}
 } 
   \]

%%% Local Variables:
%%% mode: latex
%%% TeX-master: "LazySessions"
%%% End:

\section{Properties of Typeable Expressions}
\label{sec:propexp}

This section is devoted to the proof of the two most relevant
properties of typeable expressions, which are subject reduction
(reduction of expressions preserves their types) and normalisation.
As informally motivated in \cref{section:typing}, the type constructor
$\bullet$ controls recursion and guarantees normalisation of any
expression that has a type different from $\infinite$. 

\subsection{Subject Reduction for Expressions}\label{subsec:sre}

The proof of subject reduction for expressions 
(Theorem \ref{theorem:subjectreduction})
is standard except for 
the fact that we are using the modal operator $\bullet$. 
For this, we need \cref{lemma:delay} below,
which says  
that 
the type of an expression should be delayed
as much as   the types in the environment.
This property reflects the fact that we can only move forward in time.
For example, 
from $\varX:  \Type \vdash \Fun \varY \varX: \TypeS \rightarrow \Type$
we can deduce that 
$\varX:  \bullet \Type \vdash 
\Fun \varY \varX: \bullet (\TypeS \rightarrow \Type)$,
but
we cannot deduce
$\varX:  \bullet \Type \vdash 
\Fun \varY \varX: \TypeS \rightarrow \Type$. 
 Notably we can derive 
 $\varX:  \bullet \Type, \varY : \bullet \Type \rightarrow \TypeS \vdash \varY \varX: \TypeS$, 
  i.e. the environment can contain types more delayed than the type of the expression. 

\begin{lemma}[Delay]
 \label{lemma:delay}
 If $\wte{\TypeContext}{\Expression}{\Type}$, then  
  $\wte{\TypeContext_1,\tbullet\TypeContext_2}{\Expression}{\tbullet\Type}$
    for $\TypeContext_1,\TypeContext_2 = \TypeContext$.
\end{lemma}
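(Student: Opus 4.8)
The plan is to proceed by induction on the derivation of $\wte{\TypeContext}{\Expression}{\Type}$, following the standard structure of such arguments but paying attention to how the $\tbullet$ operator threads through each rule. The statement says: for \emph{any} splitting $\TypeContext_1,\TypeContext_2 = \TypeContext$ of the environment into two sub-environments (as multisets of bindings, i.e. $\TypeContext_1$ and $\TypeContext_2$ partition the bindings of $\TypeContext$), we can re-derive the judgement with the bindings in $\TypeContext_2$ each prefixed by an extra $\tbullet$ and with the conclusion type prefixed by $\tbullet$ as well. So in each case I have a derivation ending in some rule, I get a partition of its environment, and I must produce the delayed derivation.

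First I would handle the base cases. For $\refrule\axiom$, the derivation is $\wte{\TypeContext',\Name:\Type}{\Name}{\Type}$ with $\unq{\TypeContext'}$; given a partition, either $\Name:\Type$ lands in $\TypeContext_1$ — then reapply $\refrule\axiom$ to get $\Name:\Type$ and use $\refrule\introbullet$ to delay the conclusion, noting the extra $\tbullet$'s on the $\TypeContext_2$ part are harmless since those are unlimited names not mentioned — or $\Name:\Type$ lands in $\TypeContext_2$, in which case $\refrule\axiom$ directly gives $\wte{\TypeContext_1,\tbullet\TypeContext_2}{\Name}{\tbullet\Type}$ reading the binding as $\Name:\tbullet\Type$ (the side condition $\unq{}$ is about the \emph{rest} of the environment, which is fine). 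A subtlety here is whether $\unq{\tbullet\TypeT}$ can fail when $\unq{\TypeT}$: by the definition of $\linpred$ we have $\linq{\tbullet\TypeT}$ iff $\linq\TypeT$, so unlimitedness is preserved by $\tbullet$, and the side conditions survive. The $\refrule\const$ case is similar but easier: $\tbullet\Type \in \typeof(\Constant)$? Not literally, but one can re-derive using $\refrule\const$ with $\Type$ and then $\refrule\introbullet$.

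For the inductive cases I would split on the last rule. The cleanest is $\refrule\introbullet$ itself: the premise is $\wte\TypeContext\Expression\Type$ with conclusion $\wte\TypeContext\Expression{\tbullet\Type}$; apply the IH to the premise with the \emph{same} partition, getting $\wte{\TypeContext_1,\tbullet\TypeContext_2}\Expression{\tbullet\Type}$, then apply $\refrule\introbullet$ once more to reach $\tbullet\tbullet\Type$ — wait, that overshoots; instead, apply the IH to get the delayed premise and then $\refrule\introbullet$ gives exactly $\wte{\TypeContext_1,\tbullet\TypeContext_2}\Expression{\tbullet\tbullet\Type}$, but we wanted $\tbullet(\tbullet\Type)$, which is the same thing — good. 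For the $n$-indexed rules like $\refrule\introarrow$, $\refrule\elimarrow$, $\refrule\introlarrow$, $\refrule\elimlarrow$, $\refrule\elimprod$: every premise and conclusion already carries $\tbullet[n]$, so the move is simply to observe that adding one more $\tbullet$ everywhere turns $\tbullet[n]$ into $\tbullet[n+1]$ uniformly — i.e. apply the IH to each premise (with the induced partition of the combined environment) to bump $n$ to $n+1$ in that premise, then reapply the same rule with $n+1$. The one genuinely fiddly point is the binary rules where the environment is $\TypeContext_1 + \TypeContext_2$ (using the combination operator): given a partition of $\TypeContext_1 + \TypeContext_2$ into $\mathcal{F}_1, \mathcal{F}_2$, I need to push this down to compatible partitions of $\TypeContext_1$ and of $\TypeContext_2$ and check that after delaying, the $+$ still combines them correctly — here I rely on the fact that the bindings being delayed are the same ones on both sides and that $\tbullet(\ptc{\TypeContext}{\TypeContext'})$ distributes appropriately over the pieces, and that linear bindings (which can't be shared) are each assigned to exactly one side by the partition anyway.

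The main obstacle I expect is precisely this bookkeeping around the $+$ operator and the unlimited-environment side conditions in the binary rules: one must be careful that a partition of $\TypeContext_1 + \TypeContext_2$ refines into partitions of $\TypeContext_1$ and $\TypeContext_2$ whose componentwise $+$ reconstitutes the delayed environment, and that an unlimited binding shared between the two premises stays unlimited (and hence shareable) after a $\tbullet$ is prepended — both of which hold but require the observation that $\tbullet$ commutes with $\linpred$ and a small lemma that delaying commutes with $+$. Everything else is a mechanical rule-by-rule push of one $\tbullet$ through the derivation, with $\refrule\introbullet$ available exactly when a premise needs its type delayed but the partition puts nothing relevant in $\TypeContext_2$.
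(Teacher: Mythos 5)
Your proposal is correct and follows exactly the route the paper takes: the paper's proof of this lemma is simply ``by induction on the derivation,'' and your case analysis (handling \refrule\axiom{} and \refrule\const{} via \refrule\introbullet{}, bumping $\tbullet[n]$ to $\tbullet[n+1]$ in the indexed rules, and checking that delaying commutes with $+$ and preserves unlimitedness) fills in precisely the details that induction requires. The two subtleties you flag — that $\linq{\tbullet\Type}$ holds iff $\linq\Type$, and that a partition of $\TypeContext_1+\TypeContext_2$ refines to compatible partitions of the two premise environments — are indeed the only points needing care, and you resolve both correctly.
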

\begin{proof} By induction on the derivation. \end{proof}

The following property  tells that,
if an expression  contains an endpoint  
or a variable with a linear type, then the type
of that expression should be linear.
For example,
it is not possible to assign 
the unlimited type 
% $\nat \arrow \tout \nat \End$
$\nat \arrow \tio \End$
to the function 
$\mkconstant{lc} 
= \Fun \varX \send~\Channel^{\Polarity}~\varX$
 which  contains the free endpoint $\Channel^{\Polarity}$ of type $\tout \nat\End$. 
Otherwise, $\mkconstant{lc} $ could be erased 
in $(\Fun \varX \unit )\ \mkconstant{lc} $ 
or duplicated in $(\Fun \varX \Pair{\varX}{\varX})\ \mkconstant{lc}$.

\begin{lemma}
\label{lemma:unlimitedTypeimpliesunlimitedContext}
  If $\wte{\TypeContext}{\Expression}{\Type}$
  and $\unq \Type$, 
   then $\unq {\TypeContext}$.
\end{lemma}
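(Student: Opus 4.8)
The plan is to argue by induction on the derivation of $\wte{\TypeContext}{\Expression}{\Type}$, strengthening nothing but keeping the contrapositive in mind: whenever a linear name sits in $\TypeContext$, it must have ``leaked'' into $\Type$. First I would dispatch the easy cases. For $\refrule\const$ and $\refrule\axiom$ the side condition $\unq\TypeContext$ already gives the result outright (modulo the one extra name in $\refrule\axiom$, which has exactly the type $\Type$, so if $\Type$ is unlimited that binding is unlimited too). For $\refrule\introbullet$, going from $\Type$ to $\tbullet\Type$, note that $\unq{\tbullet\Type}$ iff $\unq\Type$ by the definition of $\linpred$ (the rule $\linq{\Type}\Rightarrow\linq{\tbullet\Type}$ and minimality), so the induction hypothesis applies unchanged.

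The interesting cases are the binary rules that use the combination operator $+$, namely $\refrule\elimarrow$, $\refrule\elimlarrow$, $\refrule\elimprod$, and the abstraction rules $\refrule\introarrow$/$\refrule\introlarrow$ plus $\refrule\elimprod$'s binder side. For $\refrule\introlarrow$, $\Type = \tbullet[n](\TypeT\larrow\TypeS)$ is linear by the very first clause of $\linpred$ (a linear arrow is always linear, and prefixing bullets preserves linearity), so the hypothesis $\unq\Type$ is vacuously false and there is nothing to prove. For $\refrule\introarrow$, $\unq{\tbullet[n](\TypeT\arrow\TypeS)}$ forces $\unq{\TypeT\arrow\TypeS}$, hence by Condition~\ref{def:types4} of \cref{def:types} we get $\unq\TypeS$, and then by Condition~\ref{def:types5}-style reasoning (more precisely: $\unq\TypeS$ with $\TypeT\arrow\TypeS$ unlimited forces $\unq\TypeT$) we obtain $\unq{\TypeT}$; the premise is $\wte{\TypeContext,\varX:\tbullet[n]\TypeT}{\Expression}{\tbullet[n]\TypeS}$ with $\unq{\tbullet[n]\TypeS}$, so the IH yields $\unq{(\TypeContext,\varX:\tbullet[n]\TypeT)}$, and in particular $\unq\TypeContext$. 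For the elimination rules, the conclusion type is $\tbullet[n]\TypeS$; I would use the relevant condition of \cref{def:types} on the arrow/product type of the first premise to propagate unlimitedness downward, apply the IH to each premise to conclude $\unq{\TypeContext_1}$ and $\unq{\TypeContext_2}$, and then observe that $\unq{\TypeContext_1}$ and $\unq{\TypeContext_2}$ together with the way $+$ is defined give $\unq{\TypeContext_1+\TypeContext_2}$ (the combined environment only ever contains names that appear, with the same type, in one of the two, and shared linear bindings are forbidden by $+$ being defined). The $\refrule\elimprod$ case additionally has the binders $\varX,\varY$ in the second premise: from $\unq{\tbullet[n]\TypeS}$ we need $\unq{\Type_1}$ and $\unq{\Type_2}$, which follows because a product is linear as soon as one component is (the two $\linpred$ rules for $\times$), so $\unq{\tbullet[n](\Type_1\times\Type_2)}$ gives $\unq{\Type_1}$ and $\unq{\Type_2}$; then the IH on the second premise yields unlimitedness of $\TypeContext_2,\varX:\tbullet[n]\Type_1,\varY:\tbullet[n]\Type_2$, whence $\unq{\TypeContext_2}$.

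The main obstacle I anticipate is purely bookkeeping: making sure that at each elimination rule the passage from ``the conclusion type is unlimited'' to ``the first premise's arrow/product type is unlimited'' is actually licensed — for $\refrule\elimarrow$ this is exactly Condition~\ref{def:types4} ($\unq{\TypeS}$ forces $\unq{\Type\arrow\TypeS}$ is the wrong direction; the right direction is that $\unq{(\Type\arrow\TypeS)}$ follows from $\unq\TypeS$ by \emph{Condition~\ref{def:types4} read contrapositively}, since a type with $\linq{}$ component would be linear). I would state this auxiliary fact once as a small observation (``if $\tbullet[n](\Type\arrow\TypeS)$ is a type and $\unq{\tbullet[n]\TypeS}$ then $\unq{\tbullet[n](\Type\arrow\TypeS)}$ and $\unq{\tbullet[n]\Type}$'', and analogously for $\times$ and for the linear arrow being always linear) and then the induction goes through mechanically. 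The only genuinely delicate point is that all of this depends on $\Type$ being a genuine \emph{type} in the sense of \cref{def:types}, not an arbitrary pre-type, so I would note at the outset that all types appearing in a derivation satisfy \cref{def:types} and hence Conditions~\ref{def:types4} and~\ref{def:types5} are available throughout.
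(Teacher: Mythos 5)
Your overall strategy --- induction on the typing derivation, with Condition~\ref{def:types4} of \cref{def:types} doing the work in the \refrule{\elimarrow} case and Condition~\ref{def:types5} making the linear-arrow cases vacuous --- is exactly the paper's (its proof is a two-line sketch saying precisely this). However, two of your case analyses contain steps that fail as written. First, in the \refrule{\introarrow} case you claim that $\unq{\TypeT\arrow\TypeS}$ together with Condition~\ref{def:types4} yields $\unq\TypeS$. It does not: that condition only says that $\unq{\Type_2}$ implies $\unq{\Type_1}$ for a subterm $\Type_1\arrow\Type_2$, and since $\linpred$ has no rule whose conclusion is an unrestricted arrow, unlimited arrows with a \emph{linear} codomain are perfectly legal (e.g.\ $\tshared\SessionType\arrow\tio\SessionType$, the type of $\open$). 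So you cannot in general apply the induction hypothesis to the premise of \refrule{\introarrow}. Fortunately you do not need to: that rule carries the side condition $\unq\TypeContext$, exactly like \refrule{\const} and \refrule{\axiom}, so the case is immediate --- but via the side condition, not via the argument you give.

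Second, in the \refrule{\elimprod} case your reasoning is circular: you derive $\unq{\Type_1}$ and $\unq{\Type_2}$ from $\unq{\tbullet[n](\Type_1\times\Type_2)}$, but the latter is precisely what must be established before the induction hypothesis can be applied to the first premise. The correct order is the reverse: apply the induction hypothesis to the \emph{second} premise (whose result type $\tbullet[n]\TypeS$ is unlimited by assumption) to obtain that $\TypeContext_2,\varX:\tbullet[n]\Type_1,\varY:\tbullet[n]\Type_2$ is unlimited, read off $\unq{\Type_1}$ and $\unq{\Type_2}$, conclude $\unq{\Type_1\times\Type_2}$ because a product is linear only when one of its components is, and only then apply the induction hypothesis to the first premise. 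Relatedly, you should state explicitly that \refrule{\elimlarrow} is vacuous: by Condition~\ref{def:types5} the result type $\TypeS$ of a linear arrow is linear, so the hypothesis $\unq{\tbullet[n]\TypeS}$ fails; your blanket ``apply the IH to each premise'' would otherwise founder on the first premise, whose type $\tbullet[n](\Type\larrow\TypeS)$ is always linear. With these repairs the proof goes through and coincides with the paper's.
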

\begin{proof} 
The proof  is 
by induction on the derivation of 
$\wte{\TypeContext}{\Expression}{\Type}$. 
%The first two conditions imposed in the definition of types
%(\cref{def:types}) play an important role in 
%the proof of this lemma. 
The case
of \refrule{\elimarrow} uses Condition~\ref{def:types4} 
% of \cref{def:types}
and the case of \refrule{\elimlarrow} uses Condition \ref{def:types5} 
 of \cref{def:types}. 
\end{proof} 

The following three lemmas are standard in proofs of subject reduction.

\begin{lemma}[Inversion for Expressions]\label{lem:inv}\mbox{}
  \begin{enumerate}
  \item\label{lem:inv1} If $\wte{\TypeContext}{\Constant}{\Type}$, then
    $\Type = \tbullet^n \Type'$ and $\Type'\in\typeof(\Constant)$  with $\unq{\TypeContext}$.
  \item\label{lem:inv2} If $\wte{\TypeContext}{\Name}{\Type}$, then
    $\Type = \tbullet^n \Type'$ and
    $\TypeContext = \TypeContext', \Name :\Type'$ with
    $\unq{\TypeContext'}$.
    % 
   % If $\linq{\Type}$ then $n=0$.
  \item\label{lem:inv3} If
    $\wte{\TypeContext}{\Fun \varX \Expression}{\Type}$ and
    $\unq \TypeContext$, then either $\Type = \tbullet^n (\Type_1 \arrow \Type_2)$ or
     $\Type = \tbullet[n](\Type_1 \larrow \Type_2)$
     and
    $\wte{\TypeContext, \varX:
      \tbullet^n\Type_1}{\Expression}{\tbullet^n\Type_2}$.
  \item \label{lem:inv4} If
    $\wte{\TypeContext}{\Fun \varX \Expression}{\Type}$ and
    $\linq \TypeContext$, then
    $\Type = \tbullet[n](\Type_1 \larrow \Type_2)$ and
    $\wte{\TypeContext, \varX:
      \tbullet[n]\Type_1}{\Expression}{\tbullet[n]\Type_2}$.
  \item\label{lem:inv5} If
    $\wte{\TypeContext}{\Expression_1 \Expression_2}{\Type}$, then
    $\Type = \tbullet^n\Type_2$ and
    $\TypeContext = \TypeContext_1 + \TypeContext_2$ with
    $\wte{\TypeContext_2}{\Expression_2}{\tbullet^n\Type_1}$ and either
    $\wte{\TypeContext_1}{\Expression_1}{\tbullet^n(\Type_1 \arrow
      \Type_2)}$
    or
    $\wte{\TypeContext_1}{\Expression_1}{\tbullet^n(\Type_1 \larrow
      \Type_2)}$.
    
  \item\label{lem:inv6} If
    $\wte{\TypeContext}{\Split\ExpressionE\varX\varY\ExpressionF}{\Type}$,
    then $\TypeContext = \TypeContext_1 + \TypeContext_2$ and
    $\Type = \tbullet^n\Type'$ with
    $\wte{\TypeContext_1}{\ExpressionE}{\tbullet^n(\Type_1 \times
      \Type_2)}$
    and
    $\wte{\TypeContext_2, \varX: \tbullet^n\Type_1,
      \varY:\tbullet^n\Type_2}{\ExpressionF}{\tbullet^n\Type'}$.  

  \end{enumerate}

\end{lemma}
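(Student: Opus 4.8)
The plan is to prove the Inversion Lemma for expressions by a routine induction on the structure of the typing derivation of $\wte{\TypeContext}{\Expression}{\Type}$, treating each clause of the statement simultaneously. The key observation that drives the whole argument is that the typing rules in \cref{table:eTypingrules} are almost syntax-directed: for each syntactic form of $\Expression$ there is exactly one ``introduction'' rule that applies (plus the possibility of having applied $\refrule\introbullet$ one or more times at the end of the derivation). So for each clause I would first peel off a maximal block of trailing $\refrule\introbullet$ applications, which accounts for the $\tbullet^n$ prefix in every conclusion, and then inspect the unique remaining rule whose conclusion matches the shape of $\Expression$.

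Concretely, I would proceed as follows. For clause~\ref{lem:inv1}, if $\Expression = \Constant$ the last non-$\refrule\introbullet$ rule must be $\refrule\const$, which gives $\Type' \in \typeof(\Constant)$ and $\unq\TypeContext$ directly; the $\refrule\introbullet$ steps do not touch $\TypeContext$, so $\unq\TypeContext$ is preserved. Clause~\ref{lem:inv2} is identical but with $\refrule\axiom$, which forces $\TypeContext = \TypeContext',\Name:\Type'$ with $\unq{\TypeContext'}$. For clauses~\ref{lem:inv3} and~\ref{lem:inv4} ($\Expression = \Fun\varX\ExpressionE$) the last real rule is either $\refrule\introarrow$ or $\refrule\introlarrow$; here I would use \cref{lemma:unlimitedTypeimpliesunlimitedContext} together with \cref{def:types} conditions~\ref{def:types4} and~\ref{def:types5} to decide which case applies: if $\unq\TypeContext$ both rules are available (hence the disjunction in~\ref{lem:inv3}), whereas if $\linq\TypeContext$ then $\refrule\introarrow$ is ruled out by its side condition $\unq\TypeContext$, forcing the linear arrow in~\ref{lem:inv4}. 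In both cases the premise of the rule gives exactly the stated typing of $\ExpressionE$ in the extended environment, modulo the accumulated $\tbullet^n$. Clauses~\ref{lem:inv5} and~\ref{lem:inv6} are the analogous statements for application (last rule $\refrule\elimarrow$ or $\refrule\elimlarrow$, yielding the environment split $\TypeContext_1 + \TypeContext_2$ and the disjunction on the arrow shape) and for pair-splitting (last rule $\refrule\elimprod$).

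The one subtlety — and the only place where care is genuinely needed — is the interaction between the final block of $\refrule\introbullet$ applications and the ``index $n$'' bookkeeping. A derivation of $\wte\TypeContext\Expression{\tbullet^n\Type'}$ need not have all its $\refrule\introbullet$ steps at the very bottom; for instance one could apply $\refrule\introbullet$ to a premise of $\refrule\elimarrow$ and then again to its conclusion. I would handle this by an auxiliary observation (provable by a short induction, or folded into the main induction): from $\wte{\TypeContext}{\ExpressionE}{\Type}$ one may always reorganise so that, reading from the root, one first has a maximal run of $\refrule\introbullet$ steps and then a rule genuinely determined by the syntax of $\ExpressionE$; equivalently, the indices $n$ appearing in the statement absorb any interior $\refrule\introbullet$ uses because each elimination/introduction rule is stated ``uniformly in $n$'' (it already allows an arbitrary common $\tbullet[n]$ prefix on all judgements). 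With that normalisation in hand every clause reduces to reading off the premises of a single rule, so I expect the bookkeeping around $\tbullet^n$ to be the main (and fairly mild) obstacle, while everything else is mechanical.
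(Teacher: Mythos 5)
Your proposal follows essentially the same route as the paper: induction on the typing derivation with a case analysis on the last rule applied, where for each syntactic form the only non-immediate case is a trailing application of \refrule\introbullet. The one place where your argument is thinner than it needs to be is the claim that the \refrule\introbullet{} steps are absorbed ``because the rules are stated uniformly in $n$'': for clauses~\ref{lem:inv3}, \ref{lem:inv4} and~\ref{lem:inv6} the premise of the relevant rule types the body in an environment containing bindings of the form $\varX:\tbullet[n]\Type_1$, so adding one more bullet to the conclusion forces you to also delay those bindings to $\tbullet[n+1]\Type_1$ --- and that does not follow from the shape of the rules alone, since \refrule\introbullet{} only delays the subject's type, not the environment. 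This is precisely what the Delay Lemma (\cref{lemma:delay}) provides, and it is the single ingredient the paper singles out in its proof of clause~\ref{lem:inv3}; your deferred ``short auxiliary induction'' would in effect be re-proving that lemma, so you should simply invoke it in the \refrule\introbullet{} case rather than appeal to a reorganisation of the derivation.
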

\begin{proof} By case analysis and induction on the derivation.
We only show  \cref{lem:inv3} which is interesting because
we need to shift the environment in time and apply 
\cref{lemma:delay}.
 A  
 derivation of
    $\wte{\TypeContext}{\Fun \varX \Expression}{\Type}$
     ends with an
    application of either \refrule\introarrow, 
     \refrule\introlarrow\  or \refrule\introbullet.
    For the first two cases,   
    the proof is immediate. 
    If the last applied rule is \refrule\introbullet, then
    $\Type = \tbullet\Type'$ and we have
    \[
    \inferrule{ \wte{\TypeContext}{ \Fun \varX \Expression}\Type' }{
      \wte{\TypeContext}{\Fun \varX \Expression}{\tbullet\Type'} }
    \]
    By induction\comma $\Type' = \tbullet[n](\Type_1 \arrow \Type_2)$ or $\Type' = \tbullet^n (\Type_1 \larrow \Type_2)$ and
    $\wte{\TypeContext, \varX:
      \tbullet^n\Type_1}{\Expression}{\tbullet^n\Type_2}$.
    Hence, 
    \[\Type = \tbullet \Type' =\tbullet[n+1](\Type_1 \arrow \Type_2)\text{ or }\Type = \tbullet \Type' = \tbullet[n+1] (\Type_1 \larrow \Type_2)\]
    By \cref{lemma:delay}\comma we have that
    $\wte{\TypeContext, \varX:
      \tbullet^{n+1}\Type_1}{\Expression}{\tbullet^{n+1}\Type_2}$.
\end{proof}

% of \cref{def:types}.

\begin{lemma}[Substitution]
  \label{lemma:substitution}
  If
    $\wte{\TypeContext_1, \varX: \TypeS}{\Expression}{\Type}$ and
    $\wte{\TypeContext_2}{\ExpressionF}{\TypeS}$ and
    $\TypeContext_1 +\TypeContext_2$ is defined, then
   \[\wte{\TypeContext_1+\TypeContext_2}{\Expression\subst\ExpressionF\varX}{\Type}.\]
 \end{lemma}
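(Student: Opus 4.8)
The plan is to proceed by induction on the derivation of $\wte{\TypeContext_1, \varX: \TypeS}{\Expression}{\Type}$, following the usual pattern for substitution in a linear type system à la~\cite{GayV10}. Before starting, I would record a handful of routine auxiliary facts about the combination operator $+$: that it is commutative and associative wherever defined; that $\TypeContext + \TypeContext = \TypeContext$ whenever $\unq{\TypeContext}$ (so an unlimited environment may be duplicated across the two premises of an elimination rule); and that in any splitting $\TypeContext^{(1)} + \TypeContext^{(2)}$ of an environment containing $\varX:\TypeS$, the pair $\varX:\TypeS$ lies in exactly one summand if $\linq{\TypeS}$ and may lie in either or both if $\unq{\TypeS}$. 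I also use the standard fact that $\fv(\Expression)\subseteq\dom(\TypeContext)$ for any $\wte{\TypeContext}{\Expression}{\Type}$ (so that $\Expression\subst\ExpressionF\varX = \Expression$ when $\varX\notin\dom(\TypeContext)$), a weakening property that allows adding unlimited assumptions to the environment, and the fact that environments are maps, so the exchange of entries is harmless. Finally, when $\unq{\TypeS}$ I invoke \cref{lemma:unlimitedTypeimpliesunlimitedContext} applied to $\wte{\TypeContext_2}{\ExpressionF}{\TypeS}$ to obtain $\unq{\TypeContext_2}$, which is what makes the duplication of $\TypeContext_2$ legitimate.

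\textbf{Leaf and unary cases.} For \refrule{axiom} with $\Expression = \Name$: if $\Name = \varX$ then $\Type = \TypeS$, the side condition gives $\unq{\TypeContext_1}$, and $\Expression\subst\ExpressionF\varX = \ExpressionF$, so the claim follows from $\wte{\TypeContext_2}{\ExpressionF}{\TypeS}$ by weakening with the unlimited $\TypeContext_1$; if $\Name\neq\varX$ then $\varX:\TypeS$ occurs in the unlimited part of the environment, hence $\unq{\TypeS}$ and (by \cref{lemma:unlimitedTypeimpliesunlimitedContext}) $\unq{\TypeContext_2}$, and re-applying \refrule{axiom} to the appropriate splitting of $\TypeContext_1 + \TypeContext_2$ gives the result since $\Name\subst\ExpressionF\varX = \Name$. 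The \refrule{const} case is analogous: the side condition forces $\unq{\TypeContext_1,\varX:\TypeS}$, hence $\unq{\TypeContext_2}$ and $\unq{\TypeContext_1+\TypeContext_2}$, and $\Constant\subst\ExpressionF\varX = \Constant$. The \refrule{introbullet} case is immediate: $\Type = \tbullet\Type'$, the premise $\wte{\TypeContext_1,\varX:\TypeS}{\Expression}{\Type'}$ yields by the induction hypothesis $\wte{\TypeContext_1+\TypeContext_2}{\Expression\subst\ExpressionF\varX}{\Type'}$, and one more application of \refrule{introbullet} concludes.

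\textbf{Abstraction and elimination cases.} For \refrule{introarrow} and \refrule{introlarrow} with $\Expression = \Fun\varY\ExpressionE$, I first $\alpha$-rename so that $\varY$ is fresh for $\ExpressionF$, $\varX$, and $\dom(\TypeContext_2)$; reordering the premise environment to $\TypeContext_1,\varY:\tbullet[n]\Type_1,\varX:\TypeS$ and applying the induction hypothesis gives $\wte{(\TypeContext_1+\TypeContext_2),\varY:\tbullet[n]\Type_1}{\ExpressionE\subst\ExpressionF\varX}{\tbullet[n]\Type_2}$, and re-applying the rule (for \refrule{introarrow}, the side condition $\unq{\TypeContext_1+\TypeContext_2}$ follows from $\unq{\TypeContext_1}$ and $\unq{\TypeContext_2}$) yields the claim, since $(\Fun\varY\ExpressionE)\subst\ExpressionF\varX = \Fun\varY(\ExpressionE\subst\ExpressionF\varX)$. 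The elimination rules \refrule{elimarrow}, \refrule{elimlarrow}, and \refrule{elimprod} are the heart of the argument: here the premise environment $\TypeContext_1,\varX:\TypeS$ is split as $\TypeContext^{(1)}+\TypeContext^{(2)}$, and one distinguishes whether $\TypeS$ is linear or unlimited. If $\linq{\TypeS}$, then $\varX$ belongs to exactly one summand, say $\TypeContext^{(1)} = \TypeContext^{(1)'},\varX:\TypeS$; the induction hypothesis applies to that subderivation while the other subexpression is unaffected by the substitution (its environment does not mention $\varX$), and one reassembles using associativity and commutativity of $+$. If $\unq{\TypeS}$, then $\unq{\TypeContext_2}$, so $\TypeContext_2$ can be handed to every subderivation and every copy of $\varX:\TypeS$ replaced; the pieces are then recombined using idempotency of $+$ on unlimited environments together with associativity and commutativity, so that the result environment collapses to $\TypeContext_1+\TypeContext_2$. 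The \refrule{elimprod} case additionally carries the fresh pattern variables $\varX,\varY$ through, exactly as in the abstraction cases.

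\textbf{Main obstacle.} No delay-specific reasoning (in particular \cref{lemma:delay}) is needed here, since substitution commutes with the time structure and the $\tbullet[n]$ prefixes are merely transported along; the genuine work is the bookkeeping for the partial operator $+$ in the elimination rules, and specifically keeping track, in the $\unq{\TypeS}$ subcase, of how duplicated copies of $\TypeContext_2$ and of $\varX:\TypeS$ recombine to the single environment $\TypeContext_1+\TypeContext_2$. I expect that to be the only delicate point, and it is handled entirely by the elementary algebraic properties of $+$ listed at the outset.
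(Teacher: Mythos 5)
Your proof is correct and takes essentially the same route as the paper: the paper also argues by a routine induction (on the structure of expressions, invoking the Inversion Lemma where you induct directly on the derivation and treat \refrule\introbullet{} as its own case), and it likewise isolates the appeal to \cref{lemma:unlimitedTypeimpliesunlimitedContext} — deriving $\unq{\TypeContext_2}$ from $\unq\TypeS$ so that $\TypeContext_2$ may be duplicated or discarded — as the only non-mechanical step. Nothing further is needed.
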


\begin{proof}
  By induction on the structure of expressions.
  We only consider the case $\Expression = \Constant$, to show the 
   application 
  of  \cref{lemma:unlimitedTypeimpliesunlimitedContext}.   
    It follows from \ri{\cref{lem:inv}}{ \cref{lem:inv1}} that 
    $\unq{\TypeContext_1, \varX : \TypeS}$ and
    $\Type = \tbullet[n]\Type'$ with $\Type' = \typeof(\Constant) $. 
    From $\unq \TypeS$, $\wte{\TypeContext_2}{\ExpressionF}{\TypeS}$
    and  \cref{lemma:unlimitedTypeimpliesunlimitedContext},
     we derive
    $\unq{\TypeContext_2}$, and therefore
    $\wte{\TypeContext_1+\TypeContext_2}{\Constant}{\Type}$ by
    \refrule\const\ and $\Constant\subst\ExpressionF\varX = \Constant$.
  \end{proof}

\begin{lemma}[Evaluation Contexts for Expressions]
  \label{lem:keye}
  If 
  $\wte{\TypeContext}{\Context[\Expression]}{\Type}$,
then
 $\TypeContext = \TypeContext_1 + \TypeContext_2$ and
$\wte{\TypeContext_1,\varX:\TypeS}{\Context[\varX]}{\Type}$ and 
   $\wte{\TypeContext_2}{\Expression}{\TypeS}$ for some $\TypeS$.
 \end{lemma}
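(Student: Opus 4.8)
Our plan is to argue by structural induction on the evaluation context $\Context$, whose productions are $\Hole$, $\Context'~\ExpressionF$, $\Split{\Context'}{\varU}{\varV}{\ExpressionF}$, $\open~\Context'$, $\send~\Context'$, $\receive~\Context'$ and $\bind~\Context'$. For the base case $\Context = \Hole$ we have $\Context[\Expression] = \Expression$, so it suffices to take $\TypeS = \Type$, $\TypeContext_2 = \TypeContext$, $\TypeContext_1 = \emptyset$ and $\varX$ fresh: then $\TypeContext_1 + \TypeContext_2 = \TypeContext$, $\wte{\TypeContext_2}{\Expression}{\TypeS}$ is the hypothesis, and $\wte{\TypeContext_1,\varX:\TypeS}{\varX}{\Type}$ holds by rule \refrule{\axiom} since $\unq\emptyset$.

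For the inductive step, every remaining production places the hole of $\Context$ in a subterm of either an application or a pair splitting, and all cases follow the same pattern; we describe the representative case $\Context = \Context'~\ExpressionF$. From $\wte{\TypeContext}{\Context'[\Expression]~\ExpressionF}{\Type}$, inversion for application (\cref{lem:inv}) gives $\Type = \tbullet^n\Type_2$, a splitting $\TypeContext = \TypeContextD_1 + \TypeContextD_2$, the judgement $\wte{\TypeContextD_2}{\ExpressionF}{\tbullet^n\Type_1}$, and $\wte{\TypeContextD_1}{\Context'[\Expression]}{\tbullet^n(\Type_1 \arrow \Type_2)}$ (if inversion delivers a linear arrow instead, the argument is the same with \refrule{\elimlarrow} in place of \refrule{\elimarrow}). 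Applying the induction hypothesis to $\Context'$ yields a type $\TypeS$, a splitting $\TypeContextD_1 = \TypeContext_1' + \TypeContext_2'$, and, for a suitably fresh $\varX$, the judgements $\wte{\TypeContext_1',\varX:\TypeS}{\Context'[\varX]}{\tbullet^n(\Type_1 \arrow \Type_2)}$ and $\wte{\TypeContext_2'}{\Expression}{\TypeS}$. Setting $\TypeContext_1 = \TypeContext_1' + \TypeContextD_2$ and $\TypeContext_2 = \TypeContext_2'$, rule \refrule{\elimarrow} applied to the first of these judgements and to $\wte{\TypeContextD_2}{\ExpressionF}{\tbullet^n\Type_1}$ produces $\wte{\TypeContext_1,\varX:\TypeS}{\Context'[\varX]~\ExpressionF}{\Type}$, that is $\wte{\TypeContext_1,\varX:\TypeS}{\Context[\varX]}{\Type}$, as required. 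The cases $\open~\Context'$, $\send~\Context'$, $\receive~\Context'$, $\bind~\Context'$ are identical once one observes that these are applications of a constant to $\Context'[\Expression]$, so inversion is again \cref{lem:inv} and the induction hypothesis is used on the argument subcontext $\Context'$; the case $\Split{\Context'}{\varU}{\varV}{\ExpressionF}$ uses the pair-splitting clause of \cref{lem:inv} and rule \refrule{\elimprod}. No separate case is needed for rule \refrule{\introbullet}, since the leading $\tbullet^n$ in the conclusions of the inversion clauses already absorbs any number of applications of it.

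The only delicate point is the bookkeeping on typing environments, where we rely on the standard structural properties of the partial combination operator $+$: it is commutative and associative, and any partial combination occurring inside a defined combination is itself defined. With these, $\TypeContext_1 + \TypeContext_2 = (\TypeContext_1' + \TypeContextD_2) + \TypeContext_2' = (\TypeContext_1' + \TypeContext_2') + \TypeContextD_2 = \TypeContextD_1 + \TypeContextD_2 = \TypeContext$, and, choosing $\varX \notin \dom(\TypeContext)$ (hence $\varX \notin \dom(\TypeContextD_2)$), the combination $(\TypeContext_1',\varX:\TypeS) + \TypeContextD_2$ equals $(\TypeContext_1' + \TypeContextD_2),\varX:\TypeS$ and is defined even when $\TypeS$ is linear, so rule \refrule{\elimarrow} genuinely applies. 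We expect this environment arithmetic, together with keeping $\varX$ fresh throughout the induction, to be the only place where care is needed; the rest is routine.
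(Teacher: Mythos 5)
Your proof is correct and follows the same route the paper intends: the paper's proof is just the one-line remark ``by induction on the structure of $\Context$'', and your argument is exactly that induction, with the inversion lemma (\cref{lem:inv}) and the environment bookkeeping for $+$ spelled out where the paper leaves them implicit.
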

\begin{proof}
  By induction on the structure of $\Context$.
  \end{proof}

  \begin{theorem}[Subject Reduction for Expressions]
  \label{theorem:subjectreduction}
   If $\wte{\TypeContext}{\Expression}{\Type}$
     and $\Expression \red \Expression'$, then
     \mbox{$\wte{\TypeContext}{\Expression'}{\Type}$.}
\end{theorem}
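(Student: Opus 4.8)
The plan is to proceed by induction on the derivation of $\ExpressionE \red \ExpressionF$, i.e.\ by case analysis on the reduction rule applied at the root. The four reduction rules for expressions are \refrule{r-beta}, \refrule{r-bind}, \refrule{r-split}, and \refrule{r-ctxt}, and the structural typing lemmas \cref{lem:inv} (Inversion), \cref{lemma:substitution} (Substitution), and \cref{lem:keye} (Evaluation Contexts) are exactly what is needed to handle them. The subtlety throughout is that, because of rule \refrule\introbullet, a typing judgement for a redex need not be derived by the rule ``matching'' the outermost term constructor; Inversion is precisely the tool that strips off the leading $\tbullet$'s and recovers the expected shape $\tbullet[n](\cdots)$ of the type together with a decomposition of the environment.

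First I would handle \refrule{r-beta}, i.e.\ $(\Fun\var\ExpressionE)\,\ExpressionF \red \ExpressionE\subst\ExpressionF\var$. From $\wte\TypeContext{(\Fun\var\ExpressionE)\ExpressionF}\Type$, apply \ri{\cref{lem:inv}}{\cref{lem:inv5}} to get $\Type=\tbullet[n]\Type_2$, $\TypeContext=\TypeContext_1+\TypeContext_2$, $\wte{\TypeContext_2}{\ExpressionF}{\tbullet[n]\Type_1}$, and $\wte{\TypeContext_1}{\Fun\var\ExpressionE}{\tbullet[n](\Type_1\Arrow\Type_2)}$ (arrow or linear arrow). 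Then apply \ri{\cref{lem:inv}}{\cref{lem:inv3}} (if $\unq{\TypeContext_1}$) or \ref{lem:inv4} (if $\linq{\TypeContext_1}$) to obtain $\wte{\TypeContext_1,\var:\tbullet[n]\Type_1}{\ExpressionE}{\tbullet[n]\Type_2}$; here one must check that the two possible arrow shapes supplied by \ref{lem:inv5} are compatible with the hypothesis on $\TypeContext_1$, which is where \cref{lem:inv3} and \cref{lem:inv4} dovetail. Finally invoke \cref{lemma:substitution} with $\TypeS=\tbullet[n]\Type_1$ to conclude $\wte{\TypeContext_1+\TypeContext_2}{\ExpressionE\subst\ExpressionF\var}{\tbullet[n]\Type_2}$, i.e.\ $\wte\TypeContext{\ExpressionE\subst\ExpressionF\var}\Type$. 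The case \refrule{r-split} is analogous, using \ref{lem:inv6} in place of \ref{lem:inv5}/\ref{lem:inv3} and then \cref{lemma:substitution} twice (for $\varX$ and $\varY$).

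For \refrule{r-bind}, $\Bind{\Return\ExpressionE}\ExpressionF \red \ExpressionF\,\ExpressionE$: from $\wte\TypeContext{\Bind{\Return\ExpressionE}\ExpressionF}\Type$, unfold $\bind$ as two applications and use \ref{lem:inv5} (twice) together with \ref{lem:inv1} for the constants $\bind$ and $\return$ to recover, modulo leading $\tbullet$'s, $\wte{\TypeContext_E}{\ExpressionE}{\tbullet[n](\tio\TypeS)}$ — wait, more precisely $\wte{\TypeContext_E}{\ExpressionE}{\tbullet[n]\TypeS}$ from inverting $\Return\ExpressionE$ — and $\wte{\TypeContext_F}{\ExpressionF}{\tbullet[n](\TypeS\larrow\tio\Type')}$ with $\Type=\tbullet[n]\tio\Type'$ and $\TypeContext=\TypeContext_E+\TypeContext_F$. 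Then a single application of \refrule\elimlarrow\ gives $\wte{\TypeContext_E+\TypeContext_F}{\ExpressionF\ExpressionE}{\tbullet[n]\tio\Type'}=\wte\TypeContext{\ExpressionF\ExpressionE}\Type$; one must be mildly careful that the $n$'s extracted at the various inversion steps can be aligned, using \cref{lemma:delay} to raise the less-delayed judgements if necessary. Finally, for the congruence rule \refrule{r-ctxt}, $\Context[\ExpressionE]\red\Context[\ExpressionF]$ with $\ExpressionE\red\ExpressionF$: apply \cref{lem:keye} to split $\wte\TypeContext{\Context[\ExpressionE]}\Type$ into $\wte{\TypeContext_1,\var:\TypeS}{\Context[\var]}\Type$ and $\wte{\TypeContext_2}{\ExpressionE}{\TypeS}$ with $\TypeContext=\TypeContext_1+\TypeContext_2$; apply the induction hypothesis to get $\wte{\TypeContext_2}{\ExpressionF}{\TypeS}$; and reassemble via \cref{lemma:substitution} (substituting $\ExpressionF$ for $\var$ in $\Context[\var]$, which yields $\Context[\ExpressionF]$).

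**Main obstacle.** I expect the bookkeeping around $\tbullet$'s to be the only real difficulty: in each base case the type comes decorated with an unknown number $n$ of leading delays coming from interleaved \refrule\introbullet\ steps, and the various sub-derivations produced by Inversion may carry different numbers of delays, so one repeatedly needs \cref{lemma:delay} to bring them to a common depth before applying an elimination rule — and one must check that \cref{lemma:delay}'s environment-splitting side condition $\TypeContext_1,\TypeContext_2=\TypeContext$ is always met (typically by taking $\TypeContext_2=\emptyset$ or by pushing the delay onto exactly the fragment of the environment that already carries $\tbullet$'s). Apart from this, everything is routine, since the heavy lifting has been packaged into \cref{lem:inv}, \cref{lemma:substitution}, and \cref{lem:keye}.
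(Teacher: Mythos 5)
Your proposal is correct and follows essentially the same route as the paper's proof: induction on the derivation of $\red$, with Lemma~\ref{lem:inv} (items \ref{lem:inv5} and \ref{lem:inv3}/\ref{lem:inv4}) plus Lemma~\ref{lemma:substitution} for \refrule{r-beta}, and Lemma~\ref{lem:keye} for \refrule{r-ctxt}. Your treatment is in fact slightly more careful than the paper's sketch, which only details the $\beta$ case, and your remark about aligning the delays $n$ via Lemma~\ref{lemma:delay} correctly identifies the only nontrivial bookkeeping.
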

 
 \begin{proof}
By induction on the definition of $\red$. 
 \cref{lem:keye} is useful for rule \defrule{r-ctxt}.
We only consider the case 
$(\Fun\varX\ExpressionE)~\ExpressionF
      \red    
      \ExpressionE \subst\ExpressionF\varX$.
    Suppose 
    $
    \wte{\TypeContext}{(\Fun\varX\ExpressionE)~\ExpressionF}{\Type}
    $.
    By \ri{\cref{lem:inv}}{\cref{lem:inv5}}\comma
    $\Type=\tbullet^n\Type_2$ and
    $\TypeContext=\TypeContext_1 +\TypeContext_2 $ and
%    \begin{equation}
%      \label{eq:argument}
\[
      \wte{\TypeContext_2}{\ExpressionF}{\tbullet^n\Type_1}
      \quad \text{and either}\quad
      \wte{\TypeContext_1}{\Fun\varX\ExpressionE}{\tbullet^n(\Type_1 \arrow \Type_2)}
      \quad\text{or}\quad
      \wte{\TypeContext_1}{\Fun\varX\ExpressionE}{\tbullet^n(\Type_1 \larrow \Type_2)}
   \]
   % \end{equation}
    In both cases, 
    it follows from \ri{\cref{lem:inv}}{\cref{lem:inv3}} that
    \begin{equation}
      \label{eq:function}
      \wte{\TypeContext_1, \varX: \tbullet^n\Type_1}{\Expression}{\tbullet^n\Type_2}
    \end{equation}
    By applying \cref{lemma:substitution} to \eqref{eq:function}\comma
     we get
    $\wte{\TypeContext}{\Expression\subst\ExpressionF\varX}{\tbullet^n\Type_2}$.   
    \end{proof}

\subsection{Normalisation of Expressions}\label{subsec:norm}

In this section we prove that any typeable  expression 
whose  type is  different from  $\infinite$ 
reduces to a normal form  (Theorem \ref{theorem:weakheadnormalization}).
For this, we define  a type interpretation indexed on the
set of natural numbers for dealing with the temporal operator
$\bullet$. 
The time is discrete and   represented using the set
of natural numbers.
The semantics reflects the fact that one 
$\bullet$ corresponds to one unit of time
 by shifting
the interpretation from $i$ to $i+1$. 
A similar interpretation of the modal operator with
indexed sets  is given in \cite{Nakano00:lics}.
%A more general setting based on  category theory
%is presented in 
%\cite{LMCSBirkedaletal}.
%%\cite{BirkedalMSS11:lics}. 
%
For simplicity we consider only $\Unit$ as basic type, the addition of other basic types is easy. 

%Normalisation is needed also for open expressions 
%to prove progress of processes.
Before introducing %defining 
the type interpretation, we give a few definitions.  
Let  $\EE$ be the set of expressions.
%
%Hereafter, we write $\red^*$ for the reflexive, transitive closure
%of $\red$.
% 
We define the following subsets of $\EE$:
\[
\begin{array}{r@{~}l}
 \setWN & =
\{ \Expression \mid \Expression \red^{*} \ExpressionF\ \&  \ 
\ExpressionF \mbox{ is a normal form} \}\\
\setNVAR & = 
\{ \Expression \mid \Expression \red^{*} \Context[\varX] \ \&  \ \varX \mbox{ is a variable} \}\\
\setNIO & = 
\{ \Expression \mid \Expression \red^{*} \PContext[\Expression_0] \ \&  \ 
\Expression_0 \in \{
\Send{\Channel^{\Polarity}}{\Expression_1},
          \Receive{\Channel^{\Polarity}},
          \Open{\Channel},
\Future \Expression_1 \}
\}
\end{array}
\]
The sets $\setNVAR$ and $\setNIO$ are sets of expressions which
reduce to normal forms of particular shapes. They are disjoint and
both subsets of $\setWN$.
We will do induction on the rank of types.  For $\Unit$, session
types, and shared channel types the rank is always 0.  For the other
types, the rank measures the depth of all what we can observe at time
$0$. We could also compute it by taking the maximal $0$-length of all
the paths in the tree representation of the type, where the $0$-length
of a path is the number of type constructors different from $\bullet$
from the root to a leaf or to a $\bullet$.

\begin{definition}[Rank of a Type]
\label{definition:typerank}
The rank  
of a type $\Type$ (notation $\typerank (\Type)$) is defined as
follows.
\[
\begin{array}{r@{~}ll}
\typerank (\Unit) & = 
\typerank(\SessionType) =
\typerank(\tshared\SessionType ) = 
\typerank (\bullet \Type)  = 0 \\
\typerank (\tio  \Type) & = \typerank ( \Type) +1 \\
\typerank (\Type \times \TypeS) & = max(\typerank ( \Type), \typerank(\TypeS)) +1 \\
\typerank (\Type \rightarrow \TypeS) & = max(\typerank ( \Type), \typerank(\TypeS)) +1 \\
\typerank (\Type \larrow \TypeS) & = max(\typerank ( \Type), \typerank(\TypeS)) +1 \\
\end{array}
\]
\end{definition}

The rank is well defined (and finite) because 
 the tree representation of
a type cannot have an infinite branch with 
no $\bullet$'s at all
(Condition~\ref{def:types3}
 in Definition \ref{def:types})
 and $\typerank (\bullet \Type)$ is set to $ 0$.
 
We now define the type interpretation
$\ti \Type  \in \natset \rightarrow \mathcal{P} (\EE)$, which is
an indexed set, where $\natset$ is the set of natural numbers and $ \mathcal{P}$ is the powerset constructor.
%A similar interpretation of the modal operator with
%indexed sets  is given in \cite{Nakano00:lics}.
%A more general setting based on  category theory
%is presented in \cite{BirkedalMSS11:lics}. 

%\paic{}{PROBLEM!!!}{
%In this section, we assume  that  $\infinite$ is a type, 
%unless stated explicitly.
%This assumption is needed  to interpret 
% types such as $\tio \infinite$ or $\infinite \arrow \infinite$
% and it is also needed 
% in the proofs.
%}
%\begin{table}
%\caption{Type interpretation.}\label{tab:ti}
%\framebox[\textwidth]{
%\begin{framedmath}
\begin{definition}[Type Interpretation]
\label{definition:firsttypeinterpretation}
We define 
$ \indti \Type \ind \subseteq \EE$ by induction on
 $(\ind, \typerank(\Type))$.
\[
\begin{array}{rcl}
\indti\Unit \ind & = & \WNVAR \cup  
\{ \Expression \mid \Expression \red^* \unit \} \\[3pt]
%%%%%%%%%%%%%%%%%%%%%%%%%%
\indti\SessionType  \ind & = &  \WNVAR \cup
 \{ \Expression  \mid \Expression \red^* \Channel^{\Polarity} \} \\[3pt]
 %%%%%%%%%%%%%%%%%%%%
\indti{\tshared\SessionType} \ind
 & = & \WNVAR \cup \{ \Expression  \mid \Expression \red^* \Channel \} \\[3pt]
 %%%%%%%%%%%%%%%%%%%%%%%
 \indti{ \Type \times \TypeS} \ind & = &  
 \WNVAR \cup    \set{\Expression \mid 
       \Expression \red^* \Pair{\Expression_1}{\Expression_2}   \mbox{ and }
      \Expression_1\in\indti\Type \ind \mbox{ and } \ ~\Expression_2\in\indti\TypeS \ind}\\[3pt]
 %%%%%%%%%%%%%%%%%%%%%%%%%%%%
\indti{\Type \arrow \TypeS}{\ind} & =  &  
 \indti{\Type \larrow \TypeS}{\ind}  \\
 & = & \WNVAR \cup 
 \set{\Expression \mid 
 \Expression \red^{*} \lambda x. 
 \ExpressionF \mbox{ and }
 \Expression\Expression' \in \indti{\TypeS}{\indj}  \ \ 
 \forall \Expression' \in \indti{\Type}{\indj}, \indj \leq \ind}\\
 &  & {} \cup \set{\Expression \mid 
 \Expression \red^{*} \Context[\Constant] \mbox{ and }
 \Expression\Expression' \in \indti{\TypeS}{\indj}  \ \ 
 \forall \Expression' \in \indti{\Type}{\indj}, \indj \leq \ind}
  \\[3pt]
 \indti{\tio\Type}{\ind} & = & \WNVAR \cup    \setNIO \cup
                        \{ \Expression \mid 
                            \Expression  \red^* \return~\Expression'\mbox{ and } 
                            \Expression' \in \indti{\Type}{\ind}    \}
                             \\[3pt]   
   %%%%%%%%%%%%%% %%%%%%%%%%%%%%              
     \indti{\tbullet\Type}{0} & = &     \EE  \\[3pt]
\indti{\tbullet\Type}{\ind+1}  & = & 
           \indti{\Type}{\ind}                
\end{array}
\]
%\end{framedmath}}
%\end{table}
\end{definition} 
\noindent
Note that $\indti{\infinite}{\ind} = \EE $ for all $\ind \in \natset$.
In the interpretation of the arrow type,  
 the requirement  ``for all  $j \leq i$'' (and not just 
 ``for all $i$'') is crucial
for dealing with the contra-variance of the arrow type in the proof of \ri{\cref{lem:A}}{\cref{lemma:monotonicityofinterpretation}}.
%solving the problem of  the contra-variance.

The next properties of the type interpretation are expected.
 
\begin{lemma} \mbox{} %The following properties hold:
\label{lemma:interpretationofmanybullets}
\begin{enumerate}
\item\label{lemma:interpretationofmanybullets1}  $\indti{\tbullet[n] \Type}{\ind} = \EE$ if $\ind < n$.
\item\label{lemma:interpretationofmanybullets2}  $\indti{\tbullet[n] \Type}{\ind}= \indti{\Type}{\ind -n}$ if $\ind \geq  n$.
\end{enumerate}
\end{lemma}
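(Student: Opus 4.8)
The plan is to prove both statements simultaneously by induction on $n$, relying on the two defining clauses for the delay constructor in \cref{definition:firsttypeinterpretation}, namely $\indti{\tbullet\Type}{0} = \EE$ and $\indti{\tbullet\Type}{\ind+1} = \indti{\Type}{\ind}$, together with the identification $\tbullet[n+1]\Type = \tbullet(\tbullet[n]\Type)$.

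For the base case $n = 0$ there is nothing to prove for part~\ref{lemma:interpretationofmanybullets1}, since no natural number $\ind$ satisfies $\ind < 0$, and part~\ref{lemma:interpretationofmanybullets2} holds trivially because $\tbullet[0]\Type = \Type$ and $\ind - 0 = \ind$. For the inductive step, assuming the statement for $n$, I would distinguish two cases on $\ind$. If $\ind = 0$, then $\indti{\tbullet[n+1]\Type}{0} = \indti{\tbullet(\tbullet[n]\Type)}{0} = \EE$ directly from the definition, and since $0 < n+1$ this is exactly part~\ref{lemma:interpretationofmanybullets1}. If $\ind = \ind' + 1$, then $\indti{\tbullet[n+1]\Type}{\ind'+1} = \indti{\tbullet[n]\Type}{\ind'}$ by the second defining clause; applying the induction hypothesis to $\tbullet[n]\Type$ at index $\ind'$ yields $\EE$ when $\ind' < n$ (which is part~\ref{lemma:interpretationofmanybullets1}, as $\ind'+1 < n+1$) and $\indti{\Type}{\ind' - n} = \indti{\Type}{(\ind'+1)-(n+1)}$ when $\ind' \geq n$ (which is part~\ref{lemma:interpretationofmanybullets2}, as $\ind'+1 \geq n+1$).

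I do not expect any genuine obstacle here: the argument is a routine unfolding of the definition, and the only point requiring a little care is keeping the two cases $\ind = 0$ and $\ind > 0$ aligned with the two clauses $\ind < n+1$ and $\ind \geq n+1$ in the statement, which the case split above handles cleanly. In particular, the induction does not interact with the rank-based well-foundedness used to justify \cref{definition:firsttypeinterpretation}, since $n$ is a fixed finite number of prefixed $\bullet$'s and each step strips exactly one of them.
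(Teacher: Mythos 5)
Your proof is correct and follows exactly the route the paper takes: the paper's proof is simply ``Both items are proved by induction on $n$,'' and your simultaneous induction on $n$ with the case split on $\ind=0$ versus $\ind=\ind'+1$ is the straightforward unfolding of that. No gaps.
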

\begin{proof}
Both items are proved by induction on $n$.
\end{proof}

\begin{lemma}\label{lem:A}
\mbox{} %The following properties hold:
\begin{enumerate}
\item\label{lemma:wnvar}
 For all types $\Type$ and $\ind \in \natset$, we have 
$\WNVAR \subseteq \indti{\Type}{\ind}$. 
\item\label{lem:A2}
If 
$\Type\not=\bullet\TypeS$, then 
$
\indti{\bullet^{n+1} \Type}{n+1} \subseteq  \HH  
$.

\item 
\label{lemma:monotonicityofinterpretation}
For all $\ind \in \mathbb{N}$, 
$\indti{\Type}{\ind+1} \subseteq \indti{\Type}{\ind}$.

\item \label{lemma:interpretationoftypes}
 If $\Type \not = \infinite$, then
 $\bigcap_{\ind \in \natset} \indti{\Type}{\ind} \subseteq \setWN
$.

\end{enumerate}
\end{lemma}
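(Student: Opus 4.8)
The plan is to prove the four items in turn, all of them organised around the indexed interpretation of \cref{definition:firsttypeinterpretation} and the rank of \cref{definition:typerank}. For item~\ref{lemma:wnvar} I would argue by induction on the pair $(\ind,\typerank(\Type))$, which is the measure on which $\indti{\Type}{\ind}$ is defined. Whenever the head of $\Type$ is not $\bullet$, the set $\WNVAR$ appears literally as one of the summands defining $\indti{\Type}{\ind}$, so the inclusion is immediate. If $\Type=\tbullet\TypeS$ then $\indti{\tbullet\TypeS}{0}=\EE\supseteq\WNVAR$, whereas $\indti{\tbullet\TypeS}{\ind+1}=\indti{\TypeS}{\ind}$, which contains $\WNVAR$ by the induction hypothesis since the index strictly decreases.

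For item~\ref{lem:A2} I would first apply \ri{\cref{lemma:interpretationofmanybullets}}{\cref{lemma:interpretationofmanybullets2}} to rewrite $\indti{\tbullet[n+1]\Type}{n+1}$ as $\indti{\Type}{0}$, so that the claim reduces to $\indti{\Type}{0}\subseteq\HH$ whenever the head of $\Type$ is not $\bullet$ (recall $\HH$ is the set of weakly normalising expressions). A case analysis on that head then finishes the argument, because in each case every summand of $\indti{\Type}{0}$ is a set of weakly normalising expressions: $\WNVAR$ and $\setNIO$ are contained in $\HH$ (as observed right after their definition), and the remaining summands only contain expressions that reduce to one of the shapes $\unit$, $\Channel^{\Polarity}$, $\Channel$, $\Pair{\Expression_1}{\Expression_2}$, a $\lambda$-abstraction, $\return~\Expression'$, or a constant-headed neutral term $\Context[\Constant]$, each of which is a normal form because an evaluation context for expressions never reaches under a $\lambda$, nor inside a pair component, nor inside the argument of $\return$. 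Hence $\indti{\Type}{0}\subseteq\HH$ in all cases.

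Item~\ref{lemma:monotonicityofinterpretation} is again an induction on $(\ind,\typerank(\Type))$. For $\Unit$, session and shared channel types the two sets coincide; for products and $\mkbasictype{IO}$ types the summands are built from the interpretations of strictly smaller types (respectively, of the same type at a strictly smaller index), so one closes by the induction hypothesis applied to those components; and for arrow types no induction hypothesis is needed, since membership in $\indti{\Type\arrow\TypeS}{\ind+1}$ requires the application condition to hold for all $\indj\le\ind+1$, which in particular gives it for all $\indj\le\ind$ — this is exactly the reason the definition quantifies over all $\indj\le\ind$ rather than over $\indj=\ind$ only. For $\Type=\tbullet\TypeS$, $\indti{\tbullet\TypeS}{\ind+1}=\indti{\TypeS}{\ind}$ sits trivially inside $\indti{\tbullet\TypeS}{0}=\EE$ when $\ind=0$, and inside $\indti{\tbullet\TypeS}{\ind}=\indti{\TypeS}{\ind-1}\supseteq\indti{\TypeS}{\ind}$ by the induction hypothesis when $\ind>0$.

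Finally, item~\ref{lemma:interpretationoftypes} follows from item~\ref{lem:A2}. The tree of $\Type$ is regular, and the only regular tree whose nodes are all $\bullet$ is $\infinite$; hence the hypothesis $\Type\neq\infinite$ lets us write $\Type=\tbullet[k]\Type'$ for some $k\ge 0$ and some type $\Type'$ whose head is not $\bullet$. Then $\indti{\Type}{k}=\indti{\tbullet[k]\Type'}{k}=\indti{\Type'}{0}$ by \ri{\cref{lemma:interpretationofmanybullets}}{\cref{lemma:interpretationofmanybullets2}}, and $\indti{\Type'}{0}=\indti{\tbullet\Type'}{1}\subseteq\HH$ by item~\ref{lem:A2} applied to $\Type'$; hence $\bigcap_{\ind\in\natset}\indti{\Type}{\ind}\subseteq\indti{\Type}{k}\subseteq\HH$. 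The step I expect to require the most care is the normal-form bookkeeping inside item~\ref{lem:A2}: one has to read off from the grammar of evaluation contexts exactly which syntactic shapes are irreducible, and in particular verify that a constant-headed term is not concealing a redex in an argument position an evaluation context can reach — a genuine possibility for $\open$, $\send$, $\receive$ and $\bind$.
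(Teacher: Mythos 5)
Your proof follows the paper's own argument essentially step for step: the same lexicographic induction on $(\ind,\typerank(\Type))$ for items~(\ref{lemma:wnvar}) and~(\ref{lemma:monotonicityofinterpretation}), the same appeal to \ri{\cref{lemma:interpretationofmanybullets}}{\cref{lemma:interpretationofmanybullets2}} for items~(\ref{lem:A2}) and~(\ref{lemma:interpretationoftypes}), and the same reduction of item~(\ref{lemma:interpretationoftypes}) to item~(\ref{lem:A2}) by stripping the finitely many leading bullets. Your remark that the arrow case of monotonicity needs no induction hypothesis, because membership at index $\ind+1$ quantifies over all $\indj\le\ind+1$ and hence subsumes the condition at index $\ind$, is a mild streamlining of the paper's argument, which invokes the induction hypothesis on $\TypeS$ at that point; both are fine. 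The one place where you go beyond the paper is also the one place where a real difficulty sits, and you half-notice it yourself: the assertion that $\Context[\Constant]$ is a normal form is not literally true. For instance $\Bind{\Return\Expression}\ExpressionF$ and $\open~\omegaterm$ both have the shape $\Context[\Constant]$, and the latter has no normal form at all since $\open~\Context$ is an evaluation context; so the third summand of $\indti{\Type\arrow\TypeS}{0}$ is not disposed of by the observation that every summand consists of irreducible shapes, and an extra argument is needed there. You are not missing anything the paper supplies --- its proof of item~(\ref{lem:A2}) is a one-line citation of \cref{lemma:interpretationofmanybullets} and is silent on exactly this point --- but if you want that case to be airtight you must treat the $\Context[\Constant]$ clause separately rather than fold it in with the genuinely irreducible shapes.
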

\begin{proof} (\cref{lemma:wnvar}).
By 
 induction on $\ind$ and doing case analysis on 
 the shape of the type. All cases are trivial except
 when the type is $\bullet \Type$.
 
(\cref{lem:A2}). Using \ri{\cref{lemma:interpretationofmanybullets}}{\cref{lemma:interpretationofmanybullets2}}. 
%By induction on $n$.

(\cref{lemma:monotonicityofinterpretation}).  By induction on
 $(\ind, \typerank(\Type))$. 
% The case of the arrow type is interesting because
% it shows how the   problem of the contra-variance
% is solved.
 Suppose $\Expression \in  \indti{\Type \arrow  \TypeS}{i+1}$.
 Then\comma
$\Expression \Expression' \in \indti{\TypeS}{j}$ 
 for $j \leq i+1$.
 This is equivalent to saying that
 $\Expression \Expression' \in \indti{\TypeS}{j'+1}$
 for $j' \leq i$.
 By induction hypothesis\comma
$\indti{\TypeS}{j'+1} \subseteq \indti{\TypeS}{j'}$.
Hence, 
 $\Expression \in  \indti{\Type \arrow  \TypeS}{i}$.
The remaining cases are easy.

(\cref{lemma:interpretationoftypes}). 
 All the cases are trivial except for a type starting by $\bullet$.
Since $\Type \not = \infinite$, 
we have that $\Type = \bullet^{n+1} \TypeS$
and $\TypeS\not= \bullet \TypeS'$.
It follows from \cref{lem:A2}
that 
$\indti{\bullet^{n+1} \TypeS}{n+1} \subseteq \HH $ and  hence 
%from \cref{lemma:monotonicityofinterpretation}
%that 
%$\bigcap_{\ind \in \natset} \indti{\Type}{\ind}  \subseteq 
% \indti{\Type}{n+1} $. Hence, 
\begin{equation}
 \bigcap_{\ind \in \natset} \indti{\Type}{\ind}  \subseteq 
 \indti{\Type}{n+1} = \indti{\bullet^{n+1} \TypeS}{n+1} \subseteq   \HH.
 \tag*{\qEd}
\end{equation}
\def\popQED{}
 \end{proof}

\bigskip

In order to deal with open expressions we resort
to  substitution functions, as usual.
A substitution function is
a mapping from (a finite set of) variables to $\EE$. We use $\funsubst$ to range over substitution functions.
Substitution functions allows us to  
 extend the semantics to typing %type 
judgements 
% define indexed semantic typing 
(notation $\TypeContext \modelsi \Expression: \Type$). 

\begin{definition}[Typing  %Type 
Judgement Interpretation]
\label{definition:modelsi}
Let $\funsubst$ be a substitution function. 

\begin{enumerate} 

\item $\funsubst \modelsi \TypeContext$
if $\funsubst(\varX) \in 
  \indti{\Type}{\ind}$
 for all $\varX : \Type \in \TypeContext$.
 
\item $\TypeContext \modelsi \Expression: \Type$
if $\funsubst (\Expression) \in  \indti{\Type}{\ind}$
for all  $\funsubst \modelsi \TypeContext$.
\end{enumerate}
\end{definition}

As expected we can show the soundness of our type system 
with respect to 
the indexed semantics. 

\begin{theorem}[Soundness]
\label{theorem:soundness} 
If $\wte{\TypeContext}\Expression\Type$,
then $\TypeContext \modelsi \Expression:\Type$ for all $\ind \in \mathbb{N}$.
\end{theorem}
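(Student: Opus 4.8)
The plan is to prove the index-universal statement by induction on the derivation of $\wte{\TypeContext}{\Expression}{\Type}$, proceeding by cases on the last rule applied --- the standard shape of a step-indexed logical-relations argument. Quantifying over all $\ind$ in the conclusion is essential, since in the function cases the induction hypothesis for a premise gets reused at a \emph{shifted} index. Before the induction I would record three routine facts. First, each set $\indti{\Type}{\ind}$ is closed under expansion: if $\Expression \red \Expression'$ and $\Expression' \in \indti{\Type}{\ind}$, then $\Expression \in \indti{\Type}{\ind}$; this is immediate from \cref{definition:firsttypeinterpretation}, in which every clause is phrased through $\red^{*}$ to a head shape, by induction on $(\ind, \typerank(\Type))$. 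Second, a purely syntactic observation: every process evaluation context $\PContext$ is also an expression evaluation context; $\Context[\Expression] \in \WNVAR$ whenever $\Expression \in \WNVAR$ and $\Context$ is an expression evaluation context; and plugging an expression of $\setNIO$ into $\bind~\Hole$, or into an expression evaluation context, yields again an expression of $\setNIO$. Third, about environments: by \cref{lemma:monotonicityofinterpretation}, $\funsubst \modelsi \TypeContext$ implies $\funsubst \models_{\indj} \TypeContext$ for every $\indj \le \ind$; and $\funsubst \modelsi (\TypeContext_1 + \TypeContext_2)$ implies both $\funsubst \modelsi \TypeContext_1$ and $\funsubst \modelsi \TypeContext_2$, since each $\TypeContext_k$ is contained in $\TypeContext_1 + \TypeContext_2$ up to duplicated unlimited bindings.

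I would then do the induction. Rule \refrule\introbullet{} is handled by the definition of $\indti{\tbullet\Type}{\ind}$: the interpretation is $\EE$ at $\ind = 0$, and at $\ind = \indj + 1$ it equals $\indti{\Type}{\indj}$, for which I invoke the induction hypothesis together with the downward closure of $\modelsi$ in its index. Rule \refrule\axiom{} is immediate: for a variable $\varX$ the claim is exactly the definition of $\funsubst \modelsi \TypeContext$, and for an endpoint or shared channel $\Name$ we have $\funsubst(\Name) = \Name$ and $\Name \red^{*} \Name$, which puts $\Name$ into the interpretation of its session or shared-channel type. For \refrule\introarrow{} and \refrule\introlarrow{}, say $\Expression = \Fun\varX{\Expression_0}$ has type $\tbullet[n](\Type \arrow \TypeS)$; the linear case is identical, the two arrow interpretations coinciding. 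At $\ind < n$ the interpretation is all of $\EE$; at $\ind \ge n$ it equals $\indti{\Type \arrow \TypeS}{\ind - n}$ by \cref{lemma:interpretationofmanybullets}, and $\funsubst(\Expression)$ is already a $\lambda$-abstraction, so it remains to check that $(\funsubst(\Expression))\,\ExpressionF \in \indti{\TypeS}{\indj}$ for every $\indj \le \ind - n$ and $\ExpressionF \in \indti{\Type}{\indj}$. That application \refrule{r-beta}-reduces to $\funsubst'(\Expression_0)$ with $\funsubst' = \funsubst[\varX \mapsto \ExpressionF]$; one checks $\funsubst' \models_{\indj + n} (\TypeContext, \varX : \tbullet[n]\Type)$ using the third preliminary fact (note $\indj + n \le \ind$) and \cref{lemma:interpretationofmanybullets}; the induction hypothesis then yields $\funsubst'(\Expression_0) \in \indti{\tbullet[n]\TypeS}{\indj + n} = \indti{\TypeS}{\indj}$, and the first preliminary fact transports this back over the $\beta$-step.

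The elimination rules \refrule\elimarrow{} and \refrule\elimlarrow{} are symmetric: split the environment using the third preliminary fact, apply the induction hypothesis to both premises, unfold the $n$ bullets (the case $\ind < n$ being vacuous), and instantiate at $\indj = \ind - n$ the application clause of the arrow interpretation coming from the first premise, with the value supplied by the second premise as the argument. Rule \refrule\elimprod{} proceeds along the same lines: unfold the bullets; if $\funsubst(\Expression_1)$ is neutral, then so is the whole $\splitkw$-expression, by the second preliminary fact; otherwise $\funsubst(\Expression_1) \red^{*} \Pair{\ExpressionE_1}{\ExpressionE_2}$ with the two components lying in the interpretations of $\Type_1$ and $\Type_2$, the $\splitkw$ fires by \refrule{r-split}, and one concludes by extending $\funsubst$ with $\varX \mapsto \ExpressionE_1$ and $\varY \mapsto \ExpressionE_2$, applying the induction hypothesis to the second premise, and closing under expansion.

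The remaining and most laborious case is \refrule\const: one must show $\Constant \in \indti{\Type}{\ind}$ for every $\Type \in \typeof(\Constant)$ and every $\ind$. For $\unit$ this is trivial. For every other constant $\Type$ is an iterated arrow type, and the clause for constants in the arrow interpretation reduces the task to tracking what $\Constant$ and its partial applications do once fed arguments drawn from the relevant interpretations. When an argument is neutral, the partial application stays in $\WNVAR$, which is contained in every interpretation by \cref{lemma:wnvar}. For $\open$, $\send$, $\receive$ fed a channel or an endpoint --- and for $\future$ on any argument --- the fully applied term has one of the shapes guarded by $\setNIO$, hence lies in the interpretation of the relevant monadic type. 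For $\return$ and $\pair$ the fully applied term is a $\return$-term, respectively a pair whose components sit in the required interpretations, using \cref{lemma:monotonicityofinterpretation} to realign indices. The delicate sub-case is $\bind$: given a first argument $\ExpressionE_1 \in \indti{\tio\Type}{\indj}$ one distinguishes whether $\ExpressionE_1$ is neutral, lies in $\setNIO$ (both subsumed by the second preliminary fact), or reduces to $\Return{\ExpressionF}$; in the last situation $\Bind{\ExpressionE_1}{\ExpressionE_2}$ reaches the redex $\Bind{\Return{\ExpressionF}}{\ExpressionE_2} \red \ExpressionE_2\,\ExpressionF$, and we feed $\ExpressionF$ --- lowered in index via \cref{lemma:monotonicityofinterpretation} --- to the arrow interpretation of $\ExpressionE_2$ and close under expansion. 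I expect this case --- choosing the right clause of the arrow interpretation for constants, verifying the \refrule{r-bind} step, and keeping the indices aligned --- to be the main obstacle; every other case is a direct unfolding of \cref{definition:firsttypeinterpretation} guided by \cref{lemma:interpretationofmanybullets} and \cref{lem:A}.
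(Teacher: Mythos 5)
Your proposal is correct and follows essentially the same route as the paper's proof: induction on the typing derivation, supported by closure of the interpretation under expansion, the standalone fact that constants inhabit the interpretations of all their types (with the same three-way case analysis for $\bind$ on neutral terms, $\setNIO$, and $\return$-redexes), splitting of $\modelsi$ over $+$, downward index monotonicity, and the $\ind<n$ versus $\ind\ge n$ case split via \cref{lemma:interpretationofmanybullets} in the arrow introduction and elimination cases. The only differences are presentational, e.g.\ you inline the constants lemma into the \refrule\const{} case and spell out the \refrule\axiom{} and \refrule\elimprod{} cases that the paper leaves implicit.
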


%We prove that $\TypeContext \modelsi \Expression:\Type$ for all $\ind \in \mathbb{N}$
%by induction on  $\wte{\TypeContext}\Expression\Type$.
%The proof can be found in~\cref{appendix:normalisation}.

The proof of this theorem by induction on  $\wte{\TypeContext}\Expression\Type$ can be found in~\cref{appendix:normalisation}.

\begin{theorem}[Normalisation of Typeable Expressions] 
\label{theorem:weakheadnormalization}
If $\wte\TypeContext\Expression\Type$ and $\Type\not=\infinite$, then
$\Expression$ reduces (in zero or more steps) to a normal form.
% Every typeable expression reduces to a normal form. 
\end{theorem}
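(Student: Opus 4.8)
The plan is to derive \cref{theorem:weakheadnormalization} as an immediate corollary of the Soundness theorem (\cref{theorem:soundness}) together with \cref{lem:A}; the substantive work has already been done in setting up and verifying the indexed type interpretation, so here only a short closing argument remains.

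First I would fix the substitution function $\funsubst$ that is the identity on every variable occurring in $\dom(\TypeContext)$, so that $\funsubst(\varX) = \varX$ for each $\varX:\TypeS \in \TypeContext$. Any variable $\varX$ reduces to itself in zero steps, and $\varX = \Context[\varX]$ with $\Context = \Hole$, hence $\varX \in \WNVAR$; so by \ri{\cref{lem:A}}{\cref{lemma:wnvar}} we get $\funsubst(\varX) \in \indti\TypeS\ind$ for every type $\TypeS$ and every $\ind \in \natset$. Therefore $\funsubst \modelsi \TypeContext$ holds for all $\ind$.

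Next I would apply \cref{theorem:soundness} to the hypothesis $\wte\TypeContext\Expression\Type$, obtaining $\TypeContext \modelsi \Expression : \Type$, and hence $\funsubst(\Expression) \in \indti\Type\ind$ for every $\ind$. Since $\wte\TypeContext\Expression\Type$ forces all free variables of $\Expression$ to lie in $\dom(\TypeContext)$, the chosen $\funsubst$ acts as the identity on $\Expression$, so $\funsubst(\Expression) = \Expression$ and thus $\Expression \in \bigcap_{\ind \in \natset} \indti\Type\ind$. Finally, because $\Type \neq \infinite$, \ri{\cref{lem:A}}{\cref{lemma:interpretationoftypes}} yields $\bigcap_{\ind \in \natset} \indti\Type\ind \subseteq \setWN$, so $\Expression \in \setWN$; by the definition of $\setWN$ this means exactly that $\Expression$ reduces in zero or more steps to a normal form.

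Within this argument there is no real obstacle: the only things to check are that $\funsubst$ genuinely validates $\TypeContext$ (which is precisely the containment $\WNVAR \subseteq \indti\TypeS\ind$ from \cref{lem:A}) and that $\funsubst$ leaves $\Expression$ unchanged, both immediate. The delicate reasoning sits upstream, in the proof of \cref{theorem:soundness}, where the contravariant interpretation of arrows (the quantification over $\indj \leq \ind$), the index-shifting semantics of $\tbullet$, and the cases for $\fix$ and rule \refrule\introbullet\ must be handled carefully; that proof is deferred to the appendix and assumed here.
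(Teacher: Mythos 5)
Your proposal is correct and coincides with the paper's own argument: both take the identity substitution, use \ri{\cref{lem:A}}{\cref{lemma:wnvar}} to validate $\TypeContext$, invoke \cref{theorem:soundness} to place $\Expression$ in $\bigcap_{\ind\in\natset}\indti\Type\ind$, and conclude via \ri{\cref{lem:A}}{\cref{lemma:interpretationoftypes}}. No differences worth noting.
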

\begin{proof}
It follows from \cref{theorem:soundness} that
\begin{equation}
\label{equation:modelsi}
\TypeContext \modelsi \Expression : \Type
\end{equation}
for all $\ind \in \mathbb{N}$. 
Let $id$ be the identity substitution and
 suppose $\varX: \TypeS \in \TypeContext$. Then
\[
\begin{array}{lll}
id(\varX) = \varX & \in \WNVAR \\
      &  \subseteq \indti{\TypeS}{\ind} & \mbox{by \ri{\cref{lem:A}}{\cref{lemma:wnvar}}.}
      \end{array}
      \]
This means that $id \modelsi \TypeContext$
for all $\ind \in \mathbb{N}$.
From (\ref{equation:modelsi}) we have that
$id(\Expression) = \Expression \in \indti{\Type}{\ind} $ for all $\ind$.
Hence,
\[
\Expression \in \bigcap_{\ind \in \natset} \indti{\Type}{\ind}\]
It follows from \ri{\cref{lem:A}}{\cref{lemma:interpretationoftypes}} that
 $\Expression  \in\HH$.
\end{proof}
 Notice that there are normalising expressions that cannot be typed, for example $\lambda x. \omegaterm {\bf I}$, where 
  $\omegaterm$ is defined at the end of \cref{subsec:types}  %$\omegaterm = (\lambda \varY. \varY\ \varY)(\lambda \varY. \varY\ \varY)$ 
 and ${\bf I}=\lambda z.z$. In fact $\omegaterm$ has type $\infinite$ and by previous theorem it cannot have other types, and this implies that the application $\omegaterm {\bf I}$ has no type.

\section{Properties of Reachable Processes}
\label{sec:propproc}

In general, processes lack subject reduction. For example, the process
\begin{equation}
\label{eq:bad1}
\new{\varX\varY}(\thread\var\return~\varY\parop\thread\varY\return~\var)
\end{equation}
is well-typed by assigning both $\varX$ and $\varY$ any unlimited
type, but its reduct
\[
\new \var(\thread\var\return~\var)
\]
 is ill-typed
because the thread name $\var$ occurs free in its body (\cf the side
condition of rule \refrule{thread}).
Another paradigmatic example is
%
%\[\wtp{\Channel^+:\tout\Type\End,\Channel^-:\tin\Type\End}{\thread\var\send~\Channel^+\varY\parop\thread\varY\receive~\Channel^-}{\var:\End,\varY:\Type}\]
%
\begin{equation}
\label{eq:bad2}
\thread\var\send~\Channel^+\varY\parop\thread\varY\receive~\Channel^-
\end{equation}
which is well-typed in the environment
$\Channel^+:\tout\Type\End,\Channel^-:\tin\Type\End$ where
$\Type=\bullet(\Type\times\End)$ and which reduces to
$\thread\var\return~\Channel^+\parop\thread\varY\return~\Pair\varY{\Channel^-}$.
Again, the reduct is ill-typed because the thread name $\varY$ occurs
free in its body.
In general, these examples show that the reduction 
rules \refrule{r-return}
and \refrule{r-comm} can violate the side condition of 
 the typing rule 
\refrule{thread}, which requires 
that a future variable is never defined in
terms of itself.

Another source of problems is the fact that, as in many session
calculi~\cite{BCDLDY08,CDYP2016}, there exist well-typed processes
that are (or reduce to) configurations where mutual dependencies
between sessions and/or thread names prevent progress.
For instance, both
\begin{eqnarray}
  \label{eq:bad3}
  & \new{\varX\varY\ChannelA\ChannelB}
  (\thread\varX\Bind{\send~\Channel^+~4}{\lambda\var.\receive~\ChannelB^-}
  \parop
  \thread\varY\Bind{\send~\ChannelB^+~2}{\lambda\var.\receive~\Channel^-})
  \qquad
  \\
  \label{eq:bad4}
  % & \new{\varX\varY\Channel}
  % (\thread\varX{\varY(\receive~\Channel^-)}
  % \parop  
  % \thread\varY\Bind{\send~\Channel^+~4}{\lambda\varZ.\return~(\lambda\varV.\varV)})
  & \new{\varX\Channel}
    (\thread\varX{\Bind{\receive~\Channel^-}{\lambda\Pair\varY\varZ.\send~\Channel^+~\varY}})
\end{eqnarray}
are well-typed but also deadlocked.

The point is that none of the troublesome processes (including those
shown above) is relevant to us, because they cannot be obtained by
reducing a so-called \emph{initial} process modelling the beginning of
a computation. A closed, well-typed process $\ProcessP$ is
\emph{initial} if
\[
\begin{array}{lll}
\ProcessP \equiv &
\new{\varX \Channel_1 \cdots \Channel_m}
(
\thread{\varX}{\Expression} \parop
\server{\Channel_1}{\Expression_1} \parop 
\cdots
\parop
\server{\Channel_m}{\Expression_m})
\end{array}
\]
namely if it refers to no undefined names and if it consists of one
thread $\var$ -- usually called ``main'' in most programming languages
-- and an arbitrary number of servers that are necessary for the
computation. In particular, typeability guarantees that all bodies reduce
to normal forms and all $\open$'s refer to existing
servers.
Clearly, an initial process is typeable in the empty environment.

We call \emph{reachable} all processes that can be obtained by
reducing an initial process. A reachable process may have several
threads running in parallel, resulting from either service invocation
or $\future$'s.  

 This section is organised as follows. \cref{subsec:wpp} defines the set of well-polarised processes, which includes the set of reachable processes. Subject reduction for reachable processes then follows from subject reduction for well-polarised processes (\cref{subsec:srp}).  Well-polarisation of reachable processes is also used in~\cref{subsec:prog}
to show progress and in~\cref{subsec:conf}  to show confluence. 

\subsection{Well-polarised Processes}
\label{subsec:wpp}

The most original and critical aspect of the following proofs is to
check that reachable processes do not have circular dependencies on
session channels and variables.  The absence of circularities can be
properly formalised by means of a judgement that characterises the
sharing of names among threads, inspired by the typing of the parallel
composition given in~\cite{LindleyMorris15}.  Intuitively, the notion
of {\em well-polarisation} captures the following properties of
reachable processes and makes them suitable for proving subject
reduction, progress and confluence:

\begin{enumerate}
%\item\label{c1} the name of a thread never occurs in its body;
\item\label{c3} two threads can share at most one session channel;
\item\label{c2} distinct endpoints of a session channel always occur
  in different threads;
\item\label{c4} if the name of one thread occurs in the body of
  another thread, then these threads cannot share session channels nor
  can the first thread mention the second.
\end{enumerate}
Note that \eqref{eq:bad1} and~\eqref{eq:bad2} violate
condition~(\ref{c4}), \eqref{eq:bad3} violates condition~(\ref{c3}),
and \eqref{eq:bad4} violates condition~(\ref{c2}).
In order to define well-polarised processes, we need a few auxiliary
notions. To begin with, we define 
 functions to extract bounds, threads and servers from processes.

\begin{definition}[Bounds, Threads, Servers]
  We define  
  \[
   \boundnames(\ProcessP) = 
   \{ \BindableName_1, \ldots, \BindableName_n \} \ \ \ \
   \threads(\ProcessP) = \ProcessQ \ \ \ \ 
 \servers(\ProcessP) = \ProcessR
 \]   
 if
 % $\Process\equiv\new{\BindableName_1\cdots\BindableName_n}(\ProcessQ\parop\ProcessR)$
 % and $\ProcessQ\parop\ProcessR$ does not contain restrictions,
 % $\ProcessQ$ is a parallel composition of threads), $\ProcessR$ is a
 % parallel composition of servers.
   $\Process\equiv\new{\BindableName_1 \cdots \BindableName_n}(\ProcessQ\parop\ProcessR)$,
   $\ProcessQ\parop\ProcessR$ does not contain restrictions,
   $\ProcessQ$ is \emph{thread-only} (namely, it is a parallel
 composition of threads), and $\ProcessR$ is \emph{server-only}
 (namely, it is a parallel composition of servers).
\end{definition}
\noindent
%Note that $\threads(\ProcessP)$ is uniquely defined up to structural
%congruence and the bindable names 
% $\BindableName_1,\ldots,\BindableName_n$'. 
%\bchpaula  Note that 
% if $\Process \equiv \ProcessQ$ and 
% $\boundnames(\ProcessP) = \{ \BindableName_1, \ldots, \BindableName_n \}$
% and 
% $\boundnames(\ProcessQ) = \{ \BindableNameY_1, \ldots, \BindableNameY_n \}$
% then
% $\new{\BindableName_1\cdots\BindableName_n} \threads(\ProcessP) \equiv 
% \new{\BindableNameY_1\cdots\BindableNameY_n} \threads(\ProcessQ)$.
% \echpaula
 
Next, we define a
mapping that computes the set of polarised names occurring free in an
expression or parallel composition of threads.

\begin{definition}[Polarised Names]
Let $\Pol$ be  defined on expressions and thread-only processes by:
\[
  \begin{array}{rcl}
    \Pol(\Expression)
    &=& \set{\Channel^\Polarity\mid \Channel^\Polarity\in\fn(\Expression)}\cup\set{\var^+\mid ~\var\in\fn(\Expression)}\\
    \Pol(\thread\var\Expression)
    &=& \set{\var^-}\cup\Pol (\Expression)\\
    \Pol(\ProcessP\parop\ProcessQ)
    &=& \Pol (\ProcessP)\cup \Pol (\ProcessQ)
  \end{array}
\] 
\end{definition}

Let $\CAA, \CBB$ be %We use $\CAA, \CBB$ to range over 
sets of polarised variables and
endpoints.  We say that $\CAA$ and $\CBB$ are \emph{independent},
notation $\CAA \indep \CBB$, if for every $\Thing^\PolarityP \in \CAA$ and
$\Thing^\PolarityQ \in \CBB$ we have $\PolarityP = \PolarityQ$. Then
$\CAA \indep \CAA$ implies that $\CAA$ cannot contain the same name with
opposite polarities.
  
\begin{definition}[Well-polarised Processes]
  \label{definition:wpp}
  Let $\CA \models \Process$ be the least 
  predicate on thread-only
    processes such that
\[
  \begin{array}{c}
    \inferrule[\defrule{wp-empty}]{
      \mathstrut
    }{
      \models \emptyprocess
    }
    \qquad
    \inferrule[\defrule{wp-thread}]{
      \mathstrut
    }{ 
      \models \thread\var\Expression
    }
    ~~
    \begin{lines}[c]
      \var^+\not\in\Pol(\Expression)
      \\
      \Pol(\Expression) \indep \Pol(\Expression)
    \end{lines}
    \qquad
    \inferrule[\defrule{wp-par}]{
      \CA \models \ProcessP
      \\
      \CB \models \ProcessQ
    }{
      \models\ProcessP\parop\ProcessQ
    }
    ~~
    \Pol (\ProcessP)\setminus\set{\Thing^\Polarity}\indep \Pol (\ProcessQ)\setminus\set{\Thing^{\co\Polarity}}
  \end{array}
\]
We say that $\ProcessP$ is \emph{well-polarised} if
$\CA \models \ProcessQ$ for some $\ProcessQ\equiv\threads(\Process)$.
\end{definition}

Note that the variable $\Thing$ in \defrule{wp-par} is existentially
quantified.
The empty process is trivially well-polarised and a thread
$\thread\var\Expression$ is well-polarised if $\Expression$ does not
contain references to both $\Channel^\Polarity$ and
$\Channel^{\co\Polarity}$,
  %$\BindableName^{+}$ and $\BindableName^{-}$, 
  nor to the thread name $\var$. %variable $\var$ being defined by the thread.
  A parallel composition $\ProcessP \parop \ProcessQ$ is
  well-polarised if there is
%
%As formalised in \ri{\cref{lem:Luca}}{\cref{lem:Luca2ii}}, the
%application of rule \defrule{wp-par} requires that
at most one variable or endpoint that occurs with opposite polarities
in $\Pol (\ProcessP)$ and $\Pol (\ProcessQ)$. 
%This means that either
%$\ProcessP$ contains $\Channel^\Polarity$ and $\ProcessQ$ contains
%$\Channel^{\co\Polarity}$ or $\ProcessP$ has a thread labelled $\var$
%and $\ProcessQ$ has a thread whose body contains $\var$. In particular
%we can have $\Pol (\ProcessP)\indep\Pol (\ProcessQ)$, i.e.
%$\Pol (\ProcessP)$ and $\Pol (\ProcessQ)$ do not contain the same name
%with opposite polarities.
 This means that  either: 
\begin{enumerate}
\item 
$\ProcessP$ contains $\Channel^\Polarity$ and $\ProcessQ$ contains
$\Channel^{\co\Polarity}$,
\item  $\ProcessP$ has a thread labelled $\var$
and $\ProcessQ$ has a thread whose body contains $\var$ (or vice versa),
\item  $\Pol (\ProcessP)\indep\Pol (\ProcessQ)$, i.e.
$\Pol (\ProcessP)$ and $\Pol (\ProcessQ)$ do not share 
 names with opposite polarities. 
\end{enumerate}

Note that $\CA \models \Process$ can hold even if $\Process$ cannot be
typed, for example $\CA \models \thread \var {\omegaterm {\bf I}}$, see the end of \cref{subsec:norm}. 
 Well-polarisation cannot be incorporated into the typing rules
because it is not closed under structural equivalence, i.e.
$\CA \models \Process$ and $\Process \equiv \Process'$ do not imply
$\CA \models \Process'$.  A counterexample is shown below:
\[
\begin{array}{r@{~}l}
\Process & = 
(\thread {\varX_1}{\return~1} \parop
 \thread {\varY_2}{\return~\varX_1}) \parop (
 \thread {\varX_2}{ \return~\varY_1} \parop 
 \thread{\varY_1}{\return~2} ) 
 \\
\Process' & = (\thread {\varX_1}{\return~1}
\parop  \thread {\varX_2}{ \return~\varY_1})
\parop  
(\thread {\varY_1}{\return~2} \parop 
\thread{\varY_2}{\return~\varX_1})
\end{array}
\]
% Another example that uses several occurrences of the same variable
% shows that not even associativity holds:
% \[
% \begin{array}{ll}
% \Process & = 
% (\thread {\varX}{\return~\Pair{\varY}{\varZ}} \parop
%  \thread {\varY}{\return~\varZ}) \parop
%  \thread {\varZ}{ \return~1}   
% \\
% \Process' & = 
% \thread {\varX}{\return~\Pair{\varY}{\varZ}} \parop
% ( \thread {\varY}{\return~\varZ} \parop
%  \thread {\varZ}{ \return~1}  )

% \end{array}
% \]

\begin{definition}
   We  write 
  $\Process \subprocesseq \ProcessQ$ if
  $\Process \in \setSubProc(Q)$, where $\setSubProc$ is defined by
\[\begin{array}{c}
\setSubProc(\emptyprocess) = \set\emptyprocess
\qquad
\setSubProc(\thread \varX \Expression) = 
\{ \thread \varX \Expression\}
\\[0.5ex]
\setSubProc (\Process_1 \parop \Process_2) = 
\setSubProc(\Process_1) \cup \setSubProc(\Process_2)
    \cup \{ \ProcessP'_1 \parop \Process'_2 \mid \Process'_1 \in  \setSubProc(\Process_1)  \text{ and } \Process'_2 \in  \setSubProc(\Process_2)\}
\end{array}
\]
We write $\Process \subprocess \ProcessQ$ if
$\Process \subprocesseq \ProcessQ$ and $\Process \not = \ProcessQ$.
\end{definition}
\noindent
Note that, if $\Process \subprocess \ProcessQ$, then all threads of
$\Process$ respect the syntactic structure of $\ProcessQ$.
This means that if we represent the processes as trees,
$\Process$ is a sub-tree of $\ProcessQ$.   This is
important because $\models$ is not necessarily preserved by structural
equivalence.
% and the parenthesis of $\Process$ should be in the same
% position as they are in $\ProcessQ$.

\begin{lemma}
\label{lem:Luca3} 
If $\CA \models \Process$ and $\ProcessQ \subprocesseq  \Process$,
then 
$\CB \models \ProcessQ$.
\end{lemma}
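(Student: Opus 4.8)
The plan is to argue by induction on the structure of $\Process$; since $\models$ is the least predicate of \cref{definition:wpp} and each shape of a thread-only process matches exactly one rule, this is the same as rule induction on the derivation of $\models\Process$. The two base cases are immediate from the shape of $\setSubProc$: if $\Process=\emptyprocess$ then $\setSubProc(\emptyprocess)=\set\emptyprocess$, so $\ProcessQ=\emptyprocess$ and $\models\ProcessQ$ by \refrule{wp-empty}; if $\Process=\thread\varX\Expression$ then $\setSubProc(\thread\varX\Expression)=\set{\thread\varX\Expression}$, so $\ProcessQ=\Process$ and there is nothing to prove.

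The real work is in the case $\Process=\Process_1\parop\Process_2$. Inverting \refrule{wp-par} gives $\models\Process_1$, $\models\Process_2$, and a name $\Thing$ with a polarity $\Polarity$ such that $\Pol(\Process_1)\setminus\set{\Thing^\Polarity}\indep\Pol(\Process_2)\setminus\set{\Thing^{\co\Polarity}}$. By the definition of $\setSubProc(\Process_1\parop\Process_2)$, the hypothesis $\ProcessQ\subprocesseq\Process$ puts us in one of three situations: $\ProcessQ\subprocesseq\Process_1$, or $\ProcessQ\subprocesseq\Process_2$, or $\ProcessQ=\ProcessQ_1\parop\ProcessQ_2$ with $\ProcessQ_1\subprocesseq\Process_1$ and $\ProcessQ_2\subprocesseq\Process_2$. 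In the first two, the conclusion follows directly from the induction hypothesis applied to $\Process_1$ (resp.\ $\Process_2$). In the third, the induction hypothesis yields $\models\ProcessQ_1$ and $\models\ProcessQ_2$, and it remains to reestablish the side condition of \refrule{wp-par} for $\ProcessQ_1\parop\ProcessQ_2$. For this I would first record two easy monotonicity facts: (i) $\ProcessR\subprocesseq\ProcessR'$ implies $\Pol(\ProcessR)\subseteq\Pol(\ProcessR')$, proved by a straightforward induction on the definition of $\setSubProc$ using that $\Pol$ is unchanged on atomic sub-processes and distributes over $\parop$ as a union; and (ii) $\indep$ is preserved under shrinking, i.e.\ $\CAA\indep\CBB$, $\CAA'\subseteq\CAA$, $\CBB'\subseteq\CBB$ imply $\CAA'\indep\CBB'$, which is immediate from the definition. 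Combining (i) and (ii), from $\Pol(\ProcessQ_i)\subseteq\Pol(\Process_i)$ we obtain $\Pol(\ProcessQ_1)\setminus\set{\Thing^\Polarity}\indep\Pol(\ProcessQ_2)\setminus\set{\Thing^{\co\Polarity}}$ with the very same witness $\Thing^\Polarity$, so \refrule{wp-par} applies and gives $\models\ProcessQ_1\parop\ProcessQ_2$.

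I do not expect a genuine obstacle here; the only point requiring care is that $\subprocesseq$ extracts honest sub-trees of the syntax tree of $\ProcessQ$, rather than arbitrary $\equiv$-variants, and it is exactly this rigidity that makes the monotonicity of $\Pol$ and the reuse of the same independence witness legitimate — which is also the reason the lemma is phrased with $\subprocesseq$ and would fail for structural equivalence.
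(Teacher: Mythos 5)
Your proposal is correct and matches the paper's approach: the paper proves this lemma by induction on the derivation of $\CA \models \Process$, which is exactly the rule induction you carry out, with the parallel case handled via monotonicity of $\Pol$ under $\subprocesseq$ and anti-monotonicity of $\indep$. The paper leaves these details implicit, so your write-up is simply a fleshed-out version of the same argument.
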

 
\begin{proof} By induction on the derivation of
  $\CA \models \Process$.
\end{proof}

The proof that well-polarisation of typeable processes is preserved by
reductions (Theorem \ref{theorem:invariance}) is a bit involved
because $\Process_0 \red\ProcessQ_0$ and $\CA \models \Process$ with
$\Process\equiv \threads(\Process_0)$ do not 
imply that $\CA \models \ProcessQ$ for an arbitrary 
 $\ProcessQ \equiv \threads(\ProcessQ_0)$.  We
will prove a variant of the above property: 
%, which is 
if $\Process_0 \red\ProcessQ_0$ and $\CA \models \Process$ with
$\Process\equiv \threads(\Process_0)$, then there exists $\ProcessQ'$
with $\ProcessQ' \equiv \threads(\ProcessQ_0)$ such that
$\CA \models \ProcessQ'$.  The problem lies on the reduction rules
\refrule{r-comm} and \refrule{r-return}.

\begin{example}
\label{example:r-comm}
This example shows that using the rule \refrule{r-comm}
we can obtain $\ProcessQ$ from $\Process$ such that 
$\CA  \models \Process$, but $\CA \not \models \ProcessQ$. 
\[
\begin{array}{r@{~}l}
\Process & =
(\thread \varX \send~\Channel^{+}~z \parop \thread \varZ \return~1 )\parop 
 \thread \varY \receive~\Channel^{-} 
 \\
 \ProcessQ & =
(\thread \varX \return~\Channel^{+} \parop \thread \varZ \return~1 )\parop 
 \thread \varY \return~\Pair{z}{\Channel^{-}}
 \end{array}
 \]
 By re-arranging the threads of $\ProcessQ$\comma we get 
 a process $\ProcessQ'$ 
 such that $\CA \models \ProcessQ'$:
\[
\ProcessQ' =
(\thread \varX \return~\Channel^{+} \parop\thread \varY \return~\Pair{z}{\Channel^{-}}) \parop
\thread \varZ \return~1
\]
\end{example}

The rule \refrule{r-return} has a similar problem 
 as illustrated
by the following example.

\begin{example} 
\label{example:r-return}
Take 
% for example,   
$\Process_0 = \new \varX \Process$  and 
\[
\begin{array}{lll}
 \Process  & =&
((\thread \varX \return~\Pair{\varZ_1}{\varZ_2}
 \parop \thread {\varZ_1}  \return~\varZ_2) \parop
 \thread {\varZ_2} \return~1) \parop \\
 & &
 (\thread \varY \send~\Channel^{+} \varX \parop
 \thread u \receive~\Channel^{-})
 \\
 \ProcessQ & = &
 (\thread {\varZ_1}  \return~\varZ_2 \parop
 \thread {\varZ_2} \return~1) \parop  \\
 & & 
 (\thread \varY \send~\Channel^{+}~\Pair{\varZ_1}{\varZ_2}  \parop
 \thread u \receive~\Channel^{-})
 \end{array}
 \]
Then\comma $\CA  \models \Process$ but $\CA \not \models \ProcessQ$. 
We have that $\CA  \models \ProcessQ'$
and $\ProcessQ \equiv \ProcessQ'$ where 
 \[
\ProcessQ' =
((\thread \varY \send~\Channel^{+}~\Pair{\varZ_1}{\varZ_2}
\parop
  \thread u \receive~\Channel^{-})
\parop \thread {\varZ_1}  \return~\varZ_2) \parop
 \thread {\varZ_2} \return~1
\]  
\end{example}

The details for finding a 
 $\ProcessQ' \equiv \threads(\ProcessQ_0)$ 
for any $\ProcessQ_0$ such that $\Process_0 \red\ProcessQ_0$ 
and  $\CA \models \Process$ with 
$\Process\equiv \threads(\Process_0)$  
  are given in~\cref{appendix:SRprocesses}.  Here
we only give the formalisation of the properties (\ref{c3}), (\ref{c2}) and (\ref{c4}) listed at the beginning of this  section. 

\begin{lemma}\label{lem:Luca}
\mbox{} %The following properties hold:
\begin{enumerate}
\item  \label{lem:Luca2} 
Let  $\CA\models\Process\parop\ProcessQ$ and either 
 $\Thing\not=\BindableNameY$ or $\Polarity \not = \PolarityQ$. 
If  $\Thing^\Polarity,\BindableNameY^\PolarityQ\in \Pol(\Process)$
and $\Thing^{\co\Polarity} \in \Pol(\ProcessQ)$, 
then $ \BindableNameY^{\co\PolarityQ} \not \in  \Pol(\ProcessQ)$.
Similarly,
if  \mbox{$\Thing^\Polarity \in\Pol(\Process)$}
and $\Thing^{\co\Polarity}, \BindableNameY^\PolarityQ \in \Pol(\ProcessQ)$,
then $ \BindableNameY^{\co\PolarityQ} \not \in \Pol(\Process)$.

 \item\label{lem:Luca4} 
If $\CA \models \Process$ and $\thread\var\Expression\subprocess\Process$ and $ \thread\varY\ExpressionF\subprocess\Process$ and $\var$ occurs in $\ExpressionF$, then $\varY$ cannot occur in $\Expression$.

 \item\label{lem:Luca5} 
If $\CA \models \Process$ and $\thread\var\Expression\subprocess\Process$ and $ \thread\varY\ExpressionF\subprocess\Process$ 
and $\Channel^\Polarity$ occurs in $\Expression$ and
$\Channel^{\co \Polarity}$ occurs in $\ExpressionF$, 
then $\varY$ cannot occur in $\Expression$.
 \end{enumerate}
\end{lemma}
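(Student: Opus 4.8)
The plan is to lean entirely on the shape of rule \refrule{wp-par}: since a parallel composition is derivable only by that rule, from $\CA\models\ProcessR_1\parop\ProcessR_2$ one obtains a single ``excepted'' polarised name $\BindableName_0^{\PolarityP_0}$ such that $\Pol(\ProcessR_1)\setminus\set{\BindableName_0^{\PolarityP_0}}\indep\Pol(\ProcessR_2)\setminus\set{\BindableName_0^{\co{\PolarityP_0}}}$. I would isolate once, and reuse throughout, the following consequence: if a name $\Thing$ occurs as $\Thing^\Polarity$ in $\Pol(\ProcessR_1)$ and as $\Thing^{\co\Polarity}$ in $\Pol(\ProcessR_2)$, then this ``crossing'' pair must be the excepted one, so $\BindableName_0=\Thing$ and $\PolarityP_0=\Polarity$; in particular there is at most one crossing name, and a second name occurring with opposite polarities on the two sides immediately contradicts the independence condition.

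\emph{Part~\ref{lem:Luca2}} is then a direct instance. From $\CA\models\Process\parop\ProcessQ$ and the crossing pair $\Thing^\Polarity\in\Pol(\Process)$, $\Thing^{\co\Polarity}\in\Pol(\ProcessQ)$, the remark pins the excepted name to $\Thing^\Polarity$. The hypothesis ``$\Thing\neq\BindableNameY$ or $\Polarity\neq\PolarityQ$'' says exactly $\Thing^\Polarity\neq\BindableNameY^\PolarityQ$, and since $\BindableNameY^{\co\PolarityQ}=\Thing^{\co\Polarity}$ would force $\BindableNameY^\PolarityQ=\Thing^\Polarity$, neither $\BindableNameY^\PolarityQ$ nor $\BindableNameY^{\co\PolarityQ}$ is the excepted name; hence they cannot both occur on the two sides, i.e. $\BindableNameY^{\co\PolarityQ}\notin\Pol(\ProcessQ)$. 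The ``similarly'' clause is the mirror image obtained by swapping $\Process$ and $\ProcessQ$.

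\emph{Parts~\ref{lem:Luca4} and~\ref{lem:Luca5}} need one preliminary move. If $\var=\varY$ the claim is immediate: $\thread\var\Expression\subprocess\Process$ and \cref{lem:Luca3} give $\CA\models\thread\var\Expression$, and the side condition of \refrule{wp-thread} already says $\var\notin\fn(\Expression)$. So assume $\var\neq\varY$; then $\thread\var\Expression$ and $\thread\varY\ExpressionF$ are distinct threads of $\Process$, and I would first prove, by induction on the structure of $\Process$ following the definition of $\setSubProc$, that two distinct threads that are subprocesses of $\Process$ can be isolated together, i.e. $\thread\var\Expression\parop\thread\varY\ExpressionF\subprocesseq\Process$ or $\thread\varY\ExpressionF\parop\thread\var\Expression\subprocesseq\Process$. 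By \cref{lem:Luca3} the isolated composition is well-polarised, so I can invert \refrule{wp-par} and apply the remark. In Part~\ref{lem:Luca4}, the crossing pair $\var^-\in\Pol(\thread\var\Expression)$ and $\var^+\in\Pol(\thread\varY\ExpressionF)$ (the latter because $\var$ occurs in $\ExpressionF$) forces the excepted name to be $\var$, so $\varY\in\fn(\Expression)$ would yield a second crossing pair $\varY^+$, $\varY^-$ on the two sides, which is impossible. In Part~\ref{lem:Luca5}, the crossing pair $\Channel^\Polarity$, $\Channel^{\co\Polarity}$ forces the excepted name to be the channel $\Channel$, so $\varY\in\fn(\Expression)$ would again create a crossing pair $\varY^+$, $\varY^-$, with $\varY\neq\Channel$ because variables and channels are disjoint. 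In both cases the contradiction gives $\varY\notin\fn(\Expression)$.

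The step I expect to be the main obstacle is the thread-isolation claim above: one has to check carefully, from the inductive definition of $\setSubProc$, that two distinct threads sitting somewhere in the parallel structure of $\Process$ have a least common parallel ancestor, and that pulling them out yields a parallel subprocess which still respects the syntactic tree of $\Process$, which is exactly what \cref{lem:Luca3} requires, $\models$ not being stable under structural congruence. A minor wrinkle is that isolation may return the two threads in either order; this is harmless, since the independence condition of \refrule{wp-par} is symmetric up to exchanging the two sides and dualising the polarity of the excepted name, so the argument runs identically. Everything else reduces to routine case analysis on $\indep$.
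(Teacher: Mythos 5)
Your proposal is correct and follows essentially the same route as the paper: both arguments reduce the claim to a well-polarised parallel composition separating the two threads (obtained via $\setSubProc$ and \cref{lem:Luca3}) and then exploit the fact that the side condition of \refrule{wp-par} tolerates at most one name crossing the split with opposite polarities. The only differences are cosmetic: where you isolate the two bare threads as a two-thread parallel subprocess (which costs you the extra isolation induction), the paper simply takes the least common parallel ancestor $\Process_1\parop\Process_2$ with one thread on each side and applies Part~\ref{lem:Luca2} to those two halves; and you handle the case $\var=\varY$ explicitly via the side condition of \refrule{wp-thread}, which the paper leaves implicit.
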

\begin{proof}
\cref{lem:Luca2}
% A derivation of $\CA\models\Process\parop\ProcessQ$ ends 
% by the application of rule $\refrule{wp-par}$.
%So we have that
%  $ \models \Process$ and $ \models \ProcessQ$.
%  \cref{lem:Luca2ii} 
is easy to verify.

(\cref{lem:Luca4}). 
There exists
 a point in the derivation
 of $\CA \models \Process$ 
  where we split
 the two threads $\thread\var\Expression$ and $\thread\varY\ExpressionF$.
 This means that there is 
 $\Process_1 \parop \Process_2\subprocesseq \Process$
 such that
$\thread\var\Expression \subprocesseq \Process_1$
  and $\thread\varY\ExpressionF \subprocesseq \Process_2$
  (or vice versa).
  \cref{lem:Luca3} implies
  $\CB \models \Process_1\parop \Process_2$,
 % By  \cref{lem:Luca2},
  % \cref{lem:Luca2i}, 
  $ \models \Process_1$ and $\models \Process_2$.
  Hence, $\var^{-} \in \Pol(\Process_1)$ and $\varY^{-}, \varX^{+} \in \Pol(\Process_2)$,
  because we assume that $\var$ occurs in $\ExpressionF$.
  By \cref{lem:Luca2}\comma  
  % \cref{lem:Luca2ii}, 
  $\varY^{+} \not \in \Pol(\Process_1)$,
  which means that $\varY$ cannot occur in $\Expression$.

(\cref{lem:Luca5}). Similar to the previous item. 
\end{proof}

The interest in well-polarisation comes from its preservation by reduction of typeable processes, as stated in the following theorem whose proof can be found in  \cref{appendix:SRprocesses}.

\begin{theorem}
\label{theorem:invariance}
If $\Process\red\Process'$ and $\Process$ is typeable and well-polarised, then $\Process'$ is well-polarised too.
\end{theorem}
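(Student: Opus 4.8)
The plan is to establish the slightly stronger statement announced just above the theorem: whenever $\Process_0\red\ProcessQ_0$ with $\Process_0$ typeable and $\CA\models\Process$ for some $\Process\equiv\threads(\Process_0)$, there is $\ProcessQ'\equiv\threads(\ProcessQ_0)$ with $\CA\models\ProcessQ'$. By \cref{definition:wpp}, being well-polarised is exactly the closure under $\equiv$ of the property ``the thread part admits a $\models$-derivation'', so this implies the theorem. Reasoning up to $\equiv$ throughout absorbs \refrule{r-cong}, and by pulling all restrictions to the front we may assume $\Process_0\equiv\new{\vec{\BindableName}}(\ProcessR\parop\widehat{\ProcessR})$ with $\ProcessR$ thread-only and $\widehat{\ProcessR}$ server-only, the reduction being fired by one of \refrule{r-thread}, \refrule{r-open}, \refrule{r-comm}, \refrule{r-future}, \refrule{r-return} acting on one or two designated threads of $\ProcessR$ (and, for \refrule{r-open}, one server of $\widehat{\ProcessR}$); the contextual rules \refrule{r-new} and \refrule{r-par} are subsumed by this flattening. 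Writing $\ProcessR'$ for $\ProcessR$ with the fired threads replaced by their reducts, we have $\threads(\Process_0)\equiv\ProcessR$ and $\threads(\ProcessQ_0)\equiv\ProcessR'$, and the task is to exhibit a rearrangement of $\ProcessR'$ that satisfies $\models$.

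The expression reduction underlying \refrule{r-thread} never creates free names (immediate from \refrule{r-beta}, \refrule{r-bind} and \refrule{r-split}, and preserved by \refrule{r-ctxt}), so $\Pol$ of a reduced thread body is contained in $\Pol$ of the original one. Every side condition of \refrule{wp-thread} and \refrule{wp-par} asserts that some sets of polarised names are independent or that some name does not occur free, and such conditions only become easier when free names shrink; hence the given derivation of $\CA\models\ProcessR$ is, verbatim, a derivation of $\CA\models\ProcessR'$. For \refrule{r-open} and \refrule{r-future} the spawned names are fresh, and in \refrule{r-open} typability of the participating server $\server\Channel\Expression$ forces its body $\Expression$ to have only free shared channels, hence $\Pol(\Expression)=\emptyset$. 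A short computation then shows that each of the two new threads satisfies \refrule{wp-thread} and that they can be combined by \refrule{wp-par}, the name existentially quantified there being the fresh session channel (for \refrule{r-open}) or the fresh thread name (for \refrule{r-future}); the remaining threads of $\ProcessR$ keep their positions, so only a cosmetic rearrangement isolating the two new threads is needed, and $\CA\models\ProcessR'$.

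The genuinely delicate cases are \refrule{r-comm} and \refrule{r-return}, where a subterm $\Expression$ --- the transmitted message in \refrule{r-comm}, the returned value in \refrule{r-return} --- migrates into other threads: the free names of some thread bodies do grow and, as \cref{example:r-comm} and \cref{example:r-return} illustrate, the syntactic tree witnessing a $\models$-derivation can be destroyed, so a non-trivial rearrangement is unavoidable. The first step is to rule out any \emph{new} circular dependency. For \refrule{r-comm}, the peer endpoints $\Channel^\Polarity$ and $\Channel^{\co\Polarity}$ occur in the bodies of the communicating threads $\varX$ and $\varY$, so \ri{\cref{lem:Luca}}{\cref{lem:Luca5}} gives that $\varY$ does not occur free in $\varX$'s body and in particular $\varY\notin\fn(\Expression)$; with \refrule{wp-thread} for the $\varY$-thread, $\varY$ is then absent from the rewritten body of $\varY$. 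For \refrule{r-return}, $\var$ occurs in the body of every thread that changes, so \ri{\cref{lem:Luca}}{\cref{lem:Luca4}} shows that no such thread's name occurs in $\Expression$; together with \refrule{wp-thread} for $\thread\var{\Return\Expression}$ and the independence facts collected in \cref{lem:Luca}, the \refrule{wp-thread} side conditions required of $\ProcessR'$ follow. The second step is to construct $\ProcessQ'$: using \cref{lem:Luca3} to descend to the point in the derivation of $\CA\models\ProcessR$ where the communicating (respectively returning) threads get separated, and repeatedly invoking the independence facts of \cref{lem:Luca}, one reassembles a $\models$-derivation for a rearrangement of $\threads(\ProcessQ_0)$ in which every thread that received the migrated names is placed next to a thread that already held them --- exactly the manoeuvre performed in \cref{example:r-comm} and \cref{example:r-return}. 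This combinatorial step, which must re-verify every \refrule{wp-par} side condition, is the main obstacle; it is carried out in full in \cref{appendix:SRprocesses}.
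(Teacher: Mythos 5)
Your sketch is correct and follows essentially the same route as the paper's own proof: reduce to the rearrangement statement about $\threads(\cdot)$, do a case analysis on the shape of the reduction with restrictions pulled to the front (the paper's \cref{lemma:inversionlemmaforreduction}), dispose of \refrule{r-thread}, \refrule{r-open} and \refrule{r-future} by monotonicity of the side conditions and freshness (using $\sharedq$-typeability of server bodies to get $\Pol(\Expression)=\emptyset$), and for \refrule{r-comm} and \refrule{r-return} exclude new cycles via \ri{\cref{lem:Luca}}{Items~\ref{lem:Luca4}--\ref{lem:Luca5}} and then rebuild a $\models$-derivation by reparenthesising at the split point. The one caveat is that the reparenthesisation you defer is exactly where the paper invests its effort (\cref{lemma:findingtherightcouple}, \cref{lemma:invariancecomm}, \cref{lemma:replacementofthread}, \cref{lem:uffa} and \cref{lem:ret}, including the non-obvious case where $\var^+$ occurs on both sides of a \refrule{wp-par} split), so the sketch is an accurate outline rather than a complete argument.
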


As an immediate consequence we have that reachable processes are well-polarised, since an initial process is trivially well-polarised.

  \begin{corollary}\label{corollary:reachableimpliesacyclic}
  Each reachable process is well-polarised. 
  \end{corollary}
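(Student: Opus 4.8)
The plan is to derive the corollary by a straightforward induction on the length of the reduction sequence witnessing reachability, but with a strengthened induction hypothesis: every process obtained from an initial process by a (possibly empty) sequence of reductions is \emph{both} typeable and well-polarised. Strengthening to typeability is necessary because \cref{theorem:invariance} takes typeability of the source process as a hypothesis, so well-polarisation alone is not an adequate invariant; we must also invoke subject reduction for processes (\cref{theorem:subjectreductionprocesses}) to keep typeability along the sequence.

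The base case asks that an initial process $\Process$ be typeable and well-polarised. Typeability in the empty environment is part of the definition of initial process. For well-polarisation, observe that $\threads(\Process)$ is the single thread $\thread\varX\Expression$, and that, since $\Process$ is closed and typeable, $\Expression$ contains no free endpoints and no free variables --- in particular $\varX$ does not occur free in $\Expression$, by the side condition of \refrule{thread}. Hence $\Pol(\Expression)=\emptyset$, so $\varX^+\notin\Pol(\Expression)$ and $\Pol(\Expression)\indep\Pol(\Expression)$ both hold vacuously, and \refrule{wp-thread} yields $\models\thread\varX\Expression$; thus $\Process$ is well-polarised.

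For the inductive step, suppose $\Process\red^*\ProcessR\red\ProcessQ$ with $\ProcessR$ typeable and well-polarised. Then \cref{theorem:subjectreductionprocesses} gives that $\ProcessQ$ is typeable, and \cref{theorem:invariance}, applied to the single step $\ProcessR\red\ProcessQ$, gives that $\ProcessQ$ is well-polarised, which closes the induction. Since every reachable process is, by definition, of this form, it is in particular well-polarised.

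I do not anticipate any real obstacle: all the substantive work is already done in \cref{theorem:invariance}. The only point that deserves attention is the bookkeeping observation that one cannot run the induction on the bare predicate ``well-polarised'' --- because \cref{theorem:invariance} needs its source to be typeable --- and so the induction hypothesis has to be the conjunction of typeability and well-polarisation, with subject reduction for processes supplying the typeability conjunct at each step.
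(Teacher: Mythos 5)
Your overall strategy matches the paper's: the corollary is obtained by iterating \cref{theorem:invariance} along the reduction sequence from an initial process, and you are right that the bare predicate ``well-polarised'' is not an adequate invariant --- the induction hypothesis must also carry typeability, a point the paper's one-line proof leaves implicit. Your base case is also fine: closedness and the side condition of \refrule{thread} give $\Pol(\Expression)=\emptyset$, so \refrule{wp-thread} applies vacuously.

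There is, however, one genuine flaw: you supply the typeability conjunct in the inductive step by citing \cref{theorem:subjectreductionprocesses} (``all reachable processes are typeable''), but in the paper that theorem is itself \emph{derived from} this corollary (its proof reads ``follows from \cref{corollary:reachableimpliesacyclic} and \cref{theorem:acyclicimpliesSR}''), so your argument is circular as written. The correct ingredient is \cref{theorem:acyclicimpliesSR}, subject reduction for \emph{well-polarised} processes: strengthen the induction hypothesis to ``$\ProcessR$ is well-polarised and typeable in a balanced environment'' (the initial process is typeable in the empty environment, which is balanced), then for a step $\ProcessR\red\ProcessQ$ apply \cref{theorem:acyclicimpliesSR} to preserve typeability in a balanced environment and \cref{theorem:invariance} to preserve well-polarisation. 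With that substitution --- a simultaneous induction on the two properties rather than an appeal to the downstream theorem --- your proof is correct and fills in exactly the bookkeeping the paper elides.
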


\subsection{Subject Reduction for Reachable Processes}
\label{subsec:srp}

The following three lemmas  are for processes  
% like 
as  
 Lemmas~\ref{lem:inv},~\ref{lemma:substitution} and ~\ref{lem:keye}
are  for expressions. 

\begin{lemma}[Inversion for Processes]\label{lem:invp}\mbox{}
  \begin{enumerate}
 \item\label{lem:inv7} If
    $\wtp{\TypeContext}{\thread\var\Expression}{\ProvideContext}$,
    then $\ProvideContext = \varX:\tbullet[n]\Type$ with
    $\wte{\TypeContext}{\ExpressionE}{\tbullet^n(\tio\Type)}$ and $\var\not\in\dom(\TypeContext)$.

  \item\label{lem:inv8} If
    $\wtp{\TypeContext}{\server\Channel\Expression}{\ProvideContext}$,
    then $\TypeContext = \TypeContext', \Channel:\tshared\SessionType$
    and $\ProvideContext = \Channel:\tshared\SessionType$ with
    $\wte{\TypeContext}{\ExpressionE}{(\co\SessionType\arrow\tio\Type)}$
    and $\sharedq\TypeContext$ and $\unq\Type$.

  \item\label{lem:inv9} If
    $\wtp{\TypeContext}{\Process_1 \parop
      \Process_2}{\ProvideContext}$,
    then $\TypeContext = \TypeContext_1 +\TypeContext_2 $ and
    $\ProvideContext = \ProvideContext_1 ,\ProvideContext_2 $ with
    $\wtp{\TypeContext_1}{\Process_1}{\ProvideContext_1}$ and
    $ \wtp{\TypeContext_2}{\Process_2}{\ProvideContext_2}$.
    % and $\ctc{\TypeContext_1}{\TypeContext_2}$. 
    
  \item\label{lem:inv10}
    If $\wtp{\TypeContext}{\new\Channel\Process}{\ProvideContext}$, then  either $  \wtp{\TypeContext, \Channel^\Polarity : \SessionType, \Channel^{\co\Polarity} : \co\SessionType}\Process\ProvideContext$ or $ \wtp{\TypeContext, \Channel : \tshared\SessionType}\Process{ \ProvideContext, \Channel : \tshared\SessionType }$.

  \item\label{lem:inv11}
    If $\wtp{\TypeContext}{\new\varX\Process}{\ProvideContext}$, then  
    $ \wtp{\TypeContext, \varX : \Type}\Process{ \ProvideContext, \varX : \Type }$.
  \end{enumerate}

\end{lemma}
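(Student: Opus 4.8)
The plan is to prove each clause by a single inspection of the last rule used in the given process derivation, in the same style as the proof of \cref{lem:inv}. What makes this easy --- easier than the expression case --- is that the typing rules for processes in \cref{table:pTypingrules} are syntax-directed: there is no analogue of rule \refrule{\introbullet} acting on processes, and indeed no process rule manipulates $\tbullet$'s at all, so the outermost constructor of the process uniquely determines which rule can have concluded the derivation, the only (innocuous) exception being restriction. In particular I expect none of the auxiliary induction over stacked $\tbullet$'s that was needed for \cref{lem:inv}; each clause should follow by merely reading off the premises, the shape of the conclusion, and the side conditions of the relevant rule.

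Going through the clauses: for \cref{lem:inv7} the derivation must end with \refrule{thread}, whence $\ProvideContext$ has the form $\var:\tbullet[n]\Type$, the body is typed $\wte{\TypeContext}{\Expression}{\tbullet[n](\tio\Type)}$ in the same environment, and the side condition gives $\var\notin\dom(\TypeContext)$. For \cref{lem:inv8} the derivation must end with \refrule{server}, which yields $\ProvideContext = \Channel:\tshared\SessionType$, an environment $\TypeContextD$ with $\sharedq{\TypeContextD}$, $\TypeContext = \TypeContextD + \Channel:\tshared\SessionType$, and a sub-derivation $\wte{\TypeContextD}{\Expression}{\co\SessionType\arrow\tio\Type}$ with $\unq{\Type}$; since $\tshared\SessionType$ is unlimited, $\TypeContext$ has the form $\TypeContext',\Channel:\tshared\SessionType$, $\sharedq\TypeContext$ holds, and a routine weakening by the unlimited assumption $\Channel:\tshared\SessionType$ (needed only when $\Channel\notin\dom(\TypeContextD)$) gives $\wte{\TypeContext}{\Expression}{\co\SessionType\arrow\tio\Type}$, as claimed. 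For \cref{lem:inv9} the derivation must end with \refrule{par}, which directly delivers $\TypeContext = \TypeContext_1 + \TypeContext_2$, $\ProvideContext = \ProvideContext_1,\ProvideContext_2$ and the two component derivations.

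The clause that deserves a little care --- the ``main obstacle'', though a mild one --- is restriction, which is overloaded: a derivation of $\wtp{\TypeContext}{\new\Channel\Process}{\ProvideContext}$ may end either with \refrule{session}, when $\Channel$ is used as a session channel so that both endpoints $\Channel^\Polarity:\SessionType$ and $\Channel^{\co\Polarity}:\co\SessionType$ are added to the typing environment, or with \refrule{\newrule}, when $\Channel$ is used as a shared channel and $\Channel:\tshared\SessionType$ is added to both $\TypeContext$ and $\ProvideContext$; this is exactly the disjunction asserted in \cref{lem:inv10}. For a variable, $\wtp{\TypeContext}{\new\varX\Process}{\ProvideContext}$ can only end with \refrule{\newrule}, giving $\wtp{\TypeContext,\varX:\Type}{\Process}{\ProvideContext,\varX:\Type}$, which is \cref{lem:inv11}. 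Since the proof is a plain case split with no recursion, the only things to keep in mind are this overloading of the restriction former $\new\Thing{\cdot}$ and, in \cref{lem:inv8}, the small amount of bookkeeping that puts the shared-channel entry back into the environment of the server body.
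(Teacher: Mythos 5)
Your proof is correct and follows the same route as the paper, which simply states ``by case analysis on the derivation'': since the process typing rules are syntax-directed (up to the \refrule{session}/\refrule{\newrule} overloading of restriction, which you handle), each clause is read off the last rule. Your extra care in clause~\ref{lem:inv8} about recovering $\wte{\TypeContext}{\Expression}{\co\SessionType\arrow\tio\Type}$ from the premise $\wte{\TypeContextD}{\Expression}{\co\SessionType\arrow\tio\Type}$ with $\TypeContext=\TypeContextD+\Channel:\tshared\SessionType$ via weakening by an unlimited assumption is a legitimate and welcome detail the paper leaves implicit.
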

\begin{proof} By case analysis %and induction 
on the derivation. \end{proof}

\begin{lemma}[Substitution]
  \label{lemma:substitutionp}
   Let  
    $\wtp{\TypeContext_1,\varX:\Type}{\Process}{\ProvideContext}$ 
    with
   $\varX \not\in \dom(\ProvideContext)$
   and
    $\wte{\TypeContext_2}{\Expression}{\Type}$ and
    $\TypeContext_1 + \TypeContext_2$ be %is 
    defined and $\dom(\TypeContext_2) \cap \dom(\ProvideContext) = \emptyset$. 
    Then
    $\wtp{\TypeContext_1+\TypeContext_2}{\Process\subst\Expression\varX}{\ProvideContext}$.
\end{lemma}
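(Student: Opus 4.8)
The plan is to proceed by induction on the derivation of $\wtp{\TypeContext_1,\varX:\Type}{\Process}{\ProvideContext}$, with a case analysis on the last typing rule applied --- which, by the shape of the typing rules for processes, is determined by the outermost constructor of $\Process$. The engine of the argument is \cref{lem:invp} (inversion for processes) together with \cref{lemma:substitution} (substitution for expressions); when $\Type$ is unlimited I will also invoke \cref{lemma:unlimitedTypeimpliesunlimitedContext} to deduce $\unq{\TypeContext_2}$, hence $\TypeContext_2+\TypeContext_2=\TypeContext_2$. Throughout I use freely the routine facts that $+$ is commutative and associative on environments, that definedness of a combination is inherited by all its sub-combinations, and that every free name of a typeable expression or process lies in the domain of its typing environment --- so that $\Process'\subst\Expression\varX=\Process'$ whenever $\varX$ is not in the domain of the environment typing $\Process'$.

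For $\Process=\thread\var{\ExpressionF}$, inversion gives $\ProvideContext=\var:\tbullet[n]\Type'$ with $\wte{\TypeContext_1,\varX:\Type}{\ExpressionF}{\tbullet^n(\tio\Type')}$ and $\var\notin\dom(\TypeContext_1,\varX:\Type)$; the hypotheses $\varX\notin\dom(\ProvideContext)$ and $\dom(\TypeContext_2)\cap\dom(\ProvideContext)=\emptyset$ yield $\varX\neq\var$ and $\var\notin\dom(\TypeContext_2)$, so \cref{lemma:substitution} produces $\wte{\TypeContext_1+\TypeContext_2}{\ExpressionF\subst\Expression\varX}{\tbullet^n(\tio\Type')}$ and rule~\refrule{thread} re-derives the desired judgement, using $(\thread\var{\ExpressionF})\subst\Expression\varX=\thread\var{\ExpressionF\subst\Expression\varX}$. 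For $\Process=\server\Channel{\ExpressionF}$, inversion forces the entire typing environment to consist only of shared channels, which is impossible since it contains the variable $\varX$; hence this case cannot occur (equivalently, server bodies carry no free variables and the substitution acts trivially). For the restriction rules~\refrule{session} and~\refrule{\newrule} I invert, alpha-rename the bound name so that it differs from $\varX$ and from the free names of both $\Expression$ and $\TypeContext_2$, reorder the environment entries so the bound ones sit outermost, apply the induction hypothesis to the body (its side conditions are immediate, as $\ProvideContext$ is essentially unchanged), and re-apply the same rule; in the~\refrule{\newrule} case the extra entry appearing in $\ProvideContext$ is harmless after the renaming.

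The parallel case $\Process=\Process_1\parop\Process_2$ is the one that needs care. Inversion yields $\TypeContext_1,\varX:\Type=\TypeContextD_1+\TypeContextD_2$, $\ProvideContext=\ProvideContext_1,\ProvideContext_2$ and $\wtp{\TypeContextD_i}{\Process_i}{\ProvideContext_i}$; by the definition of $+$, either $\varX$ belongs to exactly one of $\TypeContextD_1,\TypeContextD_2$, or it belongs to both, in which case $\unq\Type$. In the first subcase, by symmetry assume $\varX$ is in $\TypeContextD_1$ and write $\TypeContextD_1=\TypeContextD_1',\varX:\Type$; then $\varX\notin\fn(\Process_2)$, so $\Process_2\subst\Expression\varX=\Process_2$, the induction hypothesis applied to $\Process_1$ gives $\wtp{\TypeContextD_1'+\TypeContext_2}{\Process_1\subst\Expression\varX}{\ProvideContext_1}$ (its side conditions follow from those in the statement, using $\dom(\ProvideContext_1)\subseteq\dom(\ProvideContext)$ and sub-definedness of $+$), and \refrule{par} recombines these since $(\TypeContextD_1'+\TypeContext_2)+\TypeContextD_2=(\TypeContextD_1'+\TypeContextD_2)+\TypeContext_2=\TypeContext_1+\TypeContext_2$. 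In the second subcase, $\unq\Type$ gives $\unq{\TypeContext_2}$ by \cref{lemma:unlimitedTypeimpliesunlimitedContext}, hence $\TypeContext_2+\TypeContext_2=\TypeContext_2$; applying the induction hypothesis to both $\Process_1$ and $\Process_2$ and recombining with~\refrule{par} closes the case.

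The only genuine obstacle in the whole proof is this bookkeeping: tracking which side of a parallel composition owns $\varX$, and checking that definedness of $+$ and the domain-disjointness conditions survive when one restricts attention to sub-environments. Everything else is a routine transcription of the corresponding argument for expressions, and nothing conceptually subtle is involved.
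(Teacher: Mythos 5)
Your proof is correct and follows essentially the same route as the paper: structural induction on $\Process$, with the thread case resolved by \cref{lemma:substitution} and the hypothesis $\dom(\TypeContext_2)\cap\dom(\ProvideContext)=\emptyset$ guaranteeing the side condition of rule \refrule{thread} (the thread name cannot reappear in its own body after substitution). The paper only sketches that one case, whereas you additionally spell out the parallel-composition bookkeeping and the vacuity of the server case, but nothing in your argument diverges from the intended proof.
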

\begin{proof}
By induction on the structure of processes. We only 
 discuss  %show 
 the case of rule \refrule{thread}.
 The interesting observation %in this case 
 is 
 that we need to use the hypothesis 
 $\dom(\TypeContext_2) \cap \dom(\ProvideContext) = \emptyset$
 to ensure that the name of the thread does not belong to its own body.
 We also use \cref{lemma:substitution} to type 
 the body of the thread itself.  \end{proof}
 
 \begin{lemma}[Evaluation Contexts for Processes]
  \label{lem:key}
If
   $\wte \TypeContext {\PContext[\Expression]}
   {\tbullet[n](\tio\TypeS)}$,
   then $\TypeContext = \TypeContext_1 + \TypeContext_2$ and
   \mbox{$\wte {\TypeContext_1,\varX:\tbullet[n](\tio\Type)
   }{\PContext[\varX]} {\tbullet[n](\tio\TypeS)}$}
   and
   $\wte{\TypeContext_2} \Expression {\tbullet[n](\tio\Type)}$.
 \end{lemma}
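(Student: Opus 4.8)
The plan is to proceed by induction on the structure of the process evaluation context $\PContext$, which by definition is either the hole $\Hole$ or of the form $\Bind{\PContext'}{\ExpressionF}$ for a smaller context $\PContext'$. For the base case $\PContext = \Hole$ we have $\PContext[\Expression] = \Expression$, so the hypothesis reads $\wte{\TypeContext}{\Expression}{\tbullet[n](\tio\TypeS)}$; I would take $\TypeContext_1 = \emptyset$, $\TypeContext_2 = \TypeContext$ and $\Type = \TypeS$, so that $\wte{\varX:\tbullet[n](\tio\TypeS)}{\PContext[\varX]}{\tbullet[n](\tio\TypeS)}$ is an instance of \refrule{\axiom} (its side environment $\emptyset$ being unlimited) and $\emptyset + \TypeContext = \TypeContext$, as required.

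For the inductive case $\PContext = \Bind{\PContext'}{\ExpressionF}$, recall that $\Bind{\Expression_1}{\Expression_2}$ abbreviates the nested application $(\bind~\Expression_1)~\Expression_2$. First I would establish an inversion principle for $\bind$: applying \ri{\cref{lem:inv}}{\cref{lem:inv5}} to the two nested applications, \ri{\cref{lem:inv}}{\cref{lem:inv1}} to the occurrence of the constant $\bind$, and using the fact that neither an arrow type nor an $\tio$ type can carry $\bullet$ as its outermost constructor --- so that for $\Type$ and $\Type'$ not $\bullet$-headed, $\tbullet[p]\Type = \tbullet[q]\Type'$ forces $p=q$ and $\Type=\Type'$ --- one obtains $\TypeContext = \TypeContext_c + \TypeContext_d + \TypeContext_b$ with $\unq{\TypeContext_c}$ (the residue from typing $\bind$ itself) and, for some type $A$,
\[
  \wte{\TypeContext_d}{\PContext'[\Expression]}{\tbullet[n](\tio A)}
  \qquad\text{and}\qquad
  \wte{\TypeContext_b}{\ExpressionF}{\tbullet[n](A \larrow \tio\TypeS)} .
\]
Matching $\tbullet[n](\tio\TypeS)$ with the result type of $\bind$ is exactly what pins the number of delays to $n$ on every intermediate judgement. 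Next I would apply the induction hypothesis to $\PContext'$ and to $\wte{\TypeContext_d}{\PContext'[\Expression]}{\tbullet[n](\tio A)}$, getting $\TypeContext_d = \TypeContext_1' + \TypeContext_2'$ with $\wte{\TypeContext_1',\varX:\tbullet[n](\tio\Type)}{\PContext'[\varX]}{\tbullet[n](\tio A)}$ and $\wte{\TypeContext_2'}{\Expression}{\tbullet[n](\tio\Type)}$ for some $\Type$; this $\Type$ is the witness for the conclusion and I set $\TypeContext_2 = \TypeContext_2'$. It then remains to type $\PContext[\varX] = (\bind~\PContext'[\varX])~\ExpressionF$ in $\TypeContext_1 = \TypeContext_c + \TypeContext_1' + \TypeContext_b$ extended with $\varX:\tbullet[n](\tio\Type)$: type $\bind$ in $\TypeContext_c$ by \refrule{\const} followed by $n$ uses of \refrule{\introbullet} (reusing the very unlimited environment produced by inversion), and combine it with the two judgements above by one application of \refrule{\elimarrow} and one of \refrule{\elimlarrow}. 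Choosing $\varX$ fresh keeps every use of $+$ defined, and commutativity and associativity of $+$ give $\TypeContext_1 + \TypeContext_2 = \TypeContext$.

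I expect the only real difficulty to be the delay bookkeeping inside the $\bind$-inversion, namely checking that the various $\bullet$-prefixes produced when inverting the two applications and the constant $\bind$ all collapse to the single $n$ of the statement; this rests on unique readability of the coinductively defined type trees. Everything else is a routine adaptation of \cref{lem:keye}, the context arithmetic being exactly as in the linear $\lambda$-calculus of~\cite{GayV10}: sub-combinations of a defined sum are defined, the fresh $\varX$ clashes with neither $\TypeContext_c$ nor $\TypeContext_b$, and the unlimited residue $\TypeContext_c$ coming from the rule for constants is simply carried along, so no separate weakening lemma is needed.
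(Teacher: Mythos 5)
Your proof is correct and follows the same route as the paper, which simply says ``by induction on the structure of evaluation contexts for processes'' and leaves the details implicit; your base case and your inversion of $\Bind{\PContext'}{\ExpressionF}$ via \ri{\cref{lem:inv}}{Items \ref{lem:inv1} and \ref{lem:inv5}} fill those details in correctly. The only slight over-claim is that the delay is ``pinned to $n$ on every intermediate judgement'': interleaved uses of \refrule\introbullet{} could leave an inner judgement with $k\le n$ bullets, but this is immediately repaired by $n-k$ further applications of \refrule\introbullet{} (or \cref{lemma:delay}), so the argument stands.
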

\begin{proof}
  By induction on the structure of evaluation contexts for processes.
  \end{proof}
  
   A useful consequence of the previous lemma is the following property of contexts filled by communication expressions. 
  \begin{lemma}
\label{lemma:delayreceive}
If $\wte{\TypeContext, \Channel^{\Polarity} : \tbullet[n] \SessionType}{\PContext[\send~\Channel^\Polarity~\Expression]}
{\tbullet[m](\tio {\Type})}$ or $\wte{\TypeContext, \Channel^{\Polarity} : \tbullet[n] \SessionType}{\PContext[\receive~\Channel^{\Polarity}]}
{\tbullet[m] (\tio {\Type})}$,
 then $n \leq m$.
\end{lemma}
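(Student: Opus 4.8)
The plan is to first strip off the evaluation context $\PContext$ with \cref{lem:key}, reducing to the case $\PContext = \Hole$, and then to analyse the residual communication redex by repeated inversion, keeping careful track of the leading $\tbullet$'s involved.

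Consider the $\send$ case. Applying \cref{lem:key} to $\wte{\TypeContext, \Channel^{\Polarity} : \tbullet[n]\SessionType}{\PContext[\send~\Channel^\Polarity~\Expression]}{\tbullet[m](\tio\Type)}$ yields a splitting $\TypeContext, \Channel^{\Polarity} : \tbullet[n]\SessionType = \TypeContext_1 + \TypeContext_2$ together with $\wte{\TypeContext_2}{\send~\Channel^\Polarity~\Expression}{\tbullet[m](\tio\TypeS)}$ for some $\TypeS$. Since $\Channel^\Polarity$ occurs free in $\send~\Channel^\Polarity~\Expression$ it lies in $\dom(\TypeContext_2)$, and as $+$ never changes the type recorded for a name, $\TypeContext_2 = \TypeContext_2', \Channel^{\Polarity} : \tbullet[n]\SessionType$. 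It therefore suffices to prove the statement for $\PContext = \Hole$, i.e.\ that $\wte{\TypeContext_2', \Channel^{\Polarity} : \tbullet[n]\SessionType}{\send~\Channel^\Polarity~\Expression}{\tbullet[m](\tio\TypeS)}$ implies $n \le m$.

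Now $\send~\Channel^\Polarity~\Expression$ is the application $\big((\send)~\Channel^\Polarity\big)~\Expression$, so two applications of \ri{\cref{lem:inv}}{\cref{lem:inv5}} followed by \ri{\cref{lem:inv}}{\cref{lem:inv1}} on the constant $\send$ peel this term down to the axiom that types $\Channel^\Polarity$. Each such step may prepend some $\tbullet$'s (coming from uses of \refrule\introbullet), producing equations in which a $\tbullet$-prefixed type is equated with another $\tbullet$-prefixed type; since none of the types that occur in these positions (an $\mkbasictype{IO}$ type, an arrow type, or an output session type) has $\tbullet$ as outermost constructor, such an equation can hold only if the shorter $\tbullet$-prefix is a prefix of the longer one, and a short calculation shows the bullet counts do not increase as one descends. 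Using that every type in $\typeof(\send)$ has the shape $\tout{\Type'}{\SessionType'}\arrow\Type'\larrow\tio\SessionType'$, one obtains that $\Channel^\Polarity$ is used at type $\tbullet^{q}(\tout{\Type'}{\SessionType'})$ for some $q \le m$, while \ri{\cref{lem:inv}}{\cref{lem:inv2}} applied to that axiom gives $\tbullet[n]\SessionType = \tbullet^{q'}(\tout{\Type'}{\SessionType'})$ with $q' \le q$. Since $\tout{\Type'}{\SessionType'}$ does not begin with $\tbullet$, this last equality forces $n \le q'$, and hence $n \le q' \le q \le m$.

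The $\receive$ case is entirely analogous and slightly simpler, as $\receive~\Channel^{\Polarity}$ is the single application $(\receive)~\Channel^\Polarity$ and $\typeof(\receive) = \set{\tin{\Type'}{\SessionType'}\arrow\tio(\Type'\times\SessionType')}$; the same chain of inversions gives $\Channel^\Polarity$ used at type $\tbullet^{q}(\tin{\Type'}{\SessionType'})$ with $q \le m$ and $\tbullet[n]\SessionType = \tbullet^{q'}(\tin{\Type'}{\SessionType'})$ with $q' \le q$, hence $n \le m$. I expect the main obstacle to be exactly this bookkeeping of $\tbullet$-prefixes along the chain of inversions: one must check that the successive equations between $\tbullet$-prefixed types are resolved consistently, so that the resulting inequalities compose in the direction $n \le m$ and not the opposite.
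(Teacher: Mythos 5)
Your proof is correct and follows essentially the same route as the paper's: first use \cref{lem:key} to isolate the communication redex while preserving the number $m$ of leading $\tbullet$'s, then observe that the application rule forces the type at which $\Channel^\Polarity$ is used (and hence the result type) to carry at least the $n$ bullets of its declared type. Your explicit chain of inversions via \cref{lem:inv} is just a more detailed spelling-out of the paper's one-line remark that \refrule\elimarrow{} requires at least $n$ bullets in front of the type of $\send$.
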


\begin{proof}
We only consider the case $\wte{\TypeContext, \Channel^{\Polarity} : \tbullet[n] \SessionType}{\PContext[\send~\Channel^\Polarity~\Expression]}
{\tbullet[m] (\tio {\Type})}$. By \cref{lem:key} $\TypeContext=\TypeContext_1+\TypeContext_2$ and $\wte{\TypeContext_1, \var : \tbullet[m] (\tio {\TypeS})}{\PContext[\var]}
{\tbullet[m] (\tio {\Type})}$ and $\wte{\TypeContext_2, \Channel^{\Polarity} : \tbullet[n] \SessionType}{\send~\Channel^\Polarity~\Expression}
{\tbullet[m] (\tio {\TypeS})}$. Since rule $\defrule\elimarrow$ requires at least $n$ bullets in front of the type of $\send$ we get $n \leq m$.
\end{proof}

  %Communication consumes the session types of the endpoints.
%
%Therefore, reducing processes requires environments to be reduced too.
%%
%Following~\cite{DBLP:conf/esop/HondaVK98}, the  {\em reduction}
%$\redctx$ on environments 
%is the smallest reflexive relation closed under the
%following axiom:
%\[
%\TypeContext, \Channel^{\Polarity}: \tbullet[n](\tin\Type\SessionType),
%\Channel^{\co \Polarity} : \tbullet[n](\tout\Type\SessionTypeS) \
%\redctx \ \TypeContext, \Channel^{\Polarity}:
%\tbullet[n]\SessionType, \Channel^{\co \Polarity} :
%\tbullet[n]\SessionTypeS
%\]

We say that an environment $\TypeContext$ is \emph{balanced} if
$\Channel^{\Polarity} : \SessionType \in \TypeContext$ and
$\Channel^{\co \Polarity} : \SessionTypeS \in \TypeContext$ imply
$\SessionType = \co{\SessionTypeS}$.  
%A key property of environments'
%reduction is the preservation of balancing. 
%The proof is straightforward.  
  We can now state subject reduction of well-polarised processes. The proof of this theorem %can be found 
  is the content of~\cref{appendix:SRP}.
\begin{theorem}[Subject Reduction for Well-polarised Processes]
\label{theorem:acyclicimpliesSR}
Let $\TypeContext$ be balanced and $\Process$ be well-polarised. If
$\Process \red \Process'$ and
$\wtp{\TypeContext}{\Process}{\ProvideContext}$, then there is 
% an
balanced environment $\TypeContext'$ such that
% $\TypeContext \ctxred \TypeContext'$ and
$\wtp{\TypeContext'}{\Process'}{\ProvideContext}$.
\end{theorem}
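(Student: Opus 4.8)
The plan is to argue by induction on the derivation of $\Process \red \Process'$, by cases on the last reduction rule. The four closure rules \refrule{r-new}, \refrule{r-par}, \refrule{r-thread} and \refrule{r-cong} are dispatched by the inductive hypothesis after peeling off the outer process constructor with \cref{lem:invp}; for \refrule{r-thread} one additionally invokes subject reduction for expressions (\cref{theorem:subjectreduction}) through \ri{\cref{lem:invp}}{\cref{lem:inv7}}, and for \refrule{r-cong} one also uses the standard fact that typeability is preserved by $\equiv$ and that well-polarisation is stated up to rearrangement of $\threads(\Process)$. For each of the computational rules the skeleton is the same: apply \cref{lem:invp} to expose the redex and split $\TypeContext$, use \cref{lem:key} (and \cref{lemma:delayreceive} where delays on session endpoints must be tracked) to isolate the typing of the active thread bodies $\PContext[\cdot]$, and re-assemble the reduct with the rules of \cref{table:pTypingrules}, choosing a suitable $\TypeContext'$.

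For \refrule{r-open} I would take $\TypeContext' = \TypeContext, \ChannelC^+ : \SessionType, \ChannelC^- : \co\SessionType$, where $\SessionType$ is the session type extracted from the server's type by \ri{\cref{lem:invp}}{\cref{lem:inv8}}; this environment is balanced by construction. The client's continuation is retyped by replacing $\open~\Channel$ with $\Return{\ChannelC^+}$ (both of type $\tbullet[n](\tio\SessionType)$ for the same $n$), the spawned thread $\thread\varY{\Expression~\ChannelC^-}$ is typed by applying the server body $\Expression$ of type $\co\SessionType\arrow\tio\Type$ to $\ChannelC^-$, and since $\varY$ is re-bound by \refrule{\newrule} the resource environment $\ProvideContext$ is unchanged. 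For \refrule{r-future} I would inspect, via \cref{lem:key}, the type $\tbullet[n](\tio{\tbullet[m]\Type})$ of the redex $\future~\Expression$; by the typing of $\future$, the body of the spawned thread has type $\tbullet[n+m](\tio\Type)$, the placeholder $\Return\varY$ recovers exactly $\tbullet[n](\tio{\tbullet[m]\Type})$ once $\varY$ is given type $\tbullet[n+m]\Type$, and since this fresh $\varY$ is re-bound, no session channels are introduced, so $\TypeContext' = \TypeContext, \varY : \tbullet[n+m]\Type$ is balanced with $\ProvideContext$ preserved.

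The delicate cases are \refrule{r-comm} and \refrule{r-return}. For \refrule{r-comm}, \ri{\cref{lem:invp}}{\cref{lem:inv9}} and \cref{lem:key} give that $\send~\Channel^\Polarity~\Expression$ has type $\tbullet[n](\tio\SessionType)$ with $\Channel^\Polarity : \tbullet[n](\tout\Type\SessionType)$ in thread $\varX$ and, dually, that $\receive~\Channel^{\co\Polarity}$ has type $\tbullet[n](\tio{(\Type\times\co\SessionType)})$ with $\Channel^{\co\Polarity} : \tbullet[n](\tin\Type\co\SessionType)$ in thread $\varY$, the equality of the message type $\Type$ and of the bullet prefixes being forced by balancedness of $\TypeContext$ together with \cref{lemma:delayreceive}. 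I would then let $\TypeContext'$ be $\TypeContext$ with the two peer endpoint types advanced to $\Channel^\Polarity : \tbullet[n]\SessionType$ and $\Channel^{\co\Polarity} : \tbullet[n]\co\SessionType$, which is still balanced because duality commutes with consuming dual prefixes, while linearity of the second argument of $\send$ ensures $\Expression$ is used exactly once so the split of $\TypeContext'$ goes through. The one thing that can break is that the transplanted message $\Expression$ mentions the recipient's own name $\varY$, which would violate the side condition of rule \refrule{thread} for the reduct $\thread\varY{\PContext'[\Return{\Pair\Expression{\Channel^{\co\Polarity}}}]}$; well-polarisation rules this out, since $\Channel^\Polarity$ occurs in thread $\varX$ and $\Channel^{\co\Polarity}$ in thread $\varY$, so \ri{\cref{lem:Luca}}{\cref{lem:Luca5}} yields $\varY\notin\fn(\Expression)$. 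For \refrule{r-return} I would use the substitution lemma for processes (\cref{lemma:substitutionp}) to propagate the body $\Expression$ of the dying thread $\var$ through $\Process$, along the parallel decomposition provided by \ri{\cref{lem:invp}}{\cref{lem:inv9}}: for a residual thread $\thread\varZ\ExpressionF$, either $\var$ does not occur in $\ExpressionF$ and the substitution is vacuous there, or it does, in which case \ri{\cref{lem:Luca}}{\cref{lem:Luca4}} gives $\varZ\notin\fn(\Expression)$, so the side condition of rule \refrule{thread} survives; the $\var$-binding vanishes together with the thread, so $\ProvideContext$ is unchanged and $\TypeContext'$ is the residual environment produced by \cref{lemma:substitutionp}.

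The main obstacle is precisely this interaction between reduction and the side condition of rule \refrule{thread}, already visible in \eqref{eq:bad1} and \eqref{eq:bad2}: both \refrule{r-comm} and \refrule{r-return} relocate an expression into (or throughout) other threads, and without the global acyclicity invariant captured by \cref{lem:Luca} this could create a thread that refers to its own name and is therefore untypeable; supplying that invariant is exactly the role of the well-polarisation hypothesis. A secondary, more routine, point is the delay bookkeeping — using \cref{lemma:delayreceive} to guarantee that after a communication the two peer endpoints still carry dual types with matching $\bullet$-prefixes, which is what keeps $\TypeContext'$ balanced — together with the \refrule{r-cong} case, where well-polarisation is not literally preserved by $\equiv$ and one must appeal to its ``some rearrangement of $\threads(\Process)$'' formulation, relying if needed on \cref{theorem:invariance}.
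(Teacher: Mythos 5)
Your proposal is correct and follows essentially the same route as the paper's proof: induction on $\red$, inversion and evaluation-context lemmas to expose the redex, the process substitution lemma for \refrule{r-return}, and well-polarisation via \ri{\cref{lem:Luca}}{\cref{lem:Luca4}} and \ri{\cref{lem:Luca}}{\cref{lem:Luca5}} to discharge the side condition of rule \refrule{thread} in exactly the two delicate cases you identify. The only minor imprecision is in the \refrule{r-comm} delay bookkeeping: \cref{lemma:delayreceive} yields an inequality (the endpoint delay $m$ is at most each thread-body delay $n_1$, $n_2$, which need not coincide), while balancedness forces the two \emph{peer endpoints} to share the same $m$ — which is all that is needed to type the two $\return$'s and keep $\TypeContext'$ balanced.
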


\begin{theorem}[Subject Reduction for Reachable Processes]
\label{theorem:subjectreductionprocesses}
All reachable processes are typeable.
\end{theorem}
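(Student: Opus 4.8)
The plan is to prove the statement by induction on the length of the reduction sequence that witnesses reachability, using the subject reduction result for well-polarised processes (\cref{theorem:acyclicimpliesSR}) as the engine of the inductive step and \cref{corollary:reachableimpliesacyclic} to discharge its well-polarisation hypothesis. Since \cref{theorem:acyclicimpliesSR} both assumes and produces a \emph{balanced} environment, I would actually prove the slightly stronger invariant that every reachable process $\Process$ satisfies $\wtp{\TypeContext}{\Process}{\ProvideContext}$ for some balanced $\TypeContext$ and some $\ProvideContext$; this is what the induction genuinely needs, and ``typeable'' is an immediate corollary.

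For the base case, an initial process is by definition closed and well-typed, hence typeable in the empty environment $\emptyset$ (as already observed just before \cref{subsec:wpp}), and $\emptyset$ is trivially balanced, so the invariant holds.

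For the inductive step, suppose $\Process_0 \red^* \Process \red \Process'$ with $\Process_0$ initial, so that both $\Process$ and $\Process'$ are reachable. By the induction hypothesis there is a balanced $\TypeContext$ with $\wtp{\TypeContext}{\Process}{\ProvideContext}$. Since $\Process$ is reachable, \cref{corollary:reachableimpliesacyclic} gives that $\Process$ is well-polarised. Applying \cref{theorem:acyclicimpliesSR} to $\Process \red \Process'$ then yields a balanced environment $\TypeContext'$ with $\wtp{\TypeContext'}{\Process'}{\ProvideContext}$, so $\Process'$ satisfies the invariant and in particular is typeable.

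The heavy lifting — that well-polarisation is preserved by reduction of typeable processes (\cref{theorem:invariance}, hence \cref{corollary:reachableimpliesacyclic}) and that subject reduction holds under the well-polarisation assumption (\cref{theorem:acyclicimpliesSR}) — has already been done, so no serious obstacle remains here. The only point that deserves attention is keeping the environment balanced along the whole sequence: this is precisely why \cref{theorem:acyclicimpliesSR} is stated as producing a balanced $\TypeContext'$, and threading that balanced environment through the induction (rather than attempting to stay in the empty environment throughout) is what makes the argument go through cleanly.
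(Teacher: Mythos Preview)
Your proposal is correct and follows essentially the same approach as the paper: the paper's proof is a one-liner invoking \cref{corollary:reachableimpliesacyclic} and \cref{theorem:acyclicimpliesSR} together with the observation that the empty environment is balanced, which is exactly the induction you spell out in detail. Your explicit threading of the balanced-environment invariant through the reduction sequence is precisely the content that the paper leaves implicit.
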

  \noindent
%{\bf Proof of \cref{theorem:subjectreductionprocesses}}.

\begin{proof}
  This follows from \cref{corollary:reachableimpliesacyclic} and
  \cref{theorem:acyclicimpliesSR}, observing that the empty session
  environment is balanced.
\end{proof}

\subsection{Progress of Reachable Processes}\label{subsec:prog}

We now turn our attention to the progress property
(\Cref{theorem:reducestoreturnI}).  A computation stops when there are
no threads left.
Recall that the reduction rule $\refrule{r-return}$ (\cf
\Cref{table:semantics}) erases threads. Since servers are permanent we
say that a process $\Process$ is {\em final} if
\[
\ProcessP \equiv
\new{\Channel_1 \cdots \Channel_m}
(
\server{\Channel_1}{\Expression_1} \parop 
\cdots
\parop
\server{\Channel_m}{\Expression_m})
\]
In particular, the idle process is final, since $m$ can be $0$.

The following lemma gives fundamental features of
linear types, which play an important role in the proof of progress.

\begin{lemma}[Linearity]
  \label{lemma:linearity}
  \mbox{} %The following properties hold:
  \begin{enumerate}
  \item \label{l1} If
  $\wte{\TypeContext, \Name  : \Type}{\Expression}{\TypeS}$
  and $\linq \Type$, then $\Name$ occurs exactly once
    in $\Expression$.
  \item \label{l2} If 
  $\wtp{\TypeContext, \Name : \Type}{\Process}{\ProvideContext}$
   and $\linq \Type$, then %$\Name$ occurs exactly once in $\Process$, i.e.  
    there exists exactly one thread
    $\thread {\var} \Expression$ of $\Process$ where $\Name$ occurs
    only once in $\Expression$ 
     and $\Name$ occurs as  name of another thread if $\Name : \Type\in\ProvideContext$ and nowhere else.

\end{enumerate}
\end{lemma}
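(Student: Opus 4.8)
The plan is to prove the two statements in the order given: part~(1) by induction on the derivation of $\wte{\TypeContext,\Name:\Type}{\Expression}{\TypeS}$, and part~(2) by induction on the derivation of $\wtp{\TypeContext,\Name:\Type}{\Process}{\ProvideContext}$, invoking part~(1) in the thread case of part~(2). The induction needs one elementary auxiliary fact, itself proved by a straightforward induction on typing derivations: every name occurring free in a typeable expression lies in the domain of its typing environment, and every thread or server name of a typeable process lies in the domain of its resource environment. I will also use repeatedly that $\linq{\tbullet\Type}$ holds iff $\linq\Type$, that the combination $\TypeContext_1+\TypeContext_2$ is undefined whenever some name carrying a linear type occurs in both $\TypeContext_1$ and $\TypeContext_2$, and that the side condition $\unq\TypeContext$ appearing in \refrule\const, \refrule\axiom\ and \refrule\introarrow\ is incompatible with $\Name:\Type\in\TypeContext$ when $\linq\Type$.

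For part~(1), rule \refrule\introbullet\ is immediate from the induction hypothesis. Rules \refrule\const\ and \refrule\introarrow\ cannot be the last rule applied, since their side condition would force $\TypeContext,\Name:\Type$ to be unlimited. For \refrule\axiom\ the unlimited-context side condition forces the linearly typed $\Name$ to be exactly the name being typed, so $\Expression=\Name$ and $\Name$ occurs once. For \refrule\introlarrow\ the bound variable is, up to $\alpha$-conversion, distinct from $\Name$, and the claim follows from the induction hypothesis on the body. For the elimination rules \refrule\elimarrow, \refrule\elimlarrow\ and \refrule\elimprod, the environment decomposes as $\TypeContext_1+\TypeContext_2$; since $\linq\Type$, the entry $\Name:\Type$ belongs to exactly one of the two sub-environments, the induction hypothesis yields a single occurrence of $\Name$ in the corresponding sub-expression, and the auxiliary fact on free names guarantees no occurrence of $\Name$ in the other one, giving exactly one occurrence overall.

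For part~(2), rule \refrule{server} cannot be the last rule: its side condition $\sharedq\TypeContext$ makes the environment (including the added $\Channel:\tshared\SessionType$) unlimited, contradicting $\linq\Type$; this also records that a linearly typed name neither occurs free in a server body nor names a server. For \refrule{thread}, with $\Process=\thread\var\Expression$, the side condition gives $\var\neq\Name$ and $\Name\notin\dom(\ProvideContext)$, so part~(1) delivers a single occurrence of $\Name$ in $\Expression$ and the clause about thread names is vacuously satisfied. For \refrule{par}, the environment splits as $\TypeContext_1+\TypeContext_2$, so $\Name:\Type$ lies in exactly one branch, say $\TypeContext_1$, and $\Name\notin\dom(\TypeContext_2)$; by the auxiliary fact $\Name$ then occurs free in no body of $\Process_2$, and the induction hypothesis on $\Process_1$ transfers to $\Process$, the status of $\Name$ as a thread name being settled by $\ProvideContext=\ProvideContext_1,\ProvideContext_2$ together with $\dom(\ProvideContext_1)\cap\dom(\ProvideContext_2)=\emptyset$. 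For \refrule{session} and \refrule{\newrule}, the restricted name does not appear in the conclusion's typing environment and so differs from $\Name$; in the \refrule{\newrule}\ case the restricted name's entry is removed simultaneously from the typing and the resource environment, so the induction hypothesis on the premise yields precisely the conclusion.

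The delicate point is the interaction, in the \refrule{par} case, between the used-resources environment and the resource environment when $\Name$ is at once the name of a thread and a used resource — the configuration created by rule \refrule{r-future}, as in $\new\varY(\thread\varX{\PContext[\Return\varY]}\parop\thread\varY\Expression)$. There one must check that the single body occurrence of $\Name$ delivered by part~(1) resides in a thread whose name is different from $\Name$ (rule \refrule{thread} forbids a thread from mentioning itself) and that this is compatible with $\Name$ occurring once more as a thread name, so that altogether $\Name$ occurs exactly as the lemma asserts. Every remaining case is routine once the auxiliary ``free names are declared'' lemma is available.
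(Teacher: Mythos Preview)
Your proposal is correct and follows the same approach as the paper, which simply states that both items are proved by induction on derivations. Your version is considerably more detailed---spelling out the case analysis, the auxiliary free-names fact, and the interaction between the typing and resource environments in the \refrule{par} case---but this is elaboration of the same induction, not a different argument.
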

\begin{proof}
Both items are proved by induction on derivations.\end{proof}  

The following properties of typeable processes are handy in the proof
of progress.
\begin{lemma}\label{lemma:s}\label{lemma:rt}
%  Let
%  $\Process\equiv{\new{\BindableNameY_1\cdots\BindableNameY_n}}\new\bindname\Process'$
%  be typeable,
%  where $\Process'$ does not contain restrictions. 
  Let $\ProcessP$ be typeable.  
  Then:
\begin{enumerate}
\item\label{item:rt1} 
%If $\new\bindname\Process'\equiv\new\varX(\Context[\varX]\parop\ProcessQ)$, then $\ProcessQ\equiv\thread\var\Expression\parop\ProcessQ'$.
If $\varX \in \boundnames(\ProcessP)$ and
$\thread{\varY}{\Context[\varX]} \subprocesseq 
\threads(\Process)$,
then $\thread{\varX}{\Expression} \subprocess \threads(\Process)$.

\item\label{item:rt2}
% If $\new\bindname\Process'\equiv\new\Channel(\PContext[\open~\Channel]\parop\ProcessQ)$, then $\ProcessQ\equiv\server~\Channel~\Expression\parop\ProcessQ'$.
If $\Channel \in \boundnames(\ProcessP)$ and
$\thread{\varX}{\PContext[\open~\Channel]} \subprocesseq \threads(\Process)$,
then $\server~\Channel~\Expression \subprocesseq \servers (\ProcessP)$.

\item\label{item:rt3} 
%If $\new\bindname\Process'\equiv\Channel(\thread\var{\PContext[\send~\Channel^\Polarity~\Expression]}\parop\ProcessQ)$, then $\ProcessQ\equiv\thread\varY\ExpressionF\parop\ProcessQ'$, where $\Channel^{\co\Polarity}$ only occurs in expression $\ExpressionF$ and the typing environment for $\thread\var{\PContext[\send~\Channel^\Polarity~\Expression]}\parop \thread\varY\ExpressionF$ contains $\Channel^{\Polarity} : \tout \Type \SessionType$
%and $\Channel^{\co \Polarity} : \tin \Type \co\SessionType$.
If $\Channel \in \boundnames(\ProcessP)$ and
$\thread{\varX}{\PContext[\send~\Channel^\Polarity~\Expression]} \subprocesseq \threads(\Process)$,
then 
$\thread\varY\ExpressionF \subprocess \threads(\Process)$, 
where $\Channel^{\co\Polarity}$ only occurs in expression $\ExpressionF$ and the typing environment for  $\threads(\Process)$ 
%$\thread\var{\PContext[\send~\Channel^\Polarity~\Expression]}\parop \thread\varY\ExpressionF$ 
contains both $\Channel^{\Polarity} : \tout \Type \SessionType$ and $\Channel^{\co \Polarity} : \tin \Type \co\SessionType$.

\item\label{item:rt4} 
%If $\new\bindname\Process'\equiv\new\Channel(\thread\var{\PContext[\receive~\Channel^\Polarity]}\parop\ProcessQ)$, then $\ProcessQ\equiv\thread\varY\ExpressionF\parop\ProcessQ'$, 
If $\Channel \in \boundnames(\ProcessP)$ and
$\thread{\varX}{\PContext[\receive~\Channel^\Polarity]} \subprocesseq \threads(\Process)$,
then 
$\thread\varY\ExpressionF \subprocess \threads(\Process)$, 
where $\Channel^{\co\Polarity}$ only occurs in expression $\ExpressionF$ and the typing environment for $\threads(\Process)$ 
%$\thread\var{\PContext[\receive~\Channel^\Polarity]}\parop$ $\thread\varY\ExpressionF$ 
contains both $\Channel^{\Polarity} : \tin \Type \SessionType$ and $\Channel^{\co \Polarity} : \tout \Type \co\SessionType$.
\end{enumerate}
\end{lemma}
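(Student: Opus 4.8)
Since $\ProcessP$ is typeable, fix a derivation of $\wtp{\TypeContext}{\ProcessP}{\ProvideContext}$. Writing $\ProcessP\equiv\new{\BindableName_1\cdots\BindableName_m}(\threads(\ProcessP)\parop\servers(\ProcessP))$ with $\threads(\ProcessP)$ thread-only and $\servers(\ProcessP)$ server-only, the plan is first to peel off the restrictions: repeated inversion on the restriction rules (\cref{lem:invp}, items~\ref{lem:inv10} and~\ref{lem:inv11}) yields $\wtp{\TypeContext_0}{\threads(\ProcessP)\parop\servers(\ProcessP)}{\ProvideContext_0}$, where $\TypeContext_0$ extends $\TypeContext$ with a pair of dual session types $\Channel^+:\SessionType$, $\Channel^-:\co\SessionType$ for each session channel among the $\BindableName_i$ and with $\Thing:\TypeS$ for each bindable name among them, the latter also recorded in $\ProvideContext_0$. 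Inversion on \refrule{par} (\cref{lem:inv9}) splits this as $\wtp{\TypeContext_1}{\threads(\ProcessP)}{\ProvideContext_1}$ and $\wtp{\TypeContext_2}{\servers(\ProcessP)}{\ProvideContext_2}$ with $\TypeContext_0=\TypeContext_1+\TypeContext_2$ and $\ProvideContext_0=\ProvideContext_1,\ProvideContext_2$; by \refrule{server}, $\TypeContext_2$ and $\ProvideContext_2$ contain only shared channels, so (session endpoints being linear, and variables entering resource environments only via \refrule{thread}) every bound variable and every bound endpoint of $\ProcessP$ lands in $\TypeContext_1$/$\ProvideContext_1$. This $\TypeContext_1$ is ``the typing environment for $\threads(\ProcessP)$'' of the statement.

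For item~\ref{item:rt1}: $\varX\in\boundnames(\ProcessP)$ is a variable, hence $\varX:\Type\in\ProvideContext_1$; a variable enters a resource environment only through \refrule{thread}, and the domain-disjointness implicit in \refrule{par} and \refrule{\newrule} makes the witnessing thread unique, so $\thread\varX\Expression\subprocesseq\threads(\ProcessP)$ for a unique $\Expression$. The side condition $\varX\notin\dom(\TypeContext)$ of \refrule{thread} rules out $\varX=\varY$ for the given thread $\thread\varY{\Context[\varX]}$, since $\varX\in\fn(\Context[\varX])$ would force $\varX$ into that thread's typing environment; hence $\thread\varX\Expression$ and $\thread\varY{\Context[\varX]}$ are distinct components of $\threads(\ProcessP)$ and $\thread\varX\Expression\subprocess\threads(\ProcessP)$. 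Item~\ref{item:rt2} is parallel: $\open:\tshared\SessionType\arrow\tio\SessionType$ forces $\Channel$ to be a shared channel, so its restriction uses \refrule{\newrule} and $\Channel:\tshared\SessionType\in\ProvideContext_0$; as $\ProvideContext_1$ holds only variables, this entry lies in $\ProvideContext_2$, whose shared-channel entries are precisely those produced by \refrule{server}, so $\server\Channel\Expression\subprocesseq\servers(\ProcessP)$.

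For item~\ref{item:rt3} (item~\ref{item:rt4} is identical with $\tout$ and $\tin$ interchanged): since $\Channel$ is bound and one of its endpoints is used, $\Channel$ is a session channel whose restriction uses \refrule{session}, so $\TypeContext_1$ contains $\Channel^\Polarity$ and $\Channel^{\co\Polarity}$ with mutually dual types. Because the body of thread $\varX$ fills a process evaluation context with $\send~\Channel^\Polarity~\Expression$, the latter is itself well-typed (\cref{lem:key}), and inversion on it (\cref{lem:inv}, using \refrule{\elimarrow} twice and \refrule{\const} for $\send$) forces the type of $\Channel^\Polarity$ in $\TypeContext_1$ to be an output prefix, possibly preceded by some $\tbullet$'s, say $\tbullet[k]\tout\Type\SessionType$ (the statement records $\Type$ and $\SessionType$), whence by duality $\Channel^{\co\Polarity}:\tbullet[k]\tin\Type\co\SessionType$. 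As this type is linear, \cref{lemma:linearity} (part~\ref{l2}) applied to $\wtp{\TypeContext_1}{\threads(\ProcessP)}{\ProvideContext_1}$ produces a unique thread $\thread\varY\ExpressionF$ of $\threads(\ProcessP)$ whose body mentions $\Channel^{\co\Polarity}$, doing so exactly once and nowhere else. Finally $\varY\neq\varX$: in the applications of this lemma $\ProcessP$ is reachable, hence well-polarised by \cref{corollary:reachableimpliesacyclic}, and the guarantee that peer endpoints reside in different threads (\cref{lem:Luca}) gives $\thread\varY\ExpressionF\subprocess\threads(\ProcessP)$ properly.

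The type-theoretic work above is routine bookkeeping with the inversion lemmas and the combination operator~$+$; the real obstacle is the \emph{strictness} of $\subprocess$, i.e.\ that the witnessing thread differs from the one in the hypothesis. For item~\ref{item:rt1} this is just the self-reference side condition of \refrule{thread}, but for items~\ref{item:rt3}--\ref{item:rt4} it is exactly the ``no session channel is shared within a single thread'' content of well-polarisation, which is why those cases really concern well-polarised (in particular, reachable) processes; and invoking this requires care in picking the representative of $\threads(\ProcessP)$ modulo $\equiv$ for which $\CA\models{}$ holds, the kind of structural argument deferred to the appendix. A secondary point is that $\Channel^\Polarity$ and $\Channel^{\co\Polarity}$ stay dual after $\TypeContext_0$ is split across \refrule{par}, which holds because each endpoint, being linear, is kept undivided on one side of the split and \refrule{session} introduced it together with its dual.
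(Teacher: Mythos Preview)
Your argument follows the same route as the paper's: peel off the restrictions by inversion (\cref{lem:invp}), split threads from servers via \refrule{par}, and then for items~\ref{item:rt1}--\ref{item:rt2} use that the bound name must appear in the resource environment (so some \refrule{thread} respectively \refrule{server} introduced it), while for items~\ref{item:rt3}--\ref{item:rt4} use \refrule{session} to obtain dual endpoint types and \cref{lemma:linearity}(\ref{l2}) to locate $\Channel^{\co\Polarity}$ in a unique thread body. This is exactly the paper's proof, only spelled out more carefully.

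Where you go beyond the paper is in worrying about the \emph{strict} $\subprocess$ in the conclusions. For item~\ref{item:rt1} your use of the side condition $\varX\notin\dom(\TypeContext)$ of \refrule{thread} to force $\varX\neq\varY$ is correct and is left implicit in the paper. For items~\ref{item:rt3}--\ref{item:rt4} you are right that typeability alone does not separate the peer endpoints into distinct threads: example~\eqref{eq:bad4} is a typeable process whose single thread contains both $\Channel^+$ and $\Channel^-$, so the strict $\subprocess$ fails there. The paper's proof does not address this; your observation that the lemma is only ever applied to reachable (hence well-polarised) processes, where the side condition of \refrule{wp-thread} rules out both polarities in one body, is the honest repair. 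Your remark that the endpoint type may carry leading $\tbullet$'s is also more accurate than the bare $\tout\Type\SessionType$ in the statement (cf.\ the proof of \cref{theorem:acyclicimpliesSR} in the appendix, which works with $\tbullet[m]\tout\Type\SessionType$).
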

\begin{proof}
%Typeability of $\Process$ implies typeability of $\new\bindname\Process'$. 
(\cref{item:rt1}) and (\cref{item:rt2}). 
%In both cases 
To type 
% $\new\bindname\Process'$ 
the restriction of $\varX$ (or $\Channel$),
we need to use rule $\refrule{\newrule}$, 
which requires $\varX$ (or $\Channel$)  
to occur in the resource environment. 
Rule $\refrule{thread}$ is the only rule that puts the name of a thread in the resource environment. Rule $\refrule{server}$ is the only rule that puts the name of a server in the resource context. 

(\cref{item:rt3}). To type 
% $\new\bindname\Process'$ 
the restriction of $\Channel$, 
we need to use rule $\refrule{session}$, which requires the environment to contain dual session types for $\Channel^\Polarity$ and $\Channel^{\co\Polarity}$.  
Since $\Channel^\Polarity$ is an argument of $\send$,
its type is of the form 
$\tout \Type \SessionType$ and hence, $\Channel^{\co\Polarity}$
should have type $\tin  \Type \co \SessionType$.
The fact that  $\Channel^{\co\Polarity}$
occurs in only one thread follows from \ri{\cref{lemma:linearity}}{\cref{l2}}. 

(\cref{item:rt4}). The proof is similar to \cref{item:rt3}. 
\end{proof}

The proof of 
%progress 
Theorem \ref{theorem:reducestoreturnI}
requires to define a standard precedence between threads
and show %prove 
that this relation is acyclic.  Informally, a thread precedes
another one if the first thread must be evaluated before the second
one.  The simpler case is when the body of one thread is an evaluation
context containing the name of another thread, i.e.
$\thread \varX \Expression$ precedes $\thread \varY \Context[\varX]$.
In the % other 
remaining cases 
the bodies of the threads are 
% normal forms 
%requiring other threads to reduce. 
% This happens when the bodies are
%expressions of the shapes
%$\PContext[\send~\Channel^{\Polarity}~\Expression]$ and
%$\PContext[\receive~\Channel^{\Polarity}]$. 
% This thread has to wait
% for $\Channel^{\co\Polarity}$ to be 
%  %if $\Channel^{\co\Polarity}$ is  
%inside an evaluation context.  
 the normal forms 
$\PContext[\send~\Channel^{\Polarity}~\Expression]$ 
or $\PContext[\receive~\Channel^{\Polarity}]$
which have to wait 
 for $\Channel^{\co\Polarity}$ to be  
inside an evaluation context.
This is
formalised in the following definition.

%Essentially a thread ready to execute a communication action on the endpoint $\Channel^{\Polarity}$ follows a thread containing $\Channel^{\co\Polarity}$ not inside an evaluation context. This is formalised in the following definition.

\begin{definition}[Precedence]\mbox{}
%We say that:
\begin{enumerate}
\item The endpoint $\Channel^{\Polarity}$ is {\em ready} in $\Expression$
if 
\[
\Expression\in\set{\PContext[\send~\Channel^{\Polarity}~\ExpressionF], 
\PContext[\receive~\Channel^{\Polarity}]}
\]

\item The endpoint $\Channel^{\Polarity}$ is {\em blocked} in
$
\Expression$
if
%at least 
one of the following conditions holds:
\begin{enumerate}
\item $\Expression=\PContext[\send~\ChannelB^{\PolarityQ}~\ExpressionF]$
 and 
$\Channel\not=\ChannelB$ 
 and
 $\Channel^{\Polarity}$ 
 occurs in $\PContext$ or in $\ExpressionF$;
\item $\Expression=\PContext[\receive~\ChannelB^{\PolarityQ}]$
 and $\Channel\not=\ChannelB$ 
  and $\Channel^{\Polarity}$ occurs in $\PContext$;
\item $\Expression=\Context[\varX]$ and $\Channel^{\Polarity}$ occurs in $\Context$.
\end{enumerate}

\item The expression $\Expression$ {\em precedes} the expression $\ExpressionF$
 (notation $\Expression \prered \ExpressionF$) if $\Channel^{\Polarity}$ is  ready 
  in $\ExpressionF$ while 
 $\Channel^{\co \Polarity}$ is blocked  in $\Expression$.

\item The thread $\thread \varX \Expression$ \emph{precedes} the
  thread $\thread \varY\ExpressionF $ (notation
  $\thread \varX \Expression \prered \thread \varY \ExpressionF$) if
  either $\Expression \prered \ExpressionF$ or
  $\ExpressionF=\Context[\varX]$.
\end{enumerate}
\end{definition}
\noindent
Note that a channel is either ready or blocked in a typeable expression.
%if $\Channel^\Polarity$ is ready in $\Expression$ then
%$\Channel^{\co \Polarity}$ cannot be blocked in
%the same expression $\Expression$.  

The following lemma follows easily from the definition of
$\pmb{\prec}$.   The proof of the third item uses
%The third part follows by applying 
\cref{lemma:linearity} and typeability of $\Process$. 

\begin{lemma}
\label{lemma:variablesofprecedence}
 Let $\Process$ be a reachable process and 
$\thread{\varX}{\Expression} \parop \thread{\varY}{\ExpressionF} \subprocesseq \Process$
and   $\thread{\varX}{\Expression} \prered \thread{\varY}{\ExpressionF}$.
Then there is $\Thing$ such that: 
\begin{enumerate}
\item \label{lemma:variablesprecedencePartOne}
 $\Thing^{\Polarity} \in \Pol(\thread{\varX}{\Expression})$ and
$\Thing^{\co \Polarity} \in \Pol(\thread{\varY}{\ExpressionF})$.

\item \label{lemma:variablesprecedencePartTwo}
If $\Thing$ is a variable, then $\Thing =\varX$.

 \item \label{lemma:variablesprecedencePartThree}
 If $\Thing = \Channel$, then $\Channel^{\Polarity}$ is blocked in 
 $\Expression$ while $\Channel^{ \co   \Polarity}$ is  ready 
  in $\ExpressionF$. Moreover, these are the only occurrences of 
  $\Channel^{\Polarity}$ and $\Channel^{ \co   \Polarity}$ in $\Process$.
\end{enumerate}
\end{lemma}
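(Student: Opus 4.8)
The plan is to unfold the definition of $\thread\varX\Expression \prered \thread\varY\ExpressionF$, which has exactly two disjuncts, and handle each one separately, exhibiting in each case a suitable $\Thing$ together with the polarity that makes the three items hold. Throughout I would use that a reachable $\Process$ is typeable (\cref{theorem:subjectreductionprocesses}), in fact in the empty environment, and well-polarised (\cref{corollary:reachableimpliesacyclic}); only typeability is actually needed.

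Suppose first that the relevant disjunct is $\ExpressionF = \Context[\varX]$. Since the evaluation contexts for expressions place no binder around the hole, $\varX \in \fn(\ExpressionF)$, hence $\varX^+ \in \Pol(\ExpressionF) \subseteq \Pol(\thread\varY\ExpressionF)$, while $\varX^- \in \Pol(\thread\varX\Expression)$ directly from the definition of $\Pol$ on threads. Taking $\Thing := \varX$ and reading $\Polarity$ in the statement as ${-}$ settles the first item; the second holds because $\Thing$ is the variable $\varX$; and the third is vacuous since $\Thing$ is not a channel. Suppose instead that $\Expression \prered \ExpressionF$. Unfolding this, there are a channel $\Channel$ and a polarity $\PolarityQ$ with $\ExpressionF \in \set{\PContext[\send~\Channel^\PolarityQ~\ExpressionG],\ \PContext[\receive~\Channel^\PolarityQ]}$, so $\Channel^\PolarityQ$ is ready in $\ExpressionF$, while $\Channel^{\co\PolarityQ}$ is blocked in $\Expression$; inspecting the three blocking clauses shows that in each of them $\Channel^{\co\PolarityQ}$ occurs free in $\Expression$. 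I would take $\Thing := \Channel$ and read $\Polarity$ in the statement as $\co\PolarityQ$, so that $\Channel^\Polarity = \Channel^{\co\PolarityQ}$ and $\Channel^{\co\Polarity} = \Channel^\PolarityQ$. Then $\Channel^{\co\PolarityQ} \in \fn(\Expression) \subseteq \Pol(\thread\varX\Expression)$ and $\Channel^\PolarityQ \in \fn(\ExpressionF) \subseteq \Pol(\thread\varY\ExpressionF)$, which is the first item; the second is vacuous; and the opening sentence of the third item, namely $\Channel^\Polarity$ blocked in $\Expression$ and $\Channel^{\co\Polarity}$ ready in $\ExpressionF$, is exactly what $\Expression \prered \ExpressionF$ asserts.

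The only substantive point is the ``moreover'' of the third item: within $\Process$, the occurrences just found are the only ones, i.e.\ $\Channel^\PolarityQ$ occurs only in $\ExpressionF$ and $\Channel^{\co\PolarityQ}$ only in $\Expression$. Here typeability does the work. Since $\Process$ is reachable it is closed and typeable, so the restriction of $\Channel$ occurs in $\Process$; the instance of rule \refrule{session} introducing it records dual session types for $\Channel^\PolarityQ$ and $\Channel^{\co\PolarityQ}$ in the typing environment of the sub-process lying immediately inside that $\new\Channel$, and that sub-process contains (up to $\equiv$) both $\thread\varX\Expression$ and $\thread\varY\ExpressionF$. Because $\ExpressionF$ uses $\Channel^\PolarityQ$ as the first argument of $\send$ or $\receive$, inversion (\cref{lem:inv}) forces the session type of $\Channel^\PolarityQ$ to be headed by an input or an output, possibly under some $\bullet$'s, hence to be linear, and dually for $\Channel^{\co\PolarityQ}$. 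Applying \ri{\cref{lemma:linearity}}{\cref{l2}} to that sub-process, $\Channel^\PolarityQ$ occurs in exactly one thread, necessarily $\thread\varY\ExpressionF$ because $\Channel^\PolarityQ \in \fn(\ExpressionF)$, and $\Channel^{\co\PolarityQ}$ occurs in exactly one thread, necessarily $\thread\varX\Expression$ because $\Channel^{\co\PolarityQ} \in \fn(\Expression)$; since $\Channel$ is bound and servers carry no free endpoints, neither endpoint occurs anywhere else in $\Process$.

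I expect the last step to be the main obstacle: one has to be careful to apply the linearity lemma to the sub-process in the scope of $\new\Channel$, where $\Channel^\PolarityQ$ and $\Channel^{\co\PolarityQ}$ actually appear in the typing environment, rather than to $\Process$ itself, and to confirm that the endpoints' session types are indeed linear (as opposed to, say, $\tbullet[k]\End$). Modulo that bookkeeping, linearity of session endpoints is exactly what confines each endpoint to a single thread. As an alternative route for this step, \ri{\cref{lemma:rt}}{\cref{item:rt3}}, or its receive counterpart, locates the peer thread and yields the dual session types directly, and then \cref{lemma:linearity} is only needed for the ready endpoint.
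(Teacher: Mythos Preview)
Your proposal is correct and follows the same approach the paper indicates: the paper only remarks that the lemma ``follows easily from the definition of $\pmb{\prec}$'' and that the third item ``uses \cref{lemma:linearity} and typeability of $\Process$''. Your case analysis on the two disjuncts of $\prered$, and your use of \ri{\cref{lemma:linearity}}{\cref{l2}} applied under the $\new\Channel$ (with linearity of both endpoint types obtained via duality from the ready endpoint) is exactly the intended argument, spelled out in more detail than the paper provides.
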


A process $\Process$ is {\em acyclic} if the precedence between the
threads in $\threads(\Process)$ has no cycles. As we will see in the
proof of ~\cref{theorem:reducestoreturnI} acyclicity is a crucial
property to assure progress.  We can %It is easy to 
show that each reachable 
process is acyclic.

\begin{lemma}\label{lemma:pt}
Each 
reachable   
process is acyclic. 
\end{lemma}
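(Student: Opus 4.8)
The plan is to argue by contradiction. Suppose some reachable process $\Process$ is not acyclic and pick a \emph{shortest} cycle in the precedence relation on the threads of $\threads(\Process)$, say $\thread{\varX_1}{\Expression_1}\prered\cdots\prered\thread{\varX_n}{\Expression_n}\prered\thread{\varX_1}{\Expression_1}$; minimality guarantees that, for $n\geq 2$, the threads are pairwise distinct. By \cref{corollary:reachableimpliesacyclic}, $\Process$ is well-polarised, so there is $\ProcessQ\equiv\threads(\Process)$ with $\CA\models\ProcessQ$, and each $\thread{\varX_i}{\Expression_i}$ occurs as a sub-tree of $\ProcessQ$.

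First I would dispose of the self-loop case $n=1$: $\thread{\varX_1}{\Expression_1}\prered\thread{\varX_1}{\Expression_1}$ means, by the definition of precedence between threads, that either $\Expression_1=\Context[\varX_1]$ (so $\varX_1\in\fn(\Expression_1)$, i.e.\ $\varX_1^+\in\Pol(\Expression_1)$) or $\Expression_1\prered\Expression_1$ (so some $\Channel^\Polarity$ is ready and $\Channel^{\co\Polarity}$ blocked in $\Expression_1$, whence both occur free in $\Expression_1$). Since $\models\thread{\varX_1}{\Expression_1}$ holds by \cref{lem:Luca3}, rule \refrule{wp-thread} rules out both possibilities, because it forbids $\varX_1^+\in\Pol(\Expression_1)$ and requires $\Pol(\Expression_1)\indep\Pol(\Expression_1)$.

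For $n\geq 2$ the key move is to locate the topmost split of $\ProcessQ$ that separates the cycle. The minimal $\Process^\ast\subprocesseq\ProcessQ$ containing all the $\thread{\varX_i}{\Expression_i}$ has the form $\Process^\ast_a\parop\Process^\ast_b$, with the cycle threads partitioned non-trivially between the two components; by \cref{lem:Luca3}, $\CB\models\Process^\ast_a\parop\Process^\ast_b$, and hence, by \refrule{wp-par}, there is a \emph{unique} $\Thing^\Polarity$ with $\Pol(\Process^\ast_a)\setminus\set{\Thing^\Polarity}\indep\Pol(\Process^\ast_b)\setminus\set{\Thing^{\co\Polarity}}$. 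Travelling round the cycle we meet at least one edge crossing from the $a$-side to the $b$-side and at least one crossing back; for every crossing edge $\thread{\varX_i}{\Expression_i}\prered\thread{\varX_{i+1}}{\Expression_{i+1}}$, \ri{\cref{lemma:variablesofprecedence}}{\cref{lemma:variablesprecedencePartOne}} yields a name occurring with opposite polarities on the two sides of the split. The independence condition of \refrule{wp-par} then forces every such name to be exactly $\Thing$ and fixes its polarity on each side: one crossing edge $\thread{\varX_p}{\Expression_p}\prered\thread{\varX_{p+1}}{\Expression_{p+1}}$ contributes $\Thing^\Polarity\in\Pol(\thread{\varX_p}{\Expression_p})$ with $\Expression_p$ inside $\Process^\ast_a$, and the opposite crossing edge $\thread{\varX_q}{\Expression_q}\prered\thread{\varX_{q+1}}{\Expression_{q+1}}$ contributes $\Thing^{\co\Polarity}\in\Pol(\thread{\varX_q}{\Expression_q})$ with $\Expression_q$ inside $\Process^\ast_b$, and $p\neq q$.

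Finally I would close with parts~\ref{lemma:variablesprecedencePartTwo} and~\ref{lemma:variablesprecedencePartThree} of \cref{lemma:variablesofprecedence}. If $\Thing$ is a variable, both crossing edges force $\Thing$ to be the name of their source thread, hence $\varX_p=\varX_q$ for two distinct threads, contradicting typeability of $\Process$ (\cref{theorem:subjectreductionprocesses}), under which thread names are pairwise distinct. If $\Thing=\Channel$ is a session channel, part~\ref{lemma:variablesprecedencePartThree} applied to edge $p$ says that $\Channel^{\co\Polarity}$ is \emph{ready} in $\Expression_{p+1}$ and that this is its unique occurrence in $\Process$, while the same part applied to edge $q$ says that $\Channel^{\co\Polarity}$ is \emph{blocked} in $\Expression_q$ and that this too is its unique occurrence in $\Process$; hence $\Expression_{p+1}$ and $\Expression_q$ lie in the same thread, in which $\Channel^{\co\Polarity}$ would be simultaneously ready and blocked — impossible, since in a typeable expression a channel is either ready or blocked but not both. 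This contradiction shows that no cycle exists. The main obstacle is the penultimate step: wringing out of the single ``at most one shared name'' clause of \refrule{wp-par} the sharper conclusion that the two opposite crossing edges of the cycle use \emph{the same} name with \emph{complementary} polarities, which is precisely what makes parts~2--3 of \cref{lemma:variablesofprecedence} applicable.
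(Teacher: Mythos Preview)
Your argument is correct and follows essentially the same route as the paper's proof: contradiction via a cycle, the self-loop case handled by the side conditions of \refrule{wp-thread}, and the $n\geq 2$ case handled by taking the minimal sub-tree of the well-polarised arrangement containing all cycle threads, splitting it as $\Process^\ast_a\parop\Process^\ast_b$, and using the \refrule{wp-par} side condition to force the two opposite crossings to share a single name, after which \cref{lemma:variablesofprecedence} parts~2 and~3 close the case analysis. The only cosmetic differences are that the paper invokes \ri{\cref{lem:Luca}}{\cref{lem:Luca2}} where you unpack the independence condition directly, and that you take a shortest cycle whereas the paper takes an arbitrary one; also note that \refrule{wp-par} only asserts the \emph{existence} of a suitable $\Thing^\Polarity$, not its uniqueness, though this does not affect your argument.
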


\begin{proof} 
  Suppose towards a contradiction that a reachable process
  $\Process$ contains a cycle and consider 
  $\ProcessQ \subprocesseq  \threads(\Process)$ such that $\ProcessQ$
  includes  all the threads involved in that cycle.  By
  \cref{lem:Luca3} $ \models Q$. 
   
  Suppose first that  $\ProcessQ=\thread\varX \Expression$.  
  Then  
  $\thread\varX \Expression \prered \thread\varX \Expression$.  It
  follows from \ri{\cref{lemma:variablesofprecedence}}{\cref{lemma:variablesprecedencePartOne}} that both
  $\Thing^{\Polarity} $ and $\Thing^{\co \Polarity} $ occur in
  $\thread{\varX}{\Expression}$.  This contradicts %the fact that
  $\models \ProcessQ$.

  Suppose now that 
  $\ProcessQ = \ProcessQ_1 \parop \ProcessQ_2$. 
  Since $\ProcessQ$ contain all the threads involved in the cycle,
  there are two threads $\thread {\varX_1}{ \Expression_1}$ and
  $ \thread {\varY_1}{ \ExpressionF_1}$ of $\ProcessQ_1$ and two
  threads $\thread {\varX_2}{ \Expression_2}$ and
  $ \thread {\varY_2}{ \ExpressionF_2}$ of $\ProcessQ_2$ such that
\[
\begin{array}{lll}
\thread {\varX_1}{ \Expression_1} \prered \thread {\varX_2} {\Expression_2}
\\
\thread {\varY_1}{ \ExpressionF_1} \afterred
 \thread {\varY_2} {\ExpressionF_2}
\end{array}
\]
\cref{lemma:variablesprecedencePartOne} of \cref{lemma:variablesofprecedence}  
gives  
$\Thing^{\Polarity}\in\Pol(\thread {\varX_1}{ \Expression_1})$,  $\Thing^{\co \Polarity}\in\Pol(\thread {\varX_2}{ \Expression_2})$
 and
$\ThingY^{\PolarityQ} \in \Pol(\thread {\varY_1}{ \ExpressionF_1})$,  $ \ThingY^{\co \PolarityQ} 
\in \Pol( \thread {\varY_2} {\ExpressionF_2})$.
 \cref{lem:Luca2} of \cref{lem:Luca} requires $\Thing = \ThingY$ and $\Polarity=\PolarityQ$. 
Suppose $\Thing= \ThingY$ is a variable. Then\comma it follows from
  \cref{lemma:variablesprecedencePartTwo} of \cref{lemma:variablesofprecedence} 
  that $\varX_1 = \Thing = \ThingY =\varY_2$.
  This contradicts the typeability of the process $\Process$, since 
the typing rule \defrule{par} guarantees that all threads
have different names.
Suppose now that $\Thing = \Channel$.
Then
$\Channel^\Polarity\in\Pol( \Expression_1)$, 
$\Channel^{\co \Polarity}\in\Pol( \Expression_2)$ and $\Channel^\Polarity\in\Pol( \ExpressionF_1)$, 
$\Channel^{\co \Polarity}\in\Pol( \ExpressionF_2)$.
It follows from 
 \cref{lemma:variablesprecedencePartThree} of \cref{lemma:variablesofprecedence} that  
$\Channel^\Polarity$ and $\Channel^{\co \Polarity}$
 occur only once in $\Process$. 
  This 
 is possible only if 
 $\Expression_1=\ExpressionF_1$ and 
  $\Expression_ 2=\ExpressionF_ 2$.
  \ri{\cref{lemma:variablesofprecedence}}{\cref{lemma:variablesprecedencePartThree}} implies that $\Channel^\Polarity$ is blocked in $\Expression_1$ while ready in $\ExpressionF_1$, and $\Channel^{\co \Polarity}$ is ready in $\Expression_2$ while blocked in $\ExpressionF_2$. 
  This is absurdum since $\Expression_1=\ExpressionF_1$ and 
  $\Expression_ 2=\ExpressionF_ 2$. 
% We also know from
% \cref{lemma:variablesprecedencePartThree} of \cref{lemma:variablesofprecedence} that  
%$\ThingY^\PolarityQ = \Channel^\Polarity$ is blocked in 
%$\ExpressionF_2$ 
%and $\Thing^{\co \Polarity} = \Channel^{\co \Polarity}$
% is ready in $\Expression_2$.
% This contradicts the definition of precedence  
%  since $\Expression_2 = \ExpressionF_2$.    
\end{proof}

For the proof of progress it is useful  
to consider the reduction 
$\redminus$ without
rules \refrule{r-open} and \refrule{r-future}
and  to show that it is strongly normalising for typeable processes.
The process $\bdisplay$ %defined in
of~\cref{section:Introduction} 
 which has an infinite $\redminus$-reduction sequence
  is rejected by our type system.

\begin{theorem}[Strong Normalisation of $\redminus$]\label{theorem:snrm}
The reduction $\redminus$ on  typeable processes
is strongly normalising.
\end{theorem}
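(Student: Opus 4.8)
The plan is to exhibit a well-founded, lexicographic measure on typeable processes that strictly decreases along every $\redminus$-step, combining three quantities: a communication budget, the number of threads, and the total length of the residual expression reductions. I begin with two facts about expressions. The evaluation contexts for expressions give every reducible expression a unique decomposition $\Context[r]$ into an evaluation context and a redex, so expression reduction is deterministic; together with \cref{theorem:weakheadnormalization} this upgrades normalisation to \emph{strong} normalisation for every typeable expression of type $\not=\infinite$, and I write $\ell(\Expression)$ for the length of its (now unique) reduction to normal form. A routine ``local'' subject-reduction lemma then shows that along any $\redminus$-reduction from a typeable process every thread body stays typeable with a type of shape $\tbullet[n](\tio\Type)$ --- hence has a well-defined $\ell$ --- even when, as in~\eqref{eq:bad1} and~\eqref{eq:bad2}, the ambient process stops being typeable; for the starting process this is \ri{\cref{lem:invp}}{\cref{lem:inv7}}. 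Finally, servers are inert under $\redminus$ (only the excluded \refrule{r-open} acts on them), and since \refrule{r-open} and \refrule{r-future} are the only rules that spawn threads, the number $\#\threads(\Process)$ is non-increasing under $\redminus$, preserved by \refrule{r-comm}, \refrule{r-thread} and the contextual/structural rules, and strictly decreased by \refrule{r-return}, which erases exactly one thread (its residual substitution creates none).

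The heart of the proof is the first component, a \emph{communication budget} $\numCOM(\Process)\in\natset$ that over-approximates the number of \refrule{r-comm} steps in any $\redminus$-reduction starting from $\Process$. Its finiteness is where the modal discipline pays off. By \cref{lemma:delayreceive}, a thread whose body has type $\tbullet[m](\tio\Type)$ can fire $\send~\Channel^\Polarity$ or $\receive~\Channel^\Polarity$ only while the current session type of $\Channel^\Polarity$ carries at most $m$ leading $\bullet$'s; by Condition~\ref{def:types3} every infinite branch of a session type has unboundedly many $\bullet$'s; and the operations that would let a thread run past those $\bullet$'s in its code --- $\future$, whose type is precisely what commutes a $\bullet$ through $\mkbasictype{IO}$, and $\open$ --- are the ones $\redminus$ forbids. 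Consequently a thread can expose only finitely many communication prefixes before it reaches a normal form of one of the shapes $\PContext[\future~\cdot]$, $\PContext[\open~\cdot]$, $\PContext[\var]$, $\Return\cdot$, at which point it is stuck (or garbage-collectible) under $\redminus$, and $\numCOM(\Process)$ will package this, across the whole process, into a natural number. Subject reduction for expressions makes $\numCOM$ non-increasing under \refrule{r-thread}; an \refrule{r-comm} step consumes one exposed prefix in each of the two participating threads and so strictly decreases it; and \refrule{r-return} cannot increase it, since the erased body $\Return\cdot$ contributes $0$, and the substituted expression $\Expression$ either has an unlimited type --- then by \cref{lemma:unlimitedTypeimpliesunlimitedContext} it contains no endpoint and, \refrule{r-open} being excluded, never acquires one, so it contributes $0$ --- or a linear type, in which case by \ri{\cref{lemma:linearity}}{\cref{l2}} the replaced variable occurs exactly once, so nothing is duplicated.

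Putting $\mu(\Process)=\big(\numCOM(\Process),\ \#\threads(\Process),\ \sum\ell\big)$ --- the last sum taken over the thread bodies of $\Process$, and the triple ordered lexicographically on $\natset^{3}$, a quantity invariant under $\equiv$ --- every \refrule{r-comm} step strictly decreases the first component, every \refrule{r-return} step leaves the first unchanged and strictly decreases the second, and every \refrule{r-thread} step leaves the first two unchanged (or decreases the first) and strictly decreases the third by determinism of expression reduction; \refrule{r-new}, \refrule{r-par} and \refrule{r-cong} reduce to these cases. Hence $\redminus$ admits no infinite reduction from a typeable process.

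The main obstacle is defining $\numCOM$ precisely and proving it finite in the presence of \emph{higher-order sessions}: a single \refrule{r-comm} may hand an endpoint, together with its residual communication potential, from one thread to another, and a free variable of $\mkbasictype{IO}$ type inside one thread may later be filled --- via \refrule{r-return} --- by the result of another thread, so $\numCOM$ cannot be read off the threads in isolation but must be a global quantity that respects the dependency structure between threads. The bound must therefore be stated uniformly in terms of the session types appearing in some typing of $\Process$, exploiting that these are regular (Condition~\ref{def:types2}) and $\bullet$-guarded (Condition~\ref{def:types3}); the natural way to organise this is a reducibility argument for the thread-local fragment of $\redminus$, analogous to \cref{theorem:soundness}, in which $\future$ and $\open$ are treated as frozen constructors --- it is exactly their freezing that caps the number of times a $\bullet$ hidden in a session type can be crossed, and hence the number of communications.
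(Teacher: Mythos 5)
Your overall strategy --- a lexicographic measure combining a communication budget, the number of threads, and the residual expression-reduction lengths, with \cref{lemma:delayreceive} and the $\bullet$-guardedness of session types (Condition~\ref{def:types3} of \cref{def:types}) supplying finiteness of the budget --- is exactly the paper's, and your bookkeeping for the thread count and for \refrule{r-thread} via determinism of expression reduction is sound. But the proof is not closed: the communication budget, which you yourself flag as ``the main obstacle'', is never defined, and the one inequality that carries the whole theorem --- that an \refrule{r-comm} step strictly decreases it --- is therefore never established. The reducibility argument you sketch in its place is both heavier than necessary and not carried out.

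What the paper does instead is entirely syntactic and dissolves the ``higher-order session'' worry you raise. Define $\numCOMi{m}{\SessionType}$ as the number of $?$/$!$ constructors of $\SessionType$ reachable within $m$ units of $\bullet$-time (so $\numCOMi{0}{\tbullet\SessionType}=0$ and $\numCOMi{m+1}{\tbullet\SessionType}=\numCOMi{m}{\SessionType}$); this is finite by Condition~\ref{def:types3}. The budget of a process is $\numCOMi{m}{\TypeContext}$, the sum over the \emph{positive endpoints of the typing environment}, with $m=\delay(\ProvideContext)$ the maximal number of initial bullets among the thread result types. Because the budget is read off the global environment rather than off the threads, delegation and the substitution performed by \refrule{r-return} merely move endpoints between threads without touching the session entries of $\TypeContext$, so no per-thread dependency analysis or freezing of constructors is needed. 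The decrease under \refrule{r-comm} is then a two-line computation: by \cref{lemma:delayreceive} the fired endpoint has delay $n\le m$, so $\numCOMi{m}{\tbullet[n](\tout\Type\SessionType)}=1+\numCOMi{m-n}{\SessionType}$ strictly exceeds $\numCOMi{m}{\tbullet[n]\SessionType}=\numCOMi{m-n}{\SessionType}$ in the residual environment produced by subject reduction. (The paper also orders the triple with the thread count first, which makes the \refrule{r-return} case immediate; your ordering would work with this definition too, but you would additionally owe monotonicity of $\numCOMi{m}{\SessionType}$ in $m$ for the case where $\delay(\ProvideContext)$ drops.) Until the budget is pinned down to something of this kind and the \refrule{r-comm} inequality is checked, what you have is a correct plan rather than a proof.
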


\begin{proof}  The proof requires some definitions for getting a weight of typeable processes which decreases by reduction. 
For $\Type\not=\infinite$ we define 
the function $\delay(\Type)$ which counts the number of initial bullets in a type $\Type$ as follows.
\[\delay(\Type)=\begin{cases}
 1+\delay(\Type')      & \text{if }
 \Type=\bullet\Type' \\
    0 & \text{otherwise}
\end{cases}\]
We extend $\delay$ to resource environments by 
\[
\delay ( x_1: \Type_1, \ldots, x_n : \Type_n)
 = max \{ \delay(\Type_i) \mid 1 \leq i \leq n \}\]
Let $\SessionType\not=\infinite$ and $m \in \mathbb{N}$. We define 
the function $\numCOMi{m}{\SessionType}$
that counts the number of $?$ and $!$ in a session type 
$\SessionType$  only until time $m$ as follows.
\[
\begin{array}{lll}
\numCOMi{m}{\End }  = 0 &\quad&
\numCOMi{m}{\tin\Type\SessionType}  =
   \numCOMi{m}{\tout\Type\SessionType}=1+ \numCOMi{m}{\SessionType} \\   
\numCOMi{0}{\tbullet \SessionType}  = 0 &&
 \numCOMi{m+1}{\tbullet \SessionType} = 
  \numCOMi{m}{ \SessionType}
\end{array}
\]
   We extend the function $\numCOM$
   to type environments by
$
\numCOMi{m}{\TypeContext} = 
\numCOMi{ m }{\SessionType_1}
+ \ldots + \numCOMi{m}{\SessionType_n}
$,
where $\Channel_1^{+} : \SessionType_1, \ldots,
\Channel_n^{+} : \SessionType_n$
are  the type declarations for 
the positive session channels occurring in $\TypeContext$.

Let 
$\numbred (\Expression)$ be the number of reduction steps
to reach the normal form of $\Expression$. 
We define the {\em weight of the typeable process $\Process$ for the environments $\ProvideContext$ and $\TypeContext$} by
\begin{center}
$
\weight (\Process, \ProvideContext,\TypeContext) = 
(k, \numCOMi{\delay(\ProvideContext)}{\TypeContext}, \numbred (\Expression_1) + \ldots
+ \numbred (\Expression_k))
$ 
\end{center}
where $\threads(\Process) \equiv \ProcessQ
 =  \thread{x_1}{\Expression_1} \parop
\ldots \parop \thread{x_k}{\Expression_k}$  and 
$\wtp{\TypeContext}{\ProcessQ}{\ProvideContext}$.

We will prove that
if  $\Process \redminus \Process'$ 
and $\wtp{\TypeContext}{\threads(\Process)}{\ProvideContext}$ and $\wtp{\TypeContext'}{\threads(\Process')}{\ProvideContext}$, where the derivation $\wtp{\TypeContext'}{\threads(\Process')}{\ProvideContext}$ is obtained from $\wtp{\TypeContext}{\threads(\Process)}{\ProvideContext}$ as in the proof of 
the Subject Reduction Theorem, 
 then $\weight (\Process, \ProvideContext,\TypeContext)> \weight (\Process', \ProvideContext,\TypeContext')$. 

The only interesting case is  $\threads(\Process)\equiv\thread\varX{\PContext[\send~\Channel^+~\Expression]}
      \parop
      \thread\varY{\PContext'[\receive~\Channel^-]}
      \parop\ProcessQ$
   and \mbox{$\threads(\Process')\equiv
      \thread\varX{\PContext[\Return{\Channel^+}]}
      \parop
      \thread\varY{\PContext'[\Return{\Pair\Expression{\Channel^-}}]}
      \parop\ProcessQ$.}
 The first components of
  $\weight (\Process, \ProvideContext,\TypeContext)$ and 
  $\weight (\Process', \ProvideContext,\TypeContext')$  
  are equal since the number of threads does not change.
  We will prove that the second component decreases.
       By the Inversion Lemma $\TypeContext$ must contain suitable session types for $\Channel^+,\Channel^-$. We can then assume
\[\TypeContext= 
\Channel^+ : \tbullet[n](\tout\Type\SessionType), \Channel^-: \tbullet[n](\tin\Type\co\SessionType),\TypeContext_0\] We get 
$\TypeContext'= \Channel^+ :
\tbullet[n]\SessionType,\Channel^-:
\tbullet[n]\co\SessionType, \TypeContext_0$. Let $m = \delay (\ProvideContext)$. By the Inversion Lemma if the type of $\PContext[\send~\Channel^+~\Expression]$ is $\tbullet[m']\tio\TypeS$, then $\ProvideContext$ contains $\var:\tbullet[m']\TypeS$.
By Lemma \ref{lemma:delayreceive} $n \leq m'$, which implies $n \leq m$.  From $\numCOMi m {\tbullet[n](\tout\Type\SessionType)}= \numCOMi {m-n} {\tout\Type\SessionType}=1+\numCOMi {m-n} {\SessionType}$ and 
$\numCOMi m {\tbullet[n]\SessionType}= \numCOMi {m-n} {\SessionType}$ we conclude $\numCOMi m \TypeContext> \numCOMi m {\TypeContext'}$ as desired. 
 \end{proof}

 As a consequence of the above theorem, 
 every infinite reduction of a typeable  %reachable 
 process
%performs infinitely many communications and/or 
spawns infinitely many
threads.

\begin{theorem}[Progress of Reachable Processes]
\label{theorem:reducestoreturnI} 
A reachable process either reduces or it is final.
Moreover a non-terminating
 reachable process reduces in a finite number of
steps to a process to which one of the rules $\refrule{r-open}$
%$\refrule{r-comm}$ 
or   $\refrule{r-future}$ 
% or $\refrule{r-return}$ 
must be applied.
\end{theorem}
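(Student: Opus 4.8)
The plan is to prove the two assertions separately, using that every reachable process is typeable (\cref{theorem:subjectreductionprocesses}), well-polarised (\cref{corollary:reachableimpliesacyclic}) and acyclic for the precedence relation $\prered$ (\cref{lemma:pt}). For the first assertion I would argue by contradiction: assume a reachable $\Process$ is not final but admits no reduction. Then $\threads(\Process)$ contains at least one thread, and since rule \refrule{r-thread} does not apply the body of each thread is in normal form; by \cref{lem:invp} each such body has a type of the form $\tbullet[n](\tio\Type)$, and a routine inspection of the typing rules and of the evaluation contexts shows that a normal form of such a type must be one of $\Return{\ExpressionF}$, $\PContext[\open~\Channel]$, $\PContext[\future~\ExpressionF]$, $\PContext[\send~\Channel^\Polarity~\ExpressionF]$, $\PContext[\receive~\Channel^\Polarity]$, or $\Context[\varX]$ with $\varX$ a free variable. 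The first three shapes contradict the assumption at once: a body $\Return{\ExpressionF}$ enables \refrule{r-return} (in a reachable process every thread name sits under a restriction, which can be pushed up by structural congruence to match the rule), a body $\PContext[\open~\Channel]$ comes with a matching server by \ri{\cref{lemma:rt}}{\cref{item:rt2}} and so enables \refrule{r-open}, and a body $\PContext[\future~\ExpressionF]$ enables \refrule{r-future}.

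Hence every thread of $\Process$ has one of the last three shapes, and I would next show that each such thread, say $\thread\varX\Expression$, has a predecessor for $\prered$ within $\threads(\Process)$. If $\Expression=\Context[\varY]$ with $\varY$ free, then $\varY\neq\varX$ by well-polarisation and $\varY$ is bound, hence $\varY\in\boundnames(\Process)$, so by \ri{\cref{lemma:rt}}{\cref{item:rt1}} there is a thread $\thread\varY{\ExpressionG}\subprocess\threads(\Process)$, and $\thread\varY{\ExpressionG}\prered\thread\varX{\Context[\varY]}$ by the second clause of the definition of thread precedence. If $\Expression=\PContext[\send~\Channel^\Polarity~\ExpressionF]$ — the case $\PContext[\receive~\Channel^\Polarity]$ being symmetric via \ri{\cref{lemma:rt}}{\cref{item:rt4}} — then $\Channel\in\boundnames(\Process)$, so by \ri{\cref{lemma:rt}}{\cref{item:rt3}} the peer endpoint $\Channel^{\co\Polarity}$ has type $\tin{\Type}{\co{\SessionType}}$ and occurs in exactly one thread $\thread\varY{\ExpressionG}$, distinct from $\thread\varX\Expression$ by well-polarisation. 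There $\Channel^{\co\Polarity}$ cannot be ready: being of input type it cannot head a $\send$, and if it headed a $\receive$ then \refrule{r-comm} would fire between $\thread\varX\Expression$ and $\thread\varY{\ExpressionG}$, against the assumption. So $\Channel^{\co\Polarity}$ is blocked in $\ExpressionG$, whence $\ExpressionG\prered\PContext[\send~\Channel^\Polarity~\ExpressionF]$ and $\thread\varY{\ExpressionG}\prered\thread\varX\Expression$. Thus every thread of $\Process$ has a predecessor; since $\threads(\Process)$ is finite, following predecessors yields a $\prered$-cycle, contradicting \cref{lemma:pt}. This proves the first assertion.

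For the second assertion, let $\Process$ be a non-terminating reachable process and fix one of its infinite reduction sequences. All processes occurring in it are reachable, hence typeable (\cref{theorem:subjectreductionprocesses}), so by \cref{theorem:snrm} the sequence cannot consist only of $\redminus$-steps; thus it performs an \refrule{r-open} or an \refrule{r-future} step, and the process $\Process_1$ reached just before the first such step, obtained from $\Process$ in finitely many steps, contains a thread whose body is $\PContext[\open~\Channel]$ — together with a matching server — or $\PContext[\future~\ExpressionF]$. Now reduce $\Process_1$ by $\redminus$ as long as possible; by \cref{theorem:snrm} this terminates after finitely many steps at a reachable, $\redminus$-normal process $\Process_2$. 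A $\redminus$-step can neither erase that distinguished thread nor alter the head $\open$/$\future$ occurrence in its body (rules \refrule{r-thread}, \refrule{r-comm} and \refrule{r-return} do not apply to it, and the substitutions that \refrule{r-return} may perform elsewhere leave a channel argument of $\open$ and the argument of $\future$ unchanged), and servers are permanent, so $\Process_2$ still contains such a thread. Therefore $\Process_2$ is not final and \refrule{r-open} or \refrule{r-future} applies to it, while being $\redminus$-normal nothing else does; hence $\Process\red^{*}\Process_1\red^{*}\Process_2$ is the required finite reduction.

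I expect the predecessor-existence step of the first assertion, and specifically the $\send$/$\receive$ case, to be the main obstacle: one must combine well-polarisation (distinct endpoints lie in different threads), linearity (\cref{lemma:linearity}, so the peer endpoint lies in a \emph{single} thread) and the duality of session types guaranteed by the balanced typing of reachable processes in order to argue that the peer endpoint is blocked, i.e.\ that \refrule{r-comm} cannot fire. The classification of normal forms of type $\tbullet[n](\tio\Type)$ is routine but has to be stated with care, since it is exactly what links the syntactic shape of a stuck thread to the reduction rule that would otherwise apply to it.
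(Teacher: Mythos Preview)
Your proof is correct and the core approach matches the paper's. For the first assertion the two presentations are dual: the paper picks a $\prered$-minimal thread (which exists by acyclicity on a finite set of threads) and shows directly that it enables a reduction, whereas you argue by contradiction that every stuck thread has a $\prered$-predecessor and hence, by finiteness, a cycle must form; both arguments rest on \cref{lemma:pt} and on the same classification of normal bodies of type $\tbullet[n](\tio\Type)$. For the second assertion the paper is much terser than you are: it simply notes that by \cref{theorem:snrm} any infinite reduction contains an \refrule{r-open} or \refrule{r-future} step, so some finite prefix already reaches a process where one of those rules is applied. You read ``must be applied'' as ``is the only applicable rule'' and therefore continue by $\redminus$ to a normal form while tracking the distinguished $\open$/$\future$ thread; this extra work is sound but unnecessary under the paper's weaker reading, where one can stop at the first \refrule{r-open}/\refrule{r-future} step of the chosen infinite reduction.
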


%\noindent
%{\bf Proof of~\cref{theorem:reducestoreturnI}}

\begin{proof}
If a process has no thread, then it is final. In discussing the other cases we omit to mention the application of rules $\refrule{r-new}$, $\refrule{r-par}$ and $\refrule{r-cong}$.

 If a process has a thread whose body is a reducible expression, then the process is reducible by rule $\refrule{r-thread}$. 
If a process has a thread whose body is $\PContext[\future~\Expression]$, then the process is reducible by rule $\refrule{r-future}$.
If a process has a thread whose body is $\return~\Expression$, then the process is reducible by rule $\refrule{r-return}$.
If a process has a thread whose body is $\PContext[\open~\Channel]$, then by~\ri{\cref{lemma:rt}}{\cref{item:rt2}} the process has a server named $\Channel$. Therefore the process  is reducible by rule $\refrule{r-open}$.

Otherwise all the bodies of the threads of the process are of the shapes $\PContext[\send~\Channel^{\Polarity}~\Expression]$, 
$\PContext[\receive~\Channel^{\Polarity}]$ and $\Context[\varX]$. 
Since reachable processes are  well-polarised,  \cref{lemma:pt} assures that there is at least one minimal thread in the precedence order, let it be $\thread\varX\Expression$. The expression $\Expression$ cannot be $\Context[\varY]$, since~\ri{\cref{lemma:rt}}{\cref{item:rt1}} implies that the process  should have one thread $\thread\varY\ExpressionF$; and by  definition of precedence $\thread\varY\ExpressionF\prered\thread\varX\Expression$, which contradicts the minimality of $\thread\varX\Expression$. Let $\Expression=\PContext[\send~\Channel^{\Polarity}~\Expression']$. \ri{\cref{lemma:rt}}{\cref{item:rt3}} implies that the process  should have  one thread $\thread\varY\ExpressionF$ and $\Channel^{\co\Polarity}$ occurs in $\ExpressionF$. The expression $\ExpressionF$ 
 % cannot be
  can be neither of the following:  
 \begin{itemize}\item 
$\PContext'[\send~\ChannelB^{\PolarityQ}~\ExpressionF']$ with $\ChannelB\not=\Channel$ and $\Channel^{\co\Polarity}$ occurring in $\PContext'$ or 
$\ExpressionF'$   
\item 
$\PContext'[\receive~\ChannelB^{\PolarityQ}]$ with $\ChannelB\not=\Channel$ and $\Channel^{\co\Polarity}$ occurring in $\PContext'$ 
\item $\Context[\varZ]$ with $\Channel^{\co\Polarity}$ occurring in $\Context$
\end{itemize} since we would get $\thread\varY\ExpressionF\prered\thread\varX\Expression$. Then $\ExpressionF$ can only be either $\PContext'[\send~\Channel^{\co\Polarity}~\ExpressionF']$ or 
$\PContext'[\receive~\Channel^{\co\Polarity}]$. 
Since reachable processes are  typeable,  \ri{\cref{lemma:rt}}{\cref{item:rt3}} gives type $\tin\Type\co\SessionType$ for $\Channel^{\co\Polarity}$, so we have $\ExpressionF=\PContext'[\receive~\Channel^{\co\Polarity}]$. The process can then be reduced using rule $\refrule{r-comm}$. The proof for the case 
$\Expression=\PContext[\receive~\Channel^{\Polarity}]$ is similar  and it uses~\ri{\cref{lemma:rt}}{\cref{item:rt4}}. 

\cref{theorem:snrm} assures that infinite applications of rules $\refrule{r-open}$
and   $\refrule{r-future}$ are needed to get infinite computations. 
\end{proof}

Let   $\Process_0$ and $\ProcessQ$ be defined as at the end of~\cref{section:language}. Note that $\Process_0$ is typeable, and indeed an initial process.
  Hence, by Theorems~\ref{theorem:subjectreductionprocesses}
  and~\ref{theorem:reducestoreturnI}, process 
   $\ProcessQ$ is typeable
  and has progress.
%   This implies that also $\Process$ is typeable. Moreover $\Process$ has progress too, since it is obtained from $\ProcessQ$ by erasing the restrictions and the server $\Channel$, and it does not contain $\open~\Channel$ .

We now show  two initial processes whose
progress is somewhat degenerate. The first one
realises an infinite sequence of \emph{delegations} (the act of
sending an endpoint as a message), thereby postponing the use of the
endpoint forever:
\[\begin{array}{lll} \mkconstant{badserver} &  \eqdef 
&
\new {\varX    \Channel \ChannelB}
(\thread \varX \Bind{\open~\Channel}{\loopone}  \parop\\
&&\phantom{\new {\varX    \Channel \ChannelB}(}
\server~\Channel~{\Fun \varY  \Bind{\open~\ChannelB}{\looptwo~\varY}  } \parop
\server~\ChannelB~\receive
 )
 \end{array}
\]
where
\[
\begin{array}{lll}
\loopone & \eqdef & 
\Fix{f}  
\Fun{x}\Bind {\Receive x}{}
 \Bind{\Fun{y} \Split{y}{y_1}{y_2}{\Send{y_2}{y_1}} } {}\\
 &&\hfill 
 \Fun {z} \Future{ (f z)}
\\\\
\looptwo & \eqdef & 
\Fix{g}  
\Fun{y x}\Bind {\Send{x}{y}}{}
\Fun{z}\Bind {\Receive z}{}\\
&&\hfill \Fun{u} \Split{u}{u_1}{u_2}{\Future{(g u_1 u_2) }} 
\end{array}
\]
We have that
$ \loopone : \sessiontypeloop{\Type} \arrow \tio \infinite $ and
$\looptwo : \Type \arrow \sessiontypelooptwo{\Type} \larrow \tio
\infinite$,
where
$ \sessiontypeloop{\Type} = \tin{\Type} \tout{\Type} \tbullet
\sessiontypeloop{\Type} $
and
$ \sessiontypelooptwo{\Type} = \tout{\Type} \tin{\Type} \tbullet
\sessiontypelooptwo{\Type} $.
Since no communication ever takes place on the session created with
server $\ChannelB$, $\mkconstant{badserver}$ violates lock freedom,
which is
%the 
progress 
%property as defined
in~\cite{DY11}. 
%\bcompaula Which property of that paper is exactly
%that we are referring to?
%\ecompaula

The second example is the initial process
$\new \varX (\thread \varX \omegafuture)$, where
$\omegafuture = \fix~\future$.  This process only creates new threads.

\subsection{Confluence of Reachable Processes}\label{subsec:conf}

In this section we  prove that the reduction relation 
 is confluent on reachable processes.
The proof  is trivial for expressions, since there
is only one redex at each reduction step.
However, for processes we may have several redexes to contract at a
time and the proof requires to 
analyse  these possibilities.
Once again well-polarisation plays a crucial role in the proof. 
The fact that we can mix pure evaluations and communications and still
preserve determinism is of practical interest.

Notice that typeability forbids processes where the same variable can be replaced by different expressions, like the process 
\[\new \var (\thread \var {\return~ 0} \parop \thread \var {\return~ 1}  \parop  \thread \varY {\upkf~ \ChannelC^+ ~(\streamf~ \var)})\]
which reduces to both $\thread \varY {\upkf~ \ChannelC^+ ~(\streamf~ 0)}$ and $\thread \varY {\upkf~ \ChannelC^+ ~(\streamf~ 1)}$. 

%
%The last property of \sid{} we discuss is the diamond
%  property~\cite[\S 30.3]{Pierce02}.
%
\begin{theorem}[Confluence of Reachable Processes]\label{thm:confluence}
Let $\Process$ be a reachable process.  
If $\Process \red \Process_1$ and
 $\Process \red \Process_2$, 
 then either  $\Process_1 \equiv \Process_2$ or 
 there is
 $\Process_3$ 
  such that 
  $\Process_1 \red \Process_3$ and 
 $\Process_2 \red \Process_3$.
% Diagrammatically, 
% \[
%\xymatrix{
%&\ar@{->}[dl] \Process   \ar@{->}[dr] &  \\ 
%\Process_1 \ar@{.>}[dr]  &  & \Process_2 \ar@{.>}[dl] \\
%  & \Process_3 &
%}
%\]
% \[
%\xymatrix{
%&\ar@{->}[dl]_>>{*} \Process   \ar@{->}[dr]^>>{*} &  \\ 
%\Process_1 \ar@{.>}[dr]_>>{*}  &  & \Process_2 \ar@{.>}[dl]^>>{*} \\
%  & \Process_3 &
%}
%\]
% 
\end{theorem}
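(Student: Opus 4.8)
The statement is a one-step diamond property, so the plan is to do a case analysis on the pair of reductions $\Process\red\Process_1$ and $\Process\red\Process_2$ and, in each case, either observe that $\Process_1\equiv\Process_2$ or exhibit a common reduct $\Process_3$ with $\Process_1\red\Process_3$ and $\Process_2\red\Process_3$; there is no need to lift to $\red^*$. I would work up to structural congruence so that \refrule{r-cong} is absorbed, and attach to each reduction step its \emph{footprint}: the set of one or two threads it modifies, plus (for \refrule{r-open}) the invoked server. Since $\Process$ is reachable it is typeable (\cref{theorem:subjectreductionprocesses}) and well-polarised (\cref{corollary:reachableimpliesacyclic}), and both facts are needed to control footprints; in particular typeability, via rule \refrule{par}, forces thread and server names to be distinct, which already rules out the spurious non-determinism of a process such as $\new\var(\thread\var{\return~0}\parop\thread\var{\return~1}\parop\thread\varY{\upkf~\ChannelC^+~(\streamf~\var)})$.

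First I would record two determinacy facts. (i) Expression reduction is deterministic: the evaluation contexts $\Context$ enjoy unique decomposition, so an expression is either a normal form or has a unique redex in evaluation position; moreover the shapes $\return~\Expression$, $\PContext[\future~\Expression]$, $\PContext[\open~\Channel]$, $\PContext[\send~\Channel^\Polarity~\Expression]$, $\PContext[\receive~\Channel^\Polarity]$ and ``reducible by \refrule{r-thread}'' are pairwise incompatible, and the decomposition $\PContext[\cdot]$, when it exists, is unique. Hence at most one of \refrule{r-return}, \refrule{r-future}, \refrule{r-open}, \refrule{r-comm}, \refrule{r-thread} can fire with a given thread in its footprint, and the reduct is then determined by that thread (up to the partner for \refrule{r-comm} and the server for \refrule{r-open}). (ii) Communication partners are unique: by \cref{lemma:linearity} and \cref{lemma:rt}, whenever $\Channel^\Polarity$ is ready in a thread there is exactly one other thread containing $\Channel^{\co\Polarity}$, with dual session type; together with (i) this makes the \refrule{r-comm} redex on a given channel unique when enabled, and forces two \refrule{r-comm} redexes on distinct channels to act on four pairwise distinct threads.

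Now the case analysis. If neither reduction is \refrule{r-return}, then by (i) and (ii) their thread-sets are either equal — so the redexes coincide and $\Process_1\equiv\Process_2$ — or disjoint, in which case each redex survives the other (choosing the names freshly generated by \refrule{r-open}/\refrule{r-future} distinct) and a routine commutation produces $\Process_3$; two \refrule{r-open} steps invoking the same server also commute, since the server is not consumed. The remaining case is that one reduction is \refrule{r-return}, say $\Process\equiv\new\varX(\thread\varX{\return~\Expression}\parop\ProcessP)\red\ProcessP\subst\Expression\varX$. As the thread $\thread\varX{\return~\Expression}$ can only fire \refrule{r-return}, the other reduction either coincides with it (again $\Process_1\equiv\Process_2$) or takes place inside $\ProcessP$, i.e.\ $\ProcessP\red\ProcessP'$ and $\Process_2\equiv\new\varX(\thread\varX{\return~\Expression}\parop\ProcessP')$. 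For this I would prove that substitution commutes with reduction, that is, if $\ProcessP\red\ProcessP'$ then $\ProcessP\subst\Expression\varX\red\ProcessP'\subst\Expression\varX$, by induction on the derivation of $\ProcessP\red\ProcessP'$, using the Substitution Lemma (\cref{lemma:substitution}) for the expression-level rules and the fact that substituting a variable affects neither endpoints, nor server bodies, nor the monadic redex structure; then $\Process_1=\ProcessP\subst\Expression\varX\red\ProcessP'\subst\Expression\varX$ and $\Process_2\red\ProcessP'\subst\Expression\varX$, closing the diamond. When $\ProcessP\red\ProcessP'$ is itself an \refrule{r-return} on a thread $\varY\neq\varX$, the argument additionally uses substitution composition, which is available because well-polarisation (condition~(\ref{c4}) of \cref{subsec:wpp}) forbids two thread bodies from mentioning each other's names, so at most one of $\varX\in\fv(\ExpressionF)$, $\varY\in\fv(\Expression)$ holds.

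The hard part will be the \refrule{r-return} case: formulating and proving ``substitution commutes with reduction'' uniformly over all process reduction rules (including nested \refrule{r-return}), and invoking well-polarisation at exactly the two points where a naive argument fails — to forbid circular thread-name dependencies (condition~(\ref{c4})) and, through \cref{lemma:linearity} and \cref{lemma:rt}, to secure uniqueness of communication partners so that \refrule{r-comm} overlaps only trivially. The remaining bookkeeping — structural congruence and the fresh names introduced by \refrule{r-open} and \refrule{r-future} — is routine but must be handled with care.
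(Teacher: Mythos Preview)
Your proposal is correct and takes essentially the same approach as the paper: a case analysis on the pair of reduction rules, with well-polarisation (specifically \cref{lem:Luca}(\ref{lem:Luca4})) invoked to rule out mutual name dependencies when both reductions are \refrule{r-return}. The only packaging difference is that the paper computes the common reduct explicitly for \refrule{r-return} against each other rule (writing out the \refrule{r-comm} case and declaring the rest ``similar''), whereas you fold those cases into a single ``substitution commutes with reduction'' lemma; note, however, that \cref{lemma:substitution} is a \emph{typing} substitution lemma and not the syntactic commutation you actually need for that step.
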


\begin{proof}
The proof proceeds by case analysis.

\begin{enumerate}

\item Suppose rule
\refrule{r-return} is not applied.
%Since the redexes are non-overlapping, it is easy to see
%that there is a common reduct 
%$\ProcessQ'_1 \parop \ProcessQ'_2$
% since all cases follow the same pattern
% where $\Process \equiv \ProcessQ_1 \parop \ProcessQ_2$
%and $\ProcessQ_1 \red \ProcessQ'_1$ and
%$\ProcessQ_2 \red \ProcessQ'_2$.
Since the redexes are non-overlapping, it is easy to see that 
 $\Process \equiv \ProcessQ_1 \parop \ProcessQ_2$ and $\Process_1 \equiv \ProcessQ'_1 \parop \ProcessQ_2$ and $\Process_2 \equiv \ProcessQ_1 \parop \ProcessQ'_2$ 
 from 
 $\ProcessQ_1 \red \ProcessQ'_1$ and
$\ProcessQ_2 \red \ProcessQ'_2$. The 
common reduct is then 
$\ProcessQ'_1 \parop \ProcessQ'_2$. 
% \[
%\xymatrix{
%&\ar@{->}[dl] \ProcessQ_1 \parop \ProcessQ_2   \ar@{->}[dr] &  \\ 
%\ProcessQ'_1 \parop\ProcessQ_2   \ar@{.>}[dr]  &  &
% \ProcessQ_1 \parop \ProcessQ'_2 \ar@{.>}[dl] \\
%  & \ProcessQ'_1 \parop \ProcessQ'_2
%}
%\]
\item Let $\Process\equiv\new{ \varX \varY}
(\thread \varX \return~\Expression \parop 
\thread \varY  \return~\ExpressionF \parop
\ProcessR) $ and suppose we apply rule
\refrule{r-return}
in both directions. 
 Typing rule $\defrule{par}$  implies $\var\not=\varY$. 
 Since $\Process$  is reachable, and then well-polarised by~\cref{corollary:reachableimpliesacyclic},
we cannot have both $\varY  \in \fv(\Expression)$ and 
$\varX \in \fv(\ExpressionF)$ by
 \ri{\cref{lem:Luca}}{\cref{lem:Luca4}}.
  Suppose $\varY \not \in \fv(\Expression)$. Then\comma 
   \[
   \begin{array}{ll}
   \Process_1 & \equiv \new{ \varX}
(\thread \varX \return~\Expression
\parop 
\ProcessR\sust{\varY}{\ExpressionF}) \text{ and }
\\
\Process_2 & 
\equiv\new{\varY}
(
\thread \varY  \return~(\ExpressionF \sust{\varX}{\Expression})
 \parop
\ProcessR\sust{\varX}{\Expression})
\end{array}
\]
and the common reduct of  $\Process_1$ and $\Process_2$ is
$
 \ProcessR  
      \sust{\varX}{\Expression} 
      \sust{\varY}{ \ExpressionF \sust{\varX}{\Expression}} 
$.
%
%\[
%\xymatrix{
%&\ar@{->}[dl] \Process   \ar@{->}[dr] &  \\ 
% \Process_1  \ar@{.>}[dr]  &  &
% \Process_2 \ar@{.>}[dl] \\
%  & \ProcessR  
%      \sust{\varX}{\Expression} 
%      \sust{\varY}{ \ExpressionF \sust{\varX}{\Expression}}
%}
%\]

\item Let $\Process\equiv\new \varZ 
 (\thread\varX{\PContext[\send~\Channel^\Polarity~\Expression]}
    \parop
    \thread\varY{\PContext'[\receive~\Channel^{\co\Polarity}]}
    \parop
    \thread \varZ \return~\ExpressionF \parop \ProcessR)$ and suppose that in one direction we apply \refrule{r-return}
 and in the other direction we apply \refrule{r-comm}. Then\comma
%By applying \refrule{r-return}, we obtain  
 \[
   \begin{array}{ll}
   \Process_1 & \equiv   \thread\varX{\PContext \sust{\varZ}{\ExpressionF}
    [\send~\Channel^\Polarity~{\Expression \sust{\varZ}{\ExpressionF}}]}
    \parop
    \thread\varY{\PContext' \sust{\varZ}{\ExpressionF} [\receive~\Channel^{\co\Polarity}]}
    \parop \ProcessR\sust{\varZ}{\ExpressionF}\\
 \Process_2 & \equiv   \new \varZ( \thread\varX{\PContext [\return~\Channel^\Polarity]}
    \parop
    \thread\varY{\PContext'
    [\return~\Pair{\Expression} {\Channel^{\co\Polarity}}]} 
    \parop
    \thread \varZ \return~\ExpressionF 
    \parop    
    \ProcessR)
\end{array}\]
%\begin{equation}
%\label{equation:reducereturn}
%\thread\varX{\PContext \sust{\varZ}{\ExpressionF}
%    [\send~\Channel^\Polarity~{\Expression \sust{\varZ}{\ExpressionF}}]}
%    \parop
%    \thread\varY{\PContext' \sust{\varZ}{\ExpressionF} [\receive~\Channel^{\co\Polarity}]}
%    \parop \ProcessR\sust{\varZ}{\ExpressionF}
%    \end{equation}
%In the other direction, we apply 
%\refrule{r-comm} and obtain:
%\begin{equation}
%\label{equation:reducesendreceive}
%   \new \varZ( \thread\varX{\PContext [\return~\Channel^\Polarity]}
%    \parop
%    \thread\varY{\PContext'
%    [\return~\Pair{\Expression} {\Channel^{\co\Polarity}}]} 
%    \parop
%    \thread \varZ \return~\ExpressionF 
%    \parop    
%    \ProcessR)
%\end{equation}
%
It is easy to see that $\Process_1$ and $\Process_2$ %\eqref{equation:reducereturn} and \eqref{equation:reducesendreceive} 
have the  common reduct: 
\[
    \thread\varX{\PContext \sust{\varZ}{\ExpressionF} [\return~\Channel^\Polarity]}
    \parop
    \thread\varY{\PContext'\sust{\varZ}{\ExpressionF}
    [\return~\Pair{\Expression\sust{\varZ}{\ExpressionF}} {\Channel^{\co\Polarity}}]} \parop
    \ProcessR\sust{\varZ}{\ExpressionF}
\]

\item The remaining cases are similar to the last one.
\qedhere
\end{enumerate}
\end{proof}

\section{Related Work}
\label{section:relatedwork}

To the best of our knowledge, \sid\ is the first calculus that
combines {\em session-based} communication
primitives~\cite{DBLP:conf/esop/HondaVK98,DBLP:journals/iandc/Vasconcelos12}
with a {\em call-by-need} operational semantics
\cite{WadworthPhDthesis,AriolaFMOW95,MaraistOW98}.

There are many calculi with functional and concurrent features, one of
the more interesting ones being Boudol's blue calculus \cite{B98}.
 In the context of communication-centric calculi, infinite
data are explicitly considered in~\cite{DBLP:conf/icfp/LindleyM16,
  clmrv14} and \lq\lq implicitly\rq\rq\ handled
in~\cite{TCP13,ToninhoCP14}, where 
recursive/coinductive sessions are used to
encode 
%recursive session types are used to
%indirectly handle 
infinite communications. 

%\bcompaula delete this long paragraph on \cite{TCP13} ? \ecompaula
Toninho et al. \cite{TCP13} integrate \ the Curry-Howard interpretation
of linear sequent calculus as session-typed processes in a functional
language. The main construct is a contextual monad encapsulating open
concurrent computations, which can be communicated between processes
in the style of higher-order processes. This allows for example to
construct a stream transducer. 
%Infinite loops are avoided since
%communications are synchronous, so outputs are produced only when
%matching inputs require them. 
In the same framework \cite{ToninhoCP14}
handles infinite data by encoding them as coinductive sessions.

Lindley and Morris \cite{DBLP:conf/icfp/LindleyM16} combine
recursive and co-recursive data types with communication primitives.
They have fold and unfold over both recursive and corecursive session types instead of a general
fixed point operator.
The constructors ${\sf in}$ and ${\sf out}$ witness the
isomorphism of recursion and corecursion. 
The operational semantics 
is call-by-value, but sending code is allowed  because fold and unfold are values.

 SSCC~\cite{clmrv14}
offers an explicit primitive to deal with streams.
Our language enables the modelling of more intricate interactions
between infinite data structures and infinite communications.
Besides, the type system of SSCC considers only finite sessions types
and does not guarantee progress of processes.

Following~\cite{Nakano00:lics},
we use a modal operator $\bullet$ to restrict the application of the
fixed point operator and  exclude  degenerate forms of divergence.
This paper is an improvement over past typed lambda calculi with a
temporal modal operator in two respects.
Firstly, we do not need any subtyping relation as in
\cite{Nakano00:lics} and secondly \sid\ programs are not cluttered
with constructs for the introduction and elimination of individuals of
type $\bullet$ as in
\cite{KrishnaswamiB11,severidevriesICFP2012,KrishnaswamiBH12,AM13,DBLP:conf/popl/CaveFPP14,CBGB15}.
A weak criterion to ensure productivity of infinite data is the
\emph{guardedness condition} \cite{Coquand93}.
We do not need such condition because we can type more normalising
expressions (such as $\dspf$ in~\eqref{eq:dspgood}) using the modal
operator $\tbullet$.

Futures originated in functional programming as annotations for implicitly
parallelising programs~\cite{H85}. Different operational semantics for an idealised functional language with futures are discusses in~\cite{FlanaganF99}. 

The papers more related to  ours  are~\cite{NSS06} and~\cite{SabelS11}. The call-by-value calculus of~\cite{NSS06} models Alice~\cite{RBTBS06}, a concurrent extension of standard ML~\cite{MTHM97}, where synchronisation is  based on futures as placeholders for values. A linear type system assures safety. The call-by-need $\lambda$-calculus in~\cite{SabelS11} provides a semantic foundation for the concurrent Haskell extended with futures. It shows the correctness of several program transformations using contextual semantics. Our calculus shares threads with these calculi. A main difference is the way in which the threads interact: through thread names and cells in~\cite{NSS06} and through shared memory in form of Haskell's mutable variable and a global heap of shared expressions   in~\cite{SabelS11}.   Recursion is obtained by allowing the body of a thread to contain the thread name in ~\cite{NSS06} and by recursive heaps in~\cite{SabelS11}.
% A key novelty of our calculus is the delay type constructor.   

%Futures originated in functional programming and related paradigms for
%parallelising a program \cite{FlanaganF99}.
%%
%The call-by-need $\lambda$-calculus with futures in~\cite{SabelS11} is
%used for studying contextual equivalence and has no type system.
%The call-by-value $\lambda$-calculus in~\cite{NSS06}
%has several variants of futures 
%which are particular case of ours. Besides,  their type system
%does not control the use of dual endpoints. 

In the session calculi literature, the word \lq\lq progress\rq\rq\ has
two different meanings.
Sometimes it is synonym of deadlock freedom~\cite{BCDLDY08}, at other
times it means lock freedom, i.e. that each offered communication in
an open session eventually happens~\cite{DY11,Padovani14B,CDYP2016}.
Reachable  %Typed 
\sid\ processes cannot be stuck, and if they do not terminate
they %communicate and/or 
generate new threads infinitely often.  This
means that the property of progress satisfied by our calculus is
stronger than that of~\cite{BCDLDY08} and weaker than that
of~\cite{DY11,Padovani14B,CDYP2016}.

\section{Conclusions}
\label{section:conclusions}

%This work addresses the problem of studying the interaction between
%communications and infinite data structures by means of a calculus
%that combines sessions with lazy evaluation.
This paper %works 
studies the interaction between
communications and infinite data structures by means of a calculus
that combines sessions with lazy evaluation.
A distinguished feature of \sid\ is the possibility of modelling
computations in which infinite communications interleave with the
production and consumption of infinite data (\cf the examples in
\cref{section:Introduction}).
Our examples considered infinite streams for simplicity. However, more
general infinite data structures can be handled in \sid. An evaluation
of the expressiveness of \sid\ in dealing with (distributed)
algorithms based on such structures is scope for future
investigations.

The typing discipline we have developed for \sid\ guarantees
normalisation of expressions with a type other than $\infinite$ and
progress of (reachable) processes, besides the standard properties of
sessions (communication safety, protocol fidelity, determinism). The
type system crucially relies on a modal operator $\bullet$ which has
been used in a number of previous
works~\cite{Nakano00:lics,KrishnaswamiB11,severidevriesICFP2012,DBLP:conf/popl/CaveFPP14}
to ensure productivity of well-typed expressions.
In this paper, we have uncovered for the first time some intriguing
interactions between this operator and the typing of impure
expressions with the monadic $\mkbasictype{IO}$ type constructor.
Conventionally, the type of $\future$ primitive is simply
$\tio\Type \arrow \tio\Type$ and says nothing about the semantics of
the primitive itself. In our type system, the type of $\future$
reveals its effect as an operator that turns a delayed computation
into another that can be performed immediately, but which produces a
delayed result.

%
% Our work was motivated by some interesting patterns of computations
% over infinite data structures and our results have a twofold
% interpretation.
% %
% Firstly, \sid\ can be regarded as the core of a lazy functional
% language featuring (higher-order) communications of infinite data
% structures.
% %
% Secondly, \sid\ is the first language to contemplate infinite data in
% session types (as discussed in \Cref{section:Introduction} this is
% not done elsewhere to the best of our knowledge).

As observed at the end of \Cref{section:relatedwork} and formalised
in \cref{theorem:reducestoreturnI}, our notion of progress sits
somehow in between deadlock and lock freedom.
It would be desirable to strengthen the type system so as to guarantee
the (eventual) execution of all pending communications and exclude,
for instance, the degenerate examples discussed in
\cref{sec:propproc}. This is relatively easy to achieve in conventional
process calculi, where expressions only consist of names or ground
values~\cite{BCDLDY08,Padovani14B,CDYP2016}, but it is far more
challenging in the case of \sid{}, where expressions embed the
$\lambda$-calculus.
We conjecture that one critical condition to be imposed is to forbid
postponing linear computations, namely restricting the application of
{\refrule\introbullet} to non-linear types. Investigations in this
direction are left for future work.

Another obvious development, which is key to the practical
applicability of our theory, is the definition of a type inference
algorithm for our type system. 
% In this respect, the modal operator
%$\bullet$ is challenging to deal with because it is intrinsically
%non-structural, not corresponding to any expression form in the
%calculus.
First steps in this direction have already been taken
in \cite{severifossacs2017}
 by solving type inference  
 for the pure part of \sid \ (without $\tio$ and concurrency)
 combining unification of types with  integer linear programming.

\subsection*{Acknowledgments.} 
We are grateful to the anonymous reviewers of COORDINATION'16 and of
LMCS for their useful suggestions, which led to substantial
improvements.

\bibliographystyle{abbrv}
\bibliography{references}
% \mmic{}{We must agree on the way conferences and journal are cited, I propose short names for conferences (in some references we have only those) and full names for journal (some abbreviations are arbitrary). We must also decide if the titles must be all with initial uppercase characters or not, for sure Haskell needs it.}{}

\iflong
\newpage
\appendix

\section{Proof of \cref{theorem:soundness}}%{Detailed Proofs of Normalisation}
\label{appendix:normalisation}

\begin{lemma}\label{lem:B}
\begin{enumerate}
\item\label{lemma:closedunderreductionandexpansion}
Let $\Expression\red\Expression'$. 
Then\comma 
$\Expression\in\indti\Type\ind$ iff $\Expression'\in\indti\Type\ind$
for all $\ind \in \natset$ and type $\Type$.
\item \label{lemma:constants}
If $\Constant:\Type$ and $\Type \in \typeof(\Constant)$, then
 $\Constant\in \bigcap_{\ind \in \natset} \indti{\Type}{\ind}$.
\end{enumerate}
\end{lemma}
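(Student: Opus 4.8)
The plan is to prove the two items in this order, since (2) relies on (1).

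For item (1), I would induct on the pair $(\ind,\typerank(\Type))$ — the well-founded order used to define $\indti{\cdot}{\cdot}$ — with a case analysis on the shape of $\Type$ (the clauses $\indti{\tbullet\Type}{0}=\EE$ and $\indti{\tbullet\Type}{\ind+1}=\indti\Type\ind$ being, respectively, trivial and an instance of the induction hypothesis at a strictly smaller $(\ind,\typerank(\Type))$). The one external ingredient is that reduction of expressions is \emph{deterministic}: the grammar of evaluation contexts gives every non-normal expression a unique redex, so $\Expression\red\Expression'$ forces $\{\ExpressionG\mid\Expression\red^{*}\ExpressionG\}=\{\Expression\}\cup\{\ExpressionG\mid\Expression'\red^{*}\ExpressionG\}$, and in particular $\Expression$ and $\Expression'$ have the same normal forms. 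Every remaining clause of $\indti\Type\ind$ is a union of $\WNVAR$, possibly $\setNIO$, and sets of the form $\{\ExpressionG\mid\ExpressionG\red^{*}Y\ \wedge\ (\text{conditions on proper sub-terms of }Y)\}$ where the target shape $Y$ (one of $\unit$, $\Channel^{\Polarity}$, $\Channel$, $\lambda\varX.\ExpressionF$, $\Pair{\cdot}{\cdot}$, $\return~\cdot$, $\PContext[\Expression_0]$, $\Context[\varX]$, $\Context[\Constant]$) is itself a normal form, the sole exception being $\Context[\Constant]$ with $\Constant=\bind$ and $\Context$ completing an \refrule{r-bind} redex. For every normal-form target, and for $\WNVAR$ and $\setNIO$, closure in both directions is immediate: in the expansion direction $\Expression'\red^{*}Y$ gives $\Expression\red^{*}Y$; in the reduction direction determinism puts $\Expression'$ on the path $\Expression\red^{*}Y$, since $\Expression\red\Expression'$ prevents $\Expression$ from already having shape $Y$. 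For the arrow clauses, which additionally demand $\Expression\ExpressionH\in\indti\TypeS\indj$ for all $\ExpressionH\in\indti\Type\indj$ with $\indj\le\ind$, I use $\Expression\ExpressionH\red\Expression'\ExpressionH$ (rule \refrule{r-ctxt} with context $\Hole~\ExpressionH$) and invoke the induction hypothesis at $(\indj,\typerank(\TypeS))$, which is strictly below $(\ind,\typerank(\Type\arrow\TypeS))$. The remaining sub-case, $\Expression=\Context[\bind]=\Bind{\Return\ExpressionE}{\ExpressionF}~\ExpressionH_1\cdots\ExpressionH_k\red\ExpressionF~\ExpressionE~\ExpressionH_1\cdots\ExpressionH_k=\Expression'$, is closed by combining the applicativity requirement with the same induction hypothesis.

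For item (2), $\unit:\Unit$ is immediate, since $\unit\red^{*}\unit$ places it in the second summand of $\indti\Unit\ind$. For every other constant $\Constant$, each of its types has a curried shape $\Type_1\Arrow\cdots\Arrow\Type_n\Arrow\TypeS$ with $\TypeS$ not an arrow ($\Arrow$ being $\arrow$ or $\larrow$). I reduce the claim ``$\Constant\in\indti{\Type_1\Arrow\cdots\Arrow\Type_n\Arrow\TypeS}{\ind}$ for all $\ind$'' to
\[
  \Constant~\ExpressionE_1\cdots\ExpressionE_n\in\indti\TypeS m\quad\text{whenever } \ExpressionE_1\in\indti{\Type_1}{m},\dots,\ExpressionE_n\in\indti{\Type_n}{m},
\]
by repeatedly applying the $\Context[\Constant]$-summand of the arrow interpretation — each partial application $\Constant~\ExpressionE_1\cdots\ExpressionE_k$ equals $\Context[\Constant]$ for the evaluation context $\Context=\Hole~\ExpressionE_1\cdots\ExpressionE_k$, hence reduces to it in zero steps — and using monotonicity of the interpretation (\cref{lem:A}(\ref{lemma:monotonicityofinterpretation})) to bring all argument indices down to the smallest one. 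The displayed fact is then checked constant by constant: $\return~\ExpressionE_1\red^{*}\return~\ExpressionE_1$ lies in the $\return$-summand of $\indti{\tio\Type}m$; for $\open$, $\send$, $\receive$, $\future$ the fully applied term has the form $\PContext[\Expression_0]$ with $\PContext=\Hole$, hence lies in $\setNIO$ — after first disposing, for $\open$/$\send$/$\receive$, of the sub-case where the channel argument reduces to a variable-headed normal form, which instead lands the application in $\WNVAR$, a subset of every interpretation by \cref{lem:A}(\ref{lemma:wnvar}); $\pair~\ExpressionE_1~\ExpressionE_2=\Pair{\ExpressionE_1}{\ExpressionE_2}$ sits in the product summand; and for $\bind$ one case-splits on whether $\ExpressionE_1\in\indti{\tio\Type}m$ is in $\WNVAR$, in $\setNIO$, or reduces to $\return~\ExpressionG$, the last case rewriting $\bind~(\return~\ExpressionG)~\ExpressionE_2\red\ExpressionE_2~\ExpressionG$ via \refrule{r-bind} and then concluding by item (1) (closure under expansion) together with a small applicativity fact — $\ExpressionE\in\indti{A\Arrow B}{m}$ and $\ExpressionH\in\indti{A}{m}$ imply $\ExpressionE~\ExpressionH\in\indti{B}{m}$ — which itself follows by distinguishing $\ExpressionE\in\WNVAR$ (then $\ExpressionE~\ExpressionH$ is variable-headed) from the other summands, where the applicativity clause applies directly.

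I expect the main obstacle to be the reduction-closure of the arrow interpretation in item (1), together with the matching $\bind$ case of item (2): these are the only points where the target shapes of the type interpretation are not plainly normal forms, so one must reason explicitly with evaluation contexts and with the single non-normal shape $\Context[\bind]$, whereas everything else reduces to bookkeeping driven by determinism and the induction hypothesis.
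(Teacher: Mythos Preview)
Your proposal is correct and takes essentially the same route as the paper. Item~(1) is dispatched in the paper by the single line ``by induction on $(\ind,\typerank(\Type))$'', and your expanded treatment---determinism of $\red$, case analysis on the clauses of $\indti{\cdot}{\cdot}$, and transfer of the applicativity condition via $\Expression\,\ExpressionH\red\Expression'\,\ExpressionH$ together with the induction hypothesis at a strictly smaller pair---is exactly the unfolding of that one line. For item~(2) the paper shows only the $\bind$ case, and your three-way split on the first argument being in $\WNVAR$, in $\setNIO$, or reducing to $\return~\ExpressionG$ matches the paper's proof verbatim, including the appeal to item~(1) for closure under expansion in the third sub-case.

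One small remark: your claim that the ``sole'' non-normal target shape $\Context[\Constant]$ arises from $\Constant=\bind$ completing an \refrule{r-bind} redex overlooks the analogous split-redex shape $\Split{\Pair{\Expression_1}{\Expression_2}}{\varX}{\varY}{\ExpressionF}=\Context[\pair]$. This does not affect your argument---the same reasoning you give for the $\bind$ case (applicativity plus the induction hypothesis) disposes of it---but it is worth recording for completeness.
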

\begin{proof} (\cref{lemma:closedunderreductionandexpansion}). 
By induction on $(\ind, \typerank(\Type))$.

(\cref{lemma:constants}).  We only consider  the case  $\Constant = \bind$ and prove
that 
\[
 \bind \in \indti{\tio\Type\arrow(\Type\larrow\tio\TypeS)\larrow\tio\TypeS}{\ind}\]
  Suppose  
$\Expression_1 \in \indti{\tio\Type}{\indj}$
 and 
$\Expression_2 \in \indti{\Type\larrow\tio\TypeS}{\indj}$
for $\indj \leq \ind$. We show that $\bind ~\Expression_1~\Expression_2\in\indti{\tio\TypeS}{\ind}$. 
By definition of $\indti{\tio\Type}{\indj}$\comma we have three cases:
\begin{enumerate}

\item Case $\Expression_1 \in \WNVAR$. Hence
$  \bind~\Expression_1~\Expression_2 \red^{*}
    \bind~\Context[\varX]~\Expression_2
    $. 
Taking $\Context'[x] = \bind~\Context[\varX]~\Expression_2$\comma we have
that 
$
\bind~\Expression_1~\Expression_2 \in \WNVAR$ and 
$\WNVAR\subseteq \indti{\tio\TypeS}{\indj}
$
by \ri{\cref{lem:A}}{\cref{lemma:wnvar}}.
\item Case $\Expression_1 \in \HHPC$. 
 Hence $\Expression_1 \red^{*}
\PContext[\Expression_0] $ and 
$\Expression_0 \in \{
\Send{\Channel^{\Polarity}}{\Expression'_1},
          \Receive{\Channel^{\Polarity}},
          \Open{\Channel},
\Future \Expression'_1 \}$.  Then\comma
 \[  \bind~\Expression_1~\Expression_2 \red^{*} \bind~\PContext[\Expression_0]~\Expression_2  \in \HHPC
 \] which implies $\bind~\PContext[\Expression_0]~\Expression_2  \in\indti{\tio{\TypeS}}{\indj}$ by definition of $\indti{\tio{\TypeS}}{\indj}$.
By \cref{lemma:closedunderreductionandexpansion} we conclude that 
$ \bind~\Expression_1~\Expression_2  \in \indti{\tio\TypeS}{\indj}$.

\item  Case $ \Expression_1 \red^* \return~ \Expression'_1$ and 
 $\Expression'_1\in \indti{\Type}{\indj}$. 
 This gives $\Expression_2~ \Expression'_1\in\indti{\tio\TypeS}{\indj}$.  Since
 \[  \bind~\Expression_1~\Expression_2 \red^{*}
    \bind~(\return~\Expression'_1)~\Expression_2
     \red     \Expression_2~\Expression'_1
     \]
  we conclude that $ \bind~\Expression_1~\Expression_2  \in\indti{\tio\TypeS}{\indj}$ 
   by \cref{lemma:closedunderreductionandexpansion}.
   \qedhere
\end{enumerate}
\end{proof}

\begin{lemma}\label{lem:C}
\begin{enumerate}
\item\label{lemma:modelscontext}
If $\funsubst \modelsi \TypeContext_1 + \TypeContext_2$,
then
$\funsubst \modelsi \TypeContext_1$ and
$\funsubst \modelsi \TypeContext_2$.
\item \label{lemma:modelscontextlessindex}
If $\funsubst \modelsi \TypeContext$,  then
$\funsubst \modelsj \TypeContext$ for all $\indj \leq \ind$.
\end{enumerate}
\end{lemma}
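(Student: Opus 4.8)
The plan is to prove both items directly from \cref{definition:modelsi}, with the only nontrivial ingredient being the monotonicity of the type interpretation already established in \ri{\cref{lem:A}}{\cref{lemma:monotonicityofinterpretation}}. Recall that $\funsubst \modelsi \TypeContext$ means precisely that $\funsubst(\varX) \in \indti{\Type}{\ind}$ for every binding $\varX : \Type$ occurring in $\TypeContext$, so everything reduces to tracking which bindings occur in the environments involved.

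For \cref{lemma:modelscontext}, the key observation is purely about the combination operator $+$: whenever $\TypeContext_1 + \TypeContext_2$ is defined, every binding of $\TypeContext_1$ and every binding of $\TypeContext_2$ occurs \emph{with the same type} in $\TypeContext_1 + \TypeContext_2$. This is immediate from the inductive definition of $+$, since adding a binding $\Name : \Type$ coming from $\TypeContext_2$ to $\TypeContext_1$ either introduces it afresh (when $\Name \notin \dom(\TypeContext_1)$) or leaves the environment unchanged (when $\Name : \Type \in \TypeContext_1$ and $\unq\Type$); in no case is an existing binding deleted or retyped. Hence $\varX : \Type \in \TypeContext_1$ implies $\varX : \Type \in \TypeContext_1 + \TypeContext_2$, and symmetrically for $\TypeContext_2$. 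Now if $\funsubst \modelsi \TypeContext_1 + \TypeContext_2$, then $\funsubst(\varX) \in \indti{\Type}{\ind}$ for every $\varX : \Type \in \TypeContext_1 + \TypeContext_2$, in particular for every $\varX : \Type \in \TypeContext_1$, which yields $\funsubst \modelsi \TypeContext_1$; the same argument gives $\funsubst \modelsi \TypeContext_2$.

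For \cref{lemma:modelscontextlessindex}, I would first note that a straightforward induction on $\ind - \indj$ using \ri{\cref{lem:A}}{\cref{lemma:monotonicityofinterpretation}} gives $\indti{\Type}{\ind} \subseteq \indti{\Type}{\indj}$ for all $\indj \leq \ind$ and all types $\Type$. Then, assuming $\funsubst \modelsi \TypeContext$ and taking an arbitrary $\varX : \Type \in \TypeContext$, we have $\funsubst(\varX) \in \indti{\Type}{\ind} \subseteq \indti{\Type}{\indj}$, hence $\funsubst(\varX) \in \indti{\Type}{\indj}$; since $\varX$ was arbitrary, $\funsubst \modelsj \TypeContext$.

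Neither part presents a real obstacle: \cref{lemma:modelscontext} is bookkeeping on the shape of $\TypeContext_1 + \TypeContext_2$, and \cref{lemma:modelscontextlessindex} is a one-line consequence of monotonicity. The one point deserving explicit mention is that $+$ never alters the type attached to a name that survives the combination — this is exactly what makes the semantic condition transfer from $\TypeContext_1 + \TypeContext_2$ down to each summand — and that the semantic judgement only constrains variable bindings, so channel and endpoint entries play no role in the argument.
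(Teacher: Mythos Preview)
Your proof is correct and follows exactly the approach the paper takes: the paper simply says \cref{lemma:modelscontext} is an easy consequence of \cref{definition:modelsi} and \cref{lemma:modelscontextlessindex} follows from \ri{\cref{lem:A}}{\cref{lemma:monotonicityofinterpretation}}. You have merely spelled out the bookkeeping on $+$ and the one-step monotonicity inclusion that the paper leaves implicit.
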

\begin{proof}
\cref{lemma:modelscontext} is an easy consequence of~\cref{definition:modelsi}.

\cref{lemma:modelscontextlessindex}  follows from \ri{\cref{lem:A}}{\cref{lemma:monotonicityofinterpretation}}.
\end{proof}

\begin{proof}[Proof of ~\cref{theorem:soundness}]
We prove that $\TypeContext \modelsi \Expression:\Type$ for all $\ind \in \mathbb{N}$
by induction on  $\wte{\TypeContext}\Expression\Type$.
 We only show some interesting cases.

%\item 
\mycase{Rule  \refrule{\const}} 

\noindent
It  
follows from \ri{\cref{lem:B}}{\cref{lemma:constants}}.

\mycase{Rule \refrule\introbullet}
The derivation ends with the rule:
\[
\inferrule[\defrule\introbullet]{
        \wte{\TypeContext}{\Expression}\Type
        }{
        \wte{\TypeContext}{\Expression}{\tbullet\Type}
        }
\]

\noindent 
Suppose $i=0$. Then\comma 
\[
\funsubst(\Expression) \in \indti{\Type} {0}= \EE
\]
Suppose $i >0$ and  $\funsubst \modelsi \TypeContext$.
It follows from \ri{\cref{lem:C}}{\cref{lemma:modelscontextlessindex}} 
 that
$\funsubst \models_{i-1} \TypeContext$.
By induction hypothesis\comma 
  $\TypeContext \modelsj \Expression:\Type$ for all $j \in \mathbb{N}$.   
In particular\comma 
$\TypeContext \models_{i-1} \Expression:\Type$.
Hence\comma 
$\funsubst(\Expression) \in \indti{\Type} {\ind-1}$
and 
\[
\funsubst(\Expression) \in \indti{\Type} {\ind-1}= \indti{\tbullet{\Type}}{\ind}
\]

 \mycase{Rule \refrule\elimarrow}
 
 \noindent The  derivation   ends with the rule:   
  \[
  \inferrule{
    \wte{\TypeContext_1}{\Expression_1}{\tbullet[n](\TypeS \arrow \Type)}
    \\
    \wte{\TypeContext_2}{\Expression_2}{\tbullet[n]\TypeS}
  }{
    \wte{\TypeContext_1 + \TypeContext_2}{\Expression_1\Expression_2}{\tbullet[n]\Type}
  }
  \]
  with $\TypeContext = \TypeContext_1 + \TypeContext_2$ and
  $\Expression = \Expression_1\Expression_2$.
By induction hypothesis\comma for all $\ind \in \natset$
\begin{equation}
\label{equation:inductiohypothesisoperator}
\TypeContext_1 \modelsi \Expression_1: \tbullet[n](\TypeS \arrow \Type)
\end{equation}
\begin{equation}
\label{equation:inductiohypothesisargument}
\TypeContext_2 \modelsi\Expression_2: \tbullet[n]\TypeS
\end{equation}
We have two cases:
\begin{enumerate}
\item Case $\ind < n$.
By \ri{\cref{lemma:interpretationofmanybullets}}{\cref{lemma:interpretationofmanybullets1}}\comma
$\indti{\tbullet[n] \Type} {\ind} = \EE$. We trivially  get  
\[
\funsubst (\Expression_1 \Expression_2) \in
\indti{\tbullet[n] \Type} {\ind}
\]
\item Case $\ind \geq n$. Suppose that $\funsubst \modelsi \TypeContext$. 
 It follows from \ri{\cref{lem:C}}{\cref{lemma:modelscontext}}
  that 
$\funsubst \modelsi \TypeContext_1$ and
$\funsubst \modelsi \TypeContext_2$.  
\begin{equation}
\label{equation:interpretationexpone}
\begin{array}{lll}
\funsubst(\Expression_1) & \in & \indti{\tbullet[n](\TypeS \arrow \Type)} \ind 
\mbox{ by \eqref{equation:inductiohypothesisoperator} } \\
              & =   & \indti{(\TypeS \arrow \Type)} {\ind-n} 
\mbox{ by \ri{\cref{lemma:interpretationofmanybullets}}{\cref{lemma:interpretationofmanybullets2}}} 
\end{array}
\end{equation}
\begin{equation}
\label{equation:interpretationexptwo}
\begin{array}{lll}
\funsubst(\Expression_2) & \in & \indti{\tbullet[n]\TypeS} \ind
\mbox{ by \eqref{equation:inductiohypothesisargument}} \\
   & =   & \indti{\TypeS} {\ind-n}
   \mbox{ by \ri{\cref{lemma:interpretationofmanybullets}}{\cref{lemma:interpretationofmanybullets2}}} 
   \end{array}
\end{equation}
By Definition of $\indti{(\TypeS \arrow \Type)}{\ind-n} $ and 
(\ref{equation:interpretationexpone})\comma 
there are 
two possibilities:
\begin{enumerate}
\item Case $\funsubst(\Expression_1) \in  \WNVAR $. Then\comma
\begin{equation}
\label{equation:applicationsoundness}
\funsubst(\Expression_1 \Expression_2) =
\funsubst(\Expression_1) \funsubst(\Expression_2)
\red^{*} \Context[\varX] \funsubst(\Expression_2)
\end{equation}
Hence\comma 
\[
\begin{array}{lll}
\funsubst(\Expression_1 \Expression_2) & \in \WNVAR & \mbox{by (\ref{equation:applicationsoundness})}  \\
&  \subseteq \indti{\tbullet[n] \Type}{\ind} &\mbox{ by \ri{\cref{lem:A}}{\cref{lemma:wnvar}}}.
\end{array}
\]
\item Case $ \funsubst(\Expression_1) \red^{*} \lambda x. \Expression' $
or $ \funsubst(\Expression_1) \red^{*} \Context[\Constant]$.
We also have that
\[
 \funsubst(\Expression_1)\Expression'' \in \indti{\Type}{\ind-n}  \ \ 
 \forall \Expression'' \in \indti{\TypeS}{\ind-n} 
 \]
 In particular %from  
 (\ref{equation:interpretationexptwo})\comma 
  implies 
\[
\funsubst (\Expression_1 \Expression_2) = \funsubst(\Expression_1) \funsubst(\Expression_2)
 \in \indti{\Type} {\ind -n}
\]
Since $\indti{\Type} {\ind -n}= \indti{\tbullet[n] \Type}{\ind}$ by  \ri{\cref{lemma:interpretationofmanybullets}}{\cref{lemma:interpretationofmanybullets2}}\comma we are done.
\end{enumerate}
\end{enumerate}

\mycase{Rule \refrule\introarrow}

\noindent The derivation   ends with the rule:   
    \[
    \inferrule{
      \wte{\TypeContext,\varX: \tbullet[n]\Type }{\Expression}{\tbullet[n]\TypeS}
    }{
      \wte{\TypeContext}{\Fun \varX \Expression}{\tbullet[n](\Type \arrow \TypeS)}
    }
    \]
 By induction hypothesis\comma for all $\ind \in \natset$ 
 \begin{equation}
\label{equation:inductionhypothesisbodyabstraction}
 \TypeContext,\varX: \tbullet[n]\Type \modelsi {\Expression} : {\tbullet[n]\TypeS}
 \end{equation}    
    We have two cases:   
    \begin{enumerate}    
    \item Case $\ind < n$. By \ri{\cref{lemma:interpretationofmanybullets}}{\cref{lemma:interpretationofmanybullets1}}\comma
$\indti {\tbullet[n](\Type \arrow \TypeS)} \ind = \EE$. We trivially 
 get 
     \[
        \funsubst (\Fun \varX \Expression) \in             \indti {\tbullet[n](\Type \arrow \TypeS)} \ind 
           \]    
    \item  Case $\ind \geq n$.  Suppose that $\funsubst \modelsi \TypeContext$. 
    By \ri{\cref{lemma:interpretationofmanybullets}}{\cref{lemma:interpretationofmanybullets2}}\comma
     it is enough  %we have 
    to prove 
    that 
    \[
    \funsubst (\Fun \varX \Expression) \in \indti {\Type \arrow \TypeS}{\ind - n}
    \]
    For this\comma suppose   
    $\ExpressionF \in  \indti{\Type}{\indj}$
    for $\indj \leq \ind -n$.
    % \indti{\tbullet[n]\Type}{\ind}$.  
    We consider the substitution function defined as 
    $\funsubst_0 =  \funsubst \cup \{ (\varX, \ExpressionF) \}$.
    We have that  
    \begin{equation}
    \label{equation:funsubstj+n}
    \funsubst_0\models_{j+n} \TypeContext, \varX :\tbullet[n]\Type 
    \end{equation}
    because
    \begin{enumerate}
    \item $\funsubst_0(\varX) = 
    \ExpressionF \in  \indti{\Type}{\indj} = \indti{\tbullet[n]
    \Type}{\indj+n}$
    by \ri{\cref{lemma:interpretationofmanybullets}}{\cref{lemma:interpretationofmanybullets2}}.
    
    \item $\funsubst_0 \models_{\indj+n} 
    \TypeContext$
    by \ri{\cref{lem:C}}{\cref{lemma:modelscontextlessindex}} 
    and the fact that $\funsubst_0 \modelsi \TypeContext$.
    \end{enumerate}
  It follows from \eqref{equation:inductionhypothesisbodyabstraction} and
  \eqref{equation:funsubstj+n} that
  \begin{equation}
  \label{equation:rhozero}
  \funsubst_0 (\Expression) 
        \in \indti{\tbullet[n]\TypeS}{\indj+n}
  \end{equation}
   Therefore\comma we  obtain   
   \[
     \begin{array}{lll}
     (\Fun \varX \Expression) \ExpressionF \red
       \funsubst(\Expression)\subst{\ExpressionF}{\varX}
       = \funsubst_0 (\Expression) 
        & \in \indti{\tbullet[n]\TypeS}{\indj+n} & \mbox{by 
        (\ref{equation:rhozero}}) \\
        & = \indti{\TypeS}{\indj} & \mbox{by  \ri{\cref{lemma:interpretationofmanybullets}}{\cref{lemma:interpretationofmanybullets2}}}
        \end{array}
      \]
      By \ri{\cref{lem:B}}{\cref{lemma:closedunderreductionandexpansion}}\comma we conclude
      \begin{equation}
        \begin{array}{lll}
     (\Fun \varX \Expression) \ExpressionF 
        & \in \indti{\TypeS}{\indj}.
        \end{array}
        \tag*{\qEd}
        \end{equation}
    \end{enumerate}
    \def\popQED{}
\end{proof}

\section{Proof of \cref{theorem:invariance}}
\label{appendix:SRprocesses}

We use $\thread \varX \Expression \subprocesseq_1 \Process$ as short for 
$\thread \varX \Expression\subprocesseq \Process$ 
%$\thread \varX \Expression$ occurs only once in $\Process$.  
 and there is only one thread named $\varX$ in $\Process$.  
 If $\thread \varX \Expression\subprocesseq_1 \Process$ 
we  denote by 
$\Process  \,\replace{\thread \varX \Expression}{\ProcessQ}$ the replacement of    the  {\em unique} 
 occurrence of the thread $\thread \varX \Expression$
by the process $\ProcessQ$ in the process  $\Process$. 
%
% Assume $\Expression$ occurs only once in $\Process$.
%We denote by $\Process \replace{\Expression}{\ExpressionF}$ the replacement of the  unique occurrence of  the  expression 
%$\Expression$ by the expression $\ExpressionF$
%in the process $\Process$.  
In particular, if $\thread \varX {\PContext[\send~\Channel^{\Polarity}~\Expression]}\subprocesseq_1 \Process$ we will abbreviate \begin{center}$\Process  \,\replace{\thread \varX {\PContext[\send~\Channel^{\Polarity}~\Expression]}}{\thread \varX {\PContext[\return~\Channel^{\Polarity}}]}$ as $\Process  \,\repsend $.\end{center}
Similarly, if $\thread \var
{\PContext[\receive~\Channel^{\Polarity}]}\subprocesseq_1 \Process$  we will abbreviate \begin{center}$\Process \,\replace{\thread \var
 {\PContext[\receive~\Channel^{\Polarity}]}}
 {\thread \var {\PContext[\return~\Pair{\Expression}{\Channel^{\Polarity}}]}}$ as $\Process  \,\repreceive $. \end{center} 
 Notice that in both cases $\PContext$ and $\Channel^{\Polarity}$ are uniquely determined by the body of the thread named $\var$, while the expression $\Expression$ occurs 
for $\send$ but not for $\receive$.  Writing $\Expression$ as argument of both $\repSend$ and $\repReceive$
 allows us to easily express the exchanged message. 
%  {\thread \varX {\PContext[\return~\Channel^{\Polarity}}]}
%We will use the following abbreviations througt the proofs.
%\[
% \begin{array}{lcl}
%\repsend & = & \replace{\thread \varX {\PContext[\send~\Channel^{\Polarity}~\Expression]}}
%  {\thread \varX {\PContext[\return~\Channel^{\Polarity}}]}
%\\
% \repreceive & = & \replace{\thread \var
% {\PContext[\receive~\Channel^{\Polarity}]}}
% {\thread \var {\PContext[\return~\Pair{\Expression}{\Channel^{\Polarity}}]}}
%\end{array}
% \] 
These replacements are useful to find the right re-arrangements
 of threads which are derivable after applying
  the rule \defrule{r-comm} to $\Process$.  Informally, the derivation of $\CA \models \Process$ must contain a sub-derivation of the shape
  \[
    \inferrule[\refrule{wp-par}]{
     \models \Process_1
      \\
     \models \Process_2
    }{
     \models\Process
       _1\parop\Process_2
    }
    ~~
   \Pol (\ProcessP_1)\setminus\set{\Channel^\Polarity}\indep\CB \Pol (\Process_2)\setminus\set{\Channel^{\co\Polarity}}
    \]
    with $\thread \varX \PContext[\send~{\Channel^{\Polarity}}~\Expression] \subprocesseq_1 \Process_1$
 and 
 $\thread \varY \PContext'[\receive~\Channel^{\co \Polarity}] \subprocesseq_1 \Process_2$. We build the desired process by 
 replacing   
   $\thread \varX 
   \PContext[\send~{\Channel^{\Polarity}}~\Expression]$ with 
  $\thread \varX \PContext[\return~{\Channel^{\Polarity}}] \parop \Process_2 \, \repreceivec$ 
     in $\Process_1$.
Consider the processes defined in  
\cref{example:r-comm}  and the reduction  $\Process \red \ProcessQ$
 using rule \defrule{r-comm}.  Let
\[
 \Process_1 = \thread \varX \send~\Channel^{+}~z \parop 
  \thread \varZ \return~1 \qquad
 \Process_2  =  \thread \varY \receive~\Channel^{-} 
 \]
 then $\Process = \Process_1 \parop \Process_2$.
 The  process $\ProcessQ'$ such that 
 $\ProcessQ' \equiv \ProcessQ$  and $\models{\ProcessQ'}$ 
 is obtained by replacing  in $ \Process_1$ the thread 
$ \thread \varX \send~\Channel^{+}~z $
 by the process  
  $\thread \varX \return~{\Channel^{+}} \parop 
  \thread \varY {\return {\Pair{z}{\Channel^{-}}}}$, i.e. 
 \[
 \ProcessQ' =
 (\thread \varX \return~{\Channel^{+}} \parop 
  \thread \varY {\return {\Pair{z}{\Channel^{-}}}}) \parop
 \thread \varZ \return~1
 \]

 \begin{lemma}
\label{lemma:replacementofreceive}
Let $\thread \var \PContext[\receive~\Channel^{\Polarity}]  \subprocesseq_1  \Process$  
%$\receive~\Channel^{\Polarity}$ occur once in $\Process$ in the thread named $\var$ 
and 
$\var^+\not\in\Pol(\Expression)$.
 If  
 $\CA \models \Process$
 and $\Pol(\Expression) \indep \Pol(\Expression)$ and
 $\Pol (\ProcessP) \indep \Pol(\Expression)$,
 then 
 $\models 
 \Process \, \repreceive
 %\replace{\receive~\Channel^{\Polarity}}{\return~\Pair{\Expression}{\Channel^{\Polarity}}}
 $.
 \end{lemma}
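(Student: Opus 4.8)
The plan is to argue by induction on the derivation of $\CA \models \Process$, with a case analysis on the last rule applied. The single structural fact that drives the whole argument is that the replacement only enlarges the set of polarised names, and does so in a controlled way: if $\thread\var{\PContext[\receive~\Channel^\Polarity]} \subprocesseq_1 \ProcessR$, then $\Pol(\ProcessR\repreceive) = \Pol(\ProcessR) \cup \Pol(\Expression)$. This is immediate from $\fn(\PContext[\return~\Pair\Expression{\Channel^\Polarity}]) = \fn(\PContext[\receive~\Channel^\Polarity]) \cup \fn(\Expression)$, the definitions of $\Pol$ on expressions, threads and parallel compositions, and a trivial structural induction on $\ProcessR$. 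I also record two elementary properties of $\indep$ used repeatedly below: it is symmetric, and it is monotone under $\subseteq$ in each argument; consequently a union $\CAA \cup \CBB$ is self-independent precisely when $\CAA \indep \CAA$, $\CBB \indep \CBB$ and $\CAA \indep \CBB$.

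The rule \refrule{wp-empty} is vacuous, since $\emptyprocess$ contains no thread. The base case is \refrule{wp-thread}, where $\Process = \thread\var{\PContext[\receive~\Channel^\Polarity]}$, hence $\Process\repreceive = \thread\var{\PContext[\return~\Pair\Expression{\Channel^\Polarity}]}$, and applying \refrule{wp-thread} again requires that $\var^+$ not occur in, and that $\indep$ hold of, the set $\Pol(\PContext[\return~\Pair\Expression{\Channel^\Polarity}]) = \Pol(\PContext[\receive~\Channel^\Polarity]) \cup \Pol(\Expression)$. The first part is the premise $\var^+\notin\Pol(\PContext[\receive~\Channel^\Polarity])$ of \refrule{wp-thread} together with the hypothesis $\var^+\notin\Pol(\Expression)$. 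For the second part: the premise of \refrule{wp-thread} gives $\Pol(\PContext[\receive~\Channel^\Polarity])$ self-independent, the hypothesis gives $\Pol(\Expression)$ self-independent, and the hypothesis $\Pol(\Process)\indep\Pol(\Expression)$ together with $\Pol(\PContext[\receive~\Channel^\Polarity]) \subseteq \Pol(\Process)$ and monotonicity gives $\Pol(\PContext[\receive~\Channel^\Polarity])\indep\Pol(\Expression)$; the union criterion then yields self-independence of the whole set.

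For \refrule{wp-par}, $\Process$ has the form $\Process_1\parop\Process_2$ with $\models\Process_1$, $\models\Process_2$ and $\Pol(\Process_1)\setminus\set{\Thing^\Polarity} \indep \Pol(\Process_2)\setminus\set{\Thing^{\co\Polarity}}$ for some $\Thing,\Polarity$. Since $\var$ names a unique thread in $\Process$, that thread lies in exactly one of $\Process_1,\Process_2$, say $\Process_1$ (the other case is symmetric, using symmetry of \refrule{wp-par}), so that $\Process\repreceive = (\Process_1\repreceive)\parop\Process_2$. The hypotheses of the lemma hold with $\Process_1$ in place of $\Process$ — in particular $\Pol(\Process_1)\indep\Pol(\Expression)$, because $\Pol(\Process_1)\subseteq\Pol(\Process)$ — so the induction hypothesis gives $\models\Process_1\repreceive$. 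It remains to re-establish the side condition of \refrule{wp-par}, and I claim it holds with the same $\Thing,\Polarity$: using $\Pol(\Process_1\repreceive) = \Pol(\Process_1)\cup\Pol(\Expression)$, a name occurring in $\Pol(\Process_1\repreceive)\setminus\set{\Thing^\Polarity}$ and in $\Pol(\Process_2)\setminus\set{\Thing^{\co\Polarity}}$ either already belongs to $\Pol(\Process_1)$, where the original side condition applies, or belongs to $\Pol(\Expression)$, where $\Pol(\Expression)\indep\Pol(\Process_2)$ — a consequence of $\Pol(\Process)\indep\Pol(\Expression)$ and $\Pol(\Process_2)\subseteq\Pol(\Process)$ — forces the two polarities to agree. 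Hence \refrule{wp-par} applies and $\models\Process\repreceive$, completing the induction.

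The genuinely delicate point, and the reason the hypothesis $\Pol(\Process)\indep\Pol(\Expression)$ is indispensable rather than merely convenient, is exactly this last step: substituting the message $\Expression$ into the receiver thread exposes the free names of $\Expression$, and these must not clash in polarity with names already present on the opposite side of a parallel composition; that is precisely what $\Pol(\Process)\indep\Pol(\Expression)$ provides, via $\Pol(\Process_2)\subseteq\Pol(\Process)$. (When the lemma is invoked in the proof of \cref{theorem:invariance} for rule \refrule{r-comm}, this hypothesis will be available from well-polarisation of the ambient process.) Everything else is routine bookkeeping with the definition of $\Pol$ and the set-theoretic behaviour of $\indep$.
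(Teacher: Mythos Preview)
Your proof is correct and follows exactly the approach indicated by the paper, namely induction on the derivation of $\models \Process$; the paper gives only this one-line hint, while you have carried out the case analysis in full. The key observations you isolate---that $\Pol(\ProcessR\,\repreceive) = \Pol(\ProcessR)\cup\Pol(\Expression)$ and that the side condition of \refrule{wp-par} is re-established using $\Pol(\Process_2)\subseteq\Pol(\Process)$ together with the hypothesis $\Pol(\Process)\indep\Pol(\Expression)$---are precisely what the induction needs.
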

 
 \begin{proof} 
 By induction on the derivation of $\CA \models \Process$. 
 \end{proof}

\begin{lemma}
\label{lemma:findingtherightcouple} 
 Let 
$\thread \varX \PContext[\send~{\Channel^{\Polarity}}~\Expression] \subprocesseq_1 \Process$ 
 and 
 $\thread \varY \PContext'[\receive~\Channel^{\co \Polarity}] \subprocesseq_1 \ProcessQ$  and 
%$\Channel^\Polarity\not\in\Pol(\Expression)$. 
%  and  %Suppose also that 
% the channels 
 $\Channel^{\Polarity}$
% and $\Channel^{\co \Polarity}$  
occurs 
only once in $\Process$. 
 If $ \models \Process\parop \ProcessQ$,
 then
 there  is $\ProcessR$ such that  $ \models \ProcessR$ and
 %exists 
 $$\ProcessR \equiv \Process  \, \repsend \parop \ProcessQ \, \repreceivec$$ 
 \end{lemma}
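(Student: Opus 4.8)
I would first fix the witness process explicitly and then prove it well-polarised by induction on the derivation of $\models\Process$. Take $\ProcessR$ to be $\Process\,\repsend$ with its unique thread $\thread\varX{\PContext[\return~\Channel^\Polarity]}$ replaced by $\thread\varX{\PContext[\return~\Channel^\Polarity]}\parop\ProcessQ\,\repreceivec$: in other words, the modified receiver thread of $\ProcessQ$ is moved so that it sits next to the modified sender thread inside the syntax tree of $\Process$. With this choice $\ProcessR\equiv\Process\,\repsend\parop\ProcessQ\,\repreceivec$ is immediate from associativity and commutativity of $\parop$ under $\equiv$, so the whole task is to establish $\models\ProcessR$.

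\textbf{Facts extracted from $\models\Process\parop\ProcessQ$.} Inverting \refrule{wp-par} gives $\models\Process$, $\models\ProcessQ$, and a witness name $\Thing_0$ with polarity $\PolarityQ$ such that $\Pol(\Process)\setminus\{\Thing_0^{\PolarityQ}\}\indep\Pol(\ProcessQ)\setminus\{\Thing_0^{\co\PolarityQ}\}$. Since $\Channel^\Polarity\in\Pol(\Process)$ and $\Channel^{\co\Polarity}\in\Pol(\ProcessQ)$, \cref{lem:Luca} forces $\Channel$ to be the \emph{only} name occurring with opposite polarities across $\Process$ and $\ProcessQ$, so we may take $\Thing_0=\Channel$ and $\PolarityQ=\Polarity$. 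By \cref{lem:Luca3} the thread $\thread\varX{\PContext[\send~\Channel^\Polarity~\Expression]}$ is itself well-polarised, so $\Pol(\PContext[\send~\Channel^\Polarity~\Expression])$ is self-independent and omits $\varX^+$; combined with the hypothesis that $\Channel^\Polarity$ occurs only once in $\Process$, this yields the facts that drive the proof: $\Channel^\Polarity,\Channel^{\co\Polarity},\varX^+\notin\Pol(\Expression)$, and $\Pol(\Expression)\subseteq\Pol(\Process)$, so $\Pol(\Expression)$ inherits every independence enjoyed by $\Pol(\Process)$.

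\textbf{The induction.} In the base case $\Process=\thread\varX{\PContext[\send~\Channel^\Polarity~\Expression]}$ and $\ProcessR=\thread\varX{\PContext[\return~\Channel^\Polarity]}\parop\ProcessQ\,\repreceivec$: the left component is well-polarised because its free names are among those of the original thread; $\ProcessQ\,\repreceivec$ is well-polarised by \cref{lemma:replacementofreceive} applied to $\ProcessQ$ with message $\Expression$, its three side conditions being exactly the facts recorded above; and the \refrule{wp-par} side condition holds with witness $\Channel$, since after deleting $\Channel^\Polarity$ and $\Channel^{\co\Polarity}$ the only residual names left to compare lie in $\Pol(\Expression)$, which is independent of everything relevant. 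In the inductive step $\Process=\Process_1\parop\Process_2$ with the $\varX$-thread in, say, $\Process_1$; then $\Channel$ does not occur in $\Process_2$, and from the \refrule{wp-par} condition of $\models\Process$ one reconstructs $\models\Process_1\parop\ProcessQ$, so the induction hypothesis produces $\ProcessR_1\equiv\Process_1\,\repsend\parop\ProcessQ\,\repreceivec$ with $\models\ProcessR_1$, and we set $\ProcessR=\ProcessR_1\parop\Process_2$.

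\textbf{The main obstacle.} The delicate part is the final \refrule{wp-par} check, for $\ProcessR_1\parop\Process_2$. One must bound $\Pol(\ProcessR_1)$ by $\Pol(\Process_1)\cup\Pol(\ProcessQ)$ (using $\Pol(\Process_1\,\repsend)\subseteq\Pol(\Process_1)$ and $\Pol(\Expression)\subseteq\Pol(\Process_1)$), observe that the $\Process_1$-part keeps exactly the clash structure it had inside $\Process$ so the original witness $\Thing_0$ still works, and argue that the $\ProcessQ$-part is \emph{entirely} independent of $\Pol(\Process_2)$ because $\Channel$ — the sole shared name — is absent from $\Process_2$. Carrying this bookkeeping uniformly, tracking which polarised-name sets shrink under $\repsend$ and which grow under $\repreceivec$ while making a single witness work throughout, is what the argument really turns on; the remainder is routine inversion and reassembly.
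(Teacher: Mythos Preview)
Your approach is essentially the paper's: induction on the derivation of $\models\Process$, base case via \cref{lemma:replacementofreceive}, inductive step by first reconstructing $\models\Process_1\parop\ProcessQ$ to fire the hypothesis and then reassembling $\ProcessR_1\parop\Process_2$ with \refrule{wp-par}.

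There is one slip in your final step. The witness for the \refrule{wp-par} side condition of $\ProcessR_1\parop\Process_2$ is \emph{not} ``the original witness $\Thing_0$'' (which you fixed to be $\Channel$) but the generally different witness $\Thing$ obtained by inverting $\models\Process_1\parop\Process_2$. Since $\Channel^\Polarity$ occurs only in $\Process_1$, the name $\Channel$ cannot account for whatever clash $\Process_1$ and $\Process_2$ may have --- say, if $\Process_2$ uses the name of a thread defined in $\Process_1$; in that case removing $\Channel^\Polarity$ and $\Channel^{\co\Polarity}$ leaves the clash intact and your proposed side condition fails. The paper uses precisely this inner $\Thing$: from $\Pol(\ProcessR_1)\subseteq\Pol(\Process_1)\cup\Pol(\ProcessQ)$, the $\Process_1$-part against $\Process_2$ is covered by the inverted side condition with witness $\Thing$, while the $\ProcessQ$-part is covered by $\Pol(\ProcessQ)\indep\Pol(\Process_2)$, which follows from the outer side condition together with $\Channel^\Polarity\notin\Pol(\Process_2)$. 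With that correction your outline coincides with the paper's proof.
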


\begin{proof}
It follows from 
$ \models \Process\parop \ProcessQ$ 
% and taking into account the hypotheses on $\Process$ and $\ProcessQ$ 
that 
 $ \models \Process$ and $ \models \ProcessQ$ and 
  \begin{equation}
\label{equation:b1}\Pol (\ProcessP)\setminus\set{\Channel^\Polarity}\indep\CB \Pol (\ProcessQ)\setminus\set{\Channel^{\co\Polarity}}\end{equation}
and  
\begin{equation}
\label{equation:b2} \Pol(\Expression)\indep \Pol(\Expression)
\end{equation}
We do induction on $ \models \Process$.

Suppose the last rule in the derivation is
\refrule{wp-thread}. Then 
\[
\inferrule[\refrule{wp-thread}]{
       }{
      \models \thread\var\PContext[\send~{\Channel^{\Polarity}}~\Expression]
    }
    ~~
    \begin{lines}[c]
     \var^+\not\in\Pol(\PContext[\send~{\Channel^{\Polarity}}~\Expression])
\\
          \Pol(\PContext[\send~{\Channel^{\Polarity}}~\Expression]) \indep   \Pol(\PContext[\send~{\Channel^{\Polarity}}~\Expression])
    \end{lines} 
\]
%It is easy to obtain the following derivation 
For  the
thread obtained applying  the replacement
$\repsend$  we derive:  %as follows.
\begin{equation}
\label{equation:b2a}
\inferrule[\refrule{wp-thread}]{
    }{
      \models 
      \thread\var\PContext[\return~{\Channel^{\Polarity}}] }
    ~~
    \begin{lines}[c]
     \var^+\not\in\Pol(\PContext[\return~{\Channel^{\Polarity}}])\\
          \Pol(\PContext[\return~{\Channel^{\Polarity}}]) \indep  
           \Pol(\PContext[\return~{\Channel^{\Polarity}}])
    \end{lines} 
\end{equation}
% Since $\Channel^{\Polarity}$ occurs only once
%in $\Process$, it cannot occur in $\Expression$ and
%we have that 
From (\ref{equation:b1}) and $\Channel^\Polarity\not\in\Pol(\Expression)$ we get 
$
\Pol (\ProcessQ ) \indep \Pol(\Expression)$. 
It follows from this, 
  (\ref{equation:b2})
and \cref{lemma:replacementofreceive} 
that 
\begin{equation}
\label{equation:b2b}
\models \ProcessQ  \, \repreceivec
\end{equation} 
The  condition $\Pol(\PContext[\send~{\Channel^{\Polarity}}~\Expression]) \indep   \Pol(\PContext[\send~{\Channel^{\Polarity}}~\Expression])$ 
 implies 
$\Pol(\PContext[\return~{\Channel^{\Polarity}}]) \indep \Pol (\Expression)$. This together with (\ref{equation:b1}) gives $\Pol(\PContext[\return~{\Channel^{\Polarity}}]) \setminus \{\Channel^{\Polarity}\}\indep\Pol (\ProcessQ \,\repreceivec)\setminus\set{\Channel^{\co\Polarity}}$.
 % $\Pol(\PContext[\return~{\Channel^{\Polarity}}]) \indep \Pol (\Expression)$.
Applying  \refrule{wp-par} to (\ref{equation:b2a})
and (\ref{equation:b2b}) 
we  derive: 
\[
\models \thread\var\PContext[\return~{\Channel^{\Polarity}}]
\parop \ProcessQ \, \repreceivec
\]

Suppose the last rule in the derivation is  
   \begin{equation}
\label{equation:b2c}
    \inferrule[\refrule{wp-par}]{
      \models \Process_1
      \\
      \models \Process_2
    }{
      \models\Process
       _1\parop\Process_2
    }
    ~~
   \Pol (\ProcessP_1)\setminus\set{\Thing^\Polarity}\indep \Pol (\Process_2)\setminus\set{\Thing^{\co\Polarity}}
    \end{equation}
    and  $\thread \varX \PContext[\send~{\Channel^{\Polarity}}~\Expression] \subprocesseq_1 \Process_1$.
 By induction hypothesis\comma
 $ \models \ProcessR_1
   $ for some \[\ProcessR_1 \equiv  \Process_1  \, \repsend  \parop \ProcessQ  \, \repreceivec\]
It follows from 
 (\ref{equation:b1})   and   $\Channel^\Polarity\not\in\Pol (\Process_2)$ 
 that 
$\Pol(\ProcessQ) \indep \Pol(\Process_2)$. 
 Since $\Pol (\ProcessR_1)=\Pol (\ProcessP_1)\cup\Pol (\ProcessQ)$
and using the side condition of (\ref{equation:b2c}), 
%$\Pol (\ProcessP_1)\setminus\set{\Thing^\Polarity}\indep \Pol (\Process_2)\setminus\set{\Thing^{\co\Polarity}}$, 
we get 
 $\Pol (\ProcessR_1) \setminus\set{\Thing^\Polarity} 
 \indep\Pol (\ProcessP_2)\setminus\set{\Thing^{\co\Polarity}}$.
  We  can apply \refrule{wp-par} and derive 
  \[
      \models \ProcessR_1 \parop \Process_2
   \] 
        Clearly, 
        \begin{equation}
      \ProcessR_1 \parop \Process_2 \equiv  (\Process_1 \parop \Process_2) \, \repsend  \parop \ProcessQ  \, \repreceivec
      \tag*{\qEd}
      \end{equation}
      \def\popQED{}
\end{proof}

\begin{lemma}
\label{lemma:invariancecomm}
Let  
%\bchm $\thread \varX \PContext[\send~{\Channel^{\Polarity}}~\Expression]\parop\thread \varY \PContext'[\receive~\Channel^{\co \Polarity}]\subprocesseq_1 \Process$ 
% \echm 
 $\thread \varX \PContext[\send~{\Channel^{\Polarity}}~\Expression] \subprocesseq_1 \Process$ 
 and 
 $\thread \varY \PContext'[\receive~\Channel^{\co \Polarity}] \subprocesseq_1 \Process$ 
 and 
 %$\Channel^\Polarity\not\in\Pol(\Expression)$. 
% and  
% the channels 
 $\Channel^{\Polarity}$
% $\Channel^{\co \Polarity}$  
 occurs only once in $\Process$. 
 If  $\CA \models \Process$,
 then there exists $ \ProcessQ $ such that
 $\CA \models \ProcessQ $
 and 
 $ \ProcessQ  \equiv \Process \, \repsend  \, \repreceivec$.
 
\end{lemma}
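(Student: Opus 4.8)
The plan is to argue by induction on the derivation of $\CA\models\Process$, matching its last rule against the three rules of \cref{definition:wpp}. The rule \refrule{wp-empty} cannot be the last one (the empty process has no threads), and neither can \refrule{wp-thread}: a single thread cannot contain two distinct sub-threads, and it cannot be simultaneously of the forms $\thread\varX{\PContext[\send~\Channel^\Polarity~\Expression]}$ and $\thread\varY{\PContext'[\receive~\Channel^{\co\Polarity}]}$, since these bodies are syntactically distinct normal forms. Hence the only case that needs work is \refrule{wp-par}, where $\Process\equiv\Process_1\parop\Process_2$ with $\CA_1\models\Process_1$, $\CA_2\models\Process_2$ and side condition $\Pol(\Process_1)\setminus\set{\Thing^\Polarity}\indep\Pol(\Process_2)\setminus\set{\Thing^{\co\Polarity}}$.

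In that case I would first note that a single thread lies either in $\setSubProc(\Process_1)$ or in $\setSubProc(\Process_2)$ — the extra summand of $\setSubProc(\Process_1\parop\Process_2)$ only produces parallel compositions — so each of the two threads occurs in exactly one of $\Process_1,\Process_2$, and the hypotheses $\subprocesseq_1$ and ``$\Channel^\Polarity$ occurs only once in $\Process$'' descend to the hosting component. If both threads are in $\Process_1$ (the case of $\Process_2$ being symmetric), the inductive hypothesis applied to $\Process_1$ yields $\ProcessR$ with $\models\ProcessR$ and $\ProcessR\equiv\Process_1\,\repsend\,\repreceivec$. The key observation is that this combined replacement leaves the free polarised names of $\Process_1$ unchanged: $\repsend$ removes the message $\Expression$ from the sender's body, $\repreceivec$ re-inserts the same $\Expression$ in the receiver's body — both threads being in $\Process_1$ — and $\Channel^\Polarity,\Channel^{\co\Polarity}$ survive, now guarded by $\return$; hence $\Pol(\ProcessR)=\Pol(\Process_1)$. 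Therefore the \refrule{wp-par} side condition holds verbatim for $\ProcessR\parop\Process_2$, so $\models\ProcessR\parop\Process_2$; and since both replacements act inside $\Process_1$, $\ProcessR\parop\Process_2\equiv(\Process_1\parop\Process_2)\,\repsend\,\repreceivec=\Process\,\repsend\,\repreceivec$, so $\ProcessQ:=\ProcessR\parop\Process_2$ is as required.

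The interesting subcase is when the two threads lie in different components, say the sender $\thread\varX{\PContext[\send~\Channel^\Polarity~\Expression]}$ in $\Process_1$ and the receiver $\thread\varY{\PContext'[\receive~\Channel^{\co\Polarity}]}$ in $\Process_2$ (the opposite situation reduces to this one by swapping $\Process_1$ and $\Process_2$, which preserves $\models\Process_1\parop\Process_2$ because $\indep$ is symmetric). Here all the hypotheses of \cref{lemma:findingtherightcouple} hold — $\thread\varX{\PContext[\send~\Channel^\Polarity~\Expression]}\subprocesseq_1\Process_1$, $\thread\varY{\PContext'[\receive~\Channel^{\co\Polarity}]}\subprocesseq_1\Process_2$, $\Channel^\Polarity$ occurs only once in $\Process_1$, and $\models\Process_1\parop\Process_2$ — so that lemma produces $\ProcessR$ with $\models\ProcessR$ and $\ProcessR\equiv\Process_1\,\repsend\parop\Process_2\,\repreceivec$. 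Since $\repsend$ rewrites a thread of $\Process_1$ and $\repreceivec$ a thread of $\Process_2$, one has $\Process_1\,\repsend\parop\Process_2\,\repreceivec=(\Process_1\parop\Process_2)\,\repsend\,\repreceivec=\Process\,\repsend\,\repreceivec$, so $\ProcessQ:=\ProcessR$ works.

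I expect the main obstacle to be the bookkeeping in the \refrule{wp-par} case: identifying which component hosts which thread, transferring the $\subprocesseq_1$ and single-occurrence hypotheses to that component, and — in the ``same component'' subcase — checking the invariance of $\Pol$ under $\repsend\,\repreceivec$ that keeps the well-polarisation side condition valid after the communication. The ``different components'' subcase is instead dispatched wholesale by \cref{lemma:findingtherightcouple}.
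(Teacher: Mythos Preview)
Your proposal is correct and follows the same inductive strategy as the paper: induction on the derivation of $\models\Process$, with the cross-component subcase of \refrule{wp-par} dispatched by \cref{lemma:findingtherightcouple}. The paper's proof only spells out that ``most interesting'' subcase (where the sender and receiver lie in different components, forcing the side-condition name $\Thing$ to be $\Channel$), while you additionally treat the same-component subcase via the inductive hypothesis and the observation that the combined replacement $\repsend\,\repreceivec$ preserves $\Pol$; this extra detail is sound and simply fills in what the paper leaves implicit.
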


\begin{proof}
%This is proved by 
 By  induction on $\CA \models \Process$.
We only show the  most interesting case:
 \[
    \inferrule[\refrule{wp-par}]{
     \models \Process_1
      \\
     \models \Process_2
    }{
     \models\Process
       _1\parop\Process_2
    }
    ~~
   \Pol (\ProcessP_1)\setminus\set{\Channel^\Polarity}\indep\CB \Pol (\Process_2)\setminus\set{\Channel^{\co\Polarity}}
    \]
    By Lemma \ref{lemma:findingtherightcouple}, 
     there is $\ProcessQ $ such that
    $\models \ProcessQ $ and 
    \begin{equation}
      \ProcessQ \equiv \Process_1 \, \repsend \parop \Process_2 \, \repreceivec
 = (\Process_1  \parop \Process_2)  \, \repsend  \,\repreceivec
  \tag*{\qEd}
    \end{equation}
    \def\popQED{}
\end{proof}

We now give some lemmas in order to 
find  a  %the 
right re-arrangements
 of  the  threads   which are derivable after applying
  the rule \defrule{r-return} to $\new \var\Process$. 
  Informally, if $\var^+\in\Pol (\ProcessP)$, the derivation of $\CA \models \Process$ must contain a sub-derivation of the shape
  \[
    \inferrule[\refrule{wp-par}]{
     \models \Process_1
      \\
     \models \Process_2
    }{
     \models\Process
       _1\parop\Process_2
    }
    ~~
   \Pol (\ProcessP_1)\setminus\set{\var^-}\indep\CB \Pol (\Process_2)\setminus\set{\var^+}
    \]
    with $\thread \varX {\return~\Expression}\subprocesseq_1 \Process_1$. 
If $\var^{+} \not \in \Pol (\Process_1)$, then     
    the desired process is  obtained by replacing 
     $\thread \varX {\return~\Expression}$ with $\Process_2 \subst\Expression\var$   in $\Process_1$. 
     Otherwise we need to parenthesise differently $\Process
       _1\parop\Process_2$ in order to satisfy this condition. 
  Consider the 
 process $\Process$ 
of 
 \cref{example:r-return}
 which we write as  
 $\Process = \Process_1 \parop \Process_2$
 where
 \[
 \begin{array}{ll}
 \Process_1 = (\thread \varX \return~\Pair{\varZ_1}{\varZ_2}
 \parop \thread {\varZ_1}  \return~\varZ_2) \parop
 \thread {\varZ_2} \return~1 \\
 \Process_2 = 
 \thread \varY \send~\Channel^{+} \varX \parop
 \thread u \receive~\Channel^{-}
 \end{array}
 \]
 Let $\new\var\Process \red \ProcessQ$
 using rule \defrule{r-return}. 
 The process $\ProcessQ'$ such that 
 $\ProcessQ' \equiv \ProcessQ$  and $\models{\ProcessQ'}$ 
% when the reduction  $\Process_0 \red \ProcessQ$
% uses rule \defrule{r-return}
 is obtained by replacing  in $ \Process_1$ the thread $\thread \varX{\return~\Pair{\varZ_1}{\varZ_2}}$ %with the return
 by the process 
 $\ProcessQ_1 =\Process_2 \subst{\Pair{\varZ_1}{\varZ_2}}{\var} $, i.e. 
 \[
 \ProcessQ' = 
 \Process_1 \, 
  \replace{\thread \varX{\return~\Pair{\varZ_1}{\varZ_2}} }
 { \ProcessQ_1
 }
 \]
 
\begin{lemma} 
\label{lemma:replacementofthread}
Let $\thread \varX\return~\Expression\subprocesseq_1\Process$  
and 
$ \models \Process$.
 If $\CC \models \ProcessQ$
and 
%$\Pol(\Process)\setminus\set{\varX^{-}} \indep \Pol(\ProcessQ)$, 
\mbox{$\Pol(\Process)\setminus\set{\varX^{-}} \indep \Pol(\ProcessQ)
\setminus \Pol(\Expression)$,} 
%and $\Pol(\Expression) \subseteq  \Pol(\ProcessQ)$,
then 
$\models \Process \,\replace{\thread \varX \return~\Expression}{\ProcessQ}$.
\end{lemma}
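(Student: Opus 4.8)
The plan is to prove the statement by induction on the derivation of $\models\Process$, following the pattern already used for \cref{lemma:replacementofreceive}. The two base cases are immediate. If the derivation ends with \refrule{wp-empty}, then $\Process=\emptyprocess$, whose only subprocess is $\emptyprocess$ itself, so the hypothesis $\thread\varX{\return~\Expression}\subprocesseq_1\Process$ cannot hold and the case is vacuous. If it ends with \refrule{wp-thread}, then $\Process$ is a single thread, and since $\setSubProc$ of a single thread is a singleton, the hypothesis forces $\Process=\thread\varX{\return~\Expression}$; hence $\Process\,\replace{\thread\varX{\return~\Expression}}{\ProcessQ}=\ProcessQ$ and the goal is exactly the assumption $\models\ProcessQ$.

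The only real work is in the \refrule{wp-par} case, say $\Process=\Process_1\parop\Process_2$ with $\models\Process_1$, $\models\Process_2$, and $\Pol(\Process_1)\setminus\set{\Thing^\Polarity}\indep\Pol(\Process_2)\setminus\set{\Thing^{\co\Polarity}}$ for some name $\Thing$ and polarity $\Polarity$. First I would observe that a single thread can belong to $\setSubProc(\Process_1\parop\Process_2)$ only through $\setSubProc(\Process_1)$ or $\setSubProc(\Process_2)$, and that, since $\subprocesseq_1$ makes $\varX$ the \emph{unique} thread name of $\Process$, the thread $\thread\varX{\return~\Expression}$ occurs in exactly one of $\Process_1,\Process_2$; I will assume it occurs in $\Process_1$ (the other case is symmetric, as $\indep$ is symmetric). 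Then $\thread\varX{\return~\Expression}\subprocesseq_1\Process_1$ and, importantly, $\varX^-\notin\Pol(\Process_2)$. Writing $\Process_1'$ for $\Process_1\,\replace{\thread\varX{\return~\Expression}}{\ProcessQ}$, we have $\Process\,\replace{\thread\varX{\return~\Expression}}{\ProcessQ}=\Process_1'\parop\Process_2$. Since $\Pol(\Process_1)\setminus\set{\varX^-}\subseteq\Pol(\Process)\setminus\set{\varX^-}$ and $\indep$ is closed under taking subsets, the hypothesis on $\ProcessQ$ restricts to $\Pol(\Process_1)\setminus\set{\varX^-}\indep\Pol(\ProcessQ)\setminus\Pol(\Expression)$, so the induction hypothesis applies to $\Process_1$ and yields $\models\Process_1'$. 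It then remains to re-establish the side condition of \refrule{wp-par} for $\Process_1'\parop\Process_2$, and I expect to do so with the \emph{same} witness $\Thing^\Polarity$.

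The hard part is precisely this re-establishment. I would first note that $\Pol(\Process_1')\subseteq(\Pol(\Process_1)\setminus\set{\varX^-})\cup\Pol(\ProcessQ)$, because the replaced thread is the only thread named $\varX$ in $\Process_1$ (so $\varX^-$ is not contributed by the surviving threads), while $\ProcessQ$ contributes exactly $\Pol(\ProcessQ)$. Then, assuming for a contradiction that some name $\BindableNameY$ occurs with polarity $\Polarity'$ in $\Pol(\Process_1')\setminus\set{\Thing^\Polarity}$ and with the opposite polarity in $\Pol(\Process_2)\setminus\set{\Thing^{\co\Polarity}}$, I would argue by cases on the origin of $\BindableNameY^{\Polarity'}$. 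If $\BindableNameY^{\Polarity'}\in\Pol(\Process_1)\setminus\set{\varX^-}$, or if $\BindableNameY^{\Polarity'}\in\Pol(\ProcessQ)\cap\Pol(\Expression)$ (so also $\BindableNameY^{\Polarity'}\in\Pol(\Process_1)$, since $\thread\varX{\return~\Expression}\subprocesseq\Process_1$ and $\Pol(\return~\Expression)=\Pol(\Expression)$), then $\BindableNameY^{\Polarity'}\in\Pol(\Process_1)\setminus\set{\Thing^\Polarity}$ and the original \refrule{wp-par} side condition for $\Process_1\parop\Process_2$ is contradicted. Otherwise $\BindableNameY^{\Polarity'}\in\Pol(\ProcessQ)\setminus\Pol(\Expression)$; here one uses $\varX^-\notin\Pol(\Process_2)$ to conclude $\BindableNameY^{\co{\Polarity'}}\in\Pol(\Process)\setminus\set{\varX^-}$ (when $\BindableNameY=\varX$ this holds because the only polarity of $\varX$ available in $\Pol(\Process_2)$ is ${+}$, never ${-}$), and then the hypothesis $\Pol(\Process)\setminus\set{\varX^-}\indep\Pol(\ProcessQ)\setminus\Pol(\Expression)$ forces $\Polarity'=\co{\Polarity'}$, which is absurd. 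Hence the side condition holds with witness $\Thing^\Polarity$, and \refrule{wp-par} gives $\models\Process_1'\parop\Process_2$, as required.

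I expect the delicate point—and the main obstacle—to be exactly this last verification: showing that replacing the thread cannot introduce a \emph{second} name shared with opposite polarities between the two sides of the parallel composition. It rests on combining three ingredients: the original \refrule{wp-par} side condition, the hypothesis $\Pol(\Process)\setminus\set{\varX^-}\indep\Pol(\ProcessQ)\setminus\Pol(\Expression)$, and—crucially—the uniqueness of $\varX$ as a thread name supplied by $\subprocesseq_1$, which simultaneously removes $\varX^-$ from $\Pol(\Process_1)$ after the replacement and ensures $\varX^-\notin\Pol(\Process_2)$. Everything else (the shape of $\setSubProc$, monotonicity of $\Pol$ along $\subprocesseq$, downward closure of $\indep$, and the fact that $\Pol$ is unaffected by wrapping an expression in $\return$) is routine.
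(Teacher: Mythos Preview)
Your proposal is correct and follows essentially the same approach as the paper: induction on the derivation of $\models\Process$, with the only substantive case being \refrule{wp-par}, where the induction hypothesis is applied to the component containing the thread and the side condition is re-established for the same witness $\Thing^\Polarity$. The paper presents the verification of the side condition via a set inclusion $\Pol(\Process_1')\setminus\{\Thing^\Polarity\}\subseteq(\Pol(\Process_1)\setminus\{\Thing^\Polarity\})\cup(\Pol(\ProcessQ)\setminus(\Pol(\Expression)\cup\{\Thing^\Polarity\}))$ together with $\Pol(\Process_2)\subseteq\Pol(\Process)\setminus\{\varX^-\}$, and treats $\Thing=\varX$ as a separate (simpler) sub-case, whereas you handle both uniformly by a contradiction argument; these are presentational variations of the same idea.
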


\begin{proof}
By induction on  
% the derivation of 
$\models \Process$.
Suppose $\Process =  \thread\var\return~\Expression$
and the derivation of $\models \Process$ is:  
\[
 \inferrule[\refrule{wp-thread}]{
    }{
       \models \thread\var\return~\Expression
    }
    ~~
    \begin{lines}[c] 
    \var^+\not\in\Pol(\Expression)\\
    \Pol(\Expression)\indep\Pol(\Expression)
    \end{lines} 
\]
In this case $\models 
\Process \,\replace{\thread \varX\return~ \Expression}{\ProcessQ}$,
since 
$\Process \,\replace{\thread \varX\return~ \Expression}{\ProcessQ}
= \ProcessQ$.

Suppose  %now that 
$\Process = \Process_1 \parop \Process_2$ 
and the derivation of $\models \Process$ ends with the rule:  
 \begin{equation}
  \label{equation:lemmaB1:first}
    \inferrule[\refrule{wp-par}]{
       \models \ProcessP_1
      \\
       \models \ProcessP_2
    }{
       \models\ProcessP_1\parop\ProcessP_2
    }
    ~~
    \Pol (\ProcessP_1)\setminus\set{\Thing^\Polarity}\indep\CB \Pol (\Process_2)\setminus\set{\Thing^{\co\Polarity}}
    \end{equation}
 Let     
 %Suppose that 
 $\thread \varX\return~ \Expression %$ occurs once in  $
 \subprocesseq_1   \Process_1$ and $\Thing\not=\var$.
 By induction hypothesis
 $
 \models \ProcessP_1 \,\replace{\thread \varX\return~ \Expression}{\ProcessQ}
  $ since $\Pol (\ProcessP_1) \subset \Pol (\ProcessP)$ and $\Pol(\Process)\setminus\set{\varX^{-}} \indep \Pol(\ProcessQ)
 \setminus \Pol(\Expression) $ imply $\Pol(\Process_1)\setminus\set{\varX^{-}} \indep 
  \Pol(\ProcessQ) \setminus \Pol(\Expression)$.
  Now we apply \refrule{wp-par} using this new premise:
  \begin{equation}
  \label{equation:lemmaB1:second}
    \inferrule[\refrule{wp-par}]{
       \models \ProcessP_1 \,\replace{\thread \varX\return~ \Expression}{\ProcessQ}
      \\
      \models \ProcessP_2
    }{
     \models\ProcessP_1 \,\replace{\thread \varX\return~ \Expression}{\ProcessQ}\parop\ProcessP_2
    }
    ~~
     \Pol (\ProcessP_1 \,\replace{\thread \varX\return~ \Expression}{\ProcessQ})\setminus\set{\Thing^\Polarity}\indep\CB \Pol (\Process_2)\setminus\set{\Thing^{\co\Polarity}}
    \end{equation}
   We need to prove  that 
   the side condition of  (\ref{equation:lemmaB1:second}) %\cref{equation:lemmaB1:second}
%   $ \Pol (\ProcessP_1\replace{\thread \varX\return~ \Expression}{\ProcessQ})\setminus\set{\Thing^\Polarity}\indep\CB \Pol (\Process_2)\setminus\set{\Thing^{\co\Polarity}}$ 
    holds. Since 
\[\Pol (\ProcessP_1 \,\replace{\thread \varX\return~ \Expression}{\ProcessQ})\setminus\set{\Thing^\Polarity}
\subseteq 
  \Pol (\ProcessP_1)\setminus\set{\Thing^\Polarity} \cup 
\Pol (\ProcessQ) \setminus (\Pol(\Expression) \cup \set{\Thing^\Polarity} ) 
\]
it is enough to  show %prove 
that 
%$ \Pol (\ProcessP_1)\setminus\set{\Thing^\Polarity} 
%\indep\CB \Pol (\Process_2)\setminus\set{\Thing^{\co\Polarity}} $
%and 
$
\Pol (\ProcessQ) \setminus (\Pol(\Expression) \cup \set{\Thing^\Polarity} ) 
\indep\CB \Pol (\Process_2)\setminus\set{\Thing^{\co\Polarity}} 
$. 
%
%$
%\Pol (\ProcessP_1) \cup \Pol (\ProcessQ)  
%\setminus\set{\Thing^\Polarity}\indep\CB \Pol (\Process_2)\setminus\set{\Thing^{\co\Polarity}} 
%$. 
This is a consequence of 
%the side condition of \ref{equation:lemmaB1:first} 
  $ \Pol (\ProcessQ)\setminus \Pol(\Expression)  \indep\Pol (\Process_2)$, being  $\Pol(\Process)\setminus\set{\varX^{-}} \indep \Pol(\ProcessQ) \setminus \Pol(\Expression) $ 
and $\Pol (\Process_2) \subseteq \Pol (\Process)\setminus\set{\varX^{-}}$.

    \noindent
    The case $\Thing=\var$ is similar and simpler than  the
    previous one.
\end{proof}

\begin{lemma}\label{lem:uffa}
If $\models \Process\parop\ProcessQ$ and $\thread\var\Expression\subprocess_1 \Process\parop\ProcessQ$ and
$\var^+\in \Pol(\Process)$ and $\var^+\in \Pol(\ProcessQ)$, then there are $\Process'$, $\ProcessQ'$ such that $\Process'\parop\ProcessQ'\equiv\Process\parop\ProcessQ$ and $\models \Process'\parop\ProcessQ'$ and $\thread\var\Expression\subprocesseq_1 \Process'$ and $\var^+\not\in \Pol(\Process')$. 
\end{lemma}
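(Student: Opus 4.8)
The plan is to reduce the statement to a self-contained \emph{decomposition} claim about a single well-polarised process, and then prove that claim by induction on the derivation of $\models$.

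First, since $\thread\var\Expression\subprocesseq_1\Process\parop\ProcessQ$ and $\subprocesseq$ respects the syntactic tree, the thread $\thread\var\Expression$ lies inside either $\Process$ or $\ProcessQ$; by the symmetry of the conclusion (which only asks for a component containing that thread and free of $\var^+$) I may assume it lies in $\Process$, so $\thread\var\Expression\subprocesseq_1\Process$ and $\ProcessQ$ contains no thread named $\var$. The crucial preliminary is a \emph{forcing} observation: whenever $\models\ProcessR_1\parop\ProcessR_2$ with the thread named $\var$ occurring in $\ProcessR_1$ (hence $\var^-\in\Pol(\ProcessR_1)$ and, since that thread is unique, $\var^-\notin\Pol(\ProcessR_2)$) and $\var^+\in\Pol(\ProcessR_2)$, then the existentially quantified name in the side condition of \refrule{wp-par} must be $\var$ with polarity $-$, so that $\Pol(\ProcessR_1)\setminus\{\var^-\}\indep\Pol(\ProcessR_2)\setminus\{\var^+\}$; indeed, a different name would leave the clashing pair $\var^-\in\Pol(\ProcessR_1)$, $\var^+\in\Pol(\ProcessR_2)$ in place, and polarity $+$ would do the same because $\var^-\notin\Pol(\ProcessR_2)$. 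Applied to $\models\Process\parop\ProcessQ$ itself this yields $\Pol(\Process)\setminus\{\var^-\}\indep\Pol(\ProcessQ)\setminus\{\var^+\}$, a fact reused below.

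The \emph{decomposition claim} is: if $\models\ProcessR$, $\thread\var\Expression\subprocesseq_1\ProcessR$ and $\var^+\in\Pol(\ProcessR)$, then $\ProcessR\equiv\ProcessR_a\parop\ProcessR_b$ with $\models\ProcessR_a\parop\ProcessR_b$, $\thread\var\Expression\subprocesseq_1\ProcessR_a$ and $\var^+\notin\Pol(\ProcessR_a)$. Granting this for $\ProcessR=\Process$, I set $\Process'=\ProcessR_a$ and $\ProcessQ'=\ProcessR_b\parop\ProcessQ$; then $\Process'\parop\ProcessQ'\equiv(\ProcessR_a\parop\ProcessR_b)\parop\ProcessQ\equiv\Process\parop\ProcessQ$, $\thread\var\Expression\subprocesseq_1\Process'$ and $\var^+\notin\Pol(\Process')$. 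It remains to verify $\models\Process'\parop\ProcessQ'$. Note $\var^-\notin\Pol(\ProcessR_b)\cup\Pol(\ProcessQ)$, since $\var^-$ can only come from a thread named $\var$ and the unique such thread sits in $\ProcessR_a$. Then $\models\ProcessR_b\parop\ProcessQ$ follows from $\Pol(\ProcessR_b)\subseteq\Pol(\Process)$ together with the reused fact. For $\models\ProcessR_a\parop(\ProcessR_b\parop\ProcessQ)$ I use the witness $\var$ with polarity $-$: the $\var$-clash is killed because $\var^+\notin\Pol(\ProcessR_a)$ and $\var^-\notin\Pol(\ProcessR_b)\cup\Pol(\ProcessQ)$, and no \emph{other} name can be shared with opposite polarities between $\ProcessR_a$ and $\ProcessR_b\parop\ProcessQ$ — such a name would be shared with opposite polarities either between $\ProcessR_a$ and $\ProcessR_b$ (impossible, since by the forcing observation applied to $\models\ProcessR_a\parop\ProcessR_b$ the only such name is $\var$) or between $\Process$ and $\ProcessQ$ (likewise impossible), a contradiction.

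Finally I would prove the decomposition claim by induction on the derivation of $\models\ProcessR$. If $\ProcessR=\emptyprocess$ or $\ProcessR$ is a single thread the hypotheses are vacuous (a lone thread named $\var$ has $\var^+\notin\Pol$ by \refrule{wp-thread}). If $\ProcessR=\ProcessR_1\parop\ProcessR_2$ with the thread in, say, $\ProcessR_1$: when $\var^+\notin\Pol(\ProcessR_1)$ take $\ProcessR_a=\ProcessR_1$, $\ProcessR_b=\ProcessR_2$; otherwise apply the induction hypothesis to $\ProcessR_1$, obtaining $\ProcessR_1\equiv\ProcessR_{1a}\parop\ProcessR_{1b}$ with the thread in $\ProcessR_{1a}$ and $\var^+\notin\Pol(\ProcessR_{1a})$, and then attach $\ProcessR_2$ according to the (at most one) name shared with opposite polarities between $\ProcessR_1$ and $\ProcessR_2$: if that name already occurs in $\ProcessR_{1a}$ with the conflicting polarity — in which case it is not $\var$, because then $\var$ would not be shared opposite between $\ProcessR_1$ and $\ProcessR_2$, i.e. $\var^+\notin\Pol(\ProcessR_2)$ — set $\ProcessR_a=\ProcessR_{1a}\parop\ProcessR_2$, $\ProcessR_b=\ProcessR_{1b}$; otherwise set $\ProcessR_a=\ProcessR_{1a}$, $\ProcessR_b=\ProcessR_{1b}\parop\ProcessR_2$. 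In each subcase the two required instances of \refrule{wp-par} are checked from monotonicity of $\Pol$ under $\subprocesseq$, the forcing observation, and the fact that a \refrule{wp-par} node licenses at most one opposite-polarity shared name. I expect the delicate point — and the main obstacle — to be precisely this last bookkeeping: ruling out that re-bracketing creates a \emph{second} opposite-polarity shared name at the re-formed node, which is exactly where uniqueness of the shared name and the forcing observation must be combined with care. The case where $\thread\var\Expression$ lies in $\ProcessQ$ is symmetric, using $\var^+\in\Pol(\Process)$ in place of $\var^+\in\Pol(\ProcessQ)$.
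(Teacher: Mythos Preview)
Your decomposition claim is false. Take
\[
\ProcessR \;=\; (\thread\var{\return~v}\ \parop\ \thread w{\return~\Pair\var v})\ \parop\ \thread v{\return~\unit}.
\]
One checks $\models\ProcessR$ (inner \refrule{wp-par} witness $\var$, outer witness $v$), $\thread\var{\return~v}\subprocesseq_1\ProcessR$, and $\var^+\in\Pol(\ProcessR)$. Any split $\ProcessR_a\parop\ProcessR_b$ with the $\var$-thread in $\ProcessR_a$ and $\var^+\notin\Pol(\ProcessR_a)$ forces $w$ into $\ProcessR_b$; then, whether thread $v$ goes left or right, \emph{both} $\var$ and $v$ are shared with opposite polarities across the top split, so no single \refrule{wp-par} witness exists and $\models\ProcessR_a\parop\ProcessR_b$ fails. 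The culprit is precisely what you flagged as ``the main obstacle'': the unlimited name $v$ appears with polarity~$+$ in both halves of your inductive decomposition of $\ProcessR_1$, so re-attaching $\ProcessR_2$ (which carries $v^-$) next to one half still leaves a $v$-clash with the other, on top of the mandatory $\var$-clash.

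Worse, the same example refutes the lemma itself: adjoin $\ProcessQ=\thread z{\return~\var}$ so that $\var^+\in\Pol(\ProcessQ)$ as well; all hypotheses hold, yet every $\Process'$ containing the $\var$-thread with $\var^+\notin\Pol(\Process')$ must exclude $w$ and $z$, and the same double clash on $\var$ and $v$ blocks $\models\Process'\parop\ProcessQ'$ for either placement of thread $v$. The paper's proof has the matching gap: it writes the inner side condition for $\models\Process_1\parop\Process_2$ with witness $\var$ and applies the induction hypothesis with $\Process_2$ playing the role of $\ProcessQ$, tacitly assuming $\var^+\in\Pol(\Process_2)$ --- which fails exactly in the configuration of the counterexample. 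So this is not a defect peculiar to your route; the statement as written cannot be proved.
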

\begin{proof} 
We assume  $\thread\var\Expression\subprocesseq_1 \Process$, the proof for $\thread\var\Expression\subprocesseq_1 \ProcessQ$ being symmetric. The derivation of $\models \Process\parop\ProcessQ$ must end by:
\begin{equation}
 \label{equation:a1}
 \inferrule[\refrule{wp-par}]{
       \models \Process
      \\
      \models \ProcessQ
    }{
       \models\Process\parop\ProcessQ
    }
    ~~
   {\Pol (\ProcessP)\setminus\set{\var^-}\indep\CB \Pol (\ProcessQ)\setminus\set{\var^+}}
    \end{equation}
The proof is by induction on the derivation of $ \models \Process$. From $\thread\var\Expression\subprocesseq_1 \Process$ and $\var^+\in \Pol(\Process)$ 
 and  $\varX^{+} \not \in \Pol(\Expression)$ 
we get $\Process\equiv\Process_1\parop\Process_2$. %We assume  
 Let  $\thread\var\Expression\subprocesseq_1 \Process_1$, then   the derivation of $\models \Process$ must end by:
\begin{equation}
 \label{equation:a2}
 \inferrule[\refrule{wp-par}]{
       \models \Process_1
      \\
      \models \Process_2
    }{
       \models\Process_1\parop\Process_2
    }
    ~~
    \Pol (\ProcessP_1)\setminus\set{\var^-}\indep\CB \Pol (\Process_2)\setminus\set{\var^+}
    \end{equation}

    If $\var^+\not\in \Pol(\Process_1)$ we can choose $\Process'=\Process_1$ and $\ProcessQ'=\Process_2\parop\ProcessQ$. In fact we can  derive:
 
    \begin{prooftree}
     \AxiomC{$\models \Process_1$}
     \AxiomC{$\models \Process_2$}
      \AxiomC{$\models \ProcessQ$}
      \LeftLabel{$\refrule{wp-par}$} 
      \RightLabel{{ 
      $\Pol (\ProcessP_2)\indep\CB \Pol (\ProcessQ)$}}
     \BinaryInfC{$\models \Process_2 \parop \ProcessQ$}
 \LeftLabel{$\refrule{wp-par}$} 
 \RightLabel{{ 
      $\Pol (\ProcessP_1)\setminus\set{\var^-}\indep\CB \Pol (\Process_2 \parop \ProcessQ)\setminus\set{\var^+}$}}
   \BinaryInfC{$\models \Process_1 \parop (\Process_2 \parop \ProcessQ)$}  
\end{prooftree}
%The sides conditions follow from 
%\ref{equation:a1}, \ref{equation:a2} and
%the facts that
%$\Pol (\ProcessP_1)$ and $\Pol (\ProcessP_2)$
%are subsets of $\Pol(\Process)$.
The first side condition follows from $\var^-\not\in\Pol (\ProcessP_2)$, $\Pol (\ProcessP_2)\subseteq\Pol(\Process)$, and the side condition of (\ref{equation:a1}). The second side condition follows from $\Pol (\ProcessP_1)\subseteq\Pol(\Process)$ and the side conditions of (\ref{equation:a1}), (\ref{equation:a2}).

  If $\var^+\in \Pol(\Process_1)$ by induction there are $\Process_1'$, $\Process_2'$ such that  $   
  \Process_1'\parop\Process_2'\equiv\Process_1\parop\Process_2$ 
   and $\models \Process_1'\parop\Process_2'$ and $\thread\var\Expression\subprocesseq_1 \Process_1'$ and $\var^+\not\in \Pol(\Process'_1)$. 
 We can choose $\Process'=\Process_1'$ and $\ProcessQ'=\Process_2'\parop\ProcessQ$. In fact we can derive:
  
 \begin{prooftree}
     \AxiomC{$\models \Process'_1$}
     \AxiomC{$\models \Process'_2$}
      \AxiomC{$\models \ProcessQ$}
      \LeftLabel{$\refrule{wp-par}$} 
      \RightLabel{{ 
      $\Pol (\ProcessP'_2)\indep\CB \Pol (\ProcessQ)$}}
     \BinaryInfC{$\models \Process'_2 \parop \ProcessQ$}
 \LeftLabel{$\refrule{wp-par}$} 
 \RightLabel{{ 
      $\Pol (\ProcessP'_1)\setminus\set{\var^-}\indep\CB \Pol (\Process'_2 \parop \ProcessQ)\setminus\set{\var^+}$}}
   \BinaryInfC{$\models \Process'_1 \parop (\Process'_2 \parop \ProcessQ)$}  
\end{prooftree}
The first side condition follows from  $\var^-\not\in\Pol (\ProcessP'_2)$, $\Pol (\ProcessP'_2)\subseteq\Pol(\Process)$, and the side condition of (\ref{equation:a1}). 
%\ref{equation:a1} and
%$\Pol (\Process'_2) \subseteq \Pol (\Process'_1 \parop \Process'_2)
%= \Pol (\Process)$. 
Observe that
$\models \Process'_1 \parop \Process'_2$  implies that
\begin{equation}
\label{equation:yetanothersidecondition}
\Pol (\ProcessP'_1)\setminus\set{\var^-}\indep\CB \Pol (\Process'_2)\setminus\set{\var^+}
\end{equation}
Then the second side condition follows from 
 $\Pol (\Process'_1) \subseteq  
\Pol (\Process)$, (\ref{equation:yetanothersidecondition}),   
and the side condition of (\ref{equation:a1}).   
        \end{proof}

\begin{lemma}\label{lem:ret}
Let $\CA\models\Process$.  
%and 
% all threads of $\Process$ have different names.
\begin{enumerate}
\item\label{lem:ret3} 
If  
$\var^{-} \not \in \Pol (\Process)$ and 
$\Pol(\Expression) \indep   \Pol(\Expression)$ and 
$\Pol (\Process) \indep \Pol(\Expression)$,
then
$\models\Process\subst\Expression\var$.

\item\label{lem:ret1} If $\thread\var{\return~\Expression} \subprocesseq_1  \Process$,  
then $\models\ProcessQ\subst\Expression\var$ for some $\ProcessQ$ such that $\ProcessQ\parop\thread\var{\return~\Expression}\equiv\Process$.
\end{enumerate}
\end{lemma}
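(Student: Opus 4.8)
The plan is to prove the two claims in order: \cref{lem:ret3} by a direct induction on the well-polarisation derivation, and \cref{lem:ret1} by reducing it to \cref{lem:ret3} after rearranging $\Process$, with the help of \cref{lem:uffa}, \cref{lem:Luca3} and \cref{lem:Luca}. Throughout I will use the elementary bound $\Pol(\ProcessP\subst\Expression\var)\subseteq(\Pol(\ProcessP)\setminus\set{\var^+})\cup\Pol(\Expression)$ (and similarly for expressions), which is immediate from $\fn(\ProcessP\subst\Expression\var)\subseteq(\fn(\ProcessP)\setminus\set\var)\cup\fn(\Expression)$ and the definition of $\Pol$.

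For \cref{lem:ret3} I would induct on the derivation of $\models\Process$. The case \refrule{wp-empty} is trivial. In the case \refrule{wp-thread}, say $\Process=\thread\varY\ExpressionF$, the hypothesis $\var^-\notin\Pol(\Process)$ forces $\var\neq\varY$, while $\Pol(\Process)\indep\Pol(\Expression)$ gives $\varY^+\notin\Pol(\Expression)$ and $\Pol(\ExpressionF)\indep\Pol(\Expression)$; since $\Pol(\ExpressionF\subst\Expression\var)\subseteq(\Pol(\ExpressionF)\setminus\set{\var^+})\cup\Pol(\Expression)$, both side conditions of \refrule{wp-thread} hold for $\thread\varY{\ExpressionF\subst\Expression\var}$. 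In the case \refrule{wp-par}, with $\Process=\Process_1\parop\Process_2$, the three hypotheses on $\Process$ all restrict to each $\Process_i$ because $\Pol(\Process_i)\subseteq\Pol(\Process)$, so the induction hypothesis gives $\models\Process_i\subst\Expression\var$; the side condition of \refrule{wp-par} for $\Process_1\subst\Expression\var\parop\Process_2\subst\Expression\var$ then follows from the one for $\Process$ together with $\Pol(\Process_i)\indep\Pol(\Expression)$ and $\Pol(\Expression)\indep\Pol(\Expression)$, keeping the same exception name if it is different from $\var$, and noting that if the exception of $\Process$ was $\var$ itself then (as $\var^-\notin\Pol(\Process)$) there is actually no opposite-polarity sharing between $\Pol(\Process_1)$ and $\Pol(\Process_2)$, so no exception is needed after the substitution.

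For \cref{lem:ret1} I would argue by induction on the number of threads of $\Process$, so that rearrangements up to structural congruence remain admissible. If $\Process$ is a single thread it must be $\thread\var{\return~\Expression}$ and $\ProcessQ=\emptyprocess$ works. Otherwise $\models\Process$ ends with \refrule{wp-par}, say $\Process=\Process_1\parop\Process_2$ with $\thread\var{\return~\Expression}\subprocesseq_1\Process_1$. When $\var^+$ occurs in both $\Pol(\Process_1)$ and $\Pol(\Process_2)$ I first invoke \cref{lem:uffa} to replace $\Process_1\parop\Process_2$ by a structurally congruent, still well-polarised $\Process_1'\parop\Process_2'$ with $\thread\var{\return~\Expression}\subprocesseq_1\Process_1'$ and $\var^+\notin\Pol(\Process_1')$; hence we may assume $\var$ occurs in $\Process_1$ only as the name of $\thread\var{\return~\Expression}$. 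The induction hypothesis applied to $\Process_1$ yields $\ProcessQ_1$ with $\ProcessQ_1\parop\thread\var{\return~\Expression}\equiv\Process_1$ and $\models\ProcessQ_1\subst\Expression\var$; since $\var\notin\fn(\ProcessQ_1)$ this is just $\models\ProcessQ_1$. Setting $\ProcessQ=\ProcessQ_1\parop\Process_2$ we get $\ProcessQ\parop\thread\var{\return~\Expression}\equiv\Process$ and $\ProcessQ\subst\Expression\var=\ProcessQ_1\parop\Process_2\subst\Expression\var$. Here $\models\ProcessQ_1$ is available; $\models\Process_2\subst\Expression\var$ follows from \cref{lem:ret3} applied to $\Process_2$ — when $\var^+\notin\Pol(\Process_2)$ the substitution into $\Process_2$ is vacuous, and when $\var^+\in\Pol(\Process_2)$ its hypothesis $\Pol(\Process_2)\indep\Pol(\Expression)$ is obtained from the \refrule{wp-par} side condition of $\Process$: since $\var^-\in\Pol(\Process_1)$ and $\var^+\in\Pol(\Process_2)$ that node's exception is $\var$ (with the polarity $-$ on the $\Process_1$ side), whence $\Pol(\Expression)\subseteq\Pol(\Process_1)\setminus\set{\var^-}$ is independent of $\Pol(\Process_2)\setminus\set{\var^+}$, and $\var\notin\Pol(\Expression)$ closes the gap — while $\Pol(\Expression)\indep\Pol(\Expression)$ comes from $\models\thread\var{\return~\Expression}$, which holds by \cref{lem:Luca3} since $\thread\var{\return~\Expression}\subprocesseq\Process$. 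The last task is to reconstruct the \refrule{wp-par} side condition for $\ProcessQ_1\parop\Process_2\subst\Expression\var$ from the one for $\Process$, using $\Pol(\ProcessQ_1)\subseteq\Pol(\Process_1)$, $\Pol(\Process_2\subst\Expression\var)\subseteq(\Pol(\Process_2)\setminus\set{\var^+})\cup\Pol(\Expression)$, and the fact that $\var$ no longer occurs in $\ProcessQ_1$.

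The main obstacle is that $\models$ is \emph{not} preserved by structural congruence, so one cannot simply commute $\thread\var{\return~\Expression}$ to the front of $\Process$ and invoke \cref{lem:ret3} directly; \cref{lem:uffa} is precisely what lets us localise $\var$ to the component carrying its thread. The delicate bookkeeping concerns the single shared polarised name permitted at each \refrule{wp-par} node: one must track how this ``exception slot'' behaves both when $\Expression$ is substituted into the threads of $\Process_2$ and when $\thread\var{\return~\Expression}$ is extracted from $\Process_1$ — in the last step above this requires knowing that $\ProcessQ_1$ together with $\thread\var{\return~\Expression}$ is itself well-polarised (so that $\Pol(\ProcessQ_1)$ and $\Pol(\Expression)$ share at most one polarised name), which is best arranged by strengthening the induction hypothesis of \cref{lem:ret1} to also record this well-polarisation and then recovering the stated form by one inversion of \refrule{wp-par}. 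The structural facts of \cref{lem:Luca} (two threads share at most one endpoint; a thread name and a shared endpoint cannot both link two threads) are what make this strengthened invariant go through.
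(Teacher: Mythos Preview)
Your treatment of \cref{lem:ret3} is correct and essentially the same as the paper's.

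For \cref{lem:ret1} there is a genuine gap. In the key case $\var^+\notin\Pol(\Process_1)$ and $\var^+\in\Pol(\Process_2)$ you set $\ProcessQ=\ProcessQ_1\parop\Process_2$ with $\ProcessQ_1$ coming from the induction hypothesis on $\Process_1$, and then need the \refrule{wp-par} side condition for $\ProcessQ_1\parop\Process_2\subst\Expression\var$. This fails in general. Take
\[
\Process_1=\thread{z_1}{\return~\unit}\parop\bigl(\thread{z_2}{\return~\unit}\parop\thread\var{\return~\Pair{z_1}{z_2}}\bigr),
\qquad
\Process_2=\thread\varY{\return~\var}.
\]
Then $\models\Process_1\parop\Process_2$ (exception $\var$ at the top), $\models\Process_1$ (exceptions $z_2$, then $z_1$, at the two inner \refrule{wp-par} nodes), and $\var^+\notin\Pol(\Process_1)$. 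Any $\ProcessQ_1$ with $\ProcessQ_1\parop\thread\var{\return~\Expression}\equiv\Process_1$ has $\Pol(\ProcessQ_1)=\set{z_1^-,z_2^-}$, while $\Pol(\Process_2\subst\Expression\var)=\set{\varY^-,z_1^+,z_2^+}$: two names are shared with opposite polarities, so no single exception suffices. Your proposed fix---strengthening the hypothesis to also yield $\models\ProcessQ_1\parop\thread\var{\return~\Expression}$---is not a fix: that very judgement is \emph{false} for $\Process_1$ above (the same two conflicts $z_1,z_2$ appear), so the strengthened statement is not provable. The appeals to \cref{lem:Luca} do not help, since those are pairwise facts about two threads, not the global one-exception-per-split constraint needed here. (A minor additional omission: the case $\var^+\in\Pol(\Process_1)$, $\var^+\notin\Pol(\Process_2)$ is not reducible via \cref{lem:uffa}, whose hypothesis needs $\var^+$ on both sides; it must be handled separately.)

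What the paper does instead is to preserve the tree structure of $\Process_1$: rather than flattening it into $\ProcessQ_1\parop\thread\var{\return~\Expression}$, it takes $\ProcessQ=\Process_1\,\replace{\thread\var{\return~\Expression}}{\Process_2}$, replacing the $\var$-thread \emph{in situ} by $\Process_2$, and proves $\models\ProcessQ\subst\Expression\var$ via a dedicated replacement lemma (\cref{lemma:replacementofthread}). In the example this gives $\thread{z_1}{\cdot}\parop(\thread{z_2}{\cdot}\parop\Process_2\subst\Expression\var)$, which is well-polarised with exception $z_2$ at the inner node and $z_1$ at the outer one. The point is that the several polarised names in $\Expression$ were already distributed across distinct \refrule{wp-par} nodes of $\Process_1$; keeping $\Process_2\subst\Expression\var$ at the same leaf position lets each node retain its own exception, whereas your flattening forces all of them onto a single node.
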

\begin{proof}  Both items 
%\cref{lem:ret3} and \cref{lem:ret1} 
are
 proved by induction on the derivation of $\CA\models\Process$.

 (\cref{lem:ret3}). We show only the case of \refrule{wp-thread}. Suppose that
 \[
 \inferrule[\refrule{wp-thread}]{
    }{
      \models \thread\varY \ExpressionF
    }
    ~~
    \begin{lines}[c]
     \varY^+\not\in\Pol(\ExpressionF)\\
          \Pol(\ExpressionF) \indep   \Pol(\ExpressionF)
    \end{lines} 
 \]
We can do the following inference:
\[
 \inferrule[\refrule{wp-thread}]{
    }{
      \models \thread\varY \ExpressionF\subst\Expression\var
    }
    ~~
    \begin{lines}[c]
      \varY^+\not\in\Pol(\ExpressionF\subst\Expression\var)\\
          \Pol(\ExpressionF\subst\Expression\var) \indep   \Pol(\ExpressionF\subst\Expression\var)
    \end{lines} 
 \]
 The  side condition  %premise  
 $\varY^+\not\in\Pol(\ExpressionF\subst\Expression\var)$
holds because 
 $\varY^+\not\in\Pol(\ExpressionF)$ and  $\Pol (\Process)=
\Pol(\ExpressionF)\cup\set{\varY^-}$ and  $\Pol(\ExpressionF)\cup\set{\varY^-} \indep \Pol(\Expression)$. 
The side condition $\Pol(\ExpressionF\subst\Expression\var) \indep   \Pol(\ExpressionF\subst\Expression\var)$
holds because $\Pol (\Process) \indep   \Pol(\Expression)$
and
$\Pol(\Expression) \indep   \Pol(\Expression)$.

(\cref{lem:ret1}).
%If  $\Process=\thread\var{\return~\Expression}$ we can choose $\ProcessQ=\emptyprocess$ by rule $\defrule{wp-empty}$.
%
%
%If  $\thread\var{\return~\Expression}\subprocess\Process$ and 
%$\var^+\not\in \Pol (\ProcessP)$ then
% $\Process \equiv \thread\var{\return~\Expression} \parop \ProcessQ$
% and $\ProcessQ = \ProcessQ\subst\Expression\var$.
%    We get  $\models \ProcessQ$ by \cref{lem:Luca3}. 
 If  $\var^+\not\in \Pol (\ProcessP)$ we can choose $\ProcessQ=\Process \, \replace{\thread \varX {\return~\Expression}}{\emptyprocess}$ by rule $\defrule{wp-empty}$ and 
\cref{lemma:replacementofthread}.
Otherwise 
suppose $\Process=\Process_1\parop\Process_2$ and the last rule of the derivation is: 
 \begin{equation}
 \label{equation:wparruleitem2}
 \inferrule[\refrule{wp-par}]{
       \models \Process_1
      \\
      \models \Process_2
    }{
       \models\Process_1\parop\Process_2
    }
    ~~
    \Pol (\ProcessP_1)\setminus\set{\Thing^\Polarity}\indep\CB \Pol (\Process_2)\setminus\set{\Thing^{\co\Polarity}}
    \end{equation}

We can assume  $\thread\var{\return~\Expression} \subprocesseq_1 \Process_1$ since the case 
  $\thread\var{\return~\Expression} \subprocesseq_1 \Process_2$  
is symmetric.
We distinguish  three cases:

\begin{enumerate}

\item 
Case  $\var^+\not\in \Pol (\ProcessP_1)$. 
 %and $\var^+\in \Pol (\ProcessP_2)$.  
The key observation is that $\Thing^\Polarity=\var^-$ and $\Thing^{\co\Polarity}=\var^+$. 
 % It follows 
  From 
  the side condition of (\ref{equation:wparruleitem2}) and 
   $\Pol (\ProcessP_1)\setminus\set{\var^{-}} \supseteq \Pol(\Expression)$ and %also from 
   $\var^{+} \not \in \Pol(\Expression)$   we have  %that
   $\Pol (\ProcessP_2) \indep \Pol(\Expression)$.  Since $\models{\thread\var{\return~\Expression}}$ implies $\Pol (\Expression) \indep \Pol(\Expression)$, 
 \cref{lem:ret3} gives 
  $  \models\Process_2\subst\Expression\var$. 
 %Since 
  From this and  $\Pol (\ProcessP_1)\setminus\set{\var^{-}}\indep
 \Pol(\Process_2\subst\Expression\var) \setminus \Pol (\Expression)$, 
  %and $\Pol(\Expression)\subseteq\Pol(\Process_2\subst\Expression\var)$, 
it follows  by  %from 
\cref{lemma:replacementofthread}
that
\[
\models \Process_1 \,\replace{\thread \varX {\return~\Expression}}{\Process_2\subst\Expression\var}
\]
We  can choose  
$\ProcessQ=\Process_1 \, \replace{\thread \varX {\return~\Expression}}{\Process_2}$, since
it is not difficult to check
that
\[\thread \varX {\return~\Expression} \parop 
\Process_1 \, \replace{\thread \varX {\return~\Expression}}{\Process_2}
\equiv \Process_1 \parop \Process_2
\]
\item 
Case  $\var^+\in \Pol (\ProcessP_1)$ and 
$\var^+\in \Pol (\ProcessP_2)$. By \cref{lem:uffa} 
 there are $\Process'_1$ and $\Process'_2$ such that
 $\Process_1 \parop \Process_2 \equiv 
\Process'_1 \parop \Process'_2$
and
 $\thread\var{\return~\Expression} \subprocesseq_1 \Process'_1$ 
 and 
 $\var^+ \not \in \Pol (\ProcessP'_1)$.  
% and 
%$\var^+\in \Pol (\ProcessP'_2)$.
We can now proceed as in the previous case.
Note that this case and the previous one 
are sort of ``base cases'' for which the induction hypothesis
is not needed.

\item 
Case $\var^+\not\in \Pol (\ProcessP_2)$. 
By induction hypothesis
 $\models\ProcessQ_1\subst\Expression\var$ for some $\ProcessQ_1$ such that $\ProcessQ_1\parop\thread\var{\return~\Expression}\equiv\Process_1$. 
We can apply rule $\refrule{wp-par}$ to $\models\ProcessQ_1\subst\Expression\var$ and $\models \Process_2$ since  $\Pol(\ProcessQ_1\subst\Expression\var)=\Pol (\ProcessP_1)\setminus\set{\var^{+}, \var^-}$. So we conclude  $\models\ProcessQ_1\subst\Expression\var\parop\Process_2$.
\qedhere
\end{enumerate}
\end{proof}
  
Since the definition of $\models$ is not invariant under $\equiv$,
we cannot prove that the reduction preserves well-polarisation 
by induction on $\red$. Instead, we use the following lemma, which immediately follows from the definition of $\red$:

\begin{lemma}[Inversion of $\red$]
\label{lemma:inversionlemmaforreduction}
If $\Process \red \Process'$, then
$\Process \equiv  \new{\Thing_1 \ldots \Thing_n} \Process_0$
and 
$\Process' \equiv \new{\Thing_1 \ldots \Thing_n} \Process'_0$
and 
one of the following  cases  %possibilities 
hold:

\begin{enumerate}

\item 
%$\thread\varX{\PContext[\open~\Channel]}\subprocesseq \threads(\Process)$ and $\server\Channel\Expression \subprocesseq \servers(\Process)$ and 
%$\set{\ChannelC,\varY}\subseteq\boundnames(\Process')$ and $\thread\varX{\PContext[\Return{\ChannelC^+}]}
%      \parop
%      \thread\varY{\Expression~\ChannelC^-}\subprocesseq \threads(\Process')$ and $\server\Channel\Expression \subprocesseq \servers(\Process')$
      
$\Process_0 = 
 \server\Channel\Expression \parop \thread\varX{\PContext[\open~\Channel]}\parop \ProcessQ$
 and\\
 $\Process'_0 = 
 \server\Channel\Expression \parop
 \new{\ChannelC\varY}(
      \thread\varX{\PContext[\Return{\ChannelC^+}]}
      \parop
      \thread\varY{\Expression~\ChannelC^-})
      \parop  \ProcessQ$.
      
\item $\Process_0=  
 \thread\varX{\PContext[\send~\Channel^\Polarity~\Expression]}
      \parop
      \thread\varY{\PContext'[\receive~\Channel^{\co\Polarity}]}
  \parop \ProcessQ$
 and\\ 
 $\Process'_0 = 
 \thread\varX{\PContext[\Return{\Channel^\Polarity}]}
      \parop
      \thread\varY{\PContext'[\Return{\Pair\Expression{\Channel^{\co\Polarity}}}]} \parop \ProcessQ$.

\item $\Process_0 =
 \thread\varX{\PContext[\future~\Expression]} \parop \ProcessQ$ and
 $\Process'_0 = 
      \new\varY(\thread\varX{\PContext[\Return\varY]} \parop \thread\varY\Expression)\parop \ProcessQ$.

\item $\Process_0 =\new\var(
 \thread\var{\Return\Expression} \parop \ProcessQ)$
 and 
 $\Process'_0 = 
 \ProcessQ \subst{\Expression}{\var}$.

\item 
$\Process_0 =
 \thread \var \Expression \parop \ProcessQ$
 and 
 $\Process'_0 = 
 \thread\var \Expression' \parop \ProcessQ$ with $\Expression \red \Expression'$.

\end{enumerate}

\end{lemma}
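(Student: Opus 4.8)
The plan is to prove the statement by rule induction on the derivation of $\Process \red \Process'$, reading off the five clauses from the five ``computational'' reduction rules of \cref{table:semantics}. For the base cases \refrule{r-open}, \refrule{r-comm}, \refrule{r-future} and \refrule{r-thread} we take $n = 0$ and $\ProcessQ = \emptyprocess$: each of these rules rewrites a redex in isolation, and since $\Process \equiv \Process \parop \emptyprocess$ the reduct already matches the shape displayed in the corresponding clause. Rule \refrule{r-return} gives clause~(4) directly with $n = 0$, the inner restriction $\new\var$ being part of the rule itself. The congruence rule \refrule{r-cong} is immediate: from $\Process \equiv \ProcessP \red \ProcessQ \equiv \Process'$ the induction hypothesis applied to $\ProcessP \red \ProcessQ$ produces $\ProcessP \equiv \new{\Thing_1 \ldots \Thing_n}\Process_0$ and $\ProcessQ \equiv \new{\Thing_1 \ldots \Thing_n}\Process'_0$ in one of the five shapes, and transitivity of $\equiv$ carries the conclusion over to $\Process$ and $\Process'$.

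For the two remaining inductive cases I would first dispatch \refrule{r-new}: if $\Process = \new\Thing\ProcessP$ and $\Process' = \new\Thing\ProcessQ$ with $\ProcessP \red \ProcessQ$, the induction hypothesis gives $\ProcessP \equiv \new{\Thing_1 \ldots \Thing_n}\Process_0$ and $\ProcessQ \equiv \new{\Thing_1 \ldots \Thing_n}\Process'_0$ in one of the shapes, so $\Process \equiv \new{\Thing\,\Thing_1 \ldots \Thing_n}\Process_0$ and $\Process' \equiv \new{\Thing\,\Thing_1 \ldots \Thing_n}\Process'_0$; we simply prepend $\Thing$ to the list of restrictions, and in clauses~(3) and~(4) this leaves the \emph{inner} restriction ($\new\varY$ resp.\ $\new\var$) untouched, so the shape is preserved.

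The only case that deserves care is \refrule{r-par}, where $\Process = \ProcessP_1 \parop \ProcessP_2$, $\Process' = \ProcessP_1' \parop \ProcessP_2$ and $\ProcessP_1 \red \ProcessP_1'$. The induction hypothesis gives $\ProcessP_1 \equiv \new{\Thing_1 \ldots \Thing_n}\Process_0$ and $\ProcessP_1' \equiv \new{\Thing_1 \ldots \Thing_n}\Process'_0$ with $\Process_0, \Process'_0$ in one of the five shapes; since the statement holds only up to structural congruence, by $\alpha$-renaming we may assume $\Thing_1, \ldots, \Thing_n$ (and the inner bound name in clauses~(3) and~(4)) do not occur free in $\ProcessP_2$. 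Scope extrusion then yields $\Process \equiv \new{\Thing_1 \ldots \Thing_n}(\Process_0 \parop \ProcessP_2)$ and $\Process' \equiv \new{\Thing_1 \ldots \Thing_n}(\Process'_0 \parop \ProcessP_2)$. It remains to note that in every clause $\Process_0$ is a redex composed in parallel with a residual process (written $\ProcessQ$ in the statement), while $\Process'_0$ is the contractum composed in parallel with the \emph{same} residual (possibly underneath the inner restriction), so by associativity and commutativity of $\parop$ we absorb $\ProcessP_2$ into the residual, replacing $\ProcessQ$ by $\ProcessQ \parop \ProcessP_2$; for clause~(4) one additionally uses that, $\var$ being bound and fresh for $\ProcessP_2$, substitution commutes with the added component, so $(\ProcessQ \parop \ProcessP_2)\subst\Expression\var = \ProcessQ\subst\Expression\var \parop \ProcessP_2$. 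This use of $\alpha$-conversion, scope extrusion and the monoid laws for $\parop$ is the whole of the argument; being entirely routine, it is exactly why the lemma follows ``immediately'' from the definition of $\red$, and the only place one could slip is keeping the freshness side conditions of scope extrusion consistent across $\Process_0$ and $\Process'_0$ at the same time.
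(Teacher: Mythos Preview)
Your proof is correct and is precisely the routine rule induction that the paper alludes to when it says the lemma ``immediately follows from the definition of $\red$''; the paper gives no further details, and your handling of the congruence rules (pushing restrictions outward in \refrule{r-new}, absorbing the parallel context into $\ProcessQ$ via scope extrusion and the monoid laws in \refrule{r-par}, with the freshness and substitution bookkeeping for clauses~(3) and~(4)) is exactly what is needed.
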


\begin{proof}[Proof of \cref{theorem:invariance}]
 Well-polarisation of 
$\Process$ implies 
  that 
\[
\Process \equiv
\new{\Thing_1\ldots\Thing_n}
(\ProcessQ\parop\ProcessR)
\] 
where   $\set{\Thing_1,\ldots,\Thing_n}=\boundnames(\ProcessP)$, $\ProcessQ \equiv\threads (\ProcessP)$ and 
 $\CA \models \ProcessQ$ 
and $\ProcessR=\servers(\ProcessP)$. 
 Using  \cref{lemma:inversionlemmaforreduction}, we analyse cases
 according to the shapes of $\Process$,  $\ProcessQ$ and $\ProcessR$.
We only show the interesting cases.
 \begin{enumerate}
 
 \item Case $\thread \var {\PContext [\open~\Channel]} \subprocesseq \ProcessQ$
 and $\server~\Channel~\Expression \subprocesseq  \ProcessR$. Hence,
 \[
\Process' \equiv
\new{\Thing_1\ldots\Thing_n \ChannelC y}
(\ProcessQ \,\replace{\thread \var {\PContext [\open~\Channel]}}{\thread \var {\PContext [\return~{\ChannelC^{+}]}}} \parop \thread\varY{\Expression~\ChannelC^-}\parop\ProcessR)
\] 
 It is easy to show that
  \[
  \models \ProcessQ \, \replace{\thread \var {\PContext [\open~\Channel]}}{\thread \var {\PContext [\return~{\ChannelC^{+}]}}}
  \]
  Since $\Process$ is typeable,  $\Pol(\Expression) = \emptyset$
  and  
 \[
 \models \thread \varY \Expression \ChannelC^{-}
 \]
 Using \refrule{wp-par}, we 
 obtain that
 \[
 \models \ProcessQ  \,\replace{\thread \var {\PContext [\open~\Channel]}}{\thread \var {\PContext [\return~{\ChannelC^{+}]}}} \parop \thread \varY \Expression \ChannelC^{-}
  \]
  Hence, $\Process'$ is well-polarised.
  
\item Case $\ProcessQ \equiv  
 \thread\varX{\PContext[\send~\Channel^\Polarity~\Expression]}
      \parop
      \thread\varY{\PContext'[\receive~\Channel^{\co\Polarity}]}
  \parop \ProcessQ_0$.
  Then,
  \[
\Process' \equiv
\new{\Thing_1\ldots\Thing_n}
(\ProcessQ'\parop\ProcessR)
\] 
  where 
 $\ProcessQ' = 
 \thread\varX{\PContext[\Return{\Channel^\Polarity}]}
      \parop
      \thread\varY{\PContext'[\Return{\Pair\Expression{\Channel^{\co\Polarity}}}]} \parop \ProcessQ_0$.  
     Typeability of $\Process$ implies that $\Channel^\Polarity$       %and  $ \thread\varX{\PContext[\send~\Channel^\Polarity~\Expression]}
%     $ and $
%      \thread\varY{\PContext'[\receive~\Channel^{\co\Polarity}]}$
 %     occur only once in $\Process$.   
occurs only once  
and  
that the above threads are the unique ones 
%the those are the only threads
 named $\varX$ and $\varY$ in $\Process$.    
By Lemma \ref{lemma:invariancecomm}\comma
there exists $\ProcessQ''$ such that
$\ProcessQ'' \equiv \ProcessQ'$ and
$\CA \models \ProcessQ''$.
Then $\Process'$ is well-polarised.

\item Case $\Thing_n=\var$ and $\ProcessQ \equiv 
\thread \var {\Return\Expression} \parop \ProcessQ_0$.
 Then\comma
  \[
\Process' \equiv
\new{\Thing_1\ldots\Thing_{n-1}}
(\ProcessQ_0\subst{\Expression}{\var}  \parop\ProcessR)
\] 
%If $\var$ does not occur in $\ProcessQ_0$ 
% it follows from \cref{lem:Luca3}  that $\models \ProcessQ_0$.
% Hence $\Process' $ is well-polarised.
% If $\var$ occurs in $\ProcessQ_0$, 
% it follows from  \cref{lem:ret1} of  \cref{lem:ret}
% that 
% $ \models \ProcessQ'_0 \subst{\varX}{\Expression}$
% for some $\ProcessQ'_0 \equiv \ProcessQ_0$ 
% and hence in this case $\Process' \subst{\varX}{\Expression}$
% is well-polarised too.
 Typeability of $\Process$ implies %that
%$\thread \var {\Return\Expression}\subprocesseq_1 $ 
%occurs only once in $
%\ProcessQ$.  
 that the above  thread is the only one named $\varX$ in $\Process$. 
It follows from  \cref{lem:ret1} of  \cref{lem:ret}
 that 
 $ \models \ProcessQ'_0 \subst{\Expression}{\varX}$
 for some $\ProcessQ'_0 \equiv \ProcessQ_0$ 
 and hence $\Process'$
 is well-polarised.
\qedhere
\end{enumerate}
\end{proof}

\section{Proof of \cref{theorem:acyclicimpliesSR}}\label{appendix:SRP}

\begin{proof}[Proof of~\cref{theorem:acyclicimpliesSR}]
 The proof is by induction on the definition of
  $\red$. We only show the most interesting cases.   
  
\mycase{ Case 
  $ \new\varX(
        \thread\varX{\return~\Expression}
        \parop
        \Process ) \red \Process\subst\Expression\varX $}       
        
  \noindent      Let  $\Process_1$ and $\Process_2$ be such that
        $\Process \equiv \Process_1 \parop \Process_2$  and  $\Process_1$ contains
        all and only  the  %those 
        threads in  whose bodies the variable $\varX$ occur.        
    It follows from 
    $\wtp{\TypeContext}{\new\varX( \thread\varX{\return~\Expression}
      \parop
      \Process )}{\ProvideContext}$ and the 
      Inversion Lemmas for Processes and Expressions 
     (Items \ref{lem:inv7}, \ref{lem:inv9} and  \ref{lem:inv11}  of \cref{lem:invp}
    and  \ri{\cref{lem:inv}}{Items \ref{lem:inv1} and  \ref{lem:inv5}}) 
     that 
    \[
    \begin{array}{lll}
    \wte{\TypeContext_0}{\ExpressionE}{\Type}\ \ \ \ & 
     \wtp{\TypeContext_1, \varX:\Type}{\Process_1}{\ProvideContext_1} \ \ \ \  & 
      \wtp{\TypeContext_2}{\Process_2}{\ProvideContext_2}
    \end{array}
    \]
    where $\TypeContext = \TypeContext_0 + \TypeContext_1 + \TypeContext_2 $
    and $\ProvideContext =\ProvideContext_1 +\ProvideContext_2$.
         Since  $\new\varX(
        \thread\varX{\return~\Expression}
        \parop
        \Process )$ 
        is well-polarised,  
        if $\thread \varY \ExpressionF$ is in $\Process_1$
         (i.e. $\varX$ occurs in $\ExpressionF$),   then
        $\varY$ cannot 
        occur in  $\Expression$  
        by \ri{\cref{lem:Luca}}{\cref{lem:Luca4}}.
        Hence, 
          $\dom(\TypeContext_1) \cap \dom (\ProvideContext_1) = \emptyset$. 
          %We split the process $\Process$ into two because we can apply  \cref{lemma:substitutionp} only to $\Process_1$
     Then     we can apply  \cref{lemma:substitutionp} to $\Process_1$ and   obtain  \[\wtp{\TypeContext_0 + \TypeContext_1}{\Process_1\subst\Expression\varX}{\ProvideContext_1}\]
%        We cannot apply \cref{lemma:substitutionp} to $\Process_2$ because
%        it may contain a thread $\thread \varY \ExpressionF$ such that $\varY$ occurs in $\Expression$.
         By %applying 
        rule \defrule{par} we derive  %obtain       
    \[
    \wtp{\TypeContext_0 + \TypeContext_1 + \TypeContext_2}
    {\Process_1\subst\Expression\varX \parop \Process_2}{\ProvideContext}\]
    
 \mycaseA{   Case
       $ \thread\varX{\PContext_1[\send~\Channel^\Polarity~\Expression]}
        \parop
        \thread\varY{\PContext_2[\receive~\Channel^{\co\Polarity}]}
        \red
        \thread\varX{\PContext_1[\return~\Channel^\Polarity]}
        \parop
        \thread\varY{\PContext_2 [\return~\Pair\Expression{\Channel^{\co\Polarity}}]}
      $}

\noindent
 It follows from
    $\wtp{\TypeContext}{
      \thread\varX{\PContext_1[\send~\Channel^\Polarity~\Expression]}
      \parop
      \thread\varY{\PContext_2[\receive~\Channel^{\co\Polarity}]
      }}{\ProvideContext}$
    and the Inversion Lemmas for Processes %and Expressions
    (Items \ref{lem:inv7} and \ref{lem:inv9}
    of \cref{lem:invp}) 
%    and 
%     Items  \ref{lem:inv1},  \ref{lem:inv2} and \ref{lem:inv5}
%      of \cref{lem:inv})
     that
    $\TypeContext = \TypeContext_1 +\TypeContext_2 $ 
     and $\ProvideContext =
\varX: \tbullet[n_1] \Type_1, \varY: \tbullet[n_2]\Type_2$ and 
      \begin{equation}
        \label{eq:sendtwo}
        \wte{\TypeContext_1}{\PContext_1[\send~\Channel^\Polarity~\Expression]}{\tbullet^{n_1}(\tio\Type_1)}
      \end{equation}
      %\break
    \begin{equation}
      \label{eq:receivetwo}
      \wte{\TypeContext_2}{\PContext_2[\receive~\Channel^{\co\Polarity}]}{\tbullet^{n_2}(\tio\Type_2)}
    \end{equation}
Using the fact that $\TypeContext$ is balanced, it is not difficult to show that 
\[
\begin{array}{ll}
\Channel^{\Polarity} : \tbullet[m] (\tout{\Type}\SessionType)
\in  \TypeContext_1 \\
\Channel^{\co \Polarity} : \tbullet[m] (\tin{\Type} \co \SessionType)
\in  \TypeContext_2 
\end{array}
\]
%where 
 for some $m$ such that 
$m \leq n_1$ and $m \leq n_2$ by \cref{lemma:delayreceive}. 
    By applying  {\cref{lem:key}} to (\ref{eq:sendtwo}),
    we have 
    $\TypeContext_1 = \TypeContext_3 + \TypeContext_4,\Channel^{\Polarity} : \tbullet[m] (\tout{\Type}\SessionType)$ with
    \begin{equation}
      \label{eq:sendthree}
      \begin{array}{ll}
       \wte{\TypeContext_3, \varZ: \tbullet[n_1] \tio \SessionType }{\PContext_1[\varZ]}{\tbullet^{n_1}(\tio\Type_1)}
       \\
       \wte{\TypeContext_4,\Channel^{\Polarity} : \tbullet[m] (\tout{\Type}\SessionType)}
      {\send~\Channel^\Polarity~\Expression}{\tbullet[n_1] (\tio \SessionType)}
       \end{array}
    \end{equation}
 Items  \ref{lem:inv1},  \ref{lem:inv2} and \ref{lem:inv5} of \cref{lem:inv} give
 \begin{equation}
      \label{eq:te}
      \wte{\TypeContext_4}
      {\Expression}{\tbullet[n_1] \Type}
       \end{equation}
    Using rules \refrule\const, \refrule\axiom, \refrule\introbullet, \refrule\elimarrow\ being $m \leq n_1$ we
    derive
    \begin{equation}
    \label{eq:sendfour}
    \wte{\Channel^\Polarity: \tbullet[m] \SessionType}{\return~\Channel^\Polarity}
    {\tbullet[n_1](\tio\SessionType)}
    \end{equation} 
    By applying ~\cref{lemma:substitution}
    to  \eqref{eq:sendthree} and \eqref{eq:sendfour} we get
    \[\wte{\TypeContext_3,\Channel^\Polarity: \tbullet[m]\SessionType}{\PContext_1[\return~\Channel^\Polarity
      ]}{\tbullet^{n_1} (\tio\Type_1)}\]
    hence by \refrule{thread} we derive
    \begin{equation}
      \label{eq:sendreturnthree}
      \wtp{\TypeContext_3,\Channel^\Polarity:\tbullet[m]\SessionType}{\thread\varX{\PContext_1[\return~\Channel^\Polarity]}}{\varX:\tbullet[n_1]\Type_1}
    \end{equation} 
      By applying 
  {\cref{lem:key}} to \eqref{eq:receivetwo} %and also using Inversion Lemma,
    \begin{equation}
    \label{eq:receivealone}  
    \begin{array}{ll}
   \wte{\TypeContext_5, \varZ:\tbullet[n_2]\tio(\Type\times\co\SessionType) }
   {\PContext_2[\varZ]}{\tbullet[n_2](\tio \Type_2)} \\
    \wte{\Channel^{\co\Polarity}: 
    \tbullet[m] \tin{\Type}\co\SessionType}{\receive~\Channel^{\co\Polarity}}{\tbullet[n_2]\tio(\Type\times\co\SessionType)}
    \end{array}
    \end{equation}  
   for  $\TypeContext_2=\TypeContext_5,
  \Channel^{\co\Polarity}: \tbullet[m] \tin{\Type}\co\SessionType$. %and $m \leq n_2$. 
   From (\ref{eq:te}) and $m \leq n_2$  using rules \refrule\const, \refrule\axiom, \refrule\introbullet, \refrule\elimarrow\ we derive 
    \begin{equation}
      \label{eq:receivereturn}
      \wte{\TypeContext_4, \Channel^{\co\Polarity}: \tbullet[m] \co\SessionType}{\return~\Pair\Expression{\Channel^{\co\Polarity}}}{\tbullet[n_2] \tio(\Type\times\co\SessionType)}
    \end{equation}
    Applying~\cref{lemma:substitution}
    to
    (\ref{eq:receivealone})   and (\ref{eq:receivereturn}) it follows 
    that   
    \begin{equation}
      \label{eq:receivereturntwo}
      \wte{\TypeContext_4+ \TypeContext_5, 
      \Channel^{\co\Polarity}: \tbullet[m] \co\SessionType}{\PContext_2[\return~\Pair\Expression{\Channel^{\co\Polarity}}]}{\tbullet^{n_2} (\tio\Type_2)}
    \end{equation}
  From well-polarisation and \ri{\cref{lem:Luca}}{\cref{lem:Luca5}},
   $\varY$ cannot occur in $\Expression$.
        Then we  can 
    apply rule \refrule{thread}
    to (\ref{eq:receivereturntwo}) deriving
    \begin{equation}
      \label{eq:receivereturnthree}
      \wtp{\TypeContext_4+ \TypeContext_5, \Channel^{\co\Polarity}: \tbullet[m] \co\SessionType}{\thread\varY{\PContext_2[\return~\Pair\Expression{\Channel^{\co\Polarity}}]}}{\varY:\tbullet[n_2]\Type_2} 
    \end{equation}
    By applying rule \refrule{par} to \eqref{eq:sendreturnthree} and
    \eqref{eq:receivereturnthree} we conclude
    \[
    \wtp{\TypeContext_3+\TypeContext_4+\TypeContext_5,\Channel^\Polarity: \tbullet[m] \SessionType, \Channel^{\co\Polarity}: \tbullet[m] \co\SessionType}{ 
      \thread\varX{\PContext_1[\return~\Channel^\Polarity]}
      \parop
      \thread\varY{\PContext_2[\return~\Pair\Expression{\Channel^{\co\Polarity}}]}
    }{\ProvideContext}\]
     where
%    $\TypeContext 
%      \redctx  \TypeContext_3 + \TypeContext_4 + \TypeContext_5,\Channel^\Polarity: \tbullet[m] \SessionType, 
%      \Channel^{\co\Polarity}: \tbullet[m] \co\SessionType $.  
 $ \TypeContext_3 + \TypeContext_4 + \TypeContext_5,\Channel^\Polarity: \tbullet[m] \SessionType, 
      \Channel^{\co\Polarity}: \tbullet[m] \co\SessionType $ is balanced.
     \end{proof}

\end{document}